\numberwithin{equation}{section}
\newtheorem{theorem}{Theorem}[section]
\newtheorem{lemma}{Lemma}[section] 
\newtheorem{assumption}{Assumption}[section] 
\newtheorem{definition}{Definition}[section] 
\newtheorem{corollary}{Corollary}[section] 
\newtheorem{remark}{Remark}[section]
\newcommand{\C}{\mathbb{C}}
\newcommand{\N}{\mathbb{N}}
\newcommand{\R}{\mathbb{R}}
\newcommand{\Z}{\mathbb{Z}}
\newcommand{\Dcon}{\stackrel{\mathcal{D}}{\rightarrow}}
\newcommand{\Pcon}{\stackrel{\mathcal{P}}{\rightarrow}}
\newcommand{\Lcon}{\stackrel{L_2}{\rightarrow}}
\newcommand{\ubb}{\boldsymbol{u}}
\newcommand{\vbb}{\boldsymbol{v}}
\newcommand{\jb}{\boldsymbol{j}}
\newcommand{\ob}{\boldsymbol{\omega}}
\newcommand{\blambda}{\boldsymbol{\lambda}}
\newcommand{\btheta}{\boldsymbol{\theta}}
\newcommand{\betaa}{\boldsymbol{\eta}}
\newcommand{\kb}{\boldsymbol{k}}
\newcommand{\sbb}{\boldsymbol{s}}
\newcommand{\xb}{\boldsymbol{x}}
\newcommand{\yb}{\boldsymbol{y}}
\newcommand{\zb}{\boldsymbol{z}}
\newcommand{\tb}{\boldsymbol{t}}
\newcommand{\bbb}{\boldsymbol{b}}
\newcommand{\ellb}{\boldsymbol{\ell}}
\newcommand{\aB}{\boldsymbol{A}}
\newcommand{\cov}{\mathrm{cov}}
\newcommand{\var}{\mathrm{var}}
\newcommand{\cum}{\mathrm{cum}}
\newcommand{\Ex}{\mathbb{E}}
\title{Fourier analysis of spatial point processes}
\author[1]{Junho Yang  \thanks{\textit{Email}: \url{junhoyang@stat.sinica.edu.tw}}
}
\author[2]{Yongtao Guan \thanks{\textit{Email}: \url{guanyongtao@cuhk.edu.cn}}
}
\affil[1]{Institute of Statistical Science, Academia Sinica}
\affil[2]{Shenzhen Research Institute of Big Data, School of Data Science, The Chinese University of Hong Kong, Shenzhen (CUHK-Shenzhen)}
\date{\today}
\begin{document}

\maketitle

\begin{abstract}
In this article, we develop comprehensive frequency domain methods for estimating and inferring the second-order structure of spatial point processes. The main element here is on utilizing the discrete Fourier transform (DFT) of the point pattern and its tapered counterpart. Under second-order stationarity, we show that both the DFTs and the tapered DFTs are asymptotically jointly independent Gaussian even when the DFTs share the same limiting frequencies. Based on these results, we establish an $\alpha$-mixing central limit theorem for a statistic formulated as a quadratic form of the tapered DFT. As applications, we derive the asymptotic distribution of the kernel spectral density estimator and establish a frequency domain inferential method for parametric stationary point processes. For the latter, the resulting model parameter estimator is computationally tractable and yields meaningful interpretations even in the case of model misspecification. We investigate the finite sample performance of our estimator through simulations, considering scenarios of both correctly specified and misspecified models. 
\vspace{0.5em}

\noindent{\it Keywords and phrases:} Bartlett's spectrum, data tapering, discrete Fourier transform, inhomogeneous point processes, stationary point processes, Whittle likelihood. 

\end{abstract}

\section{Introduction} \label{sec:intro}

Spatial point patterns, which are collections of events in space, are increasingly common across various disciplines, including seismology (\cite{p:zhu-04}), epidemiology (\cite{p:gab-09}), ecology (\cite{p:war-10}), and network analysis (\cite{p:dan-22}). A common assumption when analyzing such point patterns is that the underlying spatial point process is second-order stationary or second-order intensity reweighted stationary (\cite{p:bad-00}). Under these assumptions, a majority of estimation procedures for the second-order structure of a spatial point process can be conducted through well-established tools in the spatial domain, such as the pair correlation function and $K$-function (\cite{b:ill-08,p:waa-09}). For a comprehensive review of spatial domain approaches, we refer the readers to \cite{b:mol-04}, Chapter 4.

However, considerably less attention has been devoted to estimations in the frequency domain. In his pioneering work, \cite{p:bar-64} defined the spectral density function of a second-order stationary point process in two-dimensional space and proposed using the periodogram, a squared modulus of the discrete Fourier transform (DFT), as an estimator of the spectral density. For practical implementations, \cite{p:mug-96} provided a guide to using periodograms with illustrative examples. \cite{p:raj-23} derived detailed calculations for the first- and second-order moments of the DFTs and periodograms for fixed frequencies. Despite these advances, theoretical properties of DFTs and periodograms remain largely unexplored. For example, fundamental properties for the DFTs of time series, such as asymptotic uncorrelatedness and asymptotic joint normality of the DFTs (cf. \cite{b:bri-81}, Chapters 4.3 and 4.4), are yet to be rigorously investigated in the spatial point process setting. 

One inherent challenge in conducting spectral analysis of spatial point processes is that spatial point patterns are irregularly scattered. As such, theoretical tools designed for time series data or spatial gridded data cannot be readily extended to the spatial point process setting. One potential solution is to discretize the (spatial or temporal) point pattern using regular bins. This approach allows the application of classical spectral methods from the ``regular'' time series or random fields to the aggregated count data. For example, \cite{p:che-lan-22} developed a frequency domain parameter estimation method for the one-dimentional stationary binned Hawkes process (refer to Section \ref{sec:hawkes} below for details on the Hawkes process). See also \cite{p:shl-22} for the use of the binned Hawkes process to estimate the parameters in the spatial domain. However, aggregating events may introduce additional errors, and there is no theoretical result for binned count processes beyond the stationary Hawkes process case. 

In this article, instead of focusing on the discretized count data, we aim to present a new frequency domain approach for spatial point processes utilizing the ``complete'' information in the process. In Section \ref{sec:spect}, we cover relevant terminologies (Section \ref{sec:prelim}), review the concept of the DFT and periodogram incorporating data tapering (Section \ref{sec:spec-per}), and provide features contained in the spectral density functions and periodograms with illustrations (Section \ref{sec:spec-char}). Building on these concepts, we show in Section \ref{sec:DFT} that under an increasing domain framework and an $\alpha$-mixing condition, the DFTs are asymptotically jointly Gaussian, even when the DFTs share the same limiting frequencies. Therefore, our asymptotic results extend those of \cite{p:raj-23}, who considered only fixed frequencies, enabling us to quantify statistics written in terms of the integrated periodogram (we will elaborate on this below).

A crucial aspect of showing asymptotic joint normality of the DFTs is utilizing spectral analysis tools for irregularly spaced spatial data. \cite{p:mat-09} proposed a novel framework to define the DFT for irregularly spaced spatial data, where the observation locations are generated from a probability distribution on the observation domain. Intriguingly, the DFTs for spatial point processes and those for irregularly spaced spatial data exhibit similar structures. Therefore, tools developed for irregularly spaced spatial data, such as those by \cite{p:ban-09} and \cite{p:sub-18}, are also useful in spatial point process setting.  

Despite the aforementioned similarities, it is important to note that the stochastic mechanisms generating spatial point patterns and irregularly spaced spatial data are very different. The former considers the number of events in a fixed area being random, while the latter determines a (deterministic) number of sampling locations at which the random field is observed. 
Moreover, from a technical point of view, unlike the random field case, the spectral density function $f(\ob)$ of the spatial point process, as given in (\ref{eq:spectral1}) below, is not absolutely integrable. Therefore, the interchange of summations in the expansions of covariances and cumulants of the DFTs is not straightforward.
To reconcile the differences between the spatial point process and irregularly spaced spatial data settings, we introduce in Sections \ref{sec:DFT} and \ref{sec:spec-mean} several new assumptions tailored to the spatial point process setting. In Section \ref{sec:pp-model}, we verify these assumptions for four widely used point process models, namely the Hawkes process, Neyman-Scott point process, log-Gaussian Cox process, and determinantal point process.

Expanding on the theoretical properties of the DFTs for spatial point processes, we also consider parameter estimations. Our main interest lies in parameters expressed in terms of the spectral mean of the form $\int_{D} \phi(\ob) f(\ob) d\ob$, where $D \subset \R^d$ is a prespecified compact region and $\phi(\cdot)$ is a continuous function on $D$. To estimate the spectral mean, we employ the integrated periodogram as defined in (\ref{eq:Aphi}) below. Parameters and estimators in this nature were first considered in \cite{p:par-57} and have since garnered great attention in the time series literature, given that both the kernel spectral density and autocovariance estimator take this general form. In Section \ref{sec:spec-mean}, we derive the central limit theorem (CLT) for the integrated periodogram under an $\alpha$-mixing condition. 
We note that since the integrated periodogram is written as a quadratic form of the DFTs, one cannot directly use the standard techniques to show the CLT for $\alpha$-mixed point processes, as reviewed in \cite{p:bis-19}, Section 1. Instead, we use a series of approximation techniques to prove the CLT for the integrated periodogram. See Appendices \ref{sec:A1-3} and \ref{sec:mixing} in the Supplementary Material for details. As a direct application, in Theorem \ref{thm:KSD-2}, we derive the asymptotic distribution of the kernel spectral density estimator.

Another major application of the integrated periodogram is the model parameter estimation. \cite{p:whi-53} introduced the periodogram-based approximation of the Gaussian likelihood for stationary time series. Subsequently, the concept of Whittle likelihood was extended to lattice (\cite{p:guy-82, p:dah-87}) and irregularly spaced spatial data (\cite{p:mat-09, p:sub-18}). In Section \ref{sec:Whittle}, we develop a procedure to fit parametric spatial point process models based on the Whittle-type likelihood (hereafter, just Whittle likelihood) and obtain sampling properties of the resulting estimator. A noteworthy aspect of our estimator is that it not only estimates the true parameter when the model is correctly specified but also estimates the best fitting parameter when the model is misspecified, where ``best" is defined in terms of the spectral divergence criterion. 
While misspecified first-order intensity models have been considered (e.g., \cite{p:cho-21}), as far as we aware, our result is the first attempt that studies both the first- and second-order model misspecifications for (stationary) spatial point processes. In Section \ref{sec:sim}, we compare the performances of our estimator and two existing estimation methods in the spatial domain through simulations. 

Lastly, proofs, auxiliary results, and additional simulations can be found in the Supplementary Material, \cite{p:yg-24-supp} (hereafter, just Appendix).

\section{Spectral density functions for the second-order stationary point processes} \label{sec:spect}

\subsection{Preliminaries} \label{sec:prelim}
In this section, we introduce the notation used throughout the article and review terminologies related to the mathematical presentation of spatial point processes.

Let $d \in \N$ and let $\R$ and $\C$ be the real and complex fields, respectively. For a set $A$, $n(A)$ denotes the cardinality of $A$ and $A^{n,\neq}$ ($n\in \N$) denotes a set containing all $n$-tuples of pairwise disjoint points in $A$.
 For a vector $\vbb = (v_1, \dots, v_d)^\top \in \C^{d}$, $|\vbb| = \sum_{j=1}^{d} |v_j|$,
$\|\vbb\| = \{\sum_{j=1}^{d} |v_j|^2\}^{1/2}$, and $\|\vbb\|_{\infty} = \max_{1\leq j \leq d} |v_j|$ denote the $\ell_1$ norm, Euclidean norm, and maximum norm, respectively. For vectors $\ubb = (u_1, \dots, u_d)^\top$ and $\vbb= (v_1, \dots, v_d)^\top$ in $\R^d$, $\ubb \cdot \vbb = (u_1v_1, \dots, u_d v_d)^\top$ and $\ubb/\vbb = (u_1/v_1, \dots, u_d/v_d)^\top$, provided $v_1, \dots, v_d \neq 0$. Now we define functional spaces. For $p \in [1,\infty)$ and $k \in \N$, $L^p(\R^{k})$ denotes the set of all measurable functions $g: \R^{k} \rightarrow \C$ such that $\int_{\R^{k}} |g|^p <\infty$. For $g$ in either $L^1(\R^k)$ or $L^2(\R^k)$, the Fourier transform and the inverse Fourier transform are respectively defined as
\begin{equation*}
\mathcal{F}(g)(\cdot) = \int_{\R^k} g(\xb) \exp(i \xb^\top \cdot) d\xb
\quad \text{and} \quad
\mathcal{F}^{-1}(g)(\cdot) = (2\pi)^{-k} \int_{\R^k} g(\xb) \exp(-i \xb^\top \cdot) d\xb.
\end{equation*}

Throughout this article, let $X$ be a simple spatial point process defined on $\R^d$. 
Then, the $n$th-order intensity function (also known as the product density function) of $X$,
denoted as $\lambda_n: \R^{nd} \rightarrow [0,\infty)$, satisfies the following identity
\begin{equation} \label{eq:lambda}
\Ex \bigg[ \sum_{(\xb_1, \dots, \xb_n) \in X^{n} \cap (\R^d)^{n,\neq}}
q(\xb_1, \dots, \xb_n)
\bigg] = 
\int_{} q(\xb_1, \dots, \xb_n)
\lambda_n(\xb_1, \dots, \xb_{n}) \prod_{j=1}^{n} d\xb_j
\end{equation} for any positive measurable function $q:\R^{nd} \rightarrow [0,\infty)$. 
Next, we define the cumulant intensity functions. For $n\in \N$, let $S_n$ be the set of all partitions of $\{1, \dots, n\}$
and for $B = \{i_1, \dots, i_m\} \subseteq \{1, \dots, n\}$ ($m \leq n$), 
let $\lambda_{n(B)}(\xb_{B}) = \lambda_{m}(\xb_{i_1}, \dots, \xb_{i_m})$. Then, the $n$th-order cumulant intensity function (cf. \cite{b:bri-81}, Chapter 2.3) of $X$ is defined as
\begin{equation}\label{eq:gamma}
\gamma_n(\xb_1, \dots, \xb_{n})  = \sum_{\pi \in S_n} (n(\pi) - 1)! (-1)^{n(\pi)-1} \prod_{B \in \pi} \lambda_{n(B)}(\xb_{B}), \quad \xb_1, \dots, \xb_n \in \R^d.
\end{equation} 

\subsection{Spectral density function and its estimator} \label{sec:spec-per}

From now onwards, we assume that $X$ is a $k$th-order stationary ($k \geq 2$) point process. An extension to the nonstationary point process case will be discussed in Section \ref{sec:IRS}.
Under the $k$th-order stationarity, we can define the $n$th-order reduced intensity functions as follows:
\begin{equation} \label{eq:lambda-reduced}
\lambda_n(\xb_1, \dots, \xb_{n}) = \lambda_{n, \text{red}} (\xb_1-\xb_n, \dots, \xb_{n-1}-\xb_n), \quad 
n \in \{1, \dots, k\}.
\end{equation} The $n$th-order reduced cumulant intensity function, denoted as $\gamma_{n,\text{red}}$, is defined similarly, but replacing $\lambda_n$ with $\gamma_n$ in (\ref{eq:lambda-reduced}). In particular, when $n=1$, we use the common notation $\lambda_{1,\text{red}} = \gamma_{1,\text{red}} = \lambda$ and refer to it as the (constant) first-order intensity.

Next, the complete covariance function of $X$ at two locations $\xb_1, \xb_2 \in \R^d$ (which are not necessarily distinct) in the sense of \cite{p:bar-64} is defined as
\begin{equation} \label{eq:cov}
C(\xb_1-\xb_2) = \lambda \delta(\xb_1-\xb_2) +\gamma_{2,\text{red}}(\xb_1-\xb_2),
\end{equation} where $\delta(\cdot)$ is the Dirac-delta function. Heuristically, $C(\xb_1-\xb_2)d\xb_1 d\xb_2$ is the covariance density of $N_X(d\xb_1)$ and $N_{X}(d\xb_2)$, where $N_{X}(\cdot)$ is the counting measure induced by $X$ and $d\xb$ is an infinitesimal region in $\R^d$ that contains $\xb$.
Provided that $\gamma_{2,\text{red}} \in L^1(\R^d)$,
we can define the non-negative valued spectral density function of $X$ by the inverse Fourier transform of $C(\cdot)$ as
\begin{equation}
f(\ob) = (2\pi)^{-d} \int_{\R^d}  C(\sbb) \exp(-i \xb^\top \ob)d\xb 
= (2\pi)^{-d} \lambda  +\mathcal{F}^{-1}(\gamma_{2,\text{red}})(\ob), \quad \ob \in \R^d. 
\label{eq:spectral1}
\end{equation} Here, we use (\ref{eq:cov}) in the second identity. See \cite{b:dal-03}, Sections 8.2 for the mathematical construction of Bartlett's spectral density function.

To estimate the spectral density function, we assume that the point process $X$ is observed within a compact domain (window) $D_n \subset \R^d$ of the form
\begin{equation} \label{eq:Dn}
D_n = [-A_1/2,A_1/2] \times \dots  [-A_d/2,A_d/2], \quad n \in \N,
\end{equation}
 where for $i \in \{1, \dots, d\}$, $\{A_{i}  = A_i(n)\}_{n=1}^{\infty}$ is an increasing sequence of positive numbers. 
Now, we define the DFT of the observed point pattern that incorporates data tapering---a commonly used approach to mitigate the bias inherent in the periodogram (\cite{p:tuk-67}). Let $h(\cdot)$ be a non-negative data taper on $\R^d$ with compact support $[-1/2, 1/2]^d$. For a domain $D_n$ of form (\ref{eq:Dn}), let
\begin{equation}  \label{eq:Hkn}
H_{h,k}^{(n)}(\ob) = \int_{D_n} h(\xb/\aB)^k \exp(-i\xb^\top \ob) d\xb,
\quad k \in \N, \quad \ob \in \R^d,
\end{equation} where $h(\xb / \aB) = h(x_1/A_1, \dots, x_d/A_d)$. 
Let $H_{h,k} =  \int_{[-1/2,1/2]^d} h(\xb)^k d\xb$, $k\in \N$. Throughout the article, we assume $H_{h,k} > 0$, $k \in \N$. Using these notation, the DFT incorporating data taper $h$ is defined as
\begin{equation} \label{eq:mathcalDFT-h}
\mathcal{J}_{h,n}(\ob) = (2\pi)^{-d/2} H_{h,2}^{-1/2} |D_n|^{-1/2} \sum_{\xb \in X \cap D_n}
h(\xb / \aB) \exp(-i \xb^\top \ob), \quad  \ob \in \R^d,
\end{equation} where $|D_n|$ denotes the volume of $D_n$.
We note that by setting $h(\xb) = 1$ on $[-1/2,1/2]^d$, the tapered DFT above encompasses the non-tapered DFT 
\begin{equation} \label{eq:mathcalDFT}
\mathcal{J}_n(\ob) = (2\pi)^{-d/2}|D_n|^{-1/2} \sum_{\xb \in X \cap D_n} \exp(-i\xb^\top \ob), \quad \ob \in \R^d.
\end{equation}
Unless otherwise specified, we will use the term ``DFT'' to indicate the tapered DFT defined as in (\ref{eq:mathcalDFT-h}). Unlike the classical setting in time series or random fields, the DFT is not centered.
By applying (\ref{eq:lambda}), it can be easily seen that $\Ex[\mathcal{J}_{h,n}(\ob)] = \lambda c_{h,n}(\ob)$, where
\begin{equation}  \label{eq:Cn-h}
c_{h,n}(\ob) = (2\pi)^{-d/2} H_{h,2}^{-1/2} |D_n|^{-1/2} H_{h,1}^{(n)} (\ob), \quad \ob \in \R^d,
\end{equation}
 is the bias factor.  
Therefore, the centered DFT is defined as
\begin{equation} \label{eq:Jn-h}
J_{h,n}(\ob) = \mathcal{J}_{h,n}(\ob) - \Ex[\mathcal{J}_{h,n}(\ob)] = \mathcal{J}_{h,n}(\ob) -\lambda c_{h,n}(\ob), \quad \ob \in \R^d.
\end{equation} 
To estimate the unknown first-order intensity, the feasible criterion of $J_{h,n}(\cdot)$ becomes
\begin{equation} \label{eq:Jnmathbb-h}
\widehat{J}_{h,n}(\ob) = \mathcal{J}_{h,n}(\ob) -\widehat{\lambda}_{h,n} c_{h,n}(\ob), \quad \ob \in \R^d,
\end{equation} where $\widehat{\lambda}_{h,n} =  H_{h,1}^{-1}  |D_n|^{-1}\sum_{\xb \in X\cap D_n} h(\xb/\aB)$ ($n \in \N$) is an unbiased estimator of $\lambda$.

Finally, we define the periodogram and its feasible criterion respectively by
\begin{eqnarray} 
I_{h,n}(\ob) = |J_{h,n}(\ob)|^2  \quad \hbox{and} \quad \widehat{I}_{h,n}(\ob) = |\widehat{J}_{h,n}(\ob)|^2,  \quad \ob \in \R^d.
\label{eq:In-feas-h}
\end{eqnarray} 

\subsection{Features of the spectral density functions and their estimators} \label{sec:spec-char}
To motivate the spectral approaches for spatial point processes, the top panel of Figure \ref{fig:motiv} display four spatial point patterns on the observation domain $[-20,20]^2$. These patterns are generated from four different stationary isotropic point process models, exhibiting clustering behaviors in realizations A and B but repulsive behaviors in realizations C and D. All four models share the same first-order intensity, set at 0.5. In the middle panel of Figure \ref{fig:motiv}, we plot the pair correlation function (PCF; middle left) and spectral density function (middle right) for each process.

\begin{figure}[ht!]
\centering
\textbf{Realizations}

\includegraphics[width=0.9\textwidth]{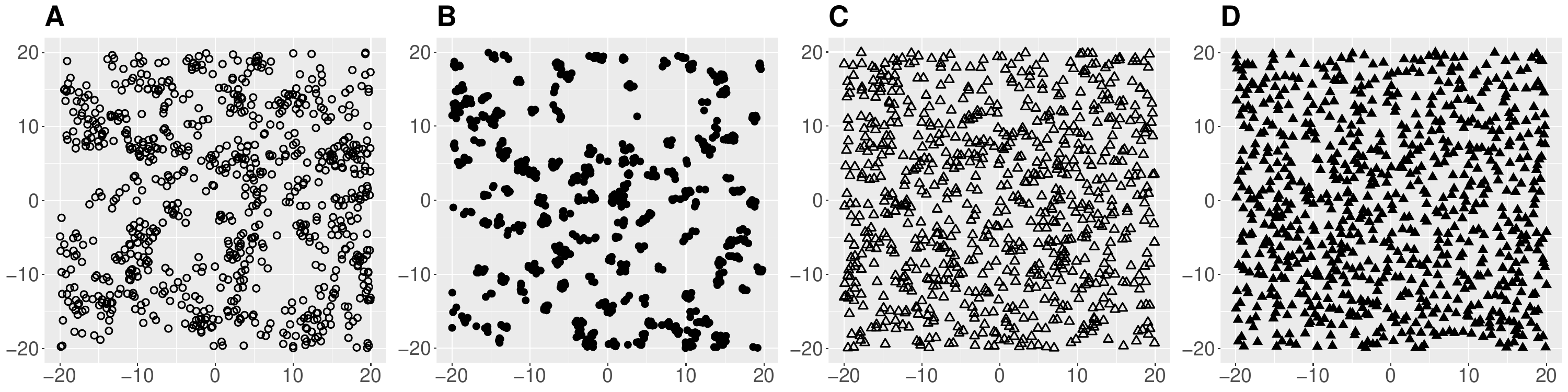}

\textbf{Pair correlation functions and spectral density functions}

\includegraphics[width=0.45\textwidth]{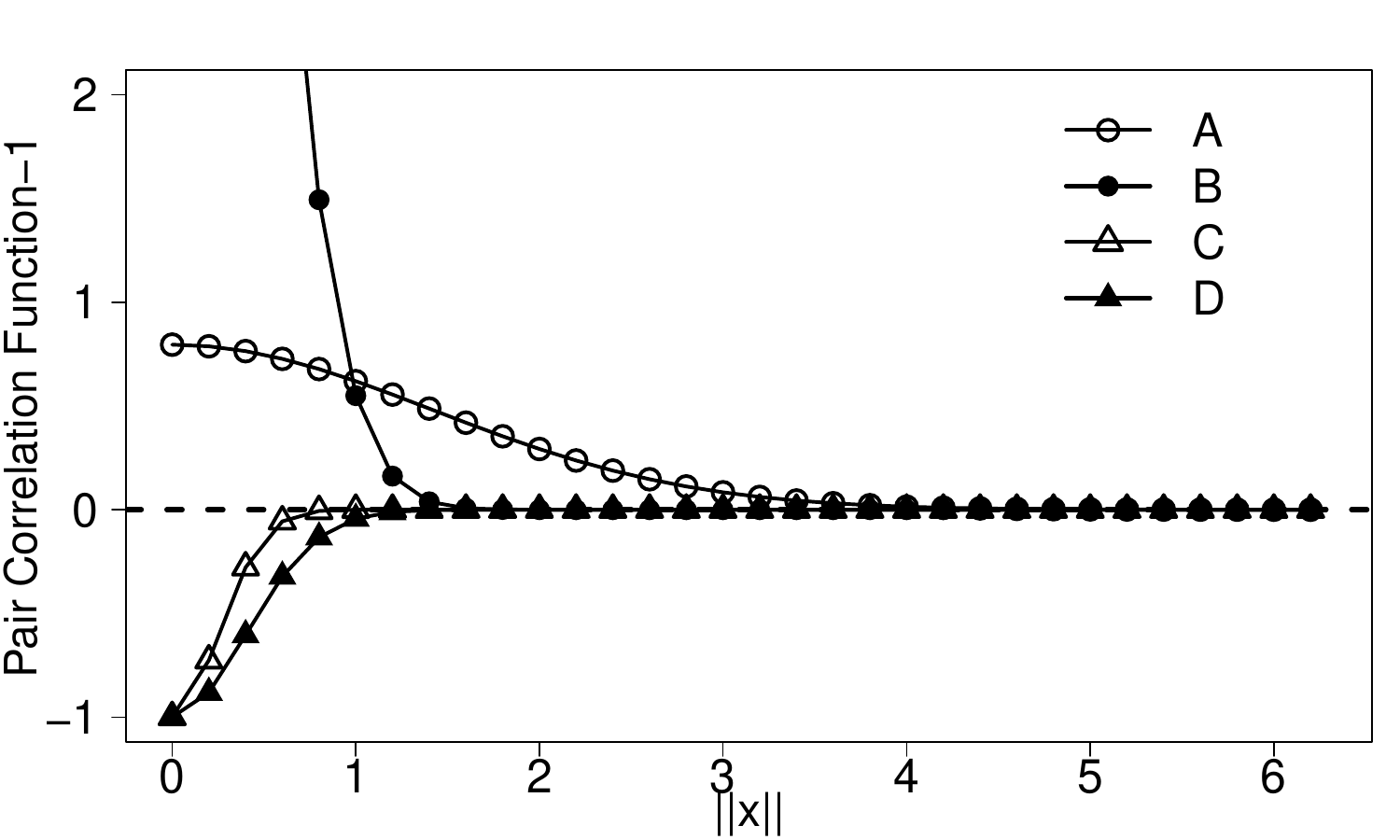}
\includegraphics[width=0.45\textwidth]{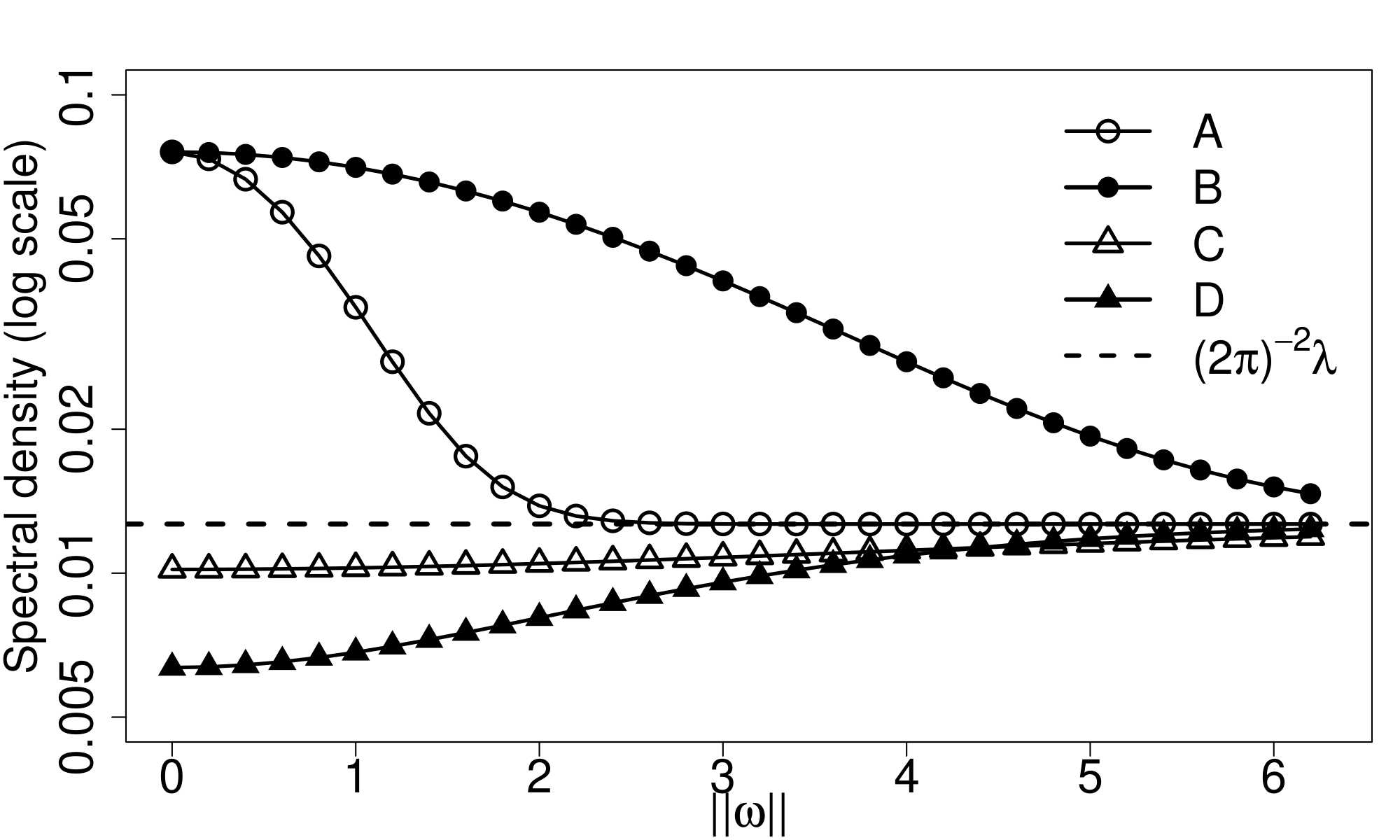}

\textbf{Periodograms}

\includegraphics[width=0.9\textwidth]{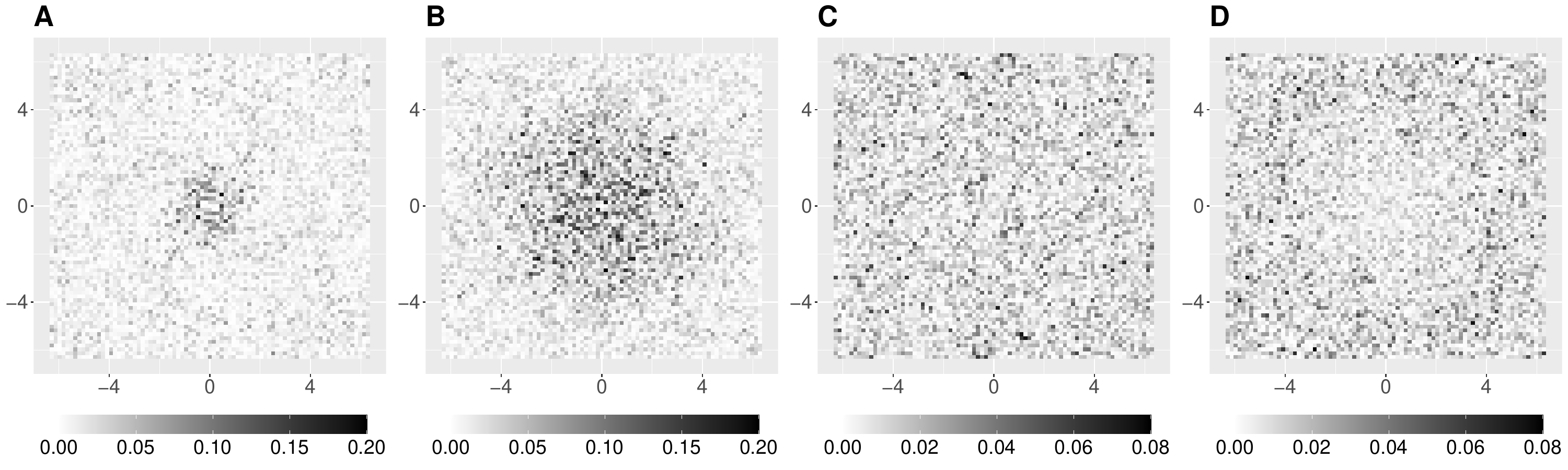}

\caption{\textit{ Top: Realizations of the four different stationary isotropic spatial point processes on the observation domain $[-20,20]^2$.
Middle left: Plot of the pair correlation function $g(\xb)-1$ against $\|\xb\| \in [0,\infty)$ for each model.
Middle right: Plot of the spectral density function $f(\ob)$ in log-scale against $\|\ob\| \in [0,\infty)$ for each model.
Bottom: Plot of the periodogram $\widehat{I}_{h,n}(\ob)$. 
}}
\label{fig:motiv}
\end{figure}

We now investigate how the features of the spatial point patterns are reflected in the spectral density functions. Since the PCF $g(\xb) = \gamma_{2,\text{red}}(\xb)/\lambda^2 +1$, by using (\ref{eq:spectral1}), we have
\begin{equation} \label{eq:PCF-spec}
f(\ob) - (2\pi)^{-d} \lambda = \lambda^2 \mathcal{F}^{-1} (g-1) (\ob), \quad \ob \in \R^2.
\end{equation}

Given the uniqueness of the Fourier transform, the information contained in the spectral density function can be fully recovered from the first-order intensity and the PCF, and vice versa. However, while the PCF $g(\xb)$ captures only "local" information of the point process at a certain lag $\xb$, the spectral density function encapsulates "global" information of the point process, including the first-order intensity (at high frequencies) and overall clustering/repulsive behavior (at low frequencies).

\vspace{0.3em}

\noindent \textit{High frequency information}. Assuming $g-1 \in L^1(\R^2)$ (which is equivalent to the Assumption \ref{assum:C} for $\ell=2$ below), (\ref{eq:PCF-spec}) implies $\lim_{\|\ob\| \rightarrow \infty} f(\ob) = (2\pi)^{-2} \lambda$. Therefore, at high frequencies, the spectral density function contains information about the first-order intensity. 

\vspace{0.3em}

\noindent \textit{Low frequency information}. Note, $g(\xb)-1 >0$ (resp. $<0$) implies the clustering (resp. repulsive) behavior at the fixed lag $\xb \in \R^2$. In the frequency domain, by using (\ref{eq:PCF-spec}), we have $f(\ob) - (2\pi)^{-2} \lambda \approx f(\textbf{0}) - (2\pi)^{-2} \lambda = (2\pi)^{-2} \lambda^2 \int_{\R^2} (g(\xb)-1) d\xb$ for $\|\ob\| \approx \textbf{0}$. Therefore, the spectral density function evaluated at low frequencies above (resp. below) the asymptote indicates the ``overall'' clustering (resp. repulsive) behavior of the point process. 

\vspace{0.3em}

\noindent \textit{Rate of convergence}. Comparing the clustered or repulsive realizations, realization B is more clustered than A while realization D is more repulsive than C. Reflected from the PCFs of the associated models, the PCF of model B is larger at small lag distances but drops more rapidly as the lag distance increases when compared to that of model A, while the PCF of model D is smaller than that of model C. In the frequency domain, one can also extract information on the quality of the clustering and repulsive behaviors. Since the decaying rate of the Fourier transform is related to the smoothness of the original function (cf. \cite{b:fol-99}, Theorem 8.22), the faster (resp. slower) convergence of the spectral density to the asymptote implies a smoother (resp. rougher) PCF. 

\vspace{0.3em}

\noindent \textit{Properties of the periodogram}.
Lastly, we discuss properties of the periodogram as a raw estimator of the spectral density function. The bottom panel of Figure \ref{fig:motiv} plots the periodograms $\widehat{I}_{h,n}(\ob)$ for realizations A--D. Using a computational method described in Appendix \ref{eq:per-compute}, it takes less than 0.06 seconds to evaluate the periodograms on a grid of frequencies $\{(2\pi k_1/40, 2\pi k_2/40): k_1, k_2 \in \{-40, \dots, 40\}\}$ for each model. We observe that for all realizations, the periodograms follow the trend of the corresponding spectral density functions. However, the periodograms are very noisy, indicating that uncorrelated noise fluctuations are added to the trend. We rigorously investigate theoretical properties of the periodograms in Theorem \ref{thm:In-bias} below. To obtain a consistent estimator of the spectral density function, one can locally smooth the periodogram. Detailed computations of the smoothed periodogram with illustrations can be found in Appendix \ref{sec:add-sim} and the theoretical results for the smoothed periodogram are presented in Sections \ref{sec:KSD} and \ref{sec:spec-mean}. %

\section{Asymptotic properties of the DFT and periodogram} \label{sec:asym-periodogram}

\subsection{Asymptotic results} \label{sec:DFT}

In this section, we investigate asymptotic properties of the DFT and periodogram. To do so, we require the following sets of assumptions.


 The first assumption is on the increasing-domain asymptotic framework.
\begin{assumption} \label{assum:A}
Let $D_n$ ($n\in \N$) be a sequence of increasing windows of form (\ref{eq:Dn}) with $\lim_{n \rightarrow \infty} |D_n| = \infty$. Moreover, $D_n$ grows with the same speed in all coordinates of $\R^d$:
\begin{equation} \label{eq:Aij}
A_{i}/A_{j} = O(1), \quad n \rightarrow \infty, \quad i,j \in \{1, \dots, d\}.
\end{equation}
\end{assumption} 

\vspace{0.5em}

The next assumption is on the higher-order cumulants of $X$.

\begin{assumption} \label{assum:C}
Let $\ell \in \{2,3, \dots\}$ be fixed. For $n \in \{1, \dots, \ell\}$,
the cumulant density function $\gamma_{n}$ in (\ref{eq:gamma}) is well-defined and 
\begin{equation} \label{eq:brillinger-mixing}
\sup_{\xb_n \in \R^d} \int_{\R^{d(n-1)}} 
\left|\gamma_{n} (\xb_1, \dots, \xb_{n})\right| d\xb_1 \cdots d\xb_{n-1}
<\infty, \quad n \in \{2, \dots, \ell\}.
\end{equation}
\end{assumption}
For an $\ell$th-order stationary process, Assumption \ref{eq:brillinger-mixing} can be equivalently expressed as $\gamma_{n,\text{red}} \in L^1(\R^{d(n-1)})$, $n \in \{2, \dots, \ell\}$. 

\vspace{0.5em}

The next assumption concerns the $\alpha$-mixing coefficient of $X$ that was first introduced by \cite{p:ros-56}.
For compact and convex subsets $E_i, E_j \subset \mathbb{R}^d$, 
let $d(E_{i}, E_{j})=\inf_{} \{ \|\xb_i - \xb_j\|_{\infty}: \xb_i\in E_{i}, \xb_j \in E_{j}\}$. Then, for $p, q, k \in (0,\infty)$,
the $\alpha$-mixing coefficient of $X$ is defined as
\begin{equation}
\begin{aligned}
\alpha_{p,q}(k) &= \sup_{A_{i}, A_j, E_i, E_j}  \Bigl\{\left| P(A_{i} \cap A_{j}) - P(A_{i}) P(A_{j}) \right|: 
 A_{i} \in \mathcal{F}_{}(E_{i}), A_{j} \in \mathcal{F}_{}(E_{j}), \\
& \qquad \qquad \quad |E_{i}| \leq p, |E_{j}| \leq q, d(E_{i}, E_{j}) \geq k \Bigr\},
\end{aligned}
 \label{eq:alpha}
\end{equation} 
where $\mathcal{F}_{}(E)$ denotes the $\sigma$-field generated by $X$ in $E_{} \subset \mathbb{R}^d$.

\begin{assumption} \label{assum:D}
Let $\alpha_{p,q}(k)$ be the $\alpha$-mixing coefficient of $X$ defined in (\ref{eq:alpha}). We assume one of the following two conditions.
\begin{itemize}
\item[(i)] There exists $\varepsilon>0$ such that $\sup_{p \in (0,\infty)} \alpha_{p,p}(k) / \max(p,1) = O(k^{-d-\varepsilon})$ as $k \rightarrow \infty$.
\item[(ii)] There exists $\varepsilon>2d$ such that $\sup_{p \in (0,\infty)} \alpha_{p,p}(k) / \max(p,1) = O(k^{-d-\varepsilon})$ as $k \rightarrow \infty$.
\end{itemize}
\end{assumption}

The last set of assumptions is on the data taper. 

\begin{assumption} \label{assum:E}
The data taper $h(\xb)$, $\xb \in \R^d$, is non-negative and has a compact support on $[-1/2,1/2]^d$. Moreover,
$h$ satisfies one of the following two conditions below.
\begin{itemize}
\item[(i)] $h$ is continuous on $[-1/2,1/2]^d$.
\item[(ii)] Let $m \in \N$ be fixed. For $\boldsymbol{\alpha}\in \{0,1, \dots\}^d$ with $1 \leq |\boldsymbol{\alpha}| \leq m$, $\partial^{\boldsymbol{\alpha}} h$ exists and is continuous on $\R^d$.
\end{itemize}
\end{assumption}
Assumption \ref{assum:E}(i) encompasses the non-taper case, i.e., $h(\xb) = 1$ on $[-1/2,1/2]^d$. Assumption \ref{assum:E}(ii) for $m=d+1$ is used to show the $|D_n|^{1/2}$-asymptotic normality of the integrated periodogram. To be more precise, it is required to show the Fourier transform of $h$ is absolutely integrable on $\R^d$. As an alternative condition, one can use a slightly different condition:
\begin{equation} \label{eq:h-summable}
\centering
\text{
$\partial^{\boldsymbol{\alpha}} h$ exists
for any $\boldsymbol{\alpha} = (\alpha_1, \dots, \alpha_d) \in \{0,1, 2\}^d$.
}
\end{equation}
Our theoretical results remain unchanged when substituting Assumption \ref{assum:E}(ii) for $m=d+1$ with (\ref{eq:h-summable}). An example of a data taper that satisfies (\ref{eq:h-summable}) is provided in (\ref{eq:taper-ex}). According to our simulation results in Section \ref{sec:sim}, the choice of data taper does not seem to affect the performance of the periodogram-based estimator.


Using the aformentioned sets of assumptions, we now establish the asymptotic joint distribution of the feasible criteria of the DFTs and periodograms.
Recall (\ref{eq:Cn-h}) and (\ref{eq:Jnmathbb-h}). It is easily seen that $\widehat{J}_{h,n}(\textbf{0}) = 0$. Consequently, we exclude the frequency at the origin. Next, we introduce the concept of asymptotically distant frequencies by \cite{p:ban-09}. For two sequences of frequencies $\{\ob_{1,n}\}_{n=1}^{\infty}$ and $\{\ob_{2,n}\}_{n=1}^{\infty}$ on $\R^d$,
we say $\{\ob_{1,n}\}$ and $\{\ob_{2,n}\}$ are asymptotically distant if 
\begin{equation*} 
\lim_{n\rightarrow \infty} |D_n|^{1/d} \|\ob_{1,n} - \ob_{2,n}\| = \infty.
\end{equation*} 
Now, we compute the limit of $\cov(\widehat{J}_{h,n}(\ob_{1,n}), \widehat{J}_{h,n}(\ob_{2,n}))$ and 
$\cov(\widehat{I}_{h,n}(\ob_{1,n}), \widehat{I}_{h,n}(\ob_{2,n}))$
 for two asymptotically distant frequencies.

\begin{theorem}[Asymptotic uncorrelatedness of the DFT and periodogram] \label{thm:In-bias}
Let $X$ be a second-order stationary point process on $\R^d$.
Suppose that Assumptions \ref{assum:A}, \ref{assum:C} (for $\ell=2$), and \ref{assum:E}(i) hold. 
Let $\{\ob_{1,n}\}$ and $\{\ob_{2,n}\}$ be sequences on $\R^d$ such that $\{\ob_{1,n}\}$, $\{\ob_{2,n}\}$, and $\{\textbf{0}\}$ 
are pairwise asymptotically distant. Moreover, let $\{\ob_{n}\}$ be a sequence that is asymptotically distant from $\{\textbf{0}\}$ and converges to the fixed frequency $\ob \in \R^d$.
Then,
\begin{eqnarray}
&& \lim_{n\rightarrow \infty} \cov(\widehat{J}_{h,n}(\ob_{1,n}), \widehat{J}_{h,n}(\ob_{2,n})) = 0.  \label{eq:lim-DFT} \\
&& \text{and} \quad \lim_{n\rightarrow \infty} \var(\widehat{J}_{h,n}(\ob_{n})) = f(\ob).
\label{eq:lim-DFT2}
\end{eqnarray}
If we further assume Assumption \ref{assum:C} for $\ell=4$ holds and $\{\ob_{1,n}\}$ and $\{-\ob_{2,n}\}$ are asymptotically distant, then
\begin{equation} \label{eq:lim-DFT3}
\lim_{n\rightarrow \infty} \cov(\widehat{I}_{h,n}(\ob_{1,n}), \widehat{I}_{h,n}(\ob_{2,n})) = 0 \quad \text{and} \quad
\lim_{n\rightarrow \infty} \var(\widehat{I}_{h,n}(\ob_{n}))
= f(\ob)^2.
\end{equation}
\end{theorem}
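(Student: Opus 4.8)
The plan is to reduce everything to the infeasible centered DFT $J_{h,n}$ and then compute its covariance directly from the complete covariance function $C$ of (\ref{eq:cov}), splitting $C$ into its Dirac part $\lambda\delta$ and its $L^1$ part $\gamma_{2,\text{red}}$ so as to sidestep the non-integrability of $f$. First I would show the feasibility correction is asymptotically negligible in mean square: since $\widehat{J}_{h,n}(\ob)-J_{h,n}(\ob) = -(\widehat{\lambda}_{h,n}-\lambda)c_{h,n}(\ob)$, a second-moment computation using Assumption \ref{assum:C} for $\ell=2$ gives $\var(\widehat{\lambda}_{h,n})=O(|D_n|^{-1})$, while $|c_{h,n}(\ob_n)|^2=O(|D_n|\,|\widetilde{h}(\aB\cdot\ob_n)|^2)$, where $\widetilde{h}$ denotes the Fourier transform of $h$ restricted to $[-1/2,1/2]^d$. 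Because each $\ob_{j,n}$ is asymptotically distant from $\textbf{0}$, Assumption \ref{assum:A} forces $\|\aB\cdot\ob_{j,n}\|\to\infty$, so Riemann--Lebesgue yields $\widetilde{h}(\aB\cdot\ob_{j,n})\to0$ and hence $\Ex|\widehat{J}_{h,n}(\ob_n)-J_{h,n}(\ob_n)|^2\to0$. By Cauchy--Schwarz this lets me replace every $\widehat{J}_{h,n}$ by $J_{h,n}$ in (\ref{eq:lim-DFT})--(\ref{eq:lim-DFT2}).

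For the infeasible DFT, expanding the double sum and separating the diagonal (coincident points) from the off-diagonal terms via (\ref{eq:lambda}) gives the closed form
\[
\cov(J_{h,n}(\ob_1),J_{h,n}(\ob_2)) = (2\pi)^{-d}H_{h,2}^{-1}|D_n|^{-1}\int_{D_n}\int_{D_n} h(\xb_1/\aB)h(\xb_2/\aB)e^{-i\xb_1^\top\ob_1+i\xb_2^\top\ob_2}C(\xb_1-\xb_2)\,d\xb_1 d\xb_2,
\]
which I split according to $C=\lambda\delta+\gamma_{2,\text{red}}$. The Dirac part collapses to $(2\pi)^{-d}H_{h,2}^{-1}\lambda\,\widetilde{h^2}(\aB\cdot(\ob_1-\ob_2))$; it equals $(2\pi)^{-d}\lambda$ exactly when $\ob_1=\ob_2$ and tends to $0$ for distant frequencies by Riemann--Lebesgue, contributing the high-frequency asymptote of $f$. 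For the $\gamma_{2,\text{red}}$ part, since $\gamma_{2,\text{red}}\in L^1(\R^d)$ the window integral is absolutely convergent and Fubini is legitimate; after the change of variables $\sbb=\xb_1-\xb_2$ the inner taper integral, rescaled by $|D_n|^{-1}$, converges pointwise to $H_{h,2}$ using continuity of $h$ (Assumption \ref{assum:E}(i)), and dominated convergence with dominating function $|\gamma_{2,\text{red}}|\,\|h\|_\infty^2$ gives $\mathcal{F}^{-1}(\gamma_{2,\text{red}})(\ob)$ for the variance and $0$ for distant frequencies (in the latter case the inner integral carries an oscillatory factor $e^{-i\yb^\top\aB\cdot(\ob_1-\ob_2)}$ that vanishes by Riemann--Lebesgue). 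Adding the two parts yields $f(\ob)$ and (\ref{eq:lim-DFT}).

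For the periodogram I would write $I_{h,n}(\ob)=J_{h,n}(\ob)J_{h,n}(-\ob)$ (valid since $\overline{J_{h,n}(\ob)}=J_{h,n}(-\ob)$) and apply the product-moment expansion for mean-zero variables,
\[
\cov(I_{h,n}(\ob_1),I_{h,n}(\ob_2)) = \Ex[W_1W_3]\Ex[W_2W_4] + \Ex[W_1W_4]\Ex[W_2W_3] + \cum(W_1,W_2,W_3,W_4),
\]
with $(W_1,W_2,W_3,W_4)=(J_{h,n}(\ob_1),J_{h,n}(-\ob_1),J_{h,n}(\ob_2),J_{h,n}(-\ob_2))$. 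Each pairwise factor is a DFT covariance whose frequency separation is $\ob_1\pm\ob_2$; the hypotheses that $\{\ob_{1,n}\},\{\ob_{2,n}\}$ and $\{\ob_{1,n}\},\{-\ob_{2,n}\}$ are asymptotically distant make all four products vanish by the already-proved (\ref{eq:lim-DFT}), whereas for the variance ($\ob_1=\ob_2=\ob_n$) the factors $\Ex[W_1W_4]$ and $\Ex[W_2W_3]$ each converge to $f(\ob)$ by (\ref{eq:lim-DFT2}) and the evenness of $f$, producing $f(\ob)^2$. The transfer from $I_{h,n}$ to $\widehat{I}_{h,n}$ then follows by H\"older and Cauchy--Schwarz once I establish $\Ex|\widehat{\lambda}_{h,n}-\lambda|^4=O(|D_n|^{-2})$ and $\Ex|J_{h,n}(\ob_n)|^4=O(1)$, both consequences of Assumption \ref{assum:C} for $\ell=4$.

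The hard part will be showing $\cum(W_1,\dots,W_4)\to0$, since here the non-integrability issue is most acute and the complete fourth cumulant of the counting measure must be expanded into its ``connected'' pieces. I would write it as $(2\pi)^{-2d}H_{h,2}^{-2}|D_n|^{-2}$ times an integral against the complete fourth-order cumulant density, which decomposes into a genuine $\gamma_{4,\text{red}}$ term together with lower-order diagonal (coincidence) terms carrying Dirac factors and the reduced densities $\gamma_{3,\text{red}}$, $\gamma_{2,\text{red}}$ and $\lambda$. The crucial structural fact is that every connected term admits exactly one free center-of-mass integration of order $|D_n|$, while the remaining difference integrals are bounded by the $L^1$ norms guaranteed by Assumption \ref{assum:C} for $\ell=4$; hence each term is $O(|D_n|^{-2})\cdot O(|D_n|)=O(|D_n|^{-1})$ and the cumulant vanishes. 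Enumerating these coincidence terms correctly, and verifying that no disconnected configuration (which would give a spurious $O(1)$ contribution) survives in the cumulant, is the main bookkeeping obstacle; throughout I must keep the constant Dirac part of $C$ outside any frequency-domain integration, so as never to integrate the non-integrable $f$.
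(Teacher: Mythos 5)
Your proposal is correct and follows essentially the same route as the paper: reduce to the infeasible DFT via the $L^2$-negligibility of $(\widehat{\lambda}_{h,n}-\lambda)c_{h,n}(\ob_n)$, compute the exact covariance by splitting $C=\lambda\delta+\gamma_{2,\text{red}}$ and applying Riemann--Lebesgue plus dominated convergence, then expand the periodogram covariance into two covariance products and a fourth cumulant bounded by $O(|D_n|^{-1})$ through one free integration against the absolutely integrable reduced cumulant densities. The only notable difference is that you make the feasible-to-infeasible transfer for the periodogram explicit via fourth-moment bounds on $\widehat{\lambda}_{h,n}$ and $J_{h,n}$, a step the paper's proof passes over silently; this is a point in your favor rather than a gap.
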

\begin{proof} See Appendix \ref{sec:proof0}.
\end{proof}

By using the aforementioned moment properties in conjunction with the $\alpha$-mixing condition, we derive the asymptotic joint distribution of the DFTs and periodograms.

\begin{theorem}[Asymptotic joint distribution of the DFTs and periodograms] \label{thm:asymp-DFT}
Let $X$ be a second-order stationary point process on $\R^d$.
Suppose that Assumptions \ref{assum:A}, \ref{assum:C} (for $\ell=4$), \ref{assum:D}(i), and \ref{assum:E}(i) hold.
For a fixed $r\in \N$, $\{\ob_{1,n}\}$, \dots, $\{\ob_{r,n}\}$ denote
$r$ sequences on $\R^d$ that satisfy the following three conditions: for $(i,j) \in \{1, \dots, r\}^{2,\neq}$,
\begin{itemize}
\item[(1)] $\lim_{n\rightarrow \infty} \ob_{i,n} = \ob_{i} \in \R^d$.
\item[(2)] $\{\ob_{i,n}\}$ is asymptotically distant from $\{\textbf{0}\}$.
\item[(3)] $\{\ob_{i,n} + \ob_{j,n}\}$ and $\{\ob_{i,n} - \ob_{j,n}\}$ are asymptotically distant from $\{\textbf{0}\}$.
\end{itemize}
 Then, we have
\begin{equation*}
\left( \frac{\widehat{J}_{h,n}(\ob_{1,n})}{ (\frac{1}{2}f(\ob_1))^{1/2}}, \dots, \frac{\widehat{J}_{h,n}(\ob_{r,n})}{(\frac{1}{2}f(\ob_r))^{1/2}} \right)
 \Dcon (Z_1, \dots, Z_r), \quad n \rightarrow \infty,
\end{equation*} where $\Dcon$ denotes weak convergence and
$\{Z_k\}_{k=1}^{r}$ are independent standard normal random variables on $\C$. Therefore, by using the continunous mapping theorem, we have
\begin{equation*}
\left(\frac{\widehat{I}_{h,n}(\ob_{1,n})}{\frac{1}{2}f(\ob_1)}, \dots,
\frac{\widehat{I}_{h,n}(\ob_{r,n})}{\frac{1}{2}f(\ob_r)}
 \right)
 \Dcon (\chi^2_1, \dots, \chi^2_r), \quad n \rightarrow \infty,
\end{equation*} where $\{\chi^2_k\}_{k=1}^{r}$ are independent chi-squared random variables with degrees of freedom two. 
\end{theorem}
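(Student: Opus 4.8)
The plan is to establish the joint convergence of the complex vector by the Cram\'er--Wold device, reducing everything to a univariate central limit theorem for a real-valued linear statistic of the point process, and then to read off the periodogram statement via the continuous mapping theorem. I work throughout with the feasible DFT, which, recalling (\ref{eq:Jnmathbb-h}) and $\widehat{\lambda}_{h,n} = H_{h,1}^{-1}|D_n|^{-1}\sum_{\xb\in X\cap D_n}h(\xb/\aB)$, is a single linear statistic
\begin{equation*}
\widehat{J}_{h,n}(\ob) = \sum_{\xb \in X \cap D_n} \psi_{n,\ob}(\xb), \qquad \psi_{n,\ob}(\xb) = h(\xb/\aB)\Bigl[(2\pi)^{-d/2}H_{h,2}^{-1/2}|D_n|^{-1/2}e^{-i\xb^\top\ob} - c_{h,n}(\ob)H_{h,1}^{-1}|D_n|^{-1}\Bigr].
\end{equation*}
Since $\Ex[\widehat{\lambda}_{h,n}] = \lambda$ and $\Ex[\mathcal{J}_{h,n}(\ob)] = \lambda c_{h,n}(\ob)$, this statistic is exactly centered, $\Ex[\widehat{J}_{h,n}(\ob)] = 0$. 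For fixed reals $a_1,b_1,\dots,a_r,b_r$, the target linear combination $T_n = \sum_{k=1}^r\{a_k\,\mathrm{Re}\,\widehat{J}_{h,n}(\ob_{k,n}) + b_k\,\mathrm{Im}\,\widehat{J}_{h,n}(\ob_{k,n})\}$ is then a real linear statistic $T_n = \sum_{\xb\in X\cap D_n}\phi_n(\xb)$ with a bounded, oscillating integrand obeying $|\phi_n(\xb)| \le C|D_n|^{-1/2}$ uniformly in $\xb$ (here $h$ is bounded by Assumption~\ref{assum:E}(i)).

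First I would pin down the limiting covariance. Because $\overline{\widehat{J}_{h,n}(\ob)} = \widehat{J}_{h,n}(-\ob)$ and the statistic is centered, both the Hermitian covariance $\Ex[\widehat{J}_{h,n}(\ob_{i,n})\overline{\widehat{J}_{h,n}(\ob_{j,n})}]$ and the pseudo-covariance $\Ex[\widehat{J}_{h,n}(\ob_{i,n})\widehat{J}_{h,n}(\ob_{j,n})] = \cov(\widehat{J}_{h,n}(\ob_{i,n}),\widehat{J}_{h,n}(-\ob_{j,n}))$ reduce to the quantity in (\ref{eq:lim-DFT}). Condition (2) makes $\{\ob_{k,n}\}$ and $\{-\ob_{k,n}\}$ (and each, with $\{\textbf{0}\}$) pairwise asymptotically distant, so the pseudo-variance $\Ex[\widehat{J}_{h,n}(\ob_{k,n})^2]\to 0$ by (\ref{eq:lim-DFT}), while condition (1) with (\ref{eq:lim-DFT2}) gives $\Ex|\widehat{J}_{h,n}(\ob_{k,n})|^2 \to f(\ob_k)$; together these force $\var(\mathrm{Re}\,\widehat{J}_{h,n}(\ob_{k,n}))\to \tfrac12 f(\ob_k)$, $\var(\mathrm{Im}\,\widehat{J}_{h,n}(\ob_{k,n}))\to \tfrac12 f(\ob_k)$, and $\cov(\mathrm{Re},\mathrm{Im})\to 0$. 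For $k\ne \ell$, condition (3) makes $\{\ob_{k,n}\pm\ob_{\ell,n}\}$ distant from $\{\textbf{0}\}$, so both the Hermitian and the pseudo cross-covariances vanish by (\ref{eq:lim-DFT}), and all real and imaginary parts at distinct frequencies are asymptotically uncorrelated. Hence $\var(T_n)\to \sigma^2 := \sum_{k=1}^r(a_k^2+b_k^2)\tfrac12 f(\ob_k)$, exactly the diagonal structure that produces independent $Z_k$ after normalization.

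The core and hardest step is to prove $T_n \Dcon N(0,\sigma^2)$, i.e. a central limit theorem for a triangular array of linear statistics with an $n$-dependent, oscillating integrand under only the fourth-order moment control of Assumption~\ref{assum:C} ($\ell=4$) and the mixing decay of Assumption~\ref{assum:D}(i). The plan is a Bernstein big-block/small-block decomposition of $D_n$: partition the window into large cubes of side $b_n$ separated by corridors of width $c_n$, with $c_n\to\infty$, $c_n/b_n\to 0$ and $b_n/|D_n|^{1/d}\to 0$, chosen so that the rate $O(c_n^{-d-\varepsilon})$ in Assumption~\ref{assum:D}(i) renders the large-block sums asymptotically independent while the corridor contribution has variance $o(1)$. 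One then couples the dependent large-block sums with independent copies at a cost controlled by $\alpha_{p,p}$ (Rosenblatt/Bolthausen-type bounds), and verifies a Lyapunov condition for the independent array: Assumption~\ref{assum:C} ($\ell=4$) supplies a uniform bound on $\Ex|T_n|^4$ through the cumulant intensities $\gamma_2,\gamma_3,\gamma_4$, which together with $|\phi_n|\le C|D_n|^{-1/2}$ makes the summed fourth moments of the block sums vanish, giving normality of the independent approximation and hence of $T_n$. The main obstacles are (i) that $\phi_n$ oscillates and is not a fixed function, so the block variances and fourth moments must be bounded uniformly using the translation structure of the cumulant densities rather than a law of large numbers, and (ii) the non-integrability of $f$ noted in the introduction, so the cumulant expansions of the moments of $T_n$ must be regrouped (separating the Dirac-delta diagonal terms from the absolutely integrable $\gamma_{n,\text{red}}$) before the mixing bounds apply; this is the machinery deferred to Appendices~\ref{sec:A1-3} and~\ref{sec:mixing}.

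By the Cram\'er--Wold device, the joint convergence of $(\widehat{J}_{h,n}(\ob_{1,n}),\dots,\widehat{J}_{h,n}(\ob_{r,n}))$ to a mean-zero Gaussian vector with the diagonal covariance identified above follows; dividing the $k$th coordinate by $(\tfrac12 f(\ob_k))^{1/2}$ yields exactly $(Z_1,\dots,Z_r)$ with $Z_k = U_k + iV_k$, the $U_k,V_k$ being independent standard normals, independent across $k$. Finally, since $z\mapsto |z|^2$ is continuous and $\widehat{I}_{h,n}(\ob_{k,n}) = |\widehat{J}_{h,n}(\ob_{k,n})|^2$, the continuous mapping theorem gives $\widehat{I}_{h,n}(\ob_{k,n})/(\tfrac12 f(\ob_k)) \Dcon |Z_k|^2 = U_k^2+V_k^2 \sim \chi^2_2$, jointly and independently across $k$, which is the second display.
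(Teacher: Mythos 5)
Your proposal is correct and follows essentially the same route as the paper: reduce to a real linear statistic via the Cram\'er--Wold device, identify the limiting (diagonal) covariance of the real and imaginary parts from Theorem \ref{thm:In-bias} using conditions (1)--(3), and invoke a standard blocking-type $\alpha$-mixing CLT whose Lyapunov condition is verified with $\delta=2$ through the fourth-order cumulant bound supplied by Assumption \ref{assum:C} for $\ell=4$. The only cosmetic difference is that you keep the feasible DFT throughout by exploiting its exact centering, whereas the paper first passes to the theoretical DFT $J_{h,n}$ via the $L_2$-equivalence of Theorem \ref{thm:asymp-equiv} and then applies the CLT to the centered statistic $\mathcal{Z}_n-\Ex[\mathcal{Z}_n]$; both reductions lead to the same bounded $O(|D_n|^{-1/2})$ kernel and the same moment calculations.
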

\begin{proof} See Appendix \ref{sec:proof1}. 
\end{proof}

\begin{remark}
The limit frequencies $\ob_1, \dots, \ob_r \in \R^d$ need not be distinct nor nonzero, as long as the sequences $\{\ob_{1,n}\}, \dots, \{\ob_{r,n}\}$ satisfy the conditions (1)--(3) in Theorem \ref{thm:asymp-DFT}. 
\end{remark}


\subsection{Nonparametric kernel spectral density estimator} \label{sec:KSD}
We observe from Theorem \ref{thm:In-bias} that $\lim_{n\rightarrow \infty} \var (\widehat{I}_{h,n}(\ob_{})) = f(\ob)^2>0$, $\ob \in \R^d \backslash \{\textbf{0}\}$. Therefore, the periodogram is an inconsistent estimator of the spectral density function. In this section, we obtain a consistent estimator of the spectral density function via periodogram smoothing. 

Let $W: \R^d \rightarrow \R$ be a positive continuous and symmetric kernel function with compact support on $[-1/2,1/2]^d$, satisfying $\int_{\R^d} W(\xb) d\xb = 1$ and $\int_{\R^d} W(\xb)^2 d\xb <\infty$. For a bandwidth $\bbb = (b_1, \dots, b_d)^\top \in (0,\infty)^d$, let $W_{\bbb}(\xb) = (b_1 \cdots b_d)^{-1} W(\xb/\bbb)$, $\xb \in \R^d$. For ease of presentation, we set $b_1 = \cdots = b_d = b \in (0,\infty)$. Thus, we write $W_{b}(\xb) = W_{\bbb}(\xb) = b^{-d}  W(b^{-1}\xb)$, $\xb \in \R^d$. Now, we define the kernel spectral density estimator $\widehat{f}_{n,b}(\ob)$ by
\begin{equation} \label{eq:KSD}
\widehat{f}_{n,b}(\ob)
= \int_{\R^d} W_{b}(\ob - \xb) \widehat{I}_{h,n}(\xb)  d\xb
, \quad n \in \N, \quad \ob \in \R^d.
\end{equation}
Below, we show that $\widehat{f}_{n,b}$ consistently estimates the spectral density function.
\begin{theorem} \label{thm:KSD}
Let $X$ be a second-order stationary point process on $\R^d$. Suppose that Assumptions \ref{assum:A}, \ref{assum:C} (for $\ell=4$), and \ref{assum:E}(i) hold. 
Moreover, the bandwidth $b = b(n)$ is such that $\lim_{n\rightarrow \infty} b(n) + |D_n|^{-1}b(n)^{-d} = 0$. 
Then, for $\ob \in \R^d$,
\begin{equation*}
 \widehat{f}_{n,b}(\ob)  \Pcon f(\ob), \quad n \rightarrow \infty,
\end{equation*}
where $\Pcon$ denotes convergence in probability.
\end{theorem}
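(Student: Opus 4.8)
\emph{Overall strategy.} The plan is to prove $L^2$-convergence, $\Ex|\widehat{f}_{n,b}(\ob) - f(\ob)|^2 \to 0$, which yields the asserted convergence in probability via Chebyshev's inequality. I would decompose the error into a deterministic bias and a centered stochastic part,
$$\widehat{f}_{n,b}(\ob) - f(\ob) = \big(\Ex[\widehat{f}_{n,b}(\ob)] - f(\ob)\big) + \big(\widehat{f}_{n,b}(\ob) - \Ex[\widehat{f}_{n,b}(\ob)]\big),$$
and show that the first term tends to $0$ while the second has variance tending to $0$. Since $\widehat{f}_{n,b}(\ob) = \int_{\R^d} W_b(\ob - \xb)\widehat{I}_{h,n}(\xb)d\xb$, linearity of the expectation and of the double integral reduce everything to the first two moments of the feasible periodogram $\widehat{I}_{h,n}$, which are precisely the quantities governed by Theorem \ref{thm:In-bias}.

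\emph{Bias.} Because $\widehat{\lambda}_{h,n}$ is unbiased for $\lambda$, the feasible centered DFT $\widehat{J}_{h,n}$ has mean zero, so $\Ex[\widehat{I}_{h,n}(\xb)] = \var(\widehat{J}_{h,n}(\xb))$. I would then insert and subtract the smoothed true density,
$$\Ex[\widehat{f}_{n,b}(\ob)] - f(\ob) = \int_{\R^d} W_b(\ob - \xb)\big(\var(\widehat{J}_{h,n}(\xb)) - f(\xb)\big)d\xb + \Big(\int_{\R^d} W_b(\ob-\xb)f(\xb)d\xb - f(\ob)\Big).$$
The second bracket is the classical kernel approximation error, which tends to $0$ as $b \to 0$ by continuity of $f$, the normalization $\int_{\R^d} W = 1$, and the shrinking support of $W_b$. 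For the first bracket, (\ref{eq:lim-DFT2}) gives $\var(\widehat{J}_{h,n}(\xb_n)) \to f(\xb)$ along any sequence $\xb_n \to \xb$ that is asymptotically distant from $\textbf{0}$; this applies to the frequencies in the shrinking support $\ob + b[-1/2,1/2]^d$ of $W_b(\ob - \cdot)$, so combined with a dominating bound on $\var(\widehat{J}_{h,n})$ over that support the first bracket also tends to $0$. (When $\ob = \textbf{0}$ the distance-from-$\textbf{0}$ requirement fails only on a core of volume $o(b^d)$, which is negligible in the integral.)

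\emph{Variance.} The stochastic part has variance
$$\var\big(\widehat{f}_{n,b}(\ob)\big) = \int_{\R^d}\int_{\R^d} W_b(\ob-\xb)W_b(\ob-\yb)\,\cov\big(\widehat{I}_{h,n}(\xb),\widehat{I}_{h,n}(\yb)\big)d\xb\,d\yb.$$
Here the limits in (\ref{eq:lim-DFT3}) are not by themselves sufficient; I would instead use a quantitative bound showing that $\cov(\widehat{I}_{h,n}(\xb),\widehat{I}_{h,n}(\yb))$ is of Fej\'er-kernel type, concentrated in a band of width of order $|D_n|^{-1/d}$ around the diagonal $\xb = \yb$ and the antidiagonal $\xb = -\yb$, together with a lower-order contribution from the fourth-order cumulant spectral density (this is where Assumption \ref{assum:C} for $\ell = 4$ enters). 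Substituting $\ubb = (\ob-\xb)/b$ and $\vbb = (\ob-\yb)/b$ turns the bandwidth hypothesis $|D_n|^{-1}b^{-d}\to 0$, i.e. $|D_n|^{1/d}b \to \infty$, into the statement that $\ob - b\ubb$ and $\ob - b\vbb$ are asymptotically distant whenever $\ubb \neq \vbb$; exploiting the concentration of the covariance and the near-constancy of $W_b(\ob-\yb)$ across the width-$|D_n|^{-1/d}$ band then gives $\var(\widehat{f}_{n,b}(\ob)) = O(|D_n|^{-1}b^{-d}) \to 0$. Combining the vanishing bias and variance finishes the proof.

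\emph{Main obstacle.} The crux is the variance step. The limits of Theorem \ref{thm:In-bias} are pointwise in the frequency pair, whereas kernel smoothing integrates the covariance against mass that itself concentrates as $b \to 0$, so pointwise convergence cannot be promoted to the integral without a uniform estimate. I therefore expect the real work to be establishing a uniform, Fej\'er-type upper bound on $\cov(\widehat{I}_{h,n}(\xb),\widehat{I}_{h,n}(\yb))$ with explicit $|D_n|^{-1/d}$-scale concentration and an integrable fourth-cumulant remainder --- that is, quantifying the covariance and cumulant expansions that underlie Theorem \ref{thm:In-bias} rather than merely invoking their limits --- and then verifying that the resulting double integral is of exact order $|D_n|^{-1}b^{-d}$.
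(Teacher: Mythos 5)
Your proposal is correct and follows essentially the same route as the paper: a bias--variance decomposition with Chebyshev, the classical kernel-smoothing argument for the bias, and a quantitative Fej\'er-type concentration of $\cov(\widehat{I}_{h,n}(\xb),\widehat{I}_{h,n}(\yb))$ (plus a fourth-cumulant remainder controlled by Assumption \ref{assum:C} for $\ell=4$) giving $\var(\widehat{f}_{n,b}(\ob)) = O(|D_n|^{-1}b^{-d})$. The only organizational difference is that the paper first replaces $\widehat{I}_{h,n}$ by its infeasible counterpart $I_{h,n}$ via an asymptotic-equivalence lemma (Corollary \ref{coro:KDE-asym}), which isolates the effect of estimating $\lambda$ and lets it use a variance expansion that is uniform in the frequency, whereas you work with the feasible periodogram directly and absorb the neighborhood of the origin as a negligible-volume region --- both of which work.
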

\begin{proof} See Appendix \ref{sec:proofKSD}. 
\end{proof}

\section{Estimation of the spectral mean statistics} \label{sec:spec-mean}
 Let $D$ be a prespecified compact region on $\R^d$ that does not depend on the index $n \in \N$. For a real continuous function $\phi(\cdot)$ on $D$, our goal is to estimate parameter written in terms of the the spectral mean on $D$:
\begin{equation} \label{eq:Aphi}
A(\phi) = \int_{D} \phi(\ob) f(\ob)  d\ob.
\end{equation} 
A natural estimator of $A(\phi)$ is the integrated periodogram 
\begin{equation} \label{eq:Anphi}
\widehat{A}_{h,n}(\phi) = \int_{D} \phi(\ob) \widehat{I}_{h,n}(\ob) d\ob, \quad n\in\N.
\end{equation}

We briefly mention two examples of estimation problems for spatial point processes that fall under the above framework. First, let $\phi(\cdot) = \phi_{b}(\cdot) = W_{b}(\ob - \cdot)$, where $W_{b}$ is a kernel function with bandwidth $b \in (0,\infty)$ as in Section \ref{sec:KSD}. Since $f(\ob)$ is locally constant in a small neighborhood of $\ob \in \R^d$, as $b = b(n) \rightarrow 0$, we have $A(\phi_b) \approx f(\ob)$. Thus, $\widehat{f}_{n,b}(\ob) = \widehat{A}_{h,n}(\phi_b)$, as in (\ref{eq:KSD}), is our non-parametric estimator of the spectral density. Second, let $\phi(\cdot) = \phi(\cdot; \btheta) = f_{\btheta}^{-1}(\cdot)$, where $\{f_{\btheta}\}$ is a family of spectral density functions with parameter $\btheta \in \Theta$. Then, $A(\phi(\cdot; \btheta)) + \int_{D} \log f_{\btheta}(\ob) d\ob$ denotes the spectral divergence between $f$ and $f_{\btheta}$. An estimator of the spectral divergence is given by $\widehat{A}_{h,n}(\phi(\cdot; \btheta)) + \int_{D} \log f_{\btheta}(\ob) d\ob$, which we refer to as the Whittle likelihood. Please see Section \ref{sec:Whittle} for further details on the Whittle likelihood and the resulting model parameter estimation.

To derive the asymptotic properties of the integrated periodogram, we note that the variance expression of $\widehat{A}_{h,n}(\phi)$ involves the fourth-order cumulant term of $X$. To obtain a simple limiting variance expression of $\widehat{A}_{h,n}(\phi)$, we assume that $X$ is fourth-order stationary, thus both $\lambda_{n,\text{red}}$ and $\gamma_{n,\text{red}}$ in (\ref{eq:lambda-reduced}) are well-defined for $n \in \{1,2,3,4\}$. Following an argument similar to \cite{p:bar-64}, we introduce the complete fourth-order reduced cumulant density function, denoted as $\kappa_{4,\text{red}}(\tb_1-\tb_4, \tb_2-\tb_4,\tb_3-\tb_4)$. Heuristically, this function is defined as a cumulant density function of $N_{X}(d\tb_1)$, $N_{X}(d\tb_2)$, $N_{X}(d\tb_3)$, and $N_{X}(d\tb_4)$, where $\tb_1, \dots, \tb_4 \in \R^d$ may not necessarily be distinct. Explicitly, $\kappa_{4,\text{red}}(\cdot, \cdot,\cdot)$ can be written as a sum of reduced cumulant intensity functions of orders up to four. See (\ref{eq:4th-cum}) in the Appendix for a precise expression. Therefore, under Assumption \ref{assum:C} for $\ell=4$, it can be easily seen that $\kappa_{4,\text{red}} \in L^1(\R^{3d})$, in turn, the fourth-order spectral density of $X$ can be defined as an inverse Fourier transform of $\kappa_{4,\text{red}}$
\begin{equation} \label{eq:4th-spec}
f_4(\ob_1, \ob_2, \ob_3) = (2\pi)^{-3d} \int_{\R^{3d}}\exp \left(-i \sum_{i=1}^{3} \ob_i^\top \xb_i \right) \kappa_{4,\text{red}} (\xb_1,\xb_2,\xb_3) d\xb_1 
d\xb_2 d\xb_3
\end{equation} for $\ob_1, \ob_2, \ob_3 \in \R^d$.
We now introduce the following assumptions on $f$ and $f_4$. For $\boldsymbol{\alpha} = (\alpha_1, \dots, \alpha_d) \in \{0,1, \dots\}^d$, let $\partial^{\boldsymbol{\alpha}} = \left( \partial / \partial \omega_1\right)^{\alpha_1} \cdots  \left( \partial / \partial \omega_d\right)^{\alpha_d}$ be the $\boldsymbol{\alpha}$th-order partial derivative.

\begin{assumption} \label{assum:B}
Suppose the spectral density function $f(\ob)$ of $X$ is well-defined for all $\ob \in \R^d$. Moreover, $f$ satisfies the following:
(i) $f(\ob) - (2\pi)^{-d}\lambda \in L^1(\R^d)$ and (ii) for $\boldsymbol{\alpha}  \in \{0,1,2\}^d$ with $|\boldsymbol{\alpha}|=2$, 
 $\partial^{\boldsymbol{\alpha}} f (\ob)$ exists for $\ob \in \R^d$ and $\sup_{\ob} \left|\partial^{\boldsymbol{\alpha}} f (\ob) \right| <\infty$.
\end{assumption} 
We make two remarks on the above assumptions. Firstly, we observe from (\ref{eq:spectral1}) that $f$ is not absolutely integrable. Instead, given Assumption \ref{assum:B}(i), the spectral density function, when appropriately ``shifted'', admits the Fourier transformation
\begin{equation} \label{eq:cov-spectral}
\gamma_{2,\text{red}}(\xb) = 
\mathcal{F}(f(\ob) - (2\pi)^{-d}  \lambda ) = 
\int_{\R^d}  \left\{ f(\ob) - (2\pi)^{-d} \lambda \right\} \exp(i\xb^\top \ob)d\ob.
\end{equation}
Secondly, for some parametric models (e.g., the log-Gaussian Cox processes in Section \ref{sec:LGCP} below), a closed-form expression for $f$ is not available, while $\gamma_{2,\text{red}}$ has an analytic form. In this case, a sufficient condition for Assumption \ref{assum:B}(i) to hold is that $\gamma_{2,\text{red}}$ has continuous partial derivatives up to order $(d+1)$, as per \cite{b:fol-99}, Theorem 8.22 (see also page 257 of the same reference). Moreover, since $(\partial^2 f/ \partial \omega_i \partial \omega_j)(\ob) = -(2\pi)^{-d}\int_{\R^d} x_i x_j \gamma_{2,\text{red}}(\xb) \exp(-i \ob^\top \xb) d\xb$, Assumption \ref{assum:B}(ii) holds if $|\xb|^{2} \gamma_{2,\text{red}}(\xb) \in L^{1}(\R^{d})$.

\begin{assumption} \label{assum:F}
Suppose that $f_4(\ob_1, \ob_2, \ob_3)$, is well-defined for all $\ob_1, \ob_2, \ob_3 \in \R^d$. Moreover, $f_4$ satisfies the following: 
(i) $f_4 - (2\pi)^{-3d} \lambda \in L^1(\R^{3d})$ and (ii) $f_4(\ob_1, \ob_2, \ob_3)$ is twice partial differentiable with respect to $\ob_2 $ and the second-order partial derivative is bounded above.
\end{assumption} 
By using the same auguments above, sufficient conditions for Assumption \ref{assum:F} to hold in terms of the differentiability and integrability of $\gamma_{n,\text{red}}$ ($n \in \{2,3,4\}$) also can be easily derived.

Now, we are ready to state our main theorem addressing the asymptotic normality of $\widehat{A}_{h,n}(\phi)$.

\begin{theorem}[Asymptotic distribution of the integrated periodogram] \label{thm:A1}
Let $X$ be a fourth-order stationary point process on $\R^d$, that is, (\ref{eq:lambda-reduced}) is satisfied for $k=4$.
Then, the following three assertions hold.
\begin{itemize}
\item[(i)] Suppose that Assumptions \ref{assum:A}, \ref{assum:C} (for $\ell=2$), \ref{assum:E}(ii) (for $m=1$), and \ref{assum:B}(ii) hold. Then,
\begin{equation*}
\Ex [ \widehat{A}_{h,n}(\phi) ] = A(\phi) + O(|D_n|^{-2/d}), \quad n\rightarrow \infty.
\end{equation*}

\item[(ii)] Suppose that Assumptions \ref{assum:A}, \ref{assum:C} (for $\ell=4$), \ref{assum:B}(i), and \ref{assum:F} hold. Furthermore, the data taper $h$ is constant on $[-1/2,1/2]^d$ or satisfies Assumption \ref{assum:E}(ii) for $m=d+1$. Then,
\begin{equation*}
\lim_{n \rightarrow \infty} |D_n| \var ( \widehat{A}_{h,n}(\phi)) =  (2\pi)^{d} (H_{h,4} / H_{h,2}^2) (\Omega_1 + \Omega_2),
\end{equation*} where
\begin{equation}
\begin{aligned}
\Omega_1 &= \int_{D} \phi(\ob) \left( \phi(\ob) + \phi(-\ob) \right) f(\ob)^2 d\ob \\
\text{and} \quad 
\Omega_2 &=   \int_{D^2} \phi(\ob_1) \phi(\ob_2) f_4(\ob_1, -\ob_1, \ob_2) d\ob_1 d\ob_2.
\end{aligned}
 \label{eq:Omegas}
\end{equation}

\item[(iii)]
Now, let $d \in \{1,2,3\}$. Suppose that Assumptions \ref{assum:A}, \ref{assum:C} (for $\ell=8$), \ref{assum:D}(ii), 
\ref{assum:E}(ii) (for $m=d+1$), \ref{assum:B}, and \ref{assum:F} hold. Then,
\begin{equation*}
|D_n|^{1/2} ( \widehat{A}_{h,n}(\phi) - A(\phi)) \Dcon \mathcal{N}\left(0, (2\pi)^{d} (H_{h,4} / H_{h,2}^2) (\Omega_1 + \Omega_2) \right)
, \quad n \rightarrow \infty.
\end{equation*}
\end{itemize}

\end{theorem}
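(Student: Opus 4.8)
The plan is to handle the three assertions in order, with (i) and (ii) being moment calculations that feed the central limit theorem in (iii). For (i), I would first replace the feasible periodogram $\widehat{I}_{h,n}$ by the infeasible $I_{h,n}=|J_{h,n}|^2$, showing that the contribution of the estimation error $\widehat{\lambda}_{h,n}-\lambda$ to the expectation is of smaller order; this uses the asymptotic distantness of a generic $\ob\in D$ from the origin together with the bound on $c_{h,n}$ afforded by the smoothness of $h$ (Assumption \ref{assum:E}(ii), $m=1$). For the infeasible part I would write $\Ex[I_{h,n}(\ob)]=\var(\mathcal{J}_{h,n}(\ob))$ as a convolution of $f$ against a Fej\'er-type kernel $F_n$ built from $|H_{h,1}^{(n)}|^2$, exactly as in the proof of (\ref{eq:lim-DFT2}). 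Integrating against $\phi$ and using the symmetry of $F_n$ (so its first moment vanishes) together with the uniform bound on the second derivatives of $f$ (Assumption \ref{assum:B}(ii)), a second-order Taylor expansion of $f$ about each $\ob$ produces a bias of the order of the squared ``width'' of $F_n$, namely $O(|D_n|^{-2/d})$.

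\paragraph{Part (ii) (variance).} Starting from $\var(\widehat{A}_{h,n}(\phi))=\int_{D^2}\phi(\ob_1)\phi(\ob_2)\cov(\widehat{I}_{h,n}(\ob_1),\widehat{I}_{h,n}(\ob_2))\,d\ob_1 d\ob_2$, I would expand the covariance of the two periodograms by the product theorem for cumulants. Since $\Ex[J_{h,n}]=0$ and $\overline{J_{h,n}(\ob)}=J_{h,n}(-\ob)$, this yields two second-order pieces, $|\cov(J_{h,n}(\ob_1),J_{h,n}(\ob_2))|^2$ localized near $\ob_1=\ob_2$ and $|\cov(J_{h,n}(\ob_1),J_{h,n}(-\ob_2))|^2$ localized near $\ob_2=-\ob_1$, plus the fourth-order cumulant $\cum(J_{h,n}(\ob_1),\overline{J_{h,n}(\ob_1)},J_{h,n}(\ob_2),\overline{J_{h,n}(\ob_2)})$. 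Each second-order piece is $f(\cdot)$ times a squared Fej\'er kernel; integrating against $\phi(\ob_1)\phi(\ob_2)$ and using kernel concentration (where $H_{h,4}$ appears upon squaring the $H_{h,2}$-normalized kernel) produces, after the $|D_n|^{-1}$ scaling, the two summands of $\Omega_1$. The cumulant piece, expressed through $\kappa_{4,\text{red}}$ and hence $f_4$, concentrates at the configuration $(\ob_1,-\ob_1,\ob_2)$ and yields $\Omega_2$ with the same prefactor $(2\pi)^d H_{h,4}/H_{h,2}^2$. Because $f$ is not integrable, the interchange of sums and integrals is delicate; it is the taper smoothness (Assumption \ref{assum:E}(ii), $m=d+1$, or (\ref{eq:h-summable})), which forces $\mathcal{F}(h)\in L^1$, that legitimizes these interchanges and controls the off-diagonal remainder, while the feasible corrections are again shown to be negligible.

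\paragraph{Part (iii) (central limit theorem).} Here I would expand $\widehat{A}_{h,n}(\phi)$, up to the negligible intensity-estimation correction, into the quadratic form
\[
\widehat{A}_{h,n}(\phi)=(2\pi)^{-d}H_{h,2}^{-1}|D_n|^{-1}\!\!\sum_{\xb,\yb\in X\cap D_n}\!\! h(\xb/\aB)h(\yb/\aB)\,\Phi(\xb-\yb),\qquad \Phi(\ubb)=\int_D\phi(\ob)e^{-i\ubb^\top\ob}\,d\ob,
\]
the diagonal $\xb=\yb$ contributing the first-order intensity part of $f$. A decisive first observation is that the bias from (i), $O(|D_n|^{-2/d})$, is $o(|D_n|^{-1/2})$ precisely when $d\le 3$; this is what forces the restriction to $d\in\{1,2,3\}$, since it permits centering at $A(\phi)$ rather than at $\Ex[\widehat{A}_{h,n}(\phi)]$. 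I would then truncate $\Phi$ to a range $\|\ubb\|\le M_n$, bounding the $L^2$-tail of $\Phi$ (which lies in $L^2$ since $\phi\mathbf{1}_D$ is bounded with compact support) together with the $\alpha$-mixing covariance bounds of (ii) to show the truncation error is $o_p(|D_n|^{-1/2})$, reducing the statistic to a local quadratic form coupling only points within distance $M_n$. For this local form I would use a big-block/small-block decomposition of $D_n$ into cubes of side $L_n$ separated by corridors of width $M_n\ll L_n$, so that block contributions are functionals of $X$ on well-separated regions and hence asymptotically independent under Assumption \ref{assum:D}(ii), while corridor and cross-block terms are negligible. A Lyapunov CLT for the sum of nearly-independent block variables then applies; the Lyapunov condition requires a fourth-moment (i.e.\ eighth-order-in-points) bound on each block contribution, which is exactly what Assumption \ref{assum:C} for $\ell=8$ supplies, and the limit variance is identified with $(2\pi)^d(H_{h,4}/H_{h,2}^2)(\Omega_1+\Omega_2)$ from (ii).

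\paragraph{Main obstacle.} The hard part is the CLT in (iii): because $\widehat{A}_{h,n}(\phi)$ is a genuinely non-local quadratic form in the point process, with kernel $\Phi$ that decays only slowly, the standard $\alpha$-mixing CLTs for additive functionals do not apply directly. The crux is therefore the truncation-and-blocking step, where one must balance $M_n$, $L_n$, and the mixing rate so that simultaneously the $L^2$-truncation error is negligible, the corridors carry vanishing variance, and the blocks decorrelate fast enough for the Lyapunov argument. Reconciling these competing requirements under the available mixing rate of Assumption \ref{assum:D}(ii) and the eighth-order cumulant control of Assumption \ref{assum:C} is where the real technical effort lies.
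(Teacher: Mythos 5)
Your parts (i) and (ii) follow the paper's own route almost exactly: reduce to the infeasible periodogram, write $\Ex[I_{h,n}(\ob)]$ as a Fej\'{e}r-kernel convolution of $f$ and Taylor-expand to second order for the $O(|D_n|^{-2/d})$ bias, then expand $\cov(\widehat{I}_{h,n}(\ob_1),\widehat{I}_{h,n}(\ob_2))$ by indecomposable partitions into two covariance products (giving $\Omega_1$) and a fourth-order cumulant (giving $\Omega_2$), with the taper smoothness (summable Fourier coefficients of $h$) legitimizing the interchanges. The paper's only device you do not mention is the approximation of $\phi$ by $\phi_M$ with compactly supported Fourier transform, which is its way of making the ``kernel concentration'' rigorous; this is a technicality, not a different idea.

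For part (iii) you correctly identify the reduction to the infeasible statistic, the role of $d\le 3$ (the bias $O(|D_n|^{-2/d})$ must be $o(|D_n|^{-1/2})$), the blocking-plus-Lyapunov architecture, and the need for $\ell=8$ cumulants to verify Lyapunov with $\delta=2$. But your localization device differs from the paper's, and it is the one step I would push back on. You truncate the kernel $\Phi=\mathcal{F}(\phi\mathbf{1}_D)$ to $\|\ubb\|\le M_n$ so that, with corridors of width $M_n$, the truncated quadratic form becomes exactly block-diagonal. The difficulty is the truncation error itself: to transfer the centering you need $|D_n|^{1/2}$ times the \emph{expectation} of the discarded part to vanish, and that expectation contains the term $\lambda^2\int_{\|\ubb\|>M_n}\Phi(\ubb)\,\{|D_n|^{-1}\int h(\vbb/\aB)h((\vbb+\ubb)/\aB)d\vbb\}\,d\ubb$ plus $\int_{\|\ubb\|>M_n}\Phi(\ubb)\gamma_{2,\text{red}}(\ubb)d\ubb$. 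Since $\phi\mathbf{1}_D$ is generally discontinuous on $\partial D$, $\Phi$ is not in $L^1(\R^d)$, and $\gamma_{2,\text{red}}$ is only assumed integrable with no rate; obtaining an $o(|D_n|^{-1/2})$ bound therefore requires quantitative decay that the stated assumptions do not supply, and the constraint $M_n\ll L_n\ll n$ leaves you little room to grow $M_n$. The paper avoids this entirely: it never truncates $\Phi$. Instead it replaces the full-domain periodogram by the sum of \emph{within-block} periodograms $V_{h,n}(\phi)=k_n^{-1/2}\sum_{\kb}\widetilde{G}_{h,n}^{(\kb)}(\phi)$ on disjoint sub-blocks of side $n^{\gamma}-n^{\beta}$, and shows the two discarded pieces are negligible by pure variance arguments: the corridor contribution needs only $|E_n|/|D_n|\to 0$, and the cross-block terms $J_{h,n}^{(\jb)}(\ob)J_{h,n}^{(\kb)}(-\ob)$, $\jb\ne\kb$, are killed by $\alpha$-mixing covariance inequalities applied to products of block DFTs (requiring the eighth-order moments), with no rate on $\gamma_{2,\text{red}}$ ever invoked. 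If you want to keep your truncation scheme, you would need to either add a decay assumption on $\gamma_{2,\text{red}}$ and exploit cancellation in the oscillatory tail integral of $\Phi$, or switch to the paper's block-diagonalization of the statistic itself.
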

\begin{proof} See Appendix \ref{sec:proofA1}. 
\end{proof}

\begin{remark}[Estimation of the asymptotic variance] \label{rmk:SS}
Since $\Omega_1$ and $\Omega_2$ above are unknown functions of the spectral density and fourth-order spectral density function, the asymptotic variance of $\widehat{A}_{h,n}(\phi)$ needs to be estimated. In Appendix \ref{sec:subsampling}, we provide details on the estimation of $(2\pi)^{d} (H_{h,4} / H_{h,2}^2) (\Omega_1 + \Omega_2)$ using the subsampling method.
\end{remark}


As a direct application of the above theorem, we derive the asymptotic distribution of the kernel spectral density estimator. Recall (\ref{eq:KSD}).

\begin{theorem} \label{thm:KSD-2}
Let $X$ be a fourth-order stationary point process on $\R^d$, that is, (\ref{eq:lambda-reduced}) is satisfied for $k=4$.
Suppose that Assumptions \ref{assum:A}, \ref{assum:C} (for $\ell=8$), \ref{assum:D}(ii),  \ref{assum:E}(ii) (for $m=d+1$), \ref{assum:B}, and \ref{assum:F} hold. Moreover, the bandwidth $b = b(n)$ is such that 
\begin{equation*}
\lim_{n\rightarrow \infty} b(n) + |D_n|^{-1}b(n)^{-d} = 0 \text{~~and~~} \lim_{n\rightarrow \infty} |D_n|^{1/2} b(n)^{d/2} (|D_n|^{-2/d} + b(n)^{2}) = 0.
\end{equation*}
Let $W_2 = \int_{\R^d} W(\xb)^2 d\xb$. Then, for $\ob \in \R^d \backslash \{\textbf{0}\}$,
\begin{equation*}
\sqrt{|D_n|b^d} \big( \widehat{f}_{n,b}(\ob) - f(\ob) \big) \Dcon \mathcal{N}\left(0, (2\pi)^{d} (H_{h,4} / H_{h,2}^2)W_2  f(\ob)^2 \right),
\quad n\rightarrow \infty
\end{equation*} and for $\ob = \textbf{0}$,
\begin{equation*}
\sqrt{|D_n|b^d} \big( \widehat{f}_{n,b}(\ob) - f(\ob) \big) \Dcon \mathcal{N}\left(0, 2(2\pi)^{d} (H_{h,4} / H_{h,2}^2) W_2 f(\ob)^2 \right),
\quad n\rightarrow \infty.
\end{equation*}
\end{theorem}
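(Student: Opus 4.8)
The plan is to recognize the kernel estimator as an integrated periodogram with a concentrating, $n$-dependent weight and then lean on Theorem \ref{thm:A1}. Setting $\phi_b(\cdot) = W_b(\ob - \cdot)$, we have $\widehat{f}_{n,b}(\ob) = \widehat{A}_{h,n}(\phi_b)$, so the entire analysis reduces to studying $\widehat{A}_{h,n}(\phi_b)$ with the modified normalization $\sqrt{|D_n| b^d}$ in place of $|D_n|^{1/2}$. I would first split
\begin{equation*}
\widehat{f}_{n,b}(\ob) - f(\ob) = \big( \widehat{A}_{h,n}(\phi_b) - A(\phi_b) \big) + \big( A(\phi_b) - f(\ob) \big),
\end{equation*}
treating the stochastic fluctuation of the first bracket and the deterministic bias in the second bracket separately.

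For the bias, the second bracket is the usual kernel smoothing bias: changing variables $\ubb = (\ob - \ob')/b$ and Taylor-expanding $f$ about $\ob$, the first-order term vanishes by the symmetry of $W$, leaving $A(\phi_b) - f(\ob) = O(b^2)$ under the boundedness of the second derivatives in Assumption \ref{assum:B}(ii). For the expectation of the first bracket, integrating the uniform pointwise periodogram bias $\Ex\widehat{I}_{h,n}(\ob') - f(\ob') = O(|D_n|^{-2/d})$ underlying Theorem \ref{thm:A1}(i) against $\phi_b \ge 0$ with $\int_D \phi_b \le 1$ yields $\Ex\widehat{A}_{h,n}(\phi_b) - A(\phi_b) = O(|D_n|^{-2/d})$. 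Thus the total bias is $O(|D_n|^{-2/d} + b^2)$, and multiplying by $\sqrt{|D_n| b^d}$ produces exactly the quantity $|D_n|^{1/2} b^{d/2}(|D_n|^{-2/d} + b^2)$ that the second bandwidth condition forces to zero.

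For the variance I would substitute $\phi = \phi_b$ into the limiting expression $(2\pi)^d (H_{h,4}/H_{h,2}^2)(\Omega_1 + \Omega_2)$ of Theorem \ref{thm:A1}(ii) and track the $b$-scaling. With $\ubb = (\ob - \ob')/b$ one finds $\int_D \phi_b(\ob')^2 f(\ob')^2 d\ob' = b^{-d} W_2 f(\ob)^2 (1 + o(1))$, while the cross term $\int_D \phi_b(\ob')\phi_b(-\ob') f(\ob')^2 d\ob'$ is $o(b^{-d})$ when $\ob \ne \textbf{0}$ (the kernels $W_b(\ob - \cdot)$ and $W_b(\ob + \cdot)$ have asymptotically disjoint supports) but coincides with the square term when $\ob = \textbf{0}$, contributing a second $b^{-d} W_2 f(\textbf{0})^2$; this is precisely the source of the factor of two at the origin. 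Since $\Omega_2(\phi_b) \to f_4(\ob, -\ob, \ob) = O(1) = o(b^{-d})$, the fourth-order term is negligible, and multiplying by $b^d$ delivers the two stated variances.

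The main obstacle is the central limit theorem itself, which cannot be quoted directly from Theorem \ref{thm:A1}(iii) because $\phi_b$ depends on $n$. I would re-run the $\alpha$-mixing blocking argument of Appendix \ref{sec:proofA1} for the quadratic form $\widehat{A}_{h,n}(\phi_b)$, now with effective sample size $|D_n| b^d \to \infty$ and normalization $\sqrt{|D_n| b^d}$. The crux is a Lyapunov-type bound: one must show that the fourth-order cumulant of the standardized statistic vanishes. Tracking the $b$-scaling, the concentration of $\phi_b$ inflates the $j$th cumulant of $\widehat{A}_{h,n}(\phi_b)$ to order $(|D_n| b^d)^{-(j-1)}$, so the standardized fourth cumulant is of order $(|D_n| b^d)^{-1} \to 0$; since the periodogram is bilinear in the counts, this fourth-order control is effectively an eighth-order cumulant condition on $X$, which is exactly why Assumption \ref{assum:C} for $\ell = 8$, together with the stronger mixing rate of Assumption \ref{assum:D}(ii) and the restriction $d \in \{1,2,3\}$, is imposed. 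Once the higher cumulants are shown to be asymptotically negligible, the variance computation of the previous step pins down the limiting normal, and the factor-of-two discrepancy at $\ob = \textbf{0}$ accounts for the two cases.
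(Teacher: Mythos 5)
Your proposal is correct and follows essentially the same route as the paper's proof: both view $\widehat{f}_{n,b}(\ob)$ as an integrated periodogram with the concentrating weight $\phi_b = W_b(\ob - \cdot)$, decompose the error into the $O(|D_n|^{-2/d})$ periodogram bias plus the $O(b^2)$ smoothing bias, track the $b^{-d}$ inflation of $\Omega_1$ (with the cross term producing the factor of two at the origin and the fourth-order term being negligible), and re-run the blocking CLT of Theorem \ref{thm:A1}(iii) with effective normalization $\sqrt{|D_n|b^d}$. The only cosmetic difference is that the paper first passes to the infeasible estimator $\widetilde{f}_{n,b}$ via Corollary \ref{coro:KDE-asym} before sketching the same argument.
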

\begin{proof}See Appendix \ref{sec:proofKSD-2}. 
\end{proof}

\section{Examples of spatial point process models} \label{sec:pp-model}

In this section, we provide examples of four widely used stationary spatial point process models and specify the conditions under which each model satisfies Assumptions \ref{assum:C}, \ref{assum:D}, and \ref{assum:B}, which are required to establish the asymptotic results in Sections \ref{sec:asym-periodogram} and \ref{sec:spec-mean}.

\subsection{Example I: Hawkes processes} \label{sec:hawkes}
The Hawkes process is a doubly stochastic process on $\R$ characterized by self-exciting and clustering properties (\cite{p:haw-71, p:haw-71a}). A stationary Hawkes process is described by a conditional intensity function of the form
$\lambda(t) = \nu + \int_{-\infty}^{t} \eta(t-u) N_{X}(du)$, $t \in \R$. Here, $\nu > 0$ is the immigration intensity and $\eta: [0,\infty) \rightarrow [0,\infty)$ is a measurable function satisfying $\int_{\R} \eta(u) du < 1$, referred to as the reproduction function. 
The spectral density function of the Hawkes process, as stated in \cite{b:dal-03}, Example 8.2(e), is given by
$f(\omega) =(2\pi)^{-1}\lambda \left| 1-\mathcal{F}(\eta)(\omega) \right|^{-2}$, $\omega \in \R$.
Here, $\lambda = \Ex[\lambda(t)] = \nu / \{1-\int_{\R}\eta(u)du \} \in (0,\infty)$ denotes the first-order intensity of the corresponding Hawkes process.

To verify Assumption \ref{assum:C} for Hawkes processes, \cite{p:jov-15} provides explicit expressions of the cumulant intensity functions, thus one can check (\ref{eq:brillinger-mixing}) for the specified cumulant intensity functions. To assess the $\alpha$-mixing conditions, \cite{p:che-lan-22}, Theorem 1, states that if $\int_{0}^{\infty} u^{1+\delta}\eta(u)du <\infty$, for some $\delta > 0$, then $\sup_{p,q \in (0,\infty)} \alpha_{p,q}(k) = O(k^{-\delta})$, as $k \rightarrow \infty$. Therefore, if $\eta(\cdot)$ has a $(2+\delta)$th-moment (resp. $(4+\delta)$th-moment) for some $\delta >0$, then the corresponding Hawkes process satisfies Assumption \ref{assum:D}(i) (resp. \ref{assum:D}(ii)). 
Lastly, to check for Assumption \ref{assum:B}, one can employ the expression of $f(\ob)$, provided the Fourier transform of $\eta(\cdot)$ has a closed form expression. For example, if the reproduction function has a form $\eta(\cdot) = \alpha \exp(-\beta \cdot)$ for some $0<\alpha<\beta$, then $f(\omega) - (2\pi)^{-1} \lambda =\alpha(2\beta-\alpha)/ \{ (\beta-\alpha)^2 +\omega^2 \}$, $\omega \in \R$. Therefore, the defined $f$ satisfies Assumption \ref{assum:B}.


\subsection{Example II: Neyman-Scott point processes} \label{sec:ex2-NS}
A Neyman-Scott (N-S) process (\cite{p:ney-sco-58}) is a special class of the Cox process, where the random latent intensity field is given by
$\Lambda(\ubb) = \sum_{\xb \in \Phi} \alpha k(\ubb-\xb)$, $\ubb \in \R^d$. Here, $\Phi$ is a homogeneous Poisson process with intensity $\kappa >0$ and $k(\cdot)$ is a probability density (kernel) function on $\R^d$. Common choices for $k$ include $k_1(\cdot) = (2\pi \sigma^2)^{-d/2} \exp(-\|\cdot\|^2/ (2\sigma^2))$ and $k_2(\cdot) = (r^d s_d)^{-1} I(\|\cdot\| \leq r)$, where $s_d$ is the volume of the unit ball on $\R^d$. The corresponding N-S process for the kernel functions $k_1$ and $k_2$ are known as the Thomas cluster process and Mat\'{e}rn cluster process. 
Using, for example, \cite{p:cha-97}, Equation (37) (see also \cite{b:dal-03}, Exercise 8.2.9(c)), the spectral density function for the N-S process is given by
$f(\ob) = (2\pi)^{-d} \lambda \left( 1 + \alpha |\mathcal{F}(k)(\ob)|^2\right)$, $\ob \in \R^d$,
where $\lambda = \kappa \alpha$ denotes the first-order intensity of the corresponding N-S process.
In particular, the spectral density function of the Thomas cluster process is given by
\begin{equation} \label{eq:spectral-thomas}
f^{(TCP)}(\xb) = (2\pi)^{-d} \lambda \left\{ 1+ \alpha \exp(-\sigma^2 \|\xb\|^2)\right\}, \quad \ob \in \R^d.
\end{equation}

\cite{p:pro-13} (page 398) showed that the N-S process satisfies Assumption \ref{assum:C} for all $\ell \in \N$. Moreover, according to Lemma 1 of the same reference, if $k(\xb) = O(\|\xb\|^{-2d-\varepsilon})$ as $\|\xb\|\rightarrow \infty$ for some $\varepsilon > 0$ (resp. $\varepsilon >2d$), then the corresponding N-S process satisfies Assumption \ref{assum:D}(i) (resp. \ref{assum:D}(ii)). Assumption \ref{assum:B} can be verified for specific kernel functions with known forms of their Fourier transform. In particular, spectral density functions associated with Thomas cluster processes and Mat\'{e}rn cluster processes satisfy Assumption \ref{assum:B}. 

\subsection{Example III: Log-Gaussian Cox Processes} \label{sec:LGCP}
The log-Gaussian Cox process (LGCP; \cite{p:mol-98}) is a special class of the Cox process where the logarithm of the intensify field is a Gaussian random field. Let $X$ be a stationary LGCP driven by the intensity field $\Lambda(\cdot)$ and let $R(\xb) = \cov(\log \Lambda(\xb), \log \Lambda(\textbf{0}))$, $\xb \in \R^d$, be the autocovariance function of the log intensity field. Then, by using \cite{p:mol-98}, Equation (4), the second-order reduced cumulant is given by $\gamma_{2,\text{red}}(\xb) = \exp\{R(\xb)\}-1$, $\xb \in \R^d$.

Assumption \ref{assum:C} holds for arbitrary $\ell \in \N$, provided $R(\cdot) \in L^1(\R^d)$. See \cite{p:zhu-23}, Lemma D.1. Concerning Assumption \ref{assum:D}, \cite{b:dou-94}, page 59, stated that for a stationary random field on $\Z^d$, if $R(\xb) = O(\|\xb\|^{-2d-\varepsilon})$ as $\|\xb\|\rightarrow \infty$ for some $\varepsilon > 0$ (resp. $\varepsilon >2d$), then the corresponding LGCP on $\Z^d$ satisfies Assumption \ref{assum:D}(i) (resp. \ref{assum:D}(ii)). However, a general $\alpha$-mixing condition for stationary Gaussian random fields on $\R^d$ is not readily available. Lastly, since the second-order reduced cumulant function has a closed-form expression in terms of $R(\cdot)$, Assumption \ref{assum:B} can be easily verified when $R(\cdot)$ is specified (see the remarks after Assumption \ref{assum:B}).

\subsection{Example IV: Determinantal point processes} \label{sec:ex4-DPP}
The Determinantal point process (DPP), first introduced by \cite{p:mac-75}, has the intensity function characterized by the determinant of some function. To be specific, a stationary determinantal point process induced by a kernel function $K:\R^d \rightarrow \infty$, denoted as DPP($K$), has the reduced intensity function
$\lambda_{n,\text{red}}(\xb_1-\xb_n, \dots, \xb_{n-1}-\xb_{n})
= \det ( K(\xb_i-\xb_j)_{1\leq i, j \leq n})$, $n \in \N$, $\xb_1, \dots, \xb_n \in \R^d$. Here, we assume that the kernel function $K$ is symmetric, continuous, and belongs to $L^{2}(\R^d)$ satisfying $\mathcal{F}(K)(\ob) \in [0,1]$, $\ob \in \R^d$. Then, the second-order reduced cumulant function and the spectral density function of DPP($K$) are respectively given by
$\gamma_{2,\text{red}}(\xb) = - |K(\xb)|^2$ and
$f(\ob) = (2\pi)^{-d} K(0) - \mathcal{F}^{-1} (|K|^2)(\ob)$. Therefore, the DPPs exhibit a repulsive behavior. For example, choosing the Gaussian kernel $K^{(G)}(\xb) = \lambda \exp(-\|\xb\|^2/\rho^2)$ with parameter restriction $ 0< \rho  \leq 1/(\sqrt{\pi} \lambda^{1/d})$, the spectral density function corresponds to $DPP(K^{(G)})$ is given by
\begin{equation} \label{eq:DPP-G}
f^{(GDPP)} (\ob) = (2\pi)^{-d} \left\{ \lambda - \lambda^2 (\pi \rho^2/2)^{d/2} \exp(-\rho^2 \|\ob\|^2 / 8)  \right\}, \quad \ob \in \R^d.
\end{equation}

\cite{p:bis-16} showed that DPP($K$) satisfies Assumption \ref{assum:C} for any $\ell \in \N$. Moreover, \cite{p:poi-19} showed $\alpha_{p,p}(k)/p \leq C \int_{k}^{\infty} t^{d-1} \sup_{\|\xb\| = t}|\gamma_{2,\text{red}}(\xb)| dt$, $p \in (0,\infty)$. Therefore, if there exists $\varepsilon >0$ (resp. $\varepsilon>2d$) such that
$\sup_{\|\xb\| = t} |K(\xb)| = O(t^{-d - (\varepsilon/2)})$ as $t\rightarrow \infty$, then DPP($K$) satisfies Assumption \ref{assum:D}(i) (resp. \ref{assum:D}(ii)). Therefore, DPP with the Gaussian kernel $K^{(G)}$ satisfies Assumption \ref{assum:D}(ii). Lastly, Assumption \ref{assum:B} can be verified provided $|K|^2$ has a known Fourier transform, including the case of $f^{(GDPP)}(\cdot)$ in (\ref{eq:DPP-G}).

\section{Frequency domain parameter estimation under possible model misspecification} \label{sec:Whittle}

As an application of the CLT results in Section \ref{sec:spec-mean}, we turn our attention to inferences for spatial point processes in the frequency domain through the Whittle likelihood.
Let $\{X_{\btheta}\}$ be a family of second-order stationary spatial point processes with parameter $\btheta \in \Theta \subset \R^p$. The associated spectral density function of $X_{\btheta}$ is denoted as $f_{\btheta}$, $\btheta \in \Theta$. Then, we fit the model with spectral density $f_{\btheta}$ using the pseudo-likelihood given by
\begin{equation} \label{eq:Whittle}
L_n(\btheta) = \int_{D} \left( \frac{\widehat{I}_{h,n}(\ob)}{f_{\btheta}(\ob)} + \log f_{\btheta} (\ob) \right) d\ob, \quad n \in \N, \quad \btheta \in \Theta \subset \R^p.
\end{equation} Here, $D$ is a prespecified compact and symmetric region on $\R^d$. 
Let
\begin{equation} \label{eq:thetahat}
\widehat{\btheta}_n = \arg\min_{\btheta \in \Theta} L_n(\btheta), \qquad n \in \N,
\end{equation} be our proposed model parameter estimator. Here, we do not necessarily assuming the existence of $\btheta_{0} \in \Theta$ such that the true spectral density function $f = f_{\btheta_0}$. Since the periodogram is an unbiased estimator of the ``true" spectral density, the best fitting parameter could be
\begin{equation} \label{eq:mathcalL}
\btheta_0 = \arg\min_{\btheta \in \Theta} \mathcal{L}(\btheta), \quad \text{where} \quad
\mathcal{L}(\btheta)=\int_{D} \left( \frac{f(\ob)}{f_{\btheta}(\ob)} + \log f_{\btheta}(\ob) \right) d\ob.
\end{equation}
The best fitting parameter $\btheta_0$ has a clear interpretation in terms of the spectral divergence criterion, as $\mathcal{L}(\btheta)$ above computes the spectral (information) divergence between the true and conjectured spectral densities. Investigations into the properties of the Whittle estimator under model misspecification for time series and random fields can be found in \cite{p:dah-96, p:dah-00, p:sub-21}. We mention that, in the case where the mapping $\btheta \rightarrow f_{\btheta}$ is injective and the model is correctly specified, it can be easily seen that $\btheta_0$ is uniquely determined and satisfies $f = f_{\btheta_0}$. 

Now, we assume the following for the parameter space.

\begin{assumption} \label{assum:G}
The parameter space $\Theta$ is a compact subset of $\R^p$, $p \in \N$. The parametric family of spectral density functions $\{f_{\btheta}\}$ is uniformly bounded above and bounded below from zero. $f_{\btheta}(\ob)$ is twice differentiable with respect to $\btheta$ and its first and second derivatives are continuous on $ \Theta \times D$. 
$\btheta_0$ in (\ref{eq:mathcalL}) is uniquely determined and lies in the interior of $\Theta$.
Lastly, $\widehat{\btheta}_n$ in (\ref{eq:thetahat}) exists for all $n\in \N$ and lies in the interior of $\Theta$.
\end{assumption}

To obtain the asymptotic variance of $\widehat{\btheta}_n$, for $\btheta \in \Theta$, let
\begin{equation}
\begin{aligned} 
\Gamma(\btheta) &=
\frac{1}{2(2\pi)^{d}} \int_{D} \bigg[ 
\left(f(\ob) - f_{\btheta}(\ob) \right) \nabla^2 f_{\btheta}^{-1} (\ob) 
+( \nabla \log f_{\btheta} (\ob)) ( \nabla \log f_{\btheta} (\ob))^\top 
\bigg] d\ob, \\
S_1(\btheta) &= \frac{1}{2(2\pi)^{d}} \int_{D} f(\ob)^2 ( \nabla f_{\btheta}^{-1} (\ob)) ( \nabla f_{\btheta}^{-1} (\ob))^\top d\ob, \quad \text{and} \\
S_2(\btheta) &= \frac{1}{4(2\pi)^{d}} \int_{D^2} f_{4}(\ob_1, -\ob_1, \ob_2)
( \nabla f_{\btheta}^{-1} (\ob)) ( \nabla f_{\btheta}^{-1} (\ob))^\top d\ob_1 d\ob_2.
\end{aligned}
 \label{eq:Gamma-mat}
\end{equation} Here, $\nabla f_{\btheta}$ and $\nabla^2 f_{\btheta}$ are the first- and second-order derivatives of $f_{\btheta}$ with respect to $\btheta$, respectively, and
$f_{4}$ denotes the (true) fourth-order spectral density of $X$. In the scenario where the model is correctly specified, we have $\Gamma(\btheta_0) = S_1(\btheta_0)$.
 
The following theorem addresses the asymptotic behavior of the our proposed estimator under possible model misspecification.

\begin{theorem} \label{thm:Whittle}
Let $X$ be a fourth-order stationary point process on $\R^d$, that is, (\ref{eq:lambda-reduced}) is satisfied for $k=4$.
Suppose that Assumptions \ref{assum:A}, \ref{assum:C} (for $\ell=4$), 
\ref{assum:E}(ii) (for $m=1$), \ref{assum:B}, 
and \ref{assum:G} hold.
 Then, 
\begin{equation}\label{eq:theta-consist}
\widehat{\btheta}_n \Pcon \btheta_0, \quad n \rightarrow \infty.
\end{equation} 
Now, let $d \in \{1,2,3\}$. Suppose $\Gamma(\btheta_0)$ in (\ref{eq:Gamma-mat}) is invertible.
 Then, under Assumptions \ref{assum:A}, \ref{assum:C} (for $\ell=8$), \ref{assum:D}(ii), \ref{assum:E}(ii) (for $m=d+1$), \ref{assum:B}, \ref{assum:F}, and \ref{assum:G},
 we have
\begin{equation} \label{eq:theta-CLT}
|D_n|^{1/2} (\widehat{\btheta}_n - \btheta_0) \Dcon \mathcal{N} \left( \textbf{0}, 
(H_{h,4}/H_{h,2}^2)
\Gamma(\btheta_0)^{-1} \left( S_1(\btheta_0) + S_2(\btheta_0) \right) \Gamma(\btheta_0)^{-1}\right),
~~~ n\rightarrow \infty.
\end{equation} 
\end{theorem}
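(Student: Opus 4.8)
The plan is to treat $\widehat{\btheta}_n$ as a standard extremum (M-)estimator: first establish consistency through a uniform law of large numbers combined with the well-separatedness of $\btheta_0$, and then obtain the limit law by a mean value expansion of the score, feeding the integrated-periodogram CLT of Theorem \ref{thm:A1}(iii) into the expansion. Every object I must control—$L_n(\btheta)$ and its first two $\btheta$-derivatives—is an integrated periodogram tested against a smooth function of $\ob$, so Theorem \ref{thm:A1} and Assumption \ref{assum:G} supply almost all the ingredients. \textbf{Consistency.} Writing $\phi_{\btheta} = 1/f_{\btheta}$, the deterministic $\log f_{\btheta}$ term cancels and $L_n(\btheta) - \mathcal{L}(\btheta) = \widehat{A}_{h,n}(\phi_{\btheta}) - A(\phi_{\btheta})$. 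For each fixed $\btheta$, Theorem \ref{thm:A1}(i) gives $\Ex[\widehat{A}_{h,n}(\phi_{\btheta})] \to A(\phi_{\btheta})$, while the second- and fourth-order cumulant bounds under Assumption \ref{assum:C} (for $\ell=4$) give $\var(\widehat{A}_{h,n}(\phi_{\btheta})) = O(|D_n|^{-1}) \to 0$, so $L_n(\btheta) \Pcon \mathcal{L}(\btheta)$ pointwise. To upgrade this to $\sup_{\btheta \in \Theta}|L_n(\btheta) - \mathcal{L}(\btheta)| \Pcon 0$, I use that under Assumption \ref{assum:G} the map $\btheta \mapsto \phi_{\btheta}(\ob)$ is Lipschitz uniformly over $\ob \in D$, whence $|L_n(\btheta_1) - L_n(\btheta_2)| \le C\|\btheta_1-\btheta_2\|(1 + \int_D \widehat{I}_{h,n}(\ob)\,d\ob)$; since $\int_D \widehat{I}_{h,n} = \widehat{A}_{h,n}(1) = O_P(1)$, the family $\{L_n\}$ is stochastically equicontinuous. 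Equicontinuity, compactness of $\Theta$, and pointwise convergence on a countable dense subset yield uniform convergence, and consistency follows from the standard argmin argument using that $\btheta_0$ is the unique (hence, by compactness and continuity, well-separated) minimizer of $\mathcal{L}$.

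\textbf{Asymptotic normality: the score.} Since $\widehat{\btheta}_n$ is interior, the first-order condition $\nabla L_n(\widehat{\btheta}_n) = \mathbf{0}$ holds, and a component-wise mean value expansion about $\btheta_0$ gives $\mathbf{0} = \nabla L_n(\btheta_0) + \nabla^2 L_n(\bar{\btheta}_n)(\widehat{\btheta}_n - \btheta_0)$ for intermediate points $\bar{\btheta}_n$ lying between $\widehat{\btheta}_n$ and $\btheta_0$, so $|D_n|^{1/2}(\widehat{\btheta}_n - \btheta_0) = -[\nabla^2 L_n(\bar{\btheta}_n)]^{-1}|D_n|^{1/2}\nabla L_n(\btheta_0)$. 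Differentiating and using $\nabla \log f_{\btheta} = -f_{\btheta}\nabla f_{\btheta}^{-1}$ gives $\nabla L_n(\btheta) = \int_D (\widehat{I}_{h,n}(\ob) - f_{\btheta}(\ob))\nabla f_{\btheta}^{-1}(\ob)\,d\ob$; evaluating at $\btheta_0$ and subtracting the population first-order condition $\nabla \mathcal{L}(\btheta_0) = \int_D (f - f_{\btheta_0})\nabla f_{\btheta_0}^{-1}\,d\ob = \mathbf{0}$ (valid since $\btheta_0$ is an interior minimizer of $\mathcal{L}$) yields the centered score $\nabla L_n(\btheta_0) = \widehat{A}_{h,n}(\boldsymbol{\psi}) - A(\boldsymbol{\psi})$ with $\boldsymbol{\psi} = \nabla f_{\btheta_0}^{-1}$. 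For the CLT I invoke the Cramér-Wold device: for each $\mathbf{c} \in \R^p$, the scalar $\mathbf{c}^\top[\widehat{A}_{h,n}(\boldsymbol{\psi}) - A(\boldsymbol{\psi})] = \widehat{A}_{h,n}(\mathbf{c}^\top\boldsymbol{\psi}) - A(\mathbf{c}^\top\boldsymbol{\psi})$ is an integrated periodogram, and Theorem \ref{thm:A1}(iii) (where the restriction $d \in \{1,2,3\}$ and Assumption \ref{assum:C} for $\ell=8$ enter) gives the scalar CLT with variance $(2\pi)^d (H_{h,4}/H_{h,2}^2)(\Omega_1 + \Omega_2)$ evaluated at $\phi = \mathbf{c}^\top\boldsymbol{\psi}$. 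Using the evenness of $f_{\btheta_0}$ (hence of $\boldsymbol{\psi}$) to replace $\phi(-\ob)$ by $\phi(\ob)$ in $\Omega_1$ and matching against the definitions (\ref{eq:Gamma-mat}), this limiting variance equals $\mathbf{c}^\top[\,4(2\pi)^{2d}(H_{h,4}/H_{h,2}^2)(S_1(\btheta_0)+S_2(\btheta_0))\,]\mathbf{c}$, so $|D_n|^{1/2}\nabla L_n(\btheta_0) \Dcon \mathcal{N}(\mathbf{0}, V)$ with $V = 4(2\pi)^{2d}(H_{h,4}/H_{h,2}^2)(S_1(\btheta_0)+S_2(\btheta_0))$.

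\textbf{Hessian limit, assembly, and the main obstacle.} A further differentiation gives $\nabla^2 L_n(\btheta) = \int_D[(\widehat{I}_{h,n} - f_{\btheta})\nabla^2 f_{\btheta}^{-1} + (\nabla\log f_{\btheta})(\nabla\log f_{\btheta})^\top]\,d\ob$, and replacing $\widehat{I}_{h,n}$ by its in-probability limit (again through consistency of $\widehat{A}_{h,n}$ against the smooth matrix-valued test function $\nabla^2 f_{\btheta}^{-1}$) shows $\nabla^2 L_n(\btheta) \Pcon 2(2\pi)^d\Gamma(\btheta)$ pointwise; the equicontinuity argument from the consistency step makes this uniform on a neighborhood of $\btheta_0$, so that $\nabla^2 L_n(\bar{\btheta}_n) \Pcon 2(2\pi)^d\Gamma(\btheta_0)$ via continuity of $\Gamma$ and $\bar{\btheta}_n \Pcon \btheta_0$. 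With $\Gamma(\btheta_0)$ invertible, Slutsky's theorem yields $|D_n|^{1/2}(\widehat{\btheta}_n - \btheta_0) \Dcon \mathcal{N}(\mathbf{0}, [2(2\pi)^d\Gamma(\btheta_0)]^{-1}V[2(2\pi)^d\Gamma(\btheta_0)]^{-1})$; the factors $2(2\pi)^d$ from the Hessian cancel against the $4(2\pi)^{2d}$ in $V$, leaving exactly $(H_{h,4}/H_{h,2}^2)\,\Gamma(\btheta_0)^{-1}(S_1(\btheta_0)+S_2(\btheta_0))\Gamma(\btheta_0)^{-1}$, as claimed. I expect the main obstacle to be not the CLT itself—that is delivered wholesale by Theorem \ref{thm:A1}(iii)—but the uniform-in-$\btheta$ control needed twice, for the consistency step and for legitimately evaluating the Hessian at the random point $\bar{\btheta}_n$, since Theorem \ref{thm:A1} is only pointwise in the test function; establishing the requisite stochastic equicontinuity from the smoothness of $\{f_{\btheta}\}$ in Assumption \ref{assum:G} together with the $O_P(1)$ bound on $\int_D \widehat{I}_{h,n}$ is the crux of the argument.
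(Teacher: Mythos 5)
Your proposal is correct and follows essentially the same route as the paper's proof: consistency via pointwise convergence of $L_n$ (Theorem \ref{thm:A1}(i) plus the $O(|D_n|^{-1})$ variance bound) upgraded to uniform convergence through stochastic equicontinuity from the Lipschitz dependence of $f_{\btheta}^{-1}$ and $\log f_{\btheta}$ on $\btheta$ together with $\int_D \widehat{I}_{h,n} = O_p(1)$, and asymptotic normality via the mean value expansion $|D_n|^{1/2}(\widehat{\btheta}_n-\btheta_0) = -[\nabla^2 L_n(\bar{\btheta}_n)]^{-1}|D_n|^{1/2}\nabla L_n(\btheta_0)$, the Cram\'{e}r--Wold reduction of the score to a scalar integrated periodogram handled by Theorem \ref{thm:A1}(iii), and the Hessian limit $2(2\pi)^d\Gamma(\btheta_0)$. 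Your constant bookkeeping ($V = 4(2\pi)^{2d}(H_{h,4}/H_{h,2}^2)(S_1+S_2)$ cancelling against the two factors of $2(2\pi)^d$) matches the paper exactly, so there is nothing to add.
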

\begin{proof}
See Appendix \ref{sec:proof2}.
\end{proof}

\begin{remark} \label{rmk:debiased}
The condition on $d$ being less than four in the asymptotic normality of $\widehat{\btheta}_n$ is required to ensure that the bias of $|D_n|^{1/2} \widehat{\btheta}_n$ converges to zero. This restriction is also imposed in the random fields literature (e.g., \cite{p:dah-87, p:mat-09}). By using the debiasing technique considered in \cite{p:gui-22}, one can establish the asymptotic normality of the ``debiased'' Whittle estimator for all $d \in \N$. The details will be reported in a future study.
\end{remark}

\begin{remark}
We provide a summary of the procedure for estimating the asymptotic variance of $\widehat{\btheta}_n$. Recall (\ref{eq:theta-CLT}). $\Gamma(\btheta_0)$ can be easily estimated by replacing $f(\ob)$ and $\btheta_0$ with $\widehat{f}_{n,b}(\ob)$ and $\widehat{\btheta}_n$, respectively, in (\ref{eq:Gamma-mat}). To estimate $S_1(\btheta_0) + S_2(\btheta_0)$ one can employ the subsampling variance estimation method for $A_{h,n}( \nabla f_{\widehat{\btheta}_n}^{-1})$, as described in Appendix \ref{sec:subsampling} (see also, Remark \ref{rmk:SS}). The theoretical properties of this estimated variance will not be investigated in this article.
\end{remark}



\section{Simulation studies} \label{sec:sim}

To corroborate our theoretical results, we conduct some simulations on the model parameter estimation. Additional simulation results can be found in Appendices \ref{eq:per-compute}--\ref{sec:TCP-mis2}. Due to space constraints, we only consider the following two point process models on $\R^2$:
\begin{itemize}
\item Stationary Thomas cluster processes (TCPs) with parameter $\btheta = (\kappa, \alpha, \sigma^2)^\top$ as in Section \ref{sec:ex2-NS}. The spectral density function of TCP, denoted as $f^{(TCP)}_{\btheta}(\cdot)$, is given in (\ref{eq:spectral-thomas}) with the first-order intensity $\lambda = \kappa \alpha$. TCP exhibits clustering behavior.
\item 
Stationary determinantal point processes with Gaussian kernel (GDPPs) with parameter $\btheta = (\lambda, \rho^2)^\top$ as in Section \ref{sec:ex4-DPP}. The spectral density function of GDPP, denoted as $f^{(GDPP)}_{\btheta}(\cdot)$, is given in (\ref{eq:DPP-G}).
GDPP exhibits repulsive behavior.
\end{itemize}

For each model, we generate spatial point patterns within the observation domain (window) $D_n = [-A/2, A/2]^2$ for varying side lengths $A \in \{10,20,40\}$.
To assess the performance of the different parameter estimation methods, we compare our estimator as in (\ref{eq:thetahat}) with two existing methods in the spatial domain: the maximum likelihood-based method (ML) and the minimum contrast method (MC). 

Specifically, for the ML method, given the intractable nature of the likelihood functions for TCPs and GDPPs, we maximize the log-Palm likelihood (\cite{p:tan-08}) for the TCPs and use the asymptotic approximation of the likelihood (\cite{p:poi-lav-23}) for GDPPs. 
For the MC method, we minimize the contrast function of form
$K(\btheta) = \int_{r_{\text{min}}}^{r_{\text{max}}} |g(t;\btheta)^c - \widehat{g}(t)^c |^{2} dt$ where $g(\cdot; \btheta)$ denotes the parametric pair correlation function (PCF) for the isotropic process
 and $\widehat{g}(\cdot)$ is an estimator of the PCF. Since the PCF of the TCP model does not include the parameter $\alpha$, we do not include the estimation of $\alpha$ in the MC method for TCPs.  Similarly, as the PCF of GDPP is solely a function of $\rho^2$, we do not include the estimation of $\lambda$ in the MC method for GDPPs.
Finally, following the guidelines from \cite{p:bis-17} (see also, \cite{b:dig-13}), for MC methods, we choose the tuning parameters $r_{\text{min}} = 0.01 A$ and $r_{\text{max}} = 0.25 A$ where $A \in \{10,20,40\}$ is the length of the window and $c=0.25$ for TCPs and $c=0.5$ for GDPPs. 

Lastly, all simulations are conducted over 500 independent replications of spatial point patterns and for each replication, we compute the three previously mentioned three model parameter estimators.

\subsection{Practical guidelines for the frequency domain method} \label{sec:practice}
We now discuss three practical issues arising during the evaluation of our estimator.

\vspace{0.5em}

\noindent \textit{Choice of the data taper}. \hspace{0.1em}
 We use the data taper $h(\xb) = \prod_{j=1}^{d} h_{0.025}(x_j)$, where for $a \in (0,1/2)$,
\begin{equation} \label{eq:taper-ex}
h_a(x) = 
\begin{cases}
(x+0.5)/a - \frac{1}{2\pi} \sin(2\pi (x+0.5) / a ), & -1/2 \leq x \leq (-1/2)+a. \\
1, & (-1/2) + a< x <(1/2) -a. \\
h_a(-x), & (1/2) -a < x \leq 1/2.
\end{cases}
\end{equation}
Then, it is easily seen that $h$ satisfies (\ref{eq:h-summable}), in turn, meeting the condition on $h$ in Theorem \ref{thm:Whittle}. However, in our simulations, selection of $a \in (0,1/2)$ seems not notably impact the performance of our estimator. 

\vspace{0.5em}

\noindent \textit{Choice of $D$}. \hspace{0.1em}
 In practice, we select the prespecified domain $D \subset \R^d$ for the Whittle likelihood in (\ref{eq:Whittle}) as $D = \{ \ob \in \R^2: d_0 \leq \|\ob\|_{\infty} \leq d_1\}$ for some $0 \leq d_0<d_1<\infty$. Inspecting 
Theorem \ref{thm:In-bias} (also Theorem \ref{thm:asymp-equiv} in the Appendix) we exclude the frequencies near the origin (corresponding to frequencies such that $\|\ob\|_{\infty} < d_0$) due to the large bias of the periodogram at frequencies close to the origin. The upper bound $d_1 \in (0,\infty)$ can be chosen such that $|f(\ob) - (2\pi)^{-d} \lambda| \approx 0$ for $\|\ob\|_{\infty} > d_1$, ensuring information outside $D$ has little contribution to the form of spectral density function. In case where no information on the true spectral density function is available, $f$ can be replaced with its kernel smoothed periodogram $\widehat{f}_{n,b}$ in the selection criterion of $d_1$.

\vspace{0.5em}

\noindent \textit{Discretization}. \hspace{0.1em}
Since the Whittle likelihood $L_n(\btheta)$ in (\ref{eq:Whittle}) is defined as an integral, we approximate $L_n(\btheta)$ with its Riemann sum
\begin{equation} \label{eq:discrete-Whittle}
L_n^{(R)}(\btheta) = \sum_{\ob_{\kb,\Omega} \in D}\left( \frac{\widehat{I}_{h,n}(\ob_{\kb,\Omega})}{f_{\btheta}(\ob_{\kb,\Omega})} + \log f_{\btheta} (\ob_{\kb,\Omega}) \right), \quad n \in \N, \quad \btheta \in \Theta \subset \R^p,
\end{equation} where for $\Omega >0 $ and $\kb = (k_1, k_2)^\top \in \Z^{2}$, $\ob_{\kb,\Omega} = (2\pi k_1 / \Omega, 2\pi k_2/\Omega)^\top$.  The feasible criterion of $\widehat{\btheta}_n$ in (\ref{eq:thetahat}) is $\widehat{\btheta}_n^{(R)} = \arg\min_{\btheta \in \Theta} L_n^{(R)}(\btheta)$. An efficient way to compute the periodograms on a grid is discussed in Appendix \ref{eq:per-compute}.

In simulations, we set  $\Omega = A$, where $A >0$ is the side length of the window. As a theoretical justification, \cite{p:sub-18} proved the asymptotic normality of the averaged periodogram under an irregularly spaced spatial data framework. She also showed that setting $\Omega \propto A$ is ''optimal'' in the sense that a finer grid ($\Omega >> A$) does not improve the variance of the averaged periodogram. However, we do not yet have theoretical results for the asymptotics of $\widehat{\btheta}_n^{(R)}$ in the spatial point process framework. These will be investigated in future research.

\subsection{Results under correctly specified models} \label{sec:CS}
In this section, we simulate the spatial point patterns from the TCP model with parameter $(\kappa_0, \alpha_0, \sigma_0^2) = (0.2, 10, 0.5^2)$ and the GDPP model with parameter $(\lambda_0, \rho_0^2) = (1, 0.55^2)$. 
For spatial patterns generated by the TCPs (resp. GDPPs), we fit the parametric TCP models (resp. GDPP models) using three different estimation methods. Following the guideline in Section \ref{sec:practice}, we set the prespecified domain $D_{2\pi} = \{ \ob \in \R^2: \frac{1}{10}\pi \leq \|\ob\|_{\infty} \leq 2\pi \}$ in our estimator for both TCP and GDPP. This choice captures the shape of the spectral densities without adding unnecessary computation.

The bias and standard errors of three different methods are presented in Table \ref{tab:TCP-DGPP}. See also, Figures \ref{fig:TCP-density} and \ref{fig:GDPP-density} in the Appendix for the empirical distributions 

\begin{table}[h]
    \centering
  \begin{tabular}{ccc|ccccc}
\multirow{2}{*}{Model} & \multirow{2}{*}{Window} & \multirow{2}{*}{Parameter} & \multicolumn{3}{c}{Method} \\
\cline{4-6}
& & & Ours & ML & MC \\ \hline \hline
\multirow{12}{*}{TCP}
& \multirow{4}{*}{$[-5,5]^2$} & $\kappa$ &  \textbf{-0.04(0.11)} & -0.07(0.70) & -0.04(0.10) \\  
&				     & $\alpha$ &  \textbf{0.72(3.52)} & -0.62(9.51) & --- \\
&				     & $\sigma^2$ &  \textbf{0.02(0.07)} & -0.06(0.35) & 0.01(0.10) \\ 
&				     & Time(sec) &  0.74 & 0.38 & 0.07 \\ 
\cline{2-6}
& \multirow{4}{*}{$[-10,10]^2$} & $\kappa$ &  -0.02(0.05) & -0.02(0.05) & \textbf{-0.01(0.05)} \\  
&				     & $\alpha$ &  \textbf{0.60(1.77)} & 0.24(3.37) & --- \\
&				     & $\sigma^2$ &  \textbf{0.01(0.04)} & -0.02(0.20) & 0.00(0.06) \\ 
&				     & Time(sec) &  2.38 & 5.67 & 0.23 \\ 
\cline{2-6}
& \multirow{4}{*}{$[-20,20]^2$} & $\kappa$ &  -0.01(0.04) & \textbf{0.00(0.03)} & \textbf{0.00(0.03)} \\  
&				     & $\alpha$ &  \textbf{0.25(1.03)} & 0.15(1.23) & --- \\
&				     & $\sigma^2$ &  \textbf{0.01(0.02)} &0.00(0.03) & 0.00(0.03) \\ 
&				     & Time(sec) &  9.15 & 173.66 & 1.95 \\ \hline

\multirow{9}{*}{GDPP}
& \multirow{3}{*}{$[-5,5]^2$} & $\lambda$ &  0.00(0.10) & \textbf{0.03(0.07)} & --- \\  
&				     & $\rho^2$ &  0.01(0.09) & \textbf{-0.02(0.03)} & 0.05(0.07) \\
&				     & Time(sec) &  0.14 & 1.84 & 0.05 \\ 
\cline{2-6}
& \multirow{3}{*}{$[-10,10]^2$} & $\lambda$ &  0.00(0.06) & \textbf{0.01(0.04)} & --- \\  
&				     & $\rho^2$ &  0.01(0.04) & \textbf{-0.01(0.01)} & 0.02(0.03) \\ 
&				     & Time(sec) &  0.39 & 30.70 & 0.08 \\ 
\cline{2-6}
& \multirow{3}{*}{$[-20,20]^2$} & $\lambda$ &  0.00(0.03) & \textbf{0.01(0.02)} & --- \\  
&				     & $\rho^2$ &  0.00(0.02) & \textbf{0.00(0.01)} & 0.00(0.02) \\
&				     & Time(sec) &  1.62 & 590.02 & 0.67 \\ \hline
\end{tabular} 
\caption{\textit{The bias and the standard errors (in the parentheses) of the estimated parameters based on three different approaches for the TCP and the GDPP. The true parameters are $(\kappa_0, \alpha_0, \sigma_0^2) = (0.2, 10, 0.5^2)$ for TCP and $(\lambda_0, \rho_0^2) = (1, 0.55^2)$ for GDPP. The time is calculated as an averaged computational time (using a parallel computing in R on a desktop computer with an i7-10700 Intel CPU) of each method per one simulation from 500 independent replications. We use \textbf{bold} to denote the smallest RMSE.
}}
\label{tab:TCP-DGPP}
\end{table}

First, we examine the accuracy of the estimators. Our estimator exhibits the smallest root-mean-squared erorr (RMSE) of $\alpha$ and $\sigma^2$ in the TCP model across all windows; has the smallest RMSE for $\kappa$ when $D_n=[-5,5]^2$; and has reliable estimation results for the GDPP model. 
The MC estimator performs well for both the TCP and GDPP models, achieving the smallest RMSE for $\kappa$ in the TCP model for $D_n = [-10,10]^2$ and $[-20,20]^2$. The ML estimator consistently performs the best for the GDPP model. For the TCP model, the ML estimators of $\kappa$ and $\sigma^2$ yield satisfactory finite sample results, but that of $\alpha$ exhibits a relatively large standard error. Overall, the biases and standard errors of all three estimators tend to decrease to zero as the window size increases. 

Moving on, we consider the computation time of each method. Firstly, the MC method has the fastest computation time for both models and all windows. 
Secondly, the ML method exhibits a reasonable computation speed for both models when $D_n = [-5,5]^2$, yielding the expected number of observations equal to 200 (for TCP) or 100 (for GDPP). However, when the number of observations are in the order of a few thousands (corresponding to $D_n = [-20,20]^2$), the ML method incurs the longest computational time. 
Lastly, our method takes less than 10 seconds to compute the parameter estimates for the TCP model and less than 2 seconds for the GDPP model both under $D_n = [-20,20]^2$. Specifically, the computation for our method involves two steps: (a) evaluation of $\{\widehat{I}_{h,n}(\ob_{\kb,A}): \ob_{\kb,A} \in D_{2\pi}\}$ and (b) optimization of $L_n^{(R)}(\btheta)$ in (\ref{eq:discrete-Whittle}). Once step (a) is done, there is no need to update the set of periodograms in the optimization step. In the simulations, it takes, on average, less than 0.5 seconds to evaluate $\{\widehat{I}_{h,n}(\ob_{\kb,40}): \ob_{\kb,40} \in D_{2\pi}\}$ for both TCP and GDPP. This indicates that most of the computational burden for our method stems from the optimization of the Whittle likelihood. By employing a coarse grid in (\ref{eq:discrete-Whittle}), i.e., using $\Omega = A/2$ instead of $\Omega = A$, the computational time can dramatically decrease. 

As a final note, bear in mind that the number of parameters for the MC method is one for TCP and two for GDPP, representing a lower parameter count compared to the corresponding ours or ML estimators. Additionally, the contrast function of the MC method for both TCP and GDPP is specifically designed for isotropic processes. Therefore, for the models we consider in this simulations, the MC method clearly holds a computational advantage over both our method and the ML method.

\subsection{Results under misspecified models} \label{sec:MS}
Now, we consider the case when the models fail to identify the true point patterns. For the data-generating process, we simulate from the LGCP model driven by the latent intensity field $\Lambda(\xb)$, $\xb \in \R^2$, where the first-order intensity is $\lambda^{(true)} = \exp(0.5) \approx 1.65$ and the covariance function is $R(\xb) = \cov(\log \Lambda(\xb), \log \Lambda(\textbf{0})) = 2\exp(-\|\xb\|)$, $\xb \in \R^2$. 
Following the arguments in Section \ref{sec:LGCP}, the true spectral density function $f(\ob)$, $\ob \in \R^2$, is given by
\begin{equation} \label{eq:fLGCP}
f(\ob) = (2\pi)^{-d} \left[ \lambda^{(true)} + (\lambda^{(true)})^{2}
\int_{\R^2} \left( \exp\{ 2 \exp(-\|\xb\|)\} - 1 \right) e^{-i \xb^\top \ob} d\xb \right].
\end{equation} 

In each simulation, we fit the TCP model with parameter $\btheta = (\kappa,\alpha, \sigma^2)^\top$. 
To examine the effect of the selection of the prespecified domain, we use $D_{2\pi} = \{ \ob \in \R^2: \frac{1}{10}\pi \leq \|\ob\|_{\infty} \leq 2\pi \}$ and $D_{5\pi} = \{ \ob \in \R^2: \frac{1}{10}\pi \leq \|\ob\|_{\infty} \leq 5\pi \}$ when evaluating our estimator.

Next, we consider the best fitting TCP model. The best fitting TCP parameter is given by $\btheta_{0}(D,A) = \arg\min_{\btheta \in \Theta} \mathcal{L}^{(R)}(\btheta)$, where
\begin{equation} \label{eq:mathcalL-R}
\mathcal{L}^{(R)}(\btheta) = \sum_{\ob_{\kb,A} \in D}\left( \frac{f(\ob_{\kb,A})}{f_{\btheta}^{(TCP)}(\ob_{\kb,A})} + \log f_{\btheta}^{(TCP)}(\ob_{\kb,A}) \right), \quad \btheta \in \Theta,
\end{equation} is the Riemann sum analogue of $\mathcal{L}(\btheta)$ in (\ref{eq:mathcalL}). 
Figure \ref{fig:mis-spec} illustrates the (log-scale of) the true spectral density ($f$; solid line) and the best fitting TCP spectral density functions 
$f_{\btheta}^{(TCP)}$ for $\btheta = \btheta_0(D_{2\pi}, 10)$ (dashed line) and $\btheta =\btheta_0(D_{5\pi}, 10)$ (dotted line). We note that the best TCP spectral densities evaluated on two different domains ($D_{2\pi}$ and $D_{5\pi}$) have distinct characteristics. In detail, $f_{\btheta}^{(TCP)}$ for $\btheta = \btheta_0(D_{2\pi}, 10)$ captures the peak and the curvature of the true spectral density more accurately but fails to identify the true asymptote (horizontal line with amplitude $(2\pi)^{-2}\lambda^{(true)}$). On the other hand, $f_{\btheta}^{(TCP)}$ for $\btheta = \btheta_0(D_{5\pi}, 10)$ successfully captures the asymptote of the true density but underestimates the power near the origin. 

\begin{figure}[ht!]
\centering
\includegraphics[width=0.5\textwidth]{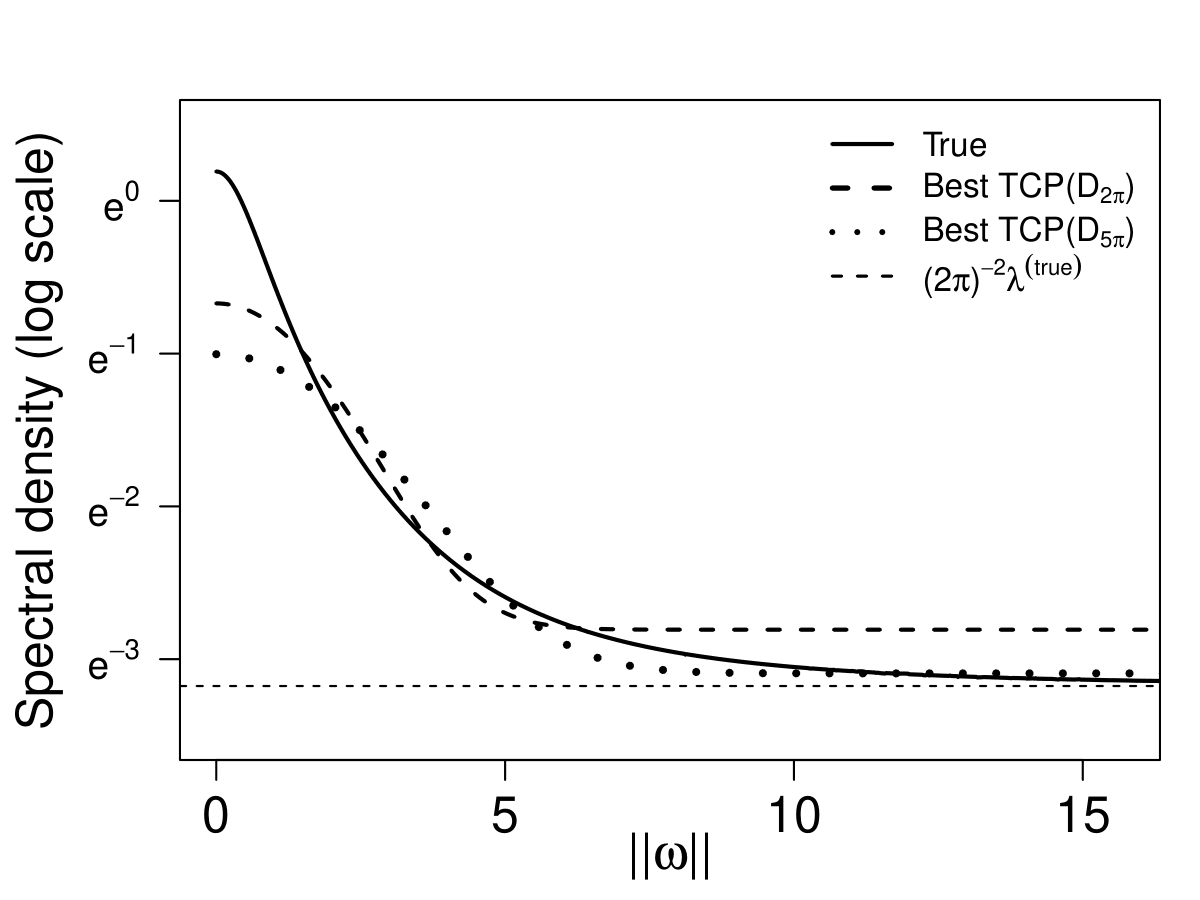}
\caption{\textit{ The true spectral density function ($f(\ob)$; solid line) as in (\ref{eq:fLGCP}) and the two best fitting TCP spectral densities for $A = 10$ evaluated on $ D_{2\pi}$ (dashed line) and on $D_{5\pi}$ (dotted line). All three spectral densities are plotted in log-scale against $\|\ob\| \in [0,\infty)$. The horizontal line indicates the asymptote of $f$ which takes a value $(2\pi)^{-2} \lambda^{(true)}$.
}}
\label{fig:mis-spec}
\end{figure}

Table \ref{tab:fTCP-mis} below summarizes parameter estimation results. The empirical distribution of each estimator can be found in Figure \ref{fig:LGCP-density} in Appendix. 
For the parameter fitting, we also report the estimation of the first-order intensity $\lambda = \kappa \alpha$.

\begin{table}[h]
    \centering
  \begin{tabular}{cc|cc|cccccc}
\multirow{2}{*}{Window} & \multirow{2}{*}{Par.} & \multicolumn{2}{c|}{Best Par.} &
\multicolumn{4}{c}{Method} \\
\cline{3-8}
& & $D_{2\pi}$ & $D_{5\pi}$ & Ours($D_{2\pi}$) & Ours($D_{5\pi}$)  & ML & MC \\ \hline \hline
 \multirow{5}{*}{$[-5,5]^2$}& $\kappa$ &  0.32  & 0.25                           & 0.38(0.23)  & 0.38(0.23)   & 0.43(1.73) & 0.25(0.28) \\  
				     & $\alpha$ &    7.46 &  7.08                          & 7.49(7.75)  & 6.41(5.19)  & 14.96(19.59) & --- \\
				     & $\sigma^2$& 0.17   &  0.10                       & 0.32(0.80)   & 0.16(0.28)  & 0.34(0.34) & 0.24(0.17) \\
				     & $\lambda = \kappa \alpha$ &  2.38 &  1.79  & 2.16(1.34) & 1.76(0.75)  & 1.49(0.53) & --- \\ 
				     & Time(sec) &  ---   & ---                               &   0.61         & 3.43         & 0.21          & 0.07 \\ 
\cline{1-8}
 \multirow{5}{*}{$[-10,10]^2$}& $\kappa$ &  0.31 &  0.24                         & 0.36(0.13)    & 0.30(0.10)   & 0.12(0.07) & 0.13(0.06) \\  
				     & $\alpha$         &  7.74 &  7.37                     & 7.33(3.79) & 6.88(3.69)   & 20.02(18.46) & --- \\
				     & $\sigma^2$    & 0.18  &  0.10                      & 0.20(0.08)   &  0.11(0.05)  & 0.48(0.53) & 0.36(0.25) \\ 
				     & $\lambda = \kappa \alpha$ &  2.43 &  1.80   & 2.36(0.92)  & 1.79(0.43)   & 1.60(0.31) & --- \\ 
				     & Time(sec) &   ---  & ---                               & 1.68            &  11.13       &  2.99          & 0.16 \\ 
\cline{1-8}
 \multirow{5}{*}{$[-20,20]^2$}& $\kappa$ &   0.32 & 0.25                     & 0.34(0.08)  & 0.27(0.06) & 0.09(0.03) & 0.09(0.03) \\  
				     & $\alpha$ &    7.63   & 7.13                        & 7.53(2.21)    & 7.16(2.36) & 20.44(10.52) & --- \\
				     & $\sigma^2$ & 0.18 & 0.10                       &0.19(0.04)   & 0.11(0.03) & 0.41(0.20) & 0.46(0.19) \\ 
				     & $\lambda = \kappa \alpha$ & 2.44 & 1.80  & 2.44(0.59) &  1.81(0.24) & 1.64(0.17) & --- \\ 
				     & Time(sec) &  ---   & ---                              & 5.17          & 40.45        & 71.16      & 1.36  \\ \hline
\cline{1-8}
\end{tabular} 
\caption{\textit{The average and the standard errors (in the parentheses) of the estimated parameters for the misspecified LGCP fitting with the TCP model. The best fitting parameters are calculated by minimizing $\mathcal{L}^{(R)}(\btheta)$ in (\ref{eq:mathcalL-R}).
When evaluating our estimator, we use two different prespecified domains, $D_{2\pi}$ and $D_{5\pi}$. 
}}
\label{tab:fTCP-mis}
\end{table}

We first discuss the best fitting parameters. As already shown in Figure \ref{fig:mis-spec} above, the best fitting parameter results reveal substantial differences between $D_{2\pi}$ and $D_{5\pi}$. Surprisingly, the first-order intensity of the best fitting TCP model on $D_{2\pi}$ is $2.38$ which significantly deviates from the true first-intensity ($\approx 1.65$). The rationale behind this discrepancy is that the true spectral density function $f(\ob)$ in $\ob \in D_{2\pi}$ is still distant from its asymptote value $(2\pi)^{-2} \lambda^{(true)}$. Therefore, the information contained in $\{\widehat{I}_{h,n}(\ob_{}): \ob \in D_{2\pi}\}$ proves insufficient for estimating the true first-order intensity. 
As a remedy for the discrepancy between the fitted first-order intensity and the true first-order intensity, one may fit the "reduced" TCP model. Specifically, we can fit the TCP model with a parameter constraint $\alpha = \widehat{\lambda}_n / \kappa$, where $\widehat{\lambda}_n = N_{X}(D_n) / |D_n|$ is the unbiased estimator of $\lambda^{(\text{true})}$. One advantage of using such a parameter constraint is that the estimated first-order intensity of the reduced TCP model always correctly estimates the true first-order intensity. Please refer to Appendix \ref{sec:TCP-mis2} for details on the construction of the reduced TCP model and its parameter fitting results.

Next, we examine the estimation outcomes from different methods. As the window size increases, our estimators evaluated on $D_{2\pi}$ and $D_{5\pi}$ converge toward the corresponding best fitting parameters. These results substantiate the asymptotic behavior of our estimator in Theorem \ref{thm:Whittle}. For the ML estimator, the standard errors of the estimation of $\kappa$ and $\sigma^2$ decrease as the window size increases, however, the standard error of the estimated $\alpha$ is still large even the window size is large. It is observed that $\sigma^2$ tends to converges to a fixed value (approximately 1.6), but the estimated $\alpha$ value (resp. estimated $\kappa$ value) tends to increase (resp. decrease) as $D_n$ increases. It is intriguing that the ML estimator estimates the true first-order intensity of the process even the model is misspecified. However, to the best of our knowledge, there are no theoretical results available for the ML estimator under model misspecification. Lastly, the standard errors of the MC estimators decreases as the window increases. However, based on the results on Table \ref{tab:fTCP-mis}, it is not clear that the MC estimator under model misspecification converges to a fixed parameter which is non-shrinking or non-diverging. 

Lastly, our estimator based on the prespecified domain $D_{2\pi}$ is reasonably fast even for the largest window $D_n = [-20,20]^2$. However, when the prespecified domain is $D_{5\pi}$, the computation time can take up to 40 seconds per simulation. This is because when computing $L_n^{(R)}(\btheta)$, the number of computational grids for $D_{5\pi}$ is about $(5/2)^2=6.25$ times larger than that of $D_{2\pi}$. To reduce the computation time, one may consider using a coarse grid on $D_{5\pi}$ such as using a coarse grid $\Omega = A/2$.

\section{Concluding remarks and possible extensions}
In this article, we study the frequency domain estimation and inferential methods for spatial point processes. We show that the DFTs for spatial point processes still satisfy the asymptotic joint Gaussianity, which is a classical result for the DFTs applicable to time series or spatial statistics. 
Our approach accommodates irregularly scattered point patterns, thus the fast Fourier transform algorithm is not applicable for evaluating the DFTs on a set of grids. Nevertheless, our simulations indicate that the DFT based model parameter estimation method remains computationally attractive with satisfactory finite sample performance. The advantage of our method becomes more pronounced when fitting a model with misspecification. We prove that our proposed model parameter estimator is asymptotically Gaussian, estimating the ``best'' fitting parameter that minimizes the spectral divergence between the true and conjectured spectra. According to our simulation results, it appears that our method is the only promising approach that exhibits satisfactory large sample properties under model misspecification, distinguishing itself from other two spatial domain methods---the likelihood-based method and least square method.

We anticipate that our frequency domain approaches can be well extended to multivariate point processes. In multivariate case, one need to consider the ``joint'' higher-order intensity and cumulant intensity functions, as introduced in \cite{p:zhu-23}, Section 2. Additionally, the sets of assumptions presented in this paper need appropriate reformulation (see Section 4.1 of the same reference). In Appendix \ref{sec:IRS}, we show that our DFT-based approaches can also be extended to the class of inhomogeneous processes, but a significant portion of the theoretical development on frequency domain methods of the inhomogeneous processes are remained open. This will be a good revenue for the furture research. 

\section*{Acknowledgments}
JY acknowledge the support of the Taiwan’s National Science and Technology Council
(grants 110-2118-M-001-014-MY3 and 113-2118-M-001-012). The authors thank Hsin-Cheng Huang and Suhasini Subba Rao for fruitful comments and suggestions, and Qi-Wen Ding for assistance with simulations.
The authors also wish to thank the two anonymous referees and editors for their valuable comments and corrections, which have greatly improved the article in all aspects.

\bibliography{bib-WhittleSPP}
\bibliographystyle{plainnat}


\pagebreak

\appendix
\counterwithin{figure}{section}
\counterwithin{table}{section}


\section{Proof of Theorem \ref{thm:A1}} \label{sec:proofA1}

\subsection{Equivalence between the feasible and infeasible criteria}
Let
\begin{equation} \label{eq:Atildephi}
\widetilde{A}_{h,n}(\phi) = \int_{D} \phi(\ob) I_{h,n}(\ob) d\ob, \quad n\in \N,
\end{equation} be a feasible criterion of the integrated periodogram $\widehat{A}_{h,n}(\phi)$ as in (\ref{eq:Anphi}). In theorem below, we show that
$|D_n|^{1/2} (\widehat{A}_{h,n}(\phi) - A(\phi))$ and $|D_n|^{1/2} (\widetilde{A}_{h,n}(\phi) - A(\phi))$ are asymptotically equivalent. Therefore, both statistics share the same asymptotic distribution.

\begin{theorem} \label{thm:A2}
Let $X$ be a second-order stationary point process on $\R^d$. 
Suppose that Assumptions \ref{assum:A}, \ref{assum:C} (for $\ell=4$), and \ref{assum:E}(i) hold. 
Then,
\begin{equation*}
|D_n|^{1/2} (\widehat{A}_{h,n}(\phi) - A(\phi)) - |D_n|^{1/2} (\widetilde{A}_{h,n}(\phi) - A(\phi)) \Lcon 0, \quad n\rightarrow \infty,
\end{equation*} where $\Lcon$ denotes convergences in $L_2$.
\end{theorem}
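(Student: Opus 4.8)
The plan is to isolate the single source of discrepancy between the two criteria, namely $\Delta_n := \widehat{\lambda}_{h,n} - \lambda$, and show it contributes at a negligible rate. Since $\widehat{J}_{h,n}(\ob) = J_{h,n}(\ob) - \Delta_n c_{h,n}(\ob)$ with $\Delta_n$ real, expanding the squared modulus gives
\[
\widehat{I}_{h,n}(\ob) - I_{h,n}(\ob) = -2\Delta_n \mathrm{Re}\big(J_{h,n}(\ob)\overline{c_{h,n}(\ob)}\big) + \Delta_n^2 |c_{h,n}(\ob)|^2,
\]
and integrating against $\phi$ over $D$ yields
\[
\widehat{A}_{h,n}(\phi) - \widetilde{A}_{h,n}(\phi) = -2\Delta_n B_n + \Delta_n^2 C_n,
\]
where $B_n = \mathrm{Re}\, G_n$ with $G_n = \int_D \phi(\ob) J_{h,n}(\ob)\overline{c_{h,n}(\ob)}\, d\ob$, and $C_n = \int_D \phi(\ob)|c_{h,n}(\ob)|^2 d\ob$ is deterministic. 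By $(a+b)^2 \le 2a^2 + 2b^2$ it then suffices to establish the two $L_2$ bounds $|D_n|\,\Ex[\Delta_n^2 B_n^2] \to 0$ and $|D_n|\, C_n^2\, \Ex[\Delta_n^4] \to 0$.

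The key structural step is to represent both $\Delta_n$ and $G_n$ as \emph{linear} functionals of the centered counting measure $\mu_n(d\xb) = N_X(d\xb) - \lambda\, d\xb$ on $D_n$. Using $\Ex[\mathcal{J}_{h,n}(\ob)] = \lambda c_{h,n}(\ob)$ one finds $\Delta_n = H_{h,1}^{-1}|D_n|^{-1}\int_{D_n} h(\xb/\aB)\, \mu_n(d\xb)$, and, after interchanging the frequency and space integrals, $G_n = \int_{D_n} w_n(\xb)\, \mu_n(d\xb)$ with $w_n(\xb) = K_n^2\, h(\xb/\aB)\, \psi_n(\xb)$, where $K_n = (2\pi)^{-d/2} H_{h,2}^{-1/2} |D_n|^{-1/2}$ and $\psi_n(\xb) = \int_D \phi(\ob)\,\overline{H_{h,1}^{(n)}(\ob)}\, e^{-i\xb^\top\ob}\, d\ob$. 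The crucial observation is that $G_n$ is linear, not quadratic, in $\mu_n$, precisely because $c_{h,n}$ is deterministic; consequently $\Ex[\Delta_n^2 B_n^2]$ and $\Ex[\Delta_n^4]$ are fourth-order moments of linear statistics, which the cumulant expansion reduces to products of covariances plus a single joint fourth cumulant.

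The deterministic inputs come from Plancherel's identity for the compactly supported map $\xb \mapsto h(\xb/\aB)$, which gives $\int_{\R^d} |H_{h,1}^{(n)}(\ob)|^2 d\ob = (2\pi)^d |D_n| H_{h,2}$. This yields at once $C_n = O(1)$ and $\|\psi_n\|_{L^2(\R^d)}^2 = (2\pi)^d \int_D |\phi(\ob)|^2 |H_{h,1}^{(n)}(\ob)|^2 d\ob = O(|D_n|)$; its role is exactly to tame the concentration of $c_{h,n}$ near the origin, where $c_{h,n}(\textbf{0}) = O(|D_n|^{1/2})$ and a crude pointwise bound would fail. Feeding these estimates into the second-order moments via the covariance density $C(\cdot) = \lambda \delta(\cdot) + \gamma_{2,\text{red}}(\cdot)$ with $\gamma_{2,\text{red}} \in L^1$ (Assumption \ref{assum:C} for $\ell=2$) gives $\Ex[\Delta_n^2] = O(|D_n|^{-1})$ and $\Ex[|G_n|^2] = O(|D_n|^{-1})$, so that $\Ex[B_n^2] = O(|D_n|^{-1})$.

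Finally, the fourth-order terms require Assumption \ref{assum:C} for $\ell=4$. From the product-moment decomposition
\[
\Ex[\Delta_n^2 |G_n|^2] = \Ex[\Delta_n^2]\,\Ex[|G_n|^2] + 2\big|\Ex[\Delta_n \overline{G_n}]\big|^2 + \cum(\Delta_n, \Delta_n, G_n, \overline{G_n}),
\]
the first two summands are $O(|D_n|^{-2})$ by the previous step, and likewise $\Ex[\Delta_n^4] = 3(\Ex[\Delta_n^2])^2 + \cum(\Delta_n,\Delta_n,\Delta_n,\Delta_n) = O(|D_n|^{-2})$; assembling gives $|D_n|\,\Ex[\Delta_n^2 B_n^2] + |D_n|\, C_n^2\, \Ex[\Delta_n^4] = O(|D_n|^{-1}) \to 0$, which is the claim. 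I expect the main obstacle to be the joint fourth cumulant $\cum(\Delta_n,\Delta_n,G_n,\overline{G_n})$: expanding it through the complete fourth-order cumulant measure of $X$ produces, besides the fully diffuse contribution controlled by $\gamma_{4,\text{red}} \in L^1(\R^{3d})$, several lower-order ``diagonal'' terms arising when points coincide, and one must verify that each, after weighting by the integrands of $\Delta_n$ and $G_n$ and using the $L^2$-control on $\psi_n$, is $O(|D_n|^{-2})$. This bookkeeping, rather than any conceptual difficulty, is the crux, and it is precisely what forces the fourth-order cumulant hypothesis.
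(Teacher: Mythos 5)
Your proposal is correct, and while it starts from exactly the paper's decomposition, the way you bound the resulting moments is genuinely different. Your identity $\widehat{I}_{h,n}(\ob)-I_{h,n}(\ob)=-2\Delta_n\,\mathrm{Re}(J_{h,n}(\ob)\overline{c_{h,n}(\ob)})+\Delta_n^2|c_{h,n}(\ob)|^2$ is precisely the paper's $R_1(\ob)+R_2(\ob)$ with $K_{h,n}(\ob)=2\,\mathrm{Re}(J_{h,n}(\ob)\overline{c_{h,n}(\ob)})$, and your moment--cumulant split of $\Ex[\Delta_n^2|G_n|^2]$ mirrors the paper's indecomposable-partition expansion $\var(S_1)=L_1+L_2+L_3$ together with $\Ex[S_1]^2$. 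The divergence is in the estimates: the paper stays in the frequency domain, feeding the covariance expansion of Theorem \ref{thm:cov-exp} (whose remainder is only a uniform $o(1)$ under Assumption \ref{assum:E}(i)) into the $c_{h,n}$-integral bounds of Lemma \ref{lemma:Cn}, and therefore concludes only $S_1,S_2\Lcon 0$ without a rate. You instead exploit that $G_n$ is \emph{linear} in the centered counting measure with weight $w_n=K_n^2h(\cdot/\aB)\psi_n$, and control everything through the single Plancherel identity $\int_{\R^d}|H_{h,1}^{(n)}(\ob)|^2d\ob=(2\pi)^d|D_n|H_{h,2}$, which gives $\|w_n\|_{L^2}^2=O(|D_n|^{-1})$ and hence $\Ex[|G_n|^2]=O(|D_n|^{-1})$ (the $\gamma_{2,\text{red}}$ contribution via Cauchy--Schwarz and $\gamma_{2,\text{red}}\in L^1$), $|\Ex[\Delta_n G_n]|=O(|D_n|^{-1})$, and $C_n=O(1)$. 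This buys an explicit $O(|D_n|^{-1/2})$ rate in $L_2$ and avoids the $R_{h,g}^{(n)}$ remainder analysis entirely; the paper's route, by contrast, recycles lemmas it needs elsewhere anyway. The one step you flag as the crux --- the joint fourth cumulant $\cum(\Delta_n,\Delta_n,G_n,\overline{G_n})$ --- is in fact already packaged in the paper's Lemma \ref{lemma:k4-bound}: since $|\psi_n(\xb)|\le C|D_n|^{1/2}$ (Cauchy--Schwarz plus the same Plancherel identity), one has $\sup_\xb|w_n(\xb)|=O(|D_n|^{-1/2})$, so $G_n$ is a linear statistic with normalization exponent $1/2$ and bounded weight, and the lemma gives $O(|D_n|^{-(1+1+1/2+1/2)+1})=O(|D_n|^{-2})$ directly under Assumption \ref{assum:C} for $\ell=4$; no additional diagonal-term bookkeeping is required.
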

\textit{Proof}. Let $K_{h,n}(\ob) = c_{h,n}(\ob) J_{h,n}(-\ob) + c_{h,n}(-\ob)  J_{h,n}(\ob)$, $\ob\in\R^d$, and let
\begin{equation*}
R_1(\ob) = -(\widehat{\lambda}_{h,n} - \lambda) K_{h,n}(\ob)
~~\text{and}~~
R_2(\ob) = |c_{h,n}(\ob)|^2  (\widehat{\lambda}_{h,n} - \lambda)^2, \quad \ob \in \R^d.
\end{equation*}
Then, we have $\widehat{I}_{h,n}(\ob) - I_{h,n}(\ob) = R_1(\ob) + R_2(\ob)$, $\ob \in \R^d$. Therefore, the difference between the feasible integrated periodogram and its theoretical counterpart can be expressed as
\begin{eqnarray*}
|D_n|^{1/2} (\widehat{A}_{h,n}(\phi) - \widetilde{A}_{h,n}(\phi))
&=& |D_n|^{1/2}\int_{D} \phi(\ob) \left( \widehat{I}_{h,n}(\ob) - I_{h,n}(\ob)\right) d\ob\\
&=& S_1 + S_2,~~ \ob\in\R^d,
\end{eqnarray*} where $S_i = |D_n|^{1/2}\int_{D}\phi(\ob) R_{i}(\ob) d\ob$, $i\in \{1,2\}$.
By using Theorem \ref{thm:S12} below, both $S_1$ and $S_2$ converges to zero in $L_2$ as $n \rightarrow \infty$. Thus, we get the desired results.
\hfill $\Box$

\vspace{0.5em}

Thanks to the above theorem, it is enough to prove Theorem \ref{thm:A2} for $\widetilde{A}_{h,n}(\phi)$ replacing $\widehat{A}_{h,n}(\phi)$ in the statement.

\subsection{Proof of the asymptotic bias} \label{sec:A1-1}

By using Theorem \ref{thm:DFT2} below, an expectation of the (theoretical) periodogram can be expressed as
\begin{equation*}
\Ex [I_{h,n}(\ob)]
= \int_{\R^d} f(\xb) F_{h,n}(\ob-\xb) d\xb, \quad n \in \N, \quad \ob \in \R^d,
\end{equation*} where $F_{h,n}$ is the Fej\'{e}r Kernel defined as in (\ref{eq:fejer}).
Therefore, applying Lemma \ref{lemma:Cn}(b) to the above expression, we have
\begin{equation} \label{eq:expect-I}
\Ex [I_{h,n}(\ob)] - f(\ob) = \int_{\R^d} f(\xb) F_{h,n}(\ob-\xb) d\xb- f(\ob) = O(|D_n|^{-2/d}), \quad n \rightarrow \infty
\end{equation} uniformly in $\ob \in \R^d$.
Therefore, an expectation of $\widetilde{A}_{h,n}(\phi) - A(\phi)$ is bounded by
\begin{equation*}
|\Ex[\widetilde{A}_{h,n}(\phi)] - A(\phi)| \leq \int_{D} |\phi(\ob)| |\Ex[I_{h,n}(\ob)] - f(\ob)| d\ob
\leq C |D_n|^{-2/d} \int_{D} |\phi(\ob)| d\ob = O(|D_n|^{-2/d})
\end{equation*} as $n\rightarrow \infty$. Thus, combining the above with Theorem \ref{thm:A2}, we show (i). \hfill $\Box$

\subsection{Proof of the asymptotic variance} \label{sec:A1-2}
To show the asymptotic variance, we first fix term. First, we view $\phi$ as a function on $\R^d$ by letting $\phi(\ob) =0$ when $\ob \notin D$. 
Since $\phi \in L^1(\R^d)$, let
\begin{equation} \label{eq:phi-FT}
\widehat{\phi}(\blambda) = \mathcal{F}(\phi)(\blambda)
= \int_{\R^d} \phi(\ob) \exp(i\blambda^\top \ob) d\ob, \quad \blambda \in \R^d
\end{equation} be the Fourier transform of $\phi$. Next, for a finite region $B_M = \prod_{i=1}^{d} [-M_i, M_i] \subset \R^d$, let
\begin{equation} \label{eq:phiM-FT}
\phi_{M}(\blambda) = \mathcal{F}^{-1}(\widehat{\phi}(\ob) I_{B_M}(\ob))(\blambda), \quad \blambda \in \R^d,
\end{equation} where $I_{M}(\xb) = 1$ if $\xb \in B_M$ and zero otherwise.
Therefore, the Fourier transform of $\phi_{M}(\blambda)$, denoted $\widehat{\phi}_M(\ob)$, is equal to $\widehat{\phi}(\ob) I_{B_M}(\ob)$ which vanishes outside $B_M$. 
By using similar arguments as in \cite{p:mat-09}, Lemmas 4 and 5, we have $\phi_M \rightarrow \phi$ as $B_M \rightarrow \R^d$ in $L^2$ sense and
for large enough $B_M$, $|D_n| \var ( \widetilde{A}_{h,n}(\phi))$ is closely approximated with $|D_n| \var ( \widetilde{A}_{h,n}(\phi_M))$ uniformly for all $n \in \N$. 

Now, we make an expansion of $|D_n| \var ( \widetilde{A}_{h,n}(\phi_M))$. 
By using that $I_{h,n}(\ob) = |J_{h,n}(\ob)|^2 = J_{h,n}(\ob) J_{h,n}(-\ob)$, for $\ob_1, \ob_2 \in \R^d$, we have
\begin{equation}
\begin{aligned}
\cov ( I_{h,n}(\ob_1), I_{h,n}(\ob_2)) &= \cov ( J_{h,n}(\ob_1) J_{h,n}(-\ob_1), J_{h,n}(\ob_2) J_{h,n}(-\ob_2)) \\
&= \cov (J_{h,n}(\ob_1), J_{h,n}(\ob_2)) \cov (J_{h,n}(-\ob_1), J_{h,n}(-\ob_2)) \\
&+ \cov (J_{h,n}(\ob_1), J_{h,n}(-\ob_2)) \cov (J_{h,n}(-\ob_1), J_{h,n}(\ob_2))\\
& + \cum (J_{h,n}(\ob_1), J_{h,n}(-\ob_1), J_{h,n}(\ob_2), J_{h,n}(-\ob_2)).
\end{aligned}
\label{eq:I-expansion}
\end{equation} Therefore, we have $|D_n| \var ( \widetilde{A}_{h,n}(\phi_M)) = A_1 +A_2 +A_3$, where
\begin{eqnarray*}
A_1 &=& |D_n|\int_{\R^{2d}} \phi_M(\ob_1) \phi_M(\ob_2) \cov (J_{h,n}(\ob_1), J_{h,n}(\ob_2)) \cov (J_{h,n}(-\ob_1), J_{h,n}(-\ob_2)) d\ob_1 d\ob_2, \\
A_2 &=& |D_n|\int_{\R^{2d}} \phi_M(\ob_1) \phi_M(\ob_2) \cov (J_{h,n}(\ob_1), J_{h,n}(-\ob_2)) \cov (J_n(-\ob_1), J_{h,n}(\ob_2)) d\ob_1 d\ob_2, \\
\text{and} \quad
A_3 &=& |D_n|\int_{\R^{2d}} \phi_M(\ob_1) \phi_M(\ob_2) \cum (J_{h,n}(\ob_1), J_{h,n}(-\ob_1), J_{h,n}(\ob_2), J_{h,n}(-\ob_2)) d\ob_1 d\ob_2.
\end{eqnarray*}
By using Theorem \ref{thm:limA1} below, we have
\begin{equation} 
\lim_{n\rightarrow \infty} A_1 
= (2\pi)^{d} (H_{h,4}/H_{h,2}^2) \int_{\R^d} f(\ob)^2 \phi_M(\ob)^2 d\ob.
\label{eq:limA1}
\end{equation} Therefore, for sufficiently large $B_M$, $\lim_{n\rightarrow \infty} A_1$ is arbitrary close to $(2\pi)^{d} (H_{h,4}/H_{h,2}^2) \int_{D} f^2 \phi^2$.

Similarly, the limit of $A_2$ is
\begin{eqnarray} 
\lim_{n\rightarrow \infty} A_2 
&=& (2\pi)^{d} (H_{h,4}/H_{h,2}^2) \int_{\R^d} \phi_M(\ob) \phi_M(-\ob) f(\ob)^2 d\ob \nonumber \\
&\approx&
(2\pi)^{d} (H_{h,4}/H_{h,2}^2) \int_{D} \phi(\ob) \phi(-\ob) f(\ob)^2 d\ob.
\end{eqnarray}

Lastly, by using Theorem \ref{thm:limA3} below, we have 
\begin{equation}
\begin{aligned} 
\lim_{n\rightarrow \infty} A_{3} &= (2\pi)^d (H_{h,4}/H_{h,2}^{2}) \int_{\R^{2d}} \phi_{M}(\blambda_1) \phi_M(\blambda_3) f_4(\blambda_1, -\blambda_1, \blambda_3) d\blambda_1 d\blambda_3 \\
&\approx (2\pi)^d (H_{h,4}/H_{h,2}^{2})\int_{D^2} \phi(\blambda_1) \phi(\blambda_3) f_4 (\blambda_1,- \blambda_1, \blambda_3) d\blambda_1 d\blambda_3.
\end{aligned}
\label{eq:limA3}
\end{equation}
Combining (\ref{eq:limA1})--(\ref{eq:limA3}), we have
\begin{equation*}
\lim_{n\rightarrow \infty} |D_n| \var(\widetilde{A}_{h,n}(\phi)) = (2\pi)^d (H_{h,4}/H_{h,2}^{2}) (\Omega_1 + \Omega_2),
\end{equation*} where $\Omega_1$ and $\Omega_2$ are defined as in (\ref{eq:Omegas}).
Thus, combining the aboves, we show (ii). \hfill $\Box$

\subsection{Proof of the asymptotic normality} \label{sec:A1-3}

Because of Theorem \ref{thm:A2}, it is enough to show the asymptotic normality of the feasible $G_{h,n}(\phi) = |D_n|^{1/2} ( \widetilde{A}_{h,n}(\phi) - A(\phi))$. Let $d\in \{1,2,3\}$ and let
\begin{equation} \label{eq:Gtilde}
\widetilde{G}_{h,n}(\phi) = |D_n|^{1/2} \int_{D} \phi(\ob) \left\{ I_{h,n}(\ob) - \Ex[I_{h,n}(\ob)]\right\} d\ob = 
|D_n|^{1/2} \left( \widetilde{A}_{h,n}(\phi) - \Ex[\widetilde{A}_{h,n}(\phi)]\right).
\end{equation} Then, $G_{h,n}(\phi) - \widetilde{G}_{h,n}(\phi)$ is nonstochastic and is 
bounded by $O(|D_n|^{1/2-(2/d)}) = o(1)$ as $n \rightarrow \infty$ due to Theorem \ref{thm:A1}(i). Therefore, $G_{h,n}(\phi)$ and $\widetilde{G}_{h,n}(\phi)$ are asymptotically equivalent. 

Now, we focus on the asymptotic distribution of $\widetilde{G}_{h,n}(\phi)$. Since $I_{h,n}(\cdot)$ cannot be written as an additive form of the periodograms of the sub-blocks, one cannot directly apply for the standard central limit theorem techniques that are reviewed in \cite{p:bis-19}, Section 1. Instead, we will ``linearize'' the periodogram and show that the associated linear term dominates. 

Without loss of generality, we assume that there exists $C \in (1,\infty)$ such that $C^{-1} n^{d} \leq |D_n| \leq C n^{d}$, $n \in \N$. Therefore, $A_1, \cdots, A_d$ increases proportional to the order of $n$. Next, let $\beta, \gamma \in (0,1)$ be chosen such that $2d / \varepsilon<\beta <\gamma < 1$, where $\varepsilon>2d$ is from Assumption \ref{assum:D}(ii). Let
\begin{equation*}
A_n = \{\kb: \kb \in n^{\gamma}\Z^d~~\text{and}~~ D_n^{(\kb)} = \kb + [-(n^{\gamma}-n^{\beta})/2, (n^{\gamma}-n^{\beta})/2]^d \subset D_n\},
~~n \in \N.
\end{equation*} Therefore, $\bigcup_{\kb \in A_n}D_n^{(\kb)}$ is a disjoint union that is included in $D_n$. Let $J_{h,n}^{(\kb)}(\ob) = \mathcal{J}_{h,n}^{(\kb)}(\ob) - \Ex[\mathcal{J}_{h,n}^{(\kb)}(\ob)]$, where
\begin{equation*}
\mathcal{J}_{h,n}^{(\kb)}(\ob) = 
(2\pi)^{-d/2} H_{h,2}^{-1/2} | D_n^{(\kb)}|^{-1/2} \sum_{\xb \in X \cap D_n^{(\kb)}}
h(\xb / \aB) \exp(-i \xb^\top \ob), \quad \kb \in A_n, \quad \ob \in \R^d.
\end{equation*}
Therefore, $\mathcal{J}_{h,n}^{(\kb)}(\cdot)$ is the DFT evaluated within the sub-block $D_n^{(\kb)}$ of $D_n$. Let
\begin{equation*}
\widetilde{G}_{h,n}^{(\kb)}(\phi) = | D_n^{(\kb)}|^{1/2} \int_{D} \phi(\ob) \left(
|J_{h,n}^{(\kb)}(\ob)|^2 - \Ex[ |J_{h,n}^{(\kb)}(\ob)|^2]
\right) d\ob, \quad n \in \N, \quad \kb \in A_n.
\end{equation*}
Let $k_n = |A_n|$ and let
\begin{equation} \label{eq:Vhn}
V_{h,n}(\phi) = k_n^{-1/2} \sum_{\kb \in A_n} \widetilde{G}_{h,n}^{(\kb)}(\phi), \quad n \in \N.
\end{equation}
In Theorem \ref{thm:Vhn} below, we show that $\widetilde{G}_{h,n}(\phi)$ and $V_{h,n}(\phi)$ are asymptotically equivalent.
 An advantage of using $V_{h,n}(\phi)$ over $\widetilde{G}_{h,n}(\phi)$ is that $V_{h,n}(\phi)$ is written in terms of the sum of $\{\widetilde{G}_{h,n}^{(\kb)}(\phi)\}_{\kb \in A_n}$ which are based on the statistics on the non-overlapping sub-blocks of $D_n$. Therefore, one can show the $\alpha$-mixing CLT for $V_{h,n}(\phi)$ using the standard independent and telescoping sum techniques (cf. \cite{p:gua-07}). Details can be found in Appendix \ref{sec:mixing22} (Theorem \ref{thm:CLT-Vhn}). This together with Theorem \ref{thm:A1}(i) and (ii), we get the desired results.
\hfill $\Box$


\section{Additional proofs of the main results}

\subsection{Proof of Theorem \ref{thm:In-bias}} \label{sec:proof0}
Below we show that the feasible DFT $\widehat{J}_{h,n}(\ob)$ is asymptotically equivalent to its theoretical counterpart $J_{h,n}(\ob)$ as in (\ref{eq:Jn-h}).
\begin{theorem} \label{thm:asymp-equiv}
Let $X$ be a second-order stationary point process on $\R^d$ and let $h$ be the data taper such that $\sup_{\ob \in \R^d} h(\ob) <\infty$.
Suppose that Assumptions \ref{assum:A} and \ref{assum:C}(for $\ell=2$) hold.
Let $\{\ob_{n}\}$ be a sequence on $\R^d$ that is asymptotically distant from $\{\textbf{0}\}$. Then,
\begin{equation*}
\widehat{J}_{h,n}(\ob_n) - J_{h,n}(\ob_n) \Lcon 0, \quad n\rightarrow \infty.
\end{equation*} 
\end{theorem}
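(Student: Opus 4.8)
The plan is to reduce the $L_2$ statement to a single product of a deterministic factor and a random factor, and then to play the Fourier decay of the taper against the $|D_n|^{-1}$ decay of $\var(\widehat{\lambda}_{h,n})$. First I would observe from (\ref{eq:Jn-h}) and (\ref{eq:Jnmathbb-h}) that the feasible and theoretical DFTs differ only through the first-order intensity plug-in, namely
\[
\widehat{J}_{h,n}(\ob_n) - J_{h,n}(\ob_n) = -(\widehat{\lambda}_{h,n} - \lambda)\, c_{h,n}(\ob_n).
\]
Because $c_{h,n}(\ob_n)$ is deterministic and $\widehat{\lambda}_{h,n}$ is unbiased for $\lambda$, this immediately gives
\[
\Ex\big|\widehat{J}_{h,n}(\ob_n) - J_{h,n}(\ob_n)\big|^2 = |c_{h,n}(\ob_n)|^2\, \var(\widehat{\lambda}_{h,n}),
\]
so the problem cleanly splits into bounding the two factors separately.

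For the variance factor, I would expand $\var\big(\sum_{\xb \in X\cap D_n} h(\xb/\aB)\big)$ through the first- and second-order intensity functions, writing it as a diagonal contribution $\lambda\int_{D_n} h(\xb/\aB)^2\, d\xb = \lambda H_{h,2}|D_n|$ plus an off-diagonal contribution $\int_{D_n^2} h(\xb_1/\aB)h(\xb_2/\aB)\,\gamma_{2,\text{red}}(\xb_1-\xb_2)\, d\xb_1 d\xb_2$. Bounding $h$ by $\sup_{\ob} h(\ob)<\infty$ and using $\gamma_{2,\text{red}}\in L^1(\R^d)$ from Assumption \ref{assum:C} (for $\ell=2$), the off-diagonal contribution is $O(|D_n|)$. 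Hence $\var(\widehat{\lambda}_{h,n}) = H_{h,1}^{-2}|D_n|^{-2}\,O(|D_n|) = O(|D_n|^{-1})$.

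For the bias factor, a change of variables $\yb = \xb/\aB$ in the integral defining $H_{h,1}^{(n)}(\ob_n)$ shows that its modulus equals $|D_n|$ times the modulus of the Fourier transform of $h$ evaluated at the scaled frequency $\aB\cdot\ob_n = (A_1\omega_{n,1},\dots,A_d\omega_{n,d})$; consequently
\[
|c_{h,n}(\ob_n)|^2 = (2\pi)^{-d} H_{h,2}^{-1}\, |D_n|\, |\mathcal{F}(h)(\aB\cdot\ob_n)|^2.
\]
Multiplying the two estimates cancels the powers of $|D_n|$ and leaves $\Ex|\widehat{J}_{h,n}(\ob_n)-J_{h,n}(\ob_n)|^2 = O\big(|\mathcal{F}(h)(\aB\cdot\ob_n)|^2\big)$, which reduces everything to establishing $\mathcal{F}(h)(\aB\cdot\ob_n)\to 0$.

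This last step is the crux of the argument. Since $\{\ob_n\}$ is asymptotically distant from $\{\textbf{0}\}$, we have $|D_n|^{1/d}\|\ob_n\|\to\infty$, and because Assumption \ref{assum:A} forces $\min_i A_i \asymp |D_n|^{1/d}$, the scaled frequency satisfies $\|\aB\cdot\ob_n\|\geq (\min_i A_i)\|\ob_n\|\to\infty$. As $h$ is bounded with compact support it lies in $L^1(\R^d)$, so the Riemann--Lebesgue lemma yields $\mathcal{F}(h)(\ub)\to 0$ as $\|\ub\|\to\infty$, whence $\mathcal{F}(h)(\aB\cdot\ob_n)\to 0$. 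This drives the $L_2$ norm of the difference to zero and completes the proof. I do not anticipate a genuine obstacle here; the only delicate points are the uniform bound on the off-diagonal variance term (handled by $\sup h<\infty$ and Assumption \ref{assum:C}) and the verification that the asymptotic-distance hypothesis, combined with the isotropic growth in Assumption \ref{assum:A}, indeed sends $\|\aB\cdot\ob_n\|$ to infinity so that Riemann--Lebesgue applies.
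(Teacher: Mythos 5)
Your proposal is correct and follows essentially the same route as the paper: the same reduction to $-(\widehat{\lambda}_{h,n}-\lambda)c_{h,n}(\ob_n)$, the bound $\var(\widehat{\lambda}_{h,n})=O(|D_n|^{-1})$ (which the paper delegates to Lemma \ref{lemma:cumulants}(b) via the general cumulant bound of Lemma \ref{lemma:k4-bound}, whose $\ell=2$ case is exactly your Campbell-formula expansion), and the identification $|D_n|^{-1}H_{h,1}^{(n)}(\ob_n)$ with the Fourier transform of $h$ at $\aB\cdot\ob_n$ followed by Riemann--Lebesgue, which is precisely Lemma \ref{lemma:Hkn-bound}. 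Your verification that Assumption \ref{assum:A} plus asymptotic distance forces $\|\aB\cdot\ob_n\|\to\infty$ matches the paper's argument as well.
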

\textit{Proof}. By definition, $\widehat{J}_{h,n}(\ob_n) - J_{h,n}(\ob_n) = -(\widehat{\lambda}_{h,n} - \lambda)c_{h,n}(\ob_{n})$. By using Lemma \ref{lemma:cumulants}(b) below, we have $\Ex[|\widehat{\lambda}_{h,n} - \lambda|^2] = \var(\widehat{\lambda}_{h,n}) \leq C|D_n|^{-1}$ for some $C \in (0,\infty)$. Therefore, $\Ex[|\widehat{J}_{h,n}(\ob_{n}) - J_{h,n}(\ob_{n})|^2] \leq C|D_n|^{-1}|c_{h,n}(\ob_{n})|^2$. Next, by using (\ref{eq:Cn-h}) and Lemma \ref{lemma:Hkn-bound} below, we have \\ $\lim_{n\rightarrow \infty}|D_n|^{-1/2}|c_{h,n}(\ob_{n})|=0$. Thus, we get the desired result.
\hfill $\Box$

\vspace{0.5em}

Now we are ready to prove Theorem \ref{thm:In-bias}.
Thanks to the above theorem, it is enough to prove the theorem for $J_{h,n}(\ob_n)$ replacing $\widehat{J}_{h,n}(\ob_n)$ in the statement.
First, we will show (\ref{eq:lim-DFT}). Recall $H_{h,k}^{(n)}$ in (\ref{eq:Hkn}). By using Theorem \ref{thm:cov-exp} below, the leading term of $\cov(J_{h,n}(\ob_{1,n}), J_{h,n}(\ob_{2,n}))$ is $|D_n|^{-1} H_{h,2}^{-1} f(\ob_{1,n}) H_{h,2}^{(n)}(\ob_{1,n}-\ob_{2,n})$. Therefore, by using Lemma \ref{lemma:Hkn-bound} below, we have
\begin{eqnarray*}
\cov(J_{h,n}(\ob_{1,n}), J_{h,n}(\ob_{2,n})) 
&=& |D_n|^{-1} H_{h,2}^{-1} f(\ob_{1,n}) H_{h,2}^{(n)}(\ob_{1,n}-\ob_{2,n})
+o(1)  \\
&\leq& H_{h,2}^{-1} \big( \sup_{\ob \in \R^d} f(\ob) \big) o(1) + o(1), \quad n\rightarrow \infty.
\end{eqnarray*}
Since Assumption \ref{assum:C}(for $\ell=2$) implies $\sup_{\ob \in \R^d}f(\ob) <\infty$, by taking a limit on each side above, we show (\ref{eq:lim-DFT}). 

Next, we will show (\ref{eq:lim-DFT2}). Using Theorem \ref{thm:cov-exp} again together with $H_{h,2}^{(n)}(\textbf{0}) = |D_n| H_{h,2}$, we have
\begin{eqnarray} \label{eq:varJ-o1}
\var(J_{h,n}(\ob_{n})) = 
\cov(J_{h,n}(\ob_{n}), J_{h,n}(\ob_{n})) =
f(\ob_{n}) + o(1), \quad n\rightarrow \infty,
\end{eqnarray} where $o(1)$ error above is uniform in $\ob_{n} \in \R^d$. Since $f$ is continuous, provided Assumption \ref{assum:C} for $\ell=2$, the right hand side above converges to $f(\ob)$ as $n\rightarrow \infty$. 
Thus, we show (\ref{eq:lim-DFT2}). 

Lastly, to show (\ref{eq:lim-DFT3}), by using an expansion (\ref{eq:I-expansion}) together with (\ref{eq:lim-DFT}) and (\ref{eq:lim-DFT2}), we have
\begin{eqnarray*}
\cov(I_{h,n}(\ob_{1,n}), I_{h,n}(\ob_{2,n})) &=& |\cov(J_{h,n}(\ob_{1,n}),J_{h,n}(\ob_{2,n}))|^2
+ |\cov(J_{h,n}(\ob_{1,n}), J_{h,n}(-\ob_{2,n}))|^2 \\
&&~~ +\cum(J_{h,n}(\ob_{1,n}),J_{h,n}(-\ob_{1,n}),J_{h,n}(\ob_{2,n}),J_{h,n}(-\ob_{2,n})) \\
&=& f(\ob)^2 I(\ob_{1,n}=\ob_{2,n})  + o(1) \\
&&~+ \cum(J_{h,n}(\ob_{1,n}),J_{h,n}(-\ob_{1,n}),J_{h,n}(\ob_{2,n}),J_{h,n}(-\ob_{2,n}))
\end{eqnarray*} as $n\rightarrow \infty$.
The last term above is $O(|D_n|^{-1})$ as $n\rightarrow \infty$ due to Lemma \ref{lemma:k4-bound} below. Therefore, we have
$\lim_{n\rightarrow \infty} \cov(I_{h,n}(\ob_{1,n}), I_{h,n}(\ob_{2,n})) = 0$ and
$\lim_{n\rightarrow \infty} \var(I_{h,n}(\ob_{n})) = f(\ob)^2$. This proves (\ref{eq:lim-DFT3}).
All together, we prove the theorem.
\hfill $\Box$

\subsection{Proof of Theorem \ref{thm:asymp-DFT}} \label{sec:proof1}

Let $\Re J_{h,n}(\ob)$ and $\Im J_{h,n}(\ob)$ be the real and imaginary parts of $J_{h,n}(\ob)$, respectively.
Then, by using Theorem \ref{sec:proof0} above, it is enough to show
\begin{equation} \label{eq:joint-normal1}
\left( \frac{\Re J_{h,n}(\ob_{1,n})}{(f(\ob_1)/2)^{1/2}},
\frac{\Im J_{h,n}(\ob_{1,n})}{(f(\ob_1)/2)^{1/2}}, \dots, 
\frac{\Im J_{h,n}(\ob_{r,n})}{(f(\ob_{r})/2)^{1/2}}
 \right)^\top \Dcon \mathcal{SN}_{2r}, \quad n \rightarrow \infty,
\end{equation} where $\mathcal{SN}_{2r}$ is the $2r$-dimensional standard normal random variable. 

To show (\ref{eq:joint-normal1}), we will first show that the asymptotic variance of the left hand side of above is an unit matrix. Note that
\begin{equation*}
\Re J_{h,n}(\ob) = \frac{1}{2} ( J_{h,n}(\ob) + J_{h,n}(-\ob)) \quad \text{and} \quad
\Im J_{h,n}(\ob) = \frac{1}{2i} ( J_{h,n}(\ob) - J_{h,n}(-\ob)), \quad \ob \in \R^d.
\end{equation*} Therefore, for $i,j \in \{1, \dots, r\}$, we have
\begin{eqnarray*}
\cov (\Re J_{h,n}(\ob_{i,n}) , \Re J_{h,n}(\ob_{j,n}))
&=& \frac{1}{4} \big( 
\cov (J_{h,n}(\ob_{i,n}), J_{h,n}(\ob_{j,n})) + \cov (J_{h,n}(\ob_{i,n}), J_{h,n}(-\ob_{j,n})) \\
&& ~~+ \cov (J_{h,n}(-\ob_{i,n}), J_{h,n}(\ob_{j,n})) + \cov (J_{h,n}(-\ob_{i,n}), J_{h,n}(-\ob_{j,n}))
\big).
\end{eqnarray*} Therefore, by using Theorem \ref{thm:In-bias}, one can show
\begin{equation*}
\lim_{n \rightarrow \infty} \cov (\Re J_{h,n}(\ob_{i,n}) , \Re J_{h,n}(\ob_{j,n}))
= \frac{1}{2} f(\ob_i) I(i=j).
\end{equation*} Similarly, for $i,j \in \{1, \dots, r\}$,
\begin{eqnarray*}
\lim_{n \rightarrow \infty} \cov (\Re J_{h,n}(\ob_{i,n}) , \Im J_{h,n}(\ob_{j,n})) &=& 0, \\
\lim_{n \rightarrow \infty} \cov (\Im J_{h,n}(\ob_{i,n}) , \Re J_{h,n}(\ob_{j,n})) &=& 0, \\
\text{and} \quad \lim_{n \rightarrow \infty} \cov (\Im J_{h,n}(\ob_{i,n}) , \Im J_{h,n}(\ob_{j,n})) &=&
\frac{1}{2} f(\ob_i) I(i=j).
\end{eqnarray*}
All together, we show that the limiting variance of the left hand side of (\ref{eq:joint-normal1}) is a unit matrix.

Next, let $\{a_j\}_{j=1}^{r}, \{b_j\}_{j=1}^{r} \in \R$. We define 
$\mathcal{Z}_n = |D_n|^{-1/2} \sum_{\xb \in X \cap D_n} g_n(\xb)$, where 
\begin{equation} 
g_n(\xb) = \sum_{j=1}^{r} a_j h(\xb/\aB) \frac{1}{2}(e^{- \xb^\top \ob_{j,n}} +e^{\xb^\top \ob_{j,n}}) 
+  \sum_{j=1}^{r} b_j h(\xb/\aB)\frac{1}{2i} ( e^{ - \xb^\top \ob_{j,n}} - e^{\xb^\top \ob_{j,n}}).
 \label{eq:gfunction}
\end{equation} Then, it is easily seen that any linear combination of the left hand side of (\ref{eq:joint-normal1}) can be expressed as 
$\mathcal{Z}_n - \Ex[\mathcal{Z}_n]$ for an appropriate $g_n$. Therefore, thanks to Cram\'{e}r-Wald device, to show the asymptotic normality in (\ref{eq:joint-normal1}),  it is enough to show that $\mathcal{Z}_n - \Ex[ \mathcal{Z}_n]$ is asymptotically normal. CLT for $\mathcal{Z}_n - \Ex[ \mathcal{Z}_n]$ under $\alpha$-mixing condition can be easily seen using standard techniques (cf. \cite{p:gua-07}).
One thing that is needed to verify is to show there exists $\delta >0$ such that
\begin{equation} \label{eq:delta-cond}
\sup_{n \in \N}  \Ex \left|\mathcal{Z}_n - \Ex[ \mathcal{Z}_n] \right|^{2+\delta} <\infty.
\end{equation} We will show the above holds for $\delta = 2$, provided Assumption \ref{assum:C}(i) for $\ell=4$.
We note that
\begin{equation} \label{eq:Ex4}
\Ex \left|\mathcal{Z}_n - \Ex[ \mathcal{Z}_n] \right|^4 = \Ex (\mathcal{Z}_n - \Ex[ \mathcal{Z}_n] )^4
= \kappa_4(\mathcal{Z}_n) + 3 \kappa_2(\mathcal{Z}_n)^2,
\end{equation} where $\kappa_2(X) = \cum(X,X)$ and $\kappa_4(X) = \cum(X,X,X,X)$. Therefore, since $g_n(\xb)$ in (\ref{eq:gfunction}) is bounded above uniformly in $n \in \N$ and $\xb \in \R^d$, we can apply Lemma \ref{lemma:k4-bound} and get
\begin{equation*}
\sup_{n \in \N} \Ex \left|\mathcal{Z}_n - \Ex[ \mathcal{Z}_n] \right|^4 \leq
\sup_{n \in \N} |\kappa_4(\mathcal{Z}_n)|  + 3 \sup_{n \in \N}\kappa_2(\mathcal{Z}_n)^2=
O(1).
\end{equation*} Thus, we show (\ref{eq:delta-cond}) for $\delta=2$.
The remaining parts for proving CLT are routine (which may be obtained from the authors upon request). 
\hfill $\Box$


\subsection{Proof of Theorem \ref{thm:KSD}} \label{sec:proofKSD}

Let 
\begin{equation} \label{eq:KSD-t}
\widetilde{f}_{n,b}(\ob)
= \int_{\R^d} W_{b}(\ob - \xb) I_{h,n}(\xb)  d\xb
, \quad n \in \N, \quad \ob \in \R^d,
\end{equation} be the theoretical counterpart of the kernel spectral density estimator. Then, in Corollary \ref{coro:KDE-asym} below, we show that $\widetilde{f}_{n,b}(\ob)$ and $\widehat{f}_{n,b}(\ob)$ are asymptotically equivalent. Therefore, it is enough to show that $\widetilde{f}_{n,b}(\ob)$ consistently estimates the spectral density $f(\ob)$ for all $\ob \in \R^d$. 
By using (\ref{eq:varJ-o1}) together with $\int_{\R^d} W_{b}(\ob - \xb)d\xb = 1$, we have
\begin{equation*}
\left| \Ex[\widetilde{f}_{n,b}(\ob)] 
- \int_{\R^d} W_{b}(\ob-\xb) f(\xb) d\xb
 \right| \leq \int_{\R^d} W_{b}(\ob - \xb) o(1) d\xb = o(1), \quad n \rightarrow \infty.
\end{equation*} Moreover, by using classcial kernel method (cf. \cite{p:dsy-24}, proof of Theorem 5.1), it can be easily seen that $|\int_{\R^d} W_{b}(\ob-\xb) f(\xb) d\xb - f(\ob)| = o(1)$ as $b + |D_n|^{-1}b^{-d}\rightarrow \infty$. Therefore, by using triangular inequality, we have
\begin{equation} \label{eq:fnb-exp}
\lim_{n\rightarrow \infty} \big| \Ex[\widetilde{f}_{n,b}(\ob)] 
- f(\ob)\big| =0.
\end{equation}

Next, by using a similar argument as in Appendix \ref{sec:A1-2} (see also, the proof of \cite{p:dsy-24}, Theorem 5.1), it can be seen that the variance of $\widetilde{f}_{n,b}(\ob)$ is bounded by
\begin{equation} \label{eq:fnb-var}
\var(\widetilde{f}_{n,b}(\ob)) \leq C |D_n|^{-1} \int_{\R^d} W_{b}(\ob - \xb)^2 d\xb = O(|D_n|^{-1} b^{-d}) = o(1), \quad n\rightarrow \infty.
\end{equation}
We mention that unlike the case of Appendix \ref{sec:A1-2}, we do not require Assumption \ref{assum:F} to prove (\ref{eq:fnb-var}). This is because in the expansion of $\var(\widetilde{f}_{n,b}(\ob))$ using (\ref{eq:I-expansion}), the fourth-order cumulant term is bounded by $|D_n|^{-1} \int_{\R^d} W_{b}(\xb)^2 d\xb$ due to Lemma \ref{lemma:k4-bound} below. Therefore, combining (\ref{eq:fnb-exp}) and (\ref{eq:fnb-var}), we have $\widetilde{f}_{n,b}(\ob) \Pcon f(\xb)$ as $n\rightarrow \infty$. This proves the theorem.
\hfill $\Box$


\subsection{Proof of Theorem \ref{thm:KSD-2}} \label{sec:proofKSD-2}
By using Corollary \ref{coro:KDE-asym} below, $\sqrt{|D_n|b^d} (\widetilde{f}_{n,b}(\ob)-f(\ob))$ and 
 $\sqrt{|D_n|b^d} (\widehat{f}_{n,b}(\ob)-f(\ob))$ share the same asymptotic distribution. Therefore, it is enough to show the asymptotic normality of 
$\sqrt{|D_n|b^d} (\widetilde{f}_{n,b}(\ob)-f(\ob))$. Since the scaled kernel function $W_{b}$ has a support on $[-b/2,b/2]^d$, $\widetilde{f}_{n,b}(\ob)$ can be written as an (theoretical) integrated periodogram by setting $\phi_{b}(\xb) = W_{b}(\ob - \xb)$, $\xb \in \R^d$. Therefore, the proof of the asymptotic normality is almost identical to that of the proof of Theorem \ref{thm:A1}. We will only sketch the proof.

\noindent \textit{The bias}. \hspace{0.1em} By using Lemma \ref{lemma:Cn}(b), we have
\begin{equation*}
\bigg| \Ex[\widetilde{f}_{n,b}(\ob)] - \int_{\R^d} W_b(\ob-\xb) f(\xb) d\xb \bigg|  = \int_{\R^d} W_b(\ob-\xb) O(|D_n|^{-2/d}) d\xb = O(|D_n|^{-2/d}), \quad n\rightarrow \infty.
\end{equation*} Moreover, by using classical nonparametric kernel estimation results (cf. \cite{p:dsy-24}, Theorem 5.2), we have
$|\int_{\R^d} W_b(\ob-\xb) f(\xb) d\xb -f(\ob)| = O(b^2)$, $n\rightarrow \infty$.
Therefore, \\ $\lim_{n\rightarrow \infty} \sqrt{|D_n|b^d} \big| \Ex[\widetilde{f}_{n,b}(\ob)] - f(\ob)\big| =0$, provided
$\lim_{n\rightarrow \infty} |D_n|^{1/2} b^{d/2} \{ |D_n|^{-2/d} + b^{2}\} = 0$.

\vspace{0.5em}

\noindent \textit{The variance}. \hspace{0.1em} By using a similar argument to prove Theorem \ref{thm:A1}(ii), one can get
\begin{equation*}
|D_n| b^{d} \var(\widetilde{f}_{n,b}(\ob))
= (2\pi)^d (H_{h,4}/H_{h,2}^2) b^{d} (\Omega_{b,1} + \Omega_{b,2}) + o(1), \quad n\rightarrow \infty,
\end{equation*} where
\begin{eqnarray*}
\Omega_{b,1} &=& \int_{\R^d} W_b(\ob-\xb) \{ W_b(\ob-\xb) + W_b(\ob + \xb) \} f(\xb)^2 d\xb \nonumber \\
\text{and} \quad 
\Omega_{b,2} &=&   \int_{\R^{2d}} W_b(\ob-\xb_1) W_b(\ob-\xb_2) f_4(\xb_1, -\xb_1, \xb_2) d\xb_1 d\xb_2.
\end{eqnarray*} Now, we calculate the limit of $\Omega_{b,i}$ for $i \in \{1,2\}$. We note that $b^{d} W_b(\ob-\xb)^2 = b^{-d}(W^2)(b^{-1}(\ob-\xb))$, $\xb \in \R^d$. Therefore, by treating $W^2$ as a new kernel function in the convolution equation and
using that $\int_{\R^{d}} b^{-d}(W^2)(b^{-1}\xb) d\xb = \int_{\R^d} W(\xb)^2d\xb = W_2$, it is easily seen that
\begin{equation*} 
\lim_{n\rightarrow \infty} b^{d}  \int_{\R^d} W_b(\ob-\xb)^2  f(\xb)^2 d\xb
= W_2 f(\ob)^2.
\end{equation*} Similarly, one can show that
\begin{equation*}
\lim_{n\rightarrow \infty} b^{d}  \int_{\R^d} W_b(\ob-\xb) W_b(\ob+\xb)  f(\xb)^2 d\xb
= \begin{cases}
W_2 f(\ob)^2, & \ob = \textbf{0}.\\
0, & \ob \in \R^d \backslash \{\textbf{0}\}.
\end{cases}
\end{equation*} To calculate the limit of $\Omega_{b,2}$, we note that
\begin{eqnarray*}
&& b^{d} \int_{\R^{2d}} W_b(\ob-\xb_1) W_b(\ob-\xb_2) f_4(\xb_1, -\xb_1, \xb_2) d\xb_1 d\xb_2 \\
&& = b^{d} \int_{\R^d} W_b(\ob-\xb_2) \left( \int_{\R^d} W_b(\ob-\xb_1) f_4(\xb_1, -\xb_1, \xb_2) d\xb_1\right) d\xb_2 \\
&& = b^{d} \int_{\R^d} W_b(\ob-\xb_2) \left( f_4(\ob, -\ob, \xb_2)  + o(1) \right) d\xb_2 = O(b^{d}), \quad n\rightarrow \infty.
\end{eqnarray*} Therefore, $\lim_{n\rightarrow \infty} b^{d} \Omega_{2,b} = 0$. All together, we have
\begin{equation*}
\lim_{n\rightarrow \infty} |D_n| b^{d} \var(\widetilde{f}_{n,b}(\ob)) =
\begin{cases}
2(2\pi)^d (H_{h,4}/H_{h,2}^2)W_2 f(\ob)^2, & \ob = \textbf{0}.\\
(2\pi)^d (H_{h,4}/H_{h,2}^2) W_2 f(\ob)^2, & \ob \in \R^d \backslash \{\textbf{0}\}.
\end{cases}
\end{equation*}

\vspace{0.5em}

\noindent \textit{The asymptotic normality}. \hspace{0.1em} Proof of the asymptotic normality of $\sqrt{|D_n|b^d} (\widetilde{f}_{n,b}(\ob)-f(\ob))$ is almost identical with the proof of the asymptotic normality of $\widetilde{G}_{h,n}(\phi)$ in (\ref{eq:Gtilde}) (we omit the details).
All together, we prove the theorem.
\hfill $\Box$


\subsection{Proof of Theorem \ref{thm:Whittle}} \label{sec:proof2}
First, we will show the consistency of $\widehat{\btheta}_n$. Recall $\mathcal{L}(\btheta)$ in (\ref{eq:mathcalL}). 
We will first show the uniform convergence of $L_n(\cdot)$, that is,
\begin{equation} \label{eq:sup}
\sup_{\btheta \in \Theta} |L_n(\btheta) - \mathcal{L}(\btheta)| \Pcon 0.
\end{equation}

By using Theorem \ref{thm:A1}(i) together with uniform boundedness of $f_{\btheta}^{-1}(\ob)$, we have \\ $\sup_{\btheta \in \Theta} |\Ex[ L_n(\btheta)] - \mathcal{L}(\btheta)| = o(1)$, $n \rightarrow \infty$. Next, since $\var(L_n(\btheta)) = O(|D_n|^{-1})$ as $n \rightarrow \infty$ due to argument in Appendix \ref{sec:A1-2}, we have $L_n(\btheta) - \Ex[L_n(\btheta)] \Pcon 0$ for each $\btheta \in \Theta$. Therefore, to show (\ref{eq:sup}), it is enough to show that $\{L_n(\btheta): \btheta \in \Theta\}$ is stochastic equicontinuous (\cite{p:new-91}, Theorem 2.1).

Let $\delta >0$ and we choose 
$\btheta_1, \btheta_2 \in \Theta$ such that $\|\btheta_1 - \btheta_2\| \leq \delta$. Then, since $f_{\btheta}^{-1}(\ob)$ and $\log f_{\btheta}$ has a first order derivative with respect to $\btheta$ which are continuous on the compact domain $\Theta \times D$, there exist $C_1, C_2 \in (0,\infty)$ such that
\begin{equation*}
|f_{\btheta_1}^{-1}(\ob) - f_{\btheta_2}^{-1}(\ob)| \leq C_1 \delta \quad \text{and}
|\log f_{\btheta_1}(\ob) - \log f_{\btheta_2}(\ob)| \leq C_2 \delta, ~~ \ob \in D.
\end{equation*}
Therefore, for arbitrary $\btheta_1, \btheta_2 \in \Theta$ with $\|\btheta_1 - \btheta_2\| \leq \delta$
\begin{equation*}
|L_n(\btheta_1) - L_n(\btheta_2)| \leq \delta \left( C_1 \int_{D} \widehat{I}_{h,n}(\ob) d\ob + C_2 |D| \right),~~n\rightarrow \infty.
\end{equation*} 
Using Theorem \ref{thm:A1}(i,ii) and
 the term in the bracket above is $O_p(1)$ as $n\rightarrow \infty$ uniformly over $\btheta_1, \btheta_2 \in \Theta$. Therefore, $\{L_n(\btheta): \btheta \in \Theta\}$ is stochastic equicontinuous and, in turn, we show the uniform convergence (\ref{eq:sup}). 

Next, recall $\widehat{\btheta}_n$ and $\btheta_0$ are minimizers of $L_n$ and $\mathcal{L}$, respectively. Then, by using (\ref{eq:sup}),
\begin{eqnarray*}
0 \leq
\mathcal{L}(\widehat{\btheta}_n) - \mathcal{L}(\btheta_0)
&\leq& (\mathcal{L}(\widehat{\btheta}_n) - L_n(\widehat{\btheta}_n))
+ (L_n(\widehat{\btheta}_n) - L_n(\btheta_0))
+ (L_n(\btheta_0) -  \mathcal{L}(\btheta_0)) \\
&\leq& 2\sup_{\btheta\in\Theta} |\mathcal{L}(\btheta) - L_n(\btheta)|.
\end{eqnarray*} Therefore, we have $\mathcal{L}(\widehat{\btheta}_n) - \mathcal{L}(\btheta_0) \Pcon 0$ and since, by assumption, $\btheta_0$ is the unique minimizer of $\mathcal{L}$, we have $\widehat{\btheta}_n \Pcon \btheta_0$. This proves (\ref{eq:theta-consist}).

\vspace{0.5em}

Next, we show the asymptotic normality of $\widehat{\btheta}_n$. By using a Taylor expansion and using that $(\partial L_n/\partial \btheta) (\widehat{\btheta}_n) = 0$,
 there exists $\widetilde{\btheta}_n$,
a convex combination of $\widehat{\btheta}_n$ and $\btheta_0$, such that
\begin{equation*}
|D_n|^{1/2}(\widehat{\btheta}_n - \btheta_0) = 
 \left( \nabla^2 L_n(\widetilde{\btheta}_n) \right)^{-1}
\left( -
|D_n|^{1/2}  \nabla L_n (\btheta_0) \right)
= P_n(\widetilde{\btheta}_n)^{-1} Q_n(\btheta_0).
\end{equation*} 

We first focus on $P_n(\widetilde{\btheta}_n)$. By simple algebra, we have
\begin{eqnarray*}
P_n(\btheta) = \nabla^2 L_n(\btheta) &=&
2 \int_{D} ( \nabla \log f_{\btheta} (\ob)) ( \nabla \log f_{\btheta} (\ob))^\top  \frac{1}{f_{\btheta}(\ob)}
\left(\widehat{I}_{h,n}(\ob) - f_{}(\ob) \right)
 d\ob \\
&&~ 
-\int_{D} \frac{1}{f_{\btheta}(\ob)^2} \nabla^2 f_{\btheta} (\ob) \left(\widehat{I}_{h,n}(\ob)
- f_{}(\ob) \right) d\ob
+2(2\pi)^{d} \Gamma(\btheta)
, ~~ \btheta \in \Theta.
\end{eqnarray*}
By using the (uniform) continuity of $f_{\btheta}^{-1}$, $\nabla \log f_{\btheta}$, $\nabla^2 f^{-1}_{\btheta}$, and  $\nabla^2 f_{\btheta}$, and since $\widetilde{\btheta}_n$ is also a consistent estimator of $\btheta_0$, we have $P_n(\widetilde{\btheta}_n) - P_n(\btheta_0) = o_p(1)$ as $n \rightarrow \infty$. Next, by using Theorems \ref{thm:A1}, the first two terms in $P_n(\btheta_0)$ is $o_p(1)$ as $n \rightarrow \infty$. Therefore, we have
\begin{equation} \label{eq:Pntheta}
P_n(\widetilde{\btheta}_n)  = 2(2\pi)^{d} \Gamma(\btheta_0) + o_p(1), \quad n\rightarrow \infty.
\end{equation}
Next, we focus on $Q_n(\btheta_0)$. By simple algebra, we have
\begin{eqnarray*}
&& Q_n(\btheta_0) 
\\&& 
= -|D_n|^{1/2}
\int_{D} \nabla f_{\btheta_0}^{-1} (\ob)\left(\widehat{I}_{h,n}(\ob)
- f_{\btheta_0}(\ob) \right) d\ob \\
&&= -|D_n|^{1/2}\int_{D}\nabla f_{\btheta_0}^{-1} (\ob) \left( f(\ob)
- f_{\btheta_0}(\ob) \right) d\ob
- |D_n|^{1/2}\int_{D} \nabla f_{\btheta_0}^{-1} (\ob) \left[\widehat{I}_{h,n}(\ob)
- f(\ob) \right] d\ob.
\end{eqnarray*}
The first term above is zero since $\nabla \mathcal{L}(\btheta_0) = 0$ and by using 
Theorem \ref{thm:A1} with a help of Cram\'{e}r-Wald device, the second term is asymptotically centered normal with
variance \\$(2\pi)^{d} (H_{h,4}/H_{h,2}^2) (\Omega_{1}(\btheta_0) + \Omega_2(\btheta_0) )$ where
\begin{eqnarray*}
\Omega_1(\btheta_0) &=& 
2\int_{D} (\nabla f_{\btheta_0}^{-1} (\ob)) 
(\nabla f_{\btheta_0}^{-1} (\ob))^\top
 f_{}(\ob)^2 d\ob 
= 4(2\pi)^d S_1(\btheta_0) \quad \text{and} \\
\Omega_2(\btheta_0) &=& 
\int_{D^2} (\nabla f_{\btheta_0}^{-1} (\ob)) 
(\nabla f_{\btheta_0}^{-1} (\ob))^\top f_{4}(\ob_1, -\ob_1, \ob) d\ob_1 d\ob_2 
= 4(2\pi)^d S_2(\btheta_0).
\end{eqnarray*} Therefore, we conclude,
\begin{equation} \label{eq:Qntheta}
Q_n(\btheta_0) \Dcon \mathcal{N} \left( \textbf{0}, 4(2\pi)^{2d} (H_{h,4}/H_{h,2}^2) ( S_1(\btheta_0) + S_2(\btheta_0) ) \right).
\end{equation}
Combining (\ref{eq:Pntheta}) and (\ref{eq:Qntheta}) and by using continuous mapping theorem, we have
\begin{eqnarray*}
|D_n|^{1/2}(\widehat{\btheta}_n - \btheta_0) &=& P_n(\widetilde{\btheta}_n)^{-1} Q_n(\btheta_0) \\
&\Dcon& \mathcal{N}\left( \textbf{0}, (H_{h,4}/H_{h,2}^2) \Gamma(\btheta_0)^{-1} ( S_1(\btheta_0) + S_2(\btheta_0) )  \Gamma(\btheta_0)^{-1}\right)), \quad n \rightarrow \infty.
\end{eqnarray*} Thus, we show (\ref{eq:theta-CLT}).
All together, we get the desired results. \hfill $\Box$

\section{Representations and approximations of the Fourier transform of the data taper} \label{sec:taper-representation}

Let $D_n$ has a form in (\ref{eq:Dn}). Recall $H_{h,k}^{(n)}(\ob)$ in (\ref{eq:Hkn})
\begin{equation*}
H_{h,k}^{(n)}(\ob) = \int_{D_n} h(\xb/\aB)^k \exp(-i\xb^\top \ob) d\xb, \quad k,n \in \N, \quad \ob \in \R^d.
\end{equation*} 
For two data taper functions $f,g$ with support $[-1/2,1/2]^d$, we define
\begin{equation} \label{eq:Rhgn}
R_{h,g}^{(n)} (\tb, \ob)
= \int_{\R^d} h(\frac{\xb}{\aB}) \left(  g(\frac{\xb+\tb}{\aB}) - g(\frac{\xb}{\aB}) \right) \exp(-i\xb^{\top} \ob) d\xb, \quad 
n \in \N,~~
\tb, \ob \in \R^d.
\end{equation} The term $H_{h,k}^{(n)}(\ob)$ and  $R_{h,g}^{(n)} (\tb, \ob)$ frequently appears throughout the proof of main results. For example, they are related to the expression of $\cov(J_{h,n}(\ob_1), J_{h,n}(\ob_2))$. See Lemma \ref{lemma:cov-exp} below. 

In this section, our focus is to investigate the representations and approximations of $H_{h,k}^{(n)}(\ob)$ and  $R_{h,g}^{(n)} (\tb, \ob)$. We first begin with $R_{h,g}^{(n)} (\tb, \ob)$. 
Since the support of $h(\cdot / \aB)$ is $D_n$, $R_{h,g}^{(n)}$ can be written as $R_{h,g}^{(n)}(\tb,\ob) = R_{h,g,1}^{(n)}(\tb,\ob) +  R_{h,g,2}^{(n)}(\tb,\ob)$, where
\begin{equation}
\begin{aligned}
R_{h,g,1}^{(n)}(\tb,\ob) &= 
\int_{D_n \cap (D_n-\tb)} h(\frac{\xb}{\aB}) \left(  g(\frac{\xb+\tb}{\aB}) - g(\frac{\xb}{\aB}) \right) \exp(-i\xb^{\top} \ob) d\xb  \\
\text{and} \quad 
R_{h,g,2}^{(n)}(\tb,\ob) &= 
- \int_{D_n \backslash (D_n-\tb)} h(\frac{\xb}{\aB}) g(\frac{\xb}{\aB})  \exp(-i\xb^{\top} \ob) d\xb, \quad \tb,\ob \in \R^d.
\end{aligned}
 \label{eq:Rhgn12}
\end{equation}
In the theorem below, we obtain a rough bound for $R_{h,g}^{(n)} (\tb, \ob)$. Let
\begin{equation*}
\rho(x) = \min(x,1), \quad x \in \R.
\end{equation*}
Throughout this section, we let $C \in (0,\infty)$ be a generic constant that varies line by line.

\begin{theorem} \label{thm:Rhgn0}
Let $h$ and $g$ are the data taper on a compact support $[-1/2,1/2]^d$.
Suppose $\lim_{n\rightarrow \infty} |D_n| = \infty$, $\sup_{\ob \in \R^d}h(\ob) <\infty$, and $g$ satisfies Assumption \ref{assum:E}(i). Then,
\begin{eqnarray} 
&& \sup_{n \in \N}\sup_{\tb, \ob \in \R^d} |D_n|^{-1} |R_{h,g}^{(n)} (\tb, \ob)| < \infty  \label{eq:Rhgn000} \\
\text{and} \quad  && 
|D_n|^{-1} \sup_{\ob \in \R^d}|R_{h,g}^{(n)} (\tb, \ob)| = o(1), ~~ \tb \in \R^d, ~~ n \rightarrow \infty.
\label{eq:Rhgn111}
\end{eqnarray}
If we further assume that $g$ is Lipschitz continuous on $[-1/2,1/2]^d$, then the right hand side of (\ref{eq:Rhgn111}) is bounded by $C \rho(\|\tb/\aB\|)$ for some $C \in (0,\infty)$ that does not depend on $\tb \in \R^d$.
\end{theorem}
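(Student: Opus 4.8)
The plan is to estimate $R_{h,g}^{(n)}$ through the decomposition $R_{h,g}^{(n)}=R_{h,g,1}^{(n)}+R_{h,g,2}^{(n)}$ of \eqref{eq:Rhgn12}, discarding the unit-modulus factor $\exp(-i\xb^\top\ob)$ so that the supremum over $\ob$ costs nothing. The unifying device is the substitution $\ubb=\xb/\aB$, under which $d\xb=|D_n|\,d\ubb$, the window $D_n$ maps onto $[-1/2,1/2]^d$, and the shift becomes $(\xb+\tb)/\aB\mapsto\ubb+\sbb$ with $\sbb:=\tb/\aB$; the prefactor $|D_n|^{-1}$ exactly cancels the Jacobian. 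Since $h$ and $g$ are bounded (for $g$, by continuity on its compact support), the integrand of $|D_n|^{-1}R_{h,g}^{(n)}$ is dominated by $(\sup h)(2\sup|g|)$ on the fixed cube $[-1/2,1/2]^d$, which immediately yields the uniform finiteness \eqref{eq:Rhgn000}. The content of \eqref{eq:Rhgn111} and of the Lipschitz refinement is to show that this quantity is small when $\sbb$ is small, and here the two summands behave differently.

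First I would treat the boundary term $R_{h,g,2}^{(n)}$, whose integrand is bounded in modulus by $(\sup h)(\sup|g|)$, giving $|D_n|^{-1}|R_{h,g,2}^{(n)}|\le(\sup h)(\sup|g|)\,|D_n|^{-1}|D_n\backslash(D_n-\tb)|$. For the box $D_n$ one computes $|D_n|^{-1}|D_n\backslash(D_n-\tb)|=1-\prod_{i=1}^d\max(1-|t_i|/A_i,0)$. I then apply the elementary inequality $1-\prod_i\max(1-|s_i|,0)\le\min(\sum_i|s_i|,1)$, which follows from $\prod_i(1-a_i)\ge 1-\sum_i a_i$ for $a_i\in[0,1]$, together with $\min(\|\sbb\|_1,1)\le\sqrt d\,\rho(\|\sbb\|)$, to conclude $|D_n|^{-1}|R_{h,g,2}^{(n)}|\le C\rho(\|\tb/\aB\|)$. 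Note this already holds with no regularity of $g$ beyond boundedness, so it handles the $R_{h,g,2}^{(n)}$-contribution to all three assertions at once.

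Next I would treat the interior term $R_{h,g,1}^{(n)}$. On $D_n\cap(D_n-\tb)$ both arguments $\xb/\aB$ and $(\xb+\tb)/\aB$ lie in $[-1/2,1/2]^d$ and differ by $\sbb$, so $g$ is evaluated only where it is (uniformly) continuous; hence $|D_n|^{-1}|R_{h,g,1}^{(n)}|\le(\sup h)\,\omega_g(\|\tb/\aB\|)$, where $\omega_g$ denotes the modulus of continuity of $g$ on $[-1/2,1/2]^d$. This is at most $2(\sup h)(\sup|g|)$, consistent with \eqref{eq:Rhgn000}, and tends to $0$ for fixed $\tb$, since each side length $A_i=A_i(n)\to\infty$ under the increasing-domain framework and therefore $\|\tb/\aB\|\to0$; this supplies the $R_{h,g,1}^{(n)}$-contribution to \eqref{eq:Rhgn111}. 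Under the extra Lipschitz hypothesis, $\omega_g(\delta)\le\min(L\delta,2\sup|g|)\le C\rho(\delta)$, which upgrades the bound to $C\rho(\|\tb/\aB\|)$. Adding the two estimates proves all three displays.

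I do not expect a serious obstacle; the one point requiring care, and the very reason for splitting $R_{h,g}^{(n)}$ into $R_{h,g,1}^{(n)}$ and $R_{h,g,2}^{(n)}$, is that $g$ is continuous only on its compact support and may jump across $\partial[-1/2,1/2]^d$ (for instance in the untapered case $g\equiv1$), so uniform continuity cannot be applied to the raw difference $g((\xb+\tb)/\aB)-g(\xb/\aB)$ globally. The decomposition quarantines this jump in $R_{h,g,2}^{(n)}$, where it is controlled not by smoothness but by the measure of the symmetric difference $D_n\backslash(D_n-\tb)$, which is of order $\rho(\|\tb/\aB\|)$; the genuinely interior differences, confined to $R_{h,g,1}^{(n)}$, are then governed by the (uniform, or Lipschitz) continuity of $g$.
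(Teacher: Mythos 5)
Your proposal is correct and follows essentially the same route as the paper's own proof: the same splitting $R_{h,g}^{(n)}=R_{h,g,1}^{(n)}+R_{h,g,2}^{(n)}$, with the boundary piece controlled by the measure of $D_n\backslash(D_n-\tb)$ (of order $|D_n|\rho(\|\tb/\aB\|)$) and the interior piece by the uniform continuity, respectively Lipschitz continuity, of $g$ on $[-1/2,1/2]^d$. Your explicit product formula for $|D_n|^{-1}|D_n\backslash(D_n-\tb)|$ and the modulus-of-continuity phrasing are just slightly more detailed versions of the same estimates.
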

\textit{Proof}. 
Recall (\ref{eq:Rhgn12}). We bound $R_{h,g,1}^{(n)} (\tb, \ob)$ and $R_{h,g,2}^{(n)} (\tb, \ob)$ separately. First, $R_{h,g,2}^{(n)} (\tb, \ob)$ is bounded by
\begin{equation*}
|R_{h,g,2}^{(n)} (\tb, \ob)| \leq
\sup_{\ob} h(\ob)
\int_{D_n \backslash (D_n-\tb)} \left|g(\xb/\aB)\right| d\xb
\leq C |D_n \backslash (D_n-\tb)|.
\end{equation*} Since $D_n$ has a rectangle shape, it is easily seen that
\begin{equation} \label{eq:Dn-tb}
|D_n \backslash (D_n-\tb)| \leq |D_n| \rho( \sum_{i=1}^{d} |t_i/A_i| )
\leq C |D_n| \rho (\|\tb/\aB\|), \quad \tb \in \R^d.
\end{equation}
Therefore, $|D_n|^{-1} \sup_{\ob \in \R^d} |R_{h,g,2}^{(n)} (\tb, \ob)|$ is bounded by
\begin{equation} \label{eq:gbound11}
|D_n|^{-1}|R_{h,g,2}^{(n)} (\tb, \ob)| \leq C \rho (\|\tb/\aB\|), \quad 
n\in\N,~~ \tb\in\R^d.
\end{equation}
Next, we bound $R_{h,g,1}^{(n)} (\tb, \ob)$. We note that
\begin{equation}
\begin{aligned}
&|D_n|^{-1} \sup_{\ob \in \R^d} |R_{h,g,1}^{(n)} (\tb, \ob)| \\
&~~ \leq |D_n|^{-1} \sup_{\ob} h(\ob)
\int_{D_n \cap (D_n-\tb)} \left|g( (\xb+\tb)/\aB) - g(\xb/\aB)\right| d\xb   \\
&~~\leq 
C \sup_{\xb \in D_n \cap (D_n-\tb)} \left| g((\xb + \tb)/\aB) - g(\xb/\aB) \right|.
\end{aligned}
\label{eq:gbound22}
\end{equation}
Since $g$ is continuous on a compact support, $g$ is bounded and uniformly continuous in $[-1/2,1/2]^d$. Therefore,
\begin{equation*}
\sup_{\tb \in \R^d} \sup_{\xb \in D_n \cap (D_n-\tb)} \left| g((\xb + \tb)/\aB) - g(\xb/\aB) \right| <\infty
\end{equation*}
and for fixed $\tb \in \R^d$,
\begin{equation*}
\lim_{n \rightarrow \infty} \sup_{\xb \in D_n \cap (D_n-\tb)} \left| g((\xb + \tb)/\aB) - g(\xb/\aB) \right|=0.
\end{equation*} Substitute the above two into (\ref{eq:gbound22}), we have $\sup_{n\in\N} \sup_{\tb, \ob \in \R^d} |D_n|^{-1} |R_{h,g,1}^{(n)} (\tb, \ob)| <\infty$ and 
\begin{equation*}
\sup_{\ob \in \R^d} |D_n|^{-1} |R_{h,g,1}^{(n)} (\tb, \ob)| = o(1), \quad \tb \in \R^d, ~~ n \rightarrow \infty.
\end{equation*}
Combining these results with (\ref{eq:gbound11}), we show (\ref{eq:Rhgn000}) and (\ref{eq:Rhgn111}). 

If we further assume that $g$ is Lipschitz continuous on $[-1/2,1/2]^d$, then
\begin{equation*}
\sup_{\xb \in D_n \cap (D_n-\tb)} \left| g((\xb + \tb)/\aB) - g(\xb/\aB) \right| \leq
C \rho(\| \tb/ \aB\|), \quad n \in \N, ~~ \tb \in \R^d.
\end{equation*} Therefore, from (\ref{eq:gbound11}) and (\ref{eq:gbound22}), we have
\begin{equation*}
|D_n|^{-1} \sup_{\ob \in \R^d}|R_{h,g}^{(n)} (\tb, \ob)|
\leq C \rho(\|\tb/\aB\|), \quad n \in \N, ~~ \tb \in \R^d.
\end{equation*} 
Therefore, we prove the assertion. All together, we get the desired results.
\hfill $\Box$

\vspace{0.5em}

To show the limiting variance of the integrated periodogram in (\ref{eq:Anphi}), we require sharper bound for $R_{h,g}^{(n)} (\tb, \ob)$.
Below, we give a bound for $|D_n|^{1/2} |R_{h,g}^{(n)} (\tb, \ob)|$, provided $h$ and $g$ are either constant in $[-1/2,1/2]^d$ or satisfies Assumption \ref{assum:E}(ii) for $m=d+1$.  To do so, 
let $\{h_{\jb}\}_{\jb \in \Z^d}$ be the Fourier coefficients of $h(\cdot)$ that satisfies 
\begin{equation} \label{eq:H-fourier}
h(\xb) = \sum_{\jb \in \Z^d} h_{\jb} \exp(2\pi i \jb^\top \xb), \quad \xb \in [-1/2,1/2]^d.
\end{equation} The Fourier coefficients $\{g_{\jb}\}_{\jb \in \Z^d}$ of $g(\cdot)$ are defined in the same manner. 
For centered rectangle $R  \in \R^d$ (which also includes the degenerate rectangles), we let
\begin{equation} \label{eq:cd}
c_{R}(\ob) = 
\begin{cases}
(2\pi)^{-d/2} |R|^{-1/2} \int_{R} \exp(-i\xb^\top \ob) d\xb, & |R| >0, \\ 
0, & |R| =0,
\end{cases}
\quad \ob \in \R^d.
\end{equation} 

\begin{theorem} \label{thm:Rhgn1}
Let $h$ and $g$ are the data taper on a compact support $[-1/2,1/2]^d$.
Suppose $h$ and $g$ are either constant on $[-1/2,1/2]^d$ or satisfy Assumption \ref{assum:E}(ii) for $m=d+1$.
Then, the following two assertions hold.
\begin{itemize}
\item[(i)]  $R_{h,g}^{(n)} (\tb, \ob)$ satisfies the following identity
\begin{eqnarray*}
R_{h,g}^{(n)} (\tb, \ob)  &=& 
 \sum_{\jb, \kb \in \Z^d} h_{\jb} g_{\kb}  \left( \exp(2\pi i \kb^\top (\tb/\aB) ) - 1 \right)
\int_{D_n \cap (D_n-\tb)} e^{-i\xb^\top (\ob - 2\pi (\jb+\kb)/\aB)} d\xb
 \\
&&\quad - 
 \sum_{\jb, \kb \in \Z^d} h_{\jb} g_{\kb} \int_{D_n \backslash (D_n-\tb)} e^{-i\xb^\top (\ob - 2\pi (\jb+\kb)/\aB)} d\xb, 
\quad n \in \N, ~~ \tb,\ob \in \R^d.
\end{eqnarray*}

\item[(ii)] Let $m_d = 2^{d}-1$. Then, there exist $C \in (0,\infty)$ and $m_d$ number of sequences of centered rectangles (which may include degenerate rectangles)
$\{ D_{n,i}(\tb) \}_{i=0}^{m_d}$ where $D_{n,i}(\tb)$ depends only on $D_n$ and $\tb \in \R^d$ such that for $n \in \N$ and $\tb,\ob \in \R^d$,
\begin{eqnarray*}
&& |D_n|^{-1/2} |R_{h,g}^{(n)} (\tb, \ob)| \\
&& \quad \leq C \sum_{i=0}^{m_d}
\sum_{\jb,\kb \in \Z^d} |h_{\jb}| |g_{\kb}| \rho( \{ \| \kb\|+1\} \|\tb/\aB \|)^{1/2} |c_{D_{n,i}(\tb)}
(\ob - 2\pi(\jb+\kb)/\aB)|.
\end{eqnarray*}
\end{itemize}
\end{theorem}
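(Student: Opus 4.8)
The plan is to prove both parts directly from the Fourier series representation (\ref{eq:H-fourier}) of the tapers, combined with the decomposition (\ref{eq:Rhgn12}) of $R_{h,g}^{(n)}$ into its overlap piece $R_{h,g,1}^{(n)}$ (integration over $D_n \cap (D_n-\tb)$) and its boundary piece $R_{h,g,2}^{(n)}$ (integration over $D_n \setminus (D_n-\tb)$).

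For part (i), on the overlap region both $\xb/\aB$ and $(\xb+\tb)/\aB$ lie in $[-1/2,1/2]^d$, so I substitute $h(\xb/\aB) = \sum_{\jb} h_{\jb} e^{2\pi i \jb^\top \xb/\aB}$ together with the analogous series for $g((\xb+\tb)/\aB)$ and $g(\xb/\aB)$. Multiplying out, the difference $g((\xb+\tb)/\aB)-g(\xb/\aB)$ produces the factor $(e^{2\pi i \kb^\top \tb/\aB}-1)$ times $e^{2\pi i\kb^\top\xb/\aB}$, and integration of the product against $e^{-i\xb^\top\ob}$ reproduces the first double sum, since $e^{2\pi i(\jb+\kb)^\top\xb/\aB}e^{-i\xb^\top\ob}=e^{-i\xb^\top(\ob-2\pi(\jb+\kb)/\aB)}$. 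On the boundary region the term $g((\xb+\tb)/\aB)$ vanishes, and expanding $h(\xb/\aB)g(\xb/\aB)$ gives the second double sum. The only analytic point is justifying the interchange of $\sum_{\jb,\kb}$ and $\int$; this follows from the absolute summability $\sum_{\jb}|h_{\jb}|<\infty$ and $\sum_{\kb}|g_{\kb}|<\infty$, which is trivial in the constant-taper case and, under Assumption \ref{assum:E}(ii) with $m=d+1$, holds because a $C^{d+1}$ taper supported in $[-1/2,1/2]^d$ has a $C^{d+1}$ periodic extension whose Fourier coefficients decay fast enough in $\|\jb\|$ to be summable over $\Z^d$.

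For part (ii), I start from the identity in (i) and reduce each cell integral to a centered rectangle. The key geometric step is a product decomposition: writing $D_n=\prod_i I_i$ with $I_i=[-A_i/2,A_i/2]$, each coordinate interval splits as $I_i = K_i \sqcup L_i$ with $K_i = I_i\cap(I_i-t_i)$ and $L_i = I_i\setminus(I_i-t_i)$, so $|K_i|=\max(0,A_i-|t_i|)$ and $|L_i|=\min(A_i,|t_i|)$. Expanding $D_n=\prod_i(K_i\sqcup L_i)$ yields $2^d$ disjoint rectangular cells indexed by subsets $S\subseteq\{1,\dots,d\}$; the cell $S=\emptyset$ is exactly $D_n\cap(D_n-\tb)$, while the remaining $2^d-1=m_d$ cells (those with at least one boundary coordinate) partition $D_n\setminus(D_n-\tb)$. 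For any cell $R$, writing $\mu=\ob-2\pi(\jb+\kb)/\aB$, a coordinatewise shift to the center shows that $\int_R e^{-i\xb^\top\mu}d\xb$ equals, up to a unimodular phase, the integral over the centered rectangle $D_{n,i}(\tb)$ of the same side lengths, so $|\int_R e^{-i\xb^\top\mu}d\xb| = (2\pi)^{d/2}|D_{n,i}(\tb)|^{1/2}|c_{D_{n,i}(\tb)}(\mu)|$ by (\ref{eq:cd}); degenerate cells contribute zero, consistent with the convention $c_R\equiv 0$ when $|R|=0$.

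It then remains to extract the $|D_n|^{-1/2}$ and $\rho$-factors. For a nonempty cell the volume ratio is $|R|/|D_n| = \prod_{i\in S}(|L_i|/A_i)\prod_{i\notin S}(|K_i|/A_i) \le \prod_{i\in S}\rho(|t_i|/A_i) \le \rho(\|\tb/\aB\|)$, using $|L_i|/A_i=\rho(|t_i|/A_i)$, $|K_i|/A_i\le 1$, and monotonicity of $\rho$ with $|t_i|/A_i\le\|\tb/\aB\|$; hence $|D_n|^{-1/2}|D_{n,i}(\tb)|^{1/2}\le\rho(\|\tb/\aB\|)^{1/2}\le\rho(\{\|\kb\|+1\}\|\tb/\aB\|)^{1/2}$ for $i=1,\dots,m_d$. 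For the overlap cell $i=0$ one only has $|D_{n,0}(\tb)|^{1/2}/|D_n|^{1/2}\le 1$, but the accompanying prefactor obeys $|e^{2\pi i\kb^\top\tb/\aB}-1|\le\min(2\pi\|\kb\|\|\tb/\aB\|,2)\le C\rho(\|\kb\|\|\tb/\aB\|)\le C\rho(\{\|\kb\|+1\}\|\tb/\aB\|)^{1/2}$, where the last inequality uses $\rho(x)\le\rho(x)^{1/2}$ and monotonicity. Summing the two contributions over $i=0,\dots,m_d$ gives exactly the claimed bound. I expect the main obstacle to be the bookkeeping of the geometric decomposition, namely correctly identifying the $2^d$ product cells, verifying that the nonempty cells tile $D_n\setminus(D_n-\tb)$, and tracking the side-length estimates so that each cell delivers the factor $\rho(\|\tb/\aB\|)^{1/2}$; by comparison the centering reduction and the Fourier interchange are routine.
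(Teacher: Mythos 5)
Your proposal is correct and follows essentially the same route as the paper's proof: Fourier-series expansion of the tapers with Fubini justified by coefficient summability for part (i), and for part (ii) the split into the overlap piece (where the factor $|e^{2\pi i\kb^\top\tb/\aB}-1|\le C\rho(\{\|\kb\|+1\}\|\tb/\aB\|)^{1/2}$ supplies the decay) and the boundary piece (where the volume of $D_n\setminus(D_n-\tb)$ supplies it). Your explicit coordinate-wise product decomposition into $2^d$ cells is just a concrete realization of the disjoint-rectangle partition the paper asserts pictorially, so no substantive difference remains.
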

\textit{Proof}. 
First, we will show (i). We will assume that $h$ and $g$ both satisfies Assumption \ref{assum:E}(ii) for $m=d+1$. The case when either $h$ or $g$ is constant on $[-1/2,1/2]^d$ is straightforward since the corresponding Fourier coefficients for $h(\xb) \equiv c$ in $[-1/2,1/2]^d$ are 
$h_{\jb} = c$ if $\jb = \textbf{0}$ and zero otherwise. By using \cite{b:fol-99}, Theorem 8.22(e) (see also an argument on page 257 of the same reference), we have $|h_{\jb}|, |g_{\jb}|   \leq (1+\|\jb\|)^{-d-1}$ and thus both $\sum_{\jb \in \Z^d} |h_{\jb}|$ and $\sum_{\jb \in \Z^d} |g_{\jb}|$ are finite. For $\tb, \ob \in \R^d$,
\begin{equation}
\begin{aligned}
&R_{h,g,1}^{(n)} (\tb, \ob) \\
&~~ = 
\int_{D_n \cap (D_n-\tb)} h(\frac{\xb}{\aB}) \left(  g(\frac{\xb+\tb}{\aB}) - g(\frac{\xb}{\aB}) \right) e^{-i\xb^{\top}\ob} d\xb \\
&~~ =
\int_{D_n \cap (D_n-\tb)} \sum_{\jb, \kb \in \Z^d} h_{\jb} g_{\kb}  \exp(2\pi i \jb^\top (\xb/\aB) ) \exp(2 \pi i \kb^\top (\xb/\aB)) \\
&\qquad \times
\left( \exp(2\pi i \kb^\top (\tb/\aB) ) - 1 \right) \exp(-i \xb^\top \ob) d\xb \\
&~~ =
 \sum_{\jb, \kb \in \Z^d} h_{\jb} g_{\kb}  \left( \exp(2\pi i \kb^\top (\tb/\aB) ) - 1 \right)
\int_{D_n \cap (D_n-\tb)} e^{-i \xb^\top ( - 2\pi (\jb+\kb)/\aB + \ob )} d\xb.
\end{aligned} 
\label{eq:Rhg1-rep}
\end{equation}
Here, we use Fubini's theorem in the second identity which is due to $\sum_{\jb \in \Z^d} |h_{\jb}| <\infty$ and $\sum_{\jb \in \Z^d} |g_{\jb}| <\infty$. Similarly, for $\tb, \ob\in\R^d$,
\begin{equation}
R_{h,g,2}^{(n)} (\tb, \ob) = 
-  \sum_{\jb, \kb \in \Z^d} h_{\jb} g_{\kb} \int_{D_n \backslash (D_n-\tb)} \exp(-i \xb^\top ( - 2\pi (\jb+\kb)/\aB + \ob )) d\xb.
\label{eq:Rhg2-rep}
\end{equation} Combining the above two expressions, we show (i).

\vspace{0.5em}

Next, we will show (ii). We first focus on $R_{h,g,1}^{(n)} (\tb, \ob)$. From (\ref{eq:Rhg1-rep}), we have
\begin{eqnarray*}
|R_{h,g,1}^{(n)} (\tb, \ob)| &\leq&
\sum_{\jb, \kb \in \Z^d} |h_{\jb}| |g_{\kb}| |\exp(2\pi i \kb^\top (\tb/\aB) ) - 1|
\left| \int_{D_n \cap (D_n-\tb)} e^{-i \xb^\top ( - 2\pi (\jb+\kb)/\aB + \ob )} d\xb\right| \\
&\leq&
C \sum_{\jb, \kb \in \Z^d} |h_{\jb}| |g_{\kb}| \rho(\|\kb\| \|\tb/\aB\|)^{1/2} \left| \int_{D_n \cap (D_n-\tb)} e^{-i \xb^\top ( - 2\pi (\jb+\kb)/\aB + \ob )} d\xb\right|.
\end{eqnarray*} Here, we use 
\begin{equation*}
|e^{i \xb^\top \yb} - e^{0}| \leq 2 \rho(|\xb^\top \yb|) \leq 2 \rho(\|\xb\| \|\yb\|)
\leq 2 \rho(\|\xb\| \|\yb\|)^{1/2}, \quad \xb, \yb \in \R^d
\end{equation*}
 in the second inequality.
We note that $D_n \cap (D_n-\tb)$ is also a rectangle, and $|c_{R+\xb} (\ob)| = |c_{R}(\ob)|$ for all $\xb, \ob \in \R^d$. Therefore, 
for the centered version of the rectangle $D_n \cap (D_n-\tb)$, denoted $D_{n,0}(\tb)$, we have
\begin{eqnarray*}
\left| \int_{D_n \cap (D_n-\tb)} e^{-i \xb^\top ( - 2\pi (\jb+\kb)/\aB + \ob )} d\xb\right|
&=& (2\pi)^{d/2}|D_n \cap (D_n-\tb)|^{1/2} |c_{D_{n,0}(\tb)} (\ob - 2\pi (\jb+\kb)/\aB + \ob)| \\
&\leq& C |D_n|^{1/2}|c_{D_{n,0}(\tb)} (\ob - 2\pi (\jb+\kb)/\aB + \ob)|.
\end{eqnarray*} Substitute this into the upper bound of $|R_{h,g,1}^{(n)} (\tb, \ob)|$, we have
\begin{equation} 
\begin{aligned}
& |D_n|^{-1/2} |R_{h,g,1}^{(n)} (\tb, \ob)|  \\
&~~\leq 
C \sum_{\jb, \kb \in \Z^d} |h_{\jb}| |g_{\kb}| \rho( \{\|\kb\|+1\} \|\tb/\aB\|)^{1/2}
|c_{D_{n,0}(\tb)} (\ob - 2\pi (\jb+\kb)/\aB + \ob)|.
\end{aligned}
\label{eq:R1-upper}
\end{equation}

Secondly, we focus in $R_{h,g,2}^{(n)} (\tb, \ob)$. We first note that $D_n \backslash (D_n-\tb)$ can be written as a disjoint union of finite number of rectangles, where the number of the rectangles are at most $m_d = 2^{d}-1$. See the example in the Figure \ref{fig:1} below for $d=2$. 
\begin{figure}[h]
\centering
\includegraphics[width=0.25\textwidth]{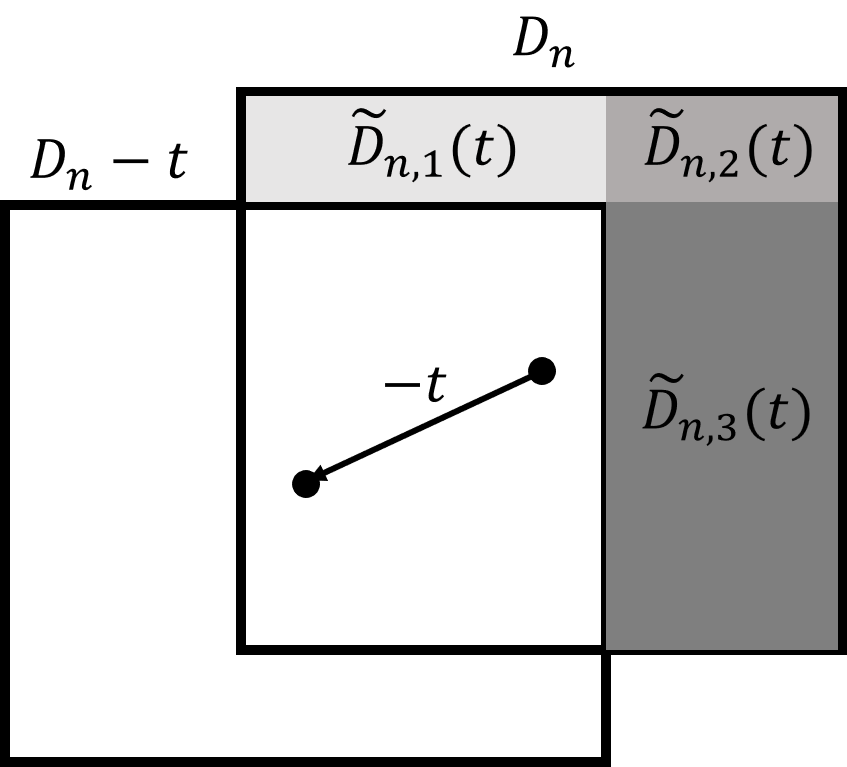}
\caption{Example of disjoint partitions of $D_n \backslash (D_n-\tb)$ when $d=2$.}
\label{fig:1}
\end{figure}

Let $D_n \backslash (D_n-\tb) = \cup_{i=1}^{m_d} \widetilde{D}_{n,i}(\tb)$ be the disjoint union of rectangles and $D_{n,i}(\tb)$ is the centered $\widetilde{D}_{n,i}(\tb)$. Then, by using (\ref{eq:Rhg2-rep}), we have
\begin{eqnarray*}
|R_{h,g,2}^{(n)} (\tb, \ob)| &\leq& (2\pi)^{d/2} \sum_{\jb, \kb \in \Z^d} |h_{\jb}| |g_{\kb}| \sum_{i=1}^{m_d} |D_{n,i}(\tb)|^{1/2} |c_{D_{n,i}(\tb)}(\ob - 2\pi (\jb+\kb)/\aB + \ob)| \\
&\leq& (2\pi)^{d/2} \sum_{i=1}^{m_d} \sum_{\jb, \kb \in \Z^d} |h_{\jb}| |g_{\kb}| |D_n \backslash (D_n-\tb)|^{1/2} |c_{D_{n,i}(\tb)}(\ob - 2\pi (\jb+\kb)/\aB + \ob)| \\
&\leq& C  |D_n|^{1/2} \sum_{i=1}^{m_d} \sum_{\jb, \kb \in \Z^d} |h_{\jb}| |g_{\kb}| \rho(\|\tb/\aB\|)^{1/2} |c_{D_{n,i}(\tb)}(\ob - 2\pi (\jb+\kb)/\aB + \ob)|.
\end{eqnarray*} 
Here, we use (\ref{eq:Dn-tb}) in the last inequality above.
Therefore, we have
 \begin{equation}
\begin{aligned}
& |D_n|^{-1/2} |R_{h,g,2}^{(n)} (\tb, \ob)| \\
&~\leq 
C \sum_{i=1}^{m_d} \sum_{\jb, \kb \in \Z^d} |h_{\jb}| |g_{\kb}| \rho \left( (\|\kb\|+1) \|\tb/\aB\|\right)^{1/2}
|c_{D_{n,i}(\tb)} (\ob - 2\pi (\jb+\kb)/\aB + \ob)|.
\end{aligned}
 \label{eq:R2-upper}
\end{equation}
Combining (\ref{eq:R1-upper}) and (\ref{eq:R2-upper}), we get the desired result.
All together, we prove the theorem.
\hfill $\Box$

\vspace{1em}

Next, we focus on $H_{h,k}^{(n)}(\cdot)$ in (\ref{eq:Hkn}). The following lemma provides an approximation of $H_{h,k}^{(n)}(\cdot)$.

\begin{lemma} \label{lemma:Hkn}
Let $h(\cdot)$ be a data taper that satisfies Assumption \ref{assum:E}(i). Then, for $k \in \{2, 3, \dots, \}$, 
\begin{equation*}
\sup_{n \in \N }\sup_{\ob, \ubb_1, \dots, \ubb_{k-1} \in \R^d}|D_n|^{-1} \bigg| H_{h,k}^{(n)}(\ob) - \int_{\R^{d}} \left(  h(\xb/\aB) \prod_{j=1}^{k-1}
h((\xb+\ubb_j)/\aB) \exp(-i\xb^\top \ob) \right) d\xb \bigg| <\infty.
\end{equation*} Suppose $\lim_{n \rightarrow \infty }|D_n| = \infty$. Then, for fixed $\ubb_1, \dots, \ubb_{k-1} \in \R^d$, as $n \rightarrow \infty$,
\begin{equation*}
|D_n|^{-1} \sup_{\ob \in \R^d} \bigg| H_{h,k}^{(n)}(\ob) - \int_{\R^{d}} \left(  h(\xb/\aB) \prod_{j=1}^{k-1}
h((\xb+\ubb_j)/\aB) \exp(-i\xb^\top \ob) \right) d\xb \bigg| = o(1).
\end{equation*} 
If further assume that $h$ is Lipschitz continuous on $[-1/2,1/2]^d$, then the left term above is bounded by $C \sum_{j=1}^{k-1} \rho(\|\ubb_j / \aB\|)$ uniformly in $\ob \in \R^d$.
\end{lemma}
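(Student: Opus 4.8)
The plan is to reduce everything to a statement about the $L^1$ modulus of continuity of the taper $h$, and then read off the three claims from standard facts about that modulus. First I would observe that, since $h(\cdot/\aB)$ is supported on $D_n$, we may write $H_{h,k}^{(n)}(\ob) = \int_{\R^d} h(\xb/\aB)^k \exp(-i\xb^\top\ob)\,d\xb$, so that the quantity to be estimated is the modulus of
\begin{equation*}
D_n(\ob) := \int_{\R^d} h(\xb/\aB)\bigg[\prod_{j=1}^{k-1} h(\xb/\aB) - \prod_{j=1}^{k-1} h((\xb+\ubb_j)/\aB)\bigg] \exp(-i\xb^\top\ob)\,d\xb.
\end{equation*}
Bounding by moving the absolute value inside the integral discards the phase $\exp(-i\xb^\top\ob)$ entirely; hence every bound below is automatically uniform in $\ob$, which is exactly the uniformity asserted in the lemma.

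Next I would apply the telescoping product identity $\prod_j a_j - \prod_j b_j = \sum_j (\prod_{l<j} b_l)(a_j-b_j)(\prod_{l>j} a_l)$ with $a_j = h(\xb/\aB)$ and $b_j = h((\xb+\ubb_j)/\aB)$. Since Assumption \ref{assum:E}(i) makes $h$ continuous on a compact support, hence bounded by some $M=\sup_{\ob} h(\ob)<\infty$, this gives $|D_n(\ob)| \le M^{k-1}\sum_{j=1}^{k-1}\int_{\R^d} |h(\xb/\aB)-h((\xb+\ubb_j)/\aB)|\,d\xb$. The substitution $\yb = \xb/\aB$ (with $d\xb = |D_n|\,d\yb$) then yields
\begin{equation*}
|D_n|^{-1}|D_n(\ob)| \;\le\; M^{k-1}\sum_{j=1}^{k-1} \omega_1\!\left(h;\ubb_j/\aB\right), \qquad \omega_1(h;\vbb):=\int_{\R^d}|h(\yb)-h(\yb+\vbb)|\,d\yb.
\end{equation*}
Part 1 is now immediate from $\omega_1(h;\vbb)\le 2\|h\|_{L^1(\R^d)}<\infty$ (as $h$ is bounded with compact support). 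For Part 2, since each coordinate length $A_i\to\infty$ in the increasing-domain framework of Assumption \ref{assum:A}, we have $\|\ubb_j/\aB\|\to 0$ for fixed $\ubb_j$, so continuity of translation in $L^1(\R^d)$ gives $\omega_1(h;\ubb_j/\aB)\to 0$.

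For Part 3 the claim is the sharper estimate $\omega_1(h;\vbb)\le C\rho(\|\vbb\|)$, whose proof is where the real work lies. For $\|\vbb\|\ge 1$ it follows from $\omega_1\le 2\|h\|_{L^1(\R^d)}=C\,\rho(\|\vbb\|)$. For $\|\vbb\|<1$ I would split $\R^d$ into the overlap region, where both $\yb$ and $\yb+\vbb$ lie in $[-1/2,1/2]^d$ and the Lipschitz bound gives an integrand $\le L\|\vbb\|$ over a set of measure at most one, and the complementary boundary shell, where $|h(\yb)-h(\yb+\vbb)|\le 2M$; the point is that this shell consists of points within distance $\|\vbb\|$ of $\partial[-1/2,1/2]^d$, a set of measure $O(\|\vbb\|)$ because the boundary is $(d-1)$-dimensional with finite surface measure. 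Both contributions are $O(\|\vbb\|)=C\rho(\|\vbb\|)$. The main obstacle is precisely this boundary analysis: because $h$ need not vanish on $\partial[-1/2,1/2]^d$, the zero-extension of a taper that is merely Lipschitz on the cube is generally not Lipschitz on all of $\R^d$, so the shell term cannot be absorbed into the interior Lipschitz estimate and must be controlled separately by the measure-of-the-shell argument; keeping the two regimes $\|\vbb\|\lessgtr 1$ straight is what produces the $\rho(\cdot)$ rather than a naive linear bound.
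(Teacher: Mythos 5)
Your proof is correct and follows essentially the same route as the paper: the paper reduces the $k=2$ case to the term $R_{h,h}^{(n)}(\ubb_1,\ob)$ and invokes Theorem \ref{thm:Rhgn0}, whose proof is exactly your split into the overlap region (controlled by uniform continuity, or the Lipschitz modulus for the $\rho$ bound) and the symmetric-difference shell (controlled by its measure, as in (\ref{eq:Dn-tb})), just phrased on $D_n$ rather than on the rescaled unit cube via the $L^1$ modulus of continuity. The only substantive addition on your side is that you carry out the telescoping for general $k$ explicitly, where the paper says ``$k\geq 3$ is treated similarly.''
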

\textit{Proof}. We will show the lemma for $k=2$. The case for $k\geq 3$ is treated similarly (cf. \cite{b:bri-81}, page 402). 
For $k=2$, the left hand side above is equal to $|D_n|^{-1} |R_{h,h}^{(n)}(\ubb_1, \ob)|$. Thus, by Theorem \ref{thm:Rhgn0}, the above is true for $k=2$.
\hfill $\Box$

\vspace{1em}

Next, we bound $H_{h,k}^{(n)}(\ob)$ for $\ob \neq \textbf{0}$. The lemma below together with Theorem \ref{thm:Rhgn0} are used to prove the asymptotic orthogonality of the DFT in Theorem \ref{thm:In-bias}.

\begin{lemma} \label{lemma:Hkn-bound}
Let $\{D_n\}$ satisfies Assumption \ref{assum:A}.
Let $h$ be a data taper such that $\sup_{\ob \in \R^d} h(\ob) <\infty$. 
Let $\{\ob_{n}\}$ be a sequence on $\R^d$ that is asymptotically distant from $\{\textbf{0}\}$. Then,
\begin{equation} \label{eq:Hkn-bound}
|D_n|^{-1} |H_{h,k}^{(n)}(\ob_n)|  = o(1), \quad n \rightarrow \infty. 
\end{equation}
\end{lemma}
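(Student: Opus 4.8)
The plan is to reduce the statement to the Riemann--Lebesgue lemma after a single rescaling. First I would substitute $\xb = \aB \cdot \yb$ (componentwise) in the defining integral (\ref{eq:Hkn}), which carries $D_n$ onto the fixed cube $[-1/2,1/2]^d$ and has Jacobian $|D_n| = \prod_{i=1}^d A_i$. Since $\xb/\aB = \yb$ and $\xb^\top \ob = (\aB \cdot \ob)^\top \yb$, this yields the clean identity
\begin{equation*}
|D_n|^{-1} H_{h,k}^{(n)}(\ob) = \int_{[-1/2,1/2]^d} h(\yb)^k \exp\bigl(-i (\aB \cdot \ob)^\top \yb\bigr) d\yb,
\end{equation*}
so that $|D_n|^{-1} H_{h,k}^{(n)}(\ob)$ is, up to a sign in the argument, the Fourier transform of $h^k$ (extended by zero off the cube) evaluated at the rescaled frequency $\aB \cdot \ob = (A_1\omega_1, \dots, A_d\omega_d)^\top$.

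The next observation is that $h^k \in L^1(\R^d)$: it is bounded by hypothesis ($\sup_{\ob} h(\ob) < \infty$) and supported on the compact cube $[-1/2,1/2]^d$. Hence the Riemann--Lebesgue lemma (cf. \cite{b:fol-99}) applies, giving $\lim_{\|\blambda\| \to \infty} \mathcal{F}(h^k)(\blambda) = 0$. Because $\|-\aB\cdot\ob\| = \|\aB\cdot\ob\|$, it therefore suffices to show that the rescaled frequencies diverge, that is, $\|\aB \cdot \ob_n\| \to \infty$ as $n \to \infty$.

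This last step is where Assumption \ref{assum:A} enters, and it is the only point requiring any care. The condition $A_i/A_j = O(1)$ forces all side lengths to be comparable, so each $A_i$ is of the order of $\bigl(\prod_j A_j\bigr)^{1/d} = |D_n|^{1/d}$; in particular there is $c \in (0,\infty)$ with $\min_{1\leq i \leq d} A_i \geq c\,|D_n|^{1/d}$ for all large $n$. Combining this with
\begin{equation*}
\|\aB \cdot \ob_n\| = \Bigl(\sum_{i=1}^d A_i^2 \omega_{n,i}^2\Bigr)^{1/2} \geq \bigl(\min_{1\leq i\leq d} A_i\bigr)\,\|\ob_n\| \geq c\,|D_n|^{1/d}\|\ob_n\|,
\end{equation*}
and recalling that $\{\ob_n\}$ being asymptotically distant from $\{\textbf{0}\}$ means precisely $|D_n|^{1/d}\|\ob_n\| \to \infty$, I conclude $\|\aB \cdot \ob_n\| \to \infty$. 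Feeding this into the Riemann--Lebesgue conclusion gives $|D_n|^{-1}|H_{h,k}^{(n)}(\ob_n)| \to 0$, which is (\ref{eq:Hkn-bound}). The main obstacle, such as it is, is simply to translate the asymptotic-distance hypothesis (phrased through the isotropic factor $|D_n|^{1/d}$) into genuine divergence of the anisotropically rescaled frequency $\aB \cdot \ob_n$; the comparability of the $A_i$ supplied by Assumption \ref{assum:A} is exactly what bridges this gap.
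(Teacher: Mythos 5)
Your proposal is correct and follows essentially the same route as the paper's own proof: the change of variables $\yb = \xb/\aB$ reducing $|D_n|^{-1}H_{h,k}^{(n)}(\ob_n)$ to the Fourier transform of the bounded, compactly supported function $h^k$ at the rescaled frequency $\aB\cdot\ob_n$, followed by the Riemann--Lebesgue lemma, with Assumption \ref{assum:A} used exactly as you describe to convert the asymptotic-distance hypothesis $|D_n|^{1/d}\|\ob_n\|\to\infty$ into $\|\aB\cdot\ob_n\|\to\infty$. The only cosmetic difference is that the paper phrases the divergence in the maximum norm rather than the Euclidean norm, which is immaterial.
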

\textit{Proof}. 
Since $h$ has a compact support and 
 $\sup_{\ob \in \R^d} h(\ob) <\infty$, $h \in L^1(\R^d)$. Therefore,
\begin{eqnarray*}
|D_n|^{-1}H_{h,k}^{(n)}(\ob_n) &=& |D_n|^{-1} \int_{D_n} h(\xb/\aB)\exp(-i \xb^\top \ob_n) d\xb
= \int_{[-1/2,1/2]^d} h(\yb)\exp(-i \yb^\top (\aB \cdot \ob_n)) d\yb \\
&=& (2\pi)^d \mathcal{F}^{-1}(h) (\aB \cdot \ob_n).
\end{eqnarray*} Here, we use change of variables $\yb = \xb/\aB$ in the second identity.  Note that
$\|\aB \cdot \ob_n\|_\infty \geq C |D_n|^{1/d} \|\ob_n\|_\infty \rightarrow \infty$ as $n \rightarrow \infty$ due to Assumption \ref{assum:A}. Therefore, by Riemann-Lebesgue lemma, we have $\lim_{n \rightarrow \infty} |\mathcal{F}^{-1}(h) (\aB \cdot \ob_n)| =0$. Thus, we get the desired result.
\hfill $\Box$

\vspace{1em}

Finally, we generalize the Fej\'{e}r Kernel that are associated with data taper $h$ (cf. \cite{p:mat-09}, Equation (21)). For $h(\cdot)$ with support on $[-1/2,1/2]^d$, let
\begin{equation} \label{eq:fejer}
F_{h,n}(\ob) 
=(2\pi)^{-d} H_{h,2}^{-1} |D_n|^{-1} |H_{h,1}^{(n)}(\ob)|^2
= |c_{h,n}(\ob)|^2, \quad n \in \N,~~ \ob \in \R^d.
\end{equation} where $c_{h,n}(\ob)$ is defined as in (\ref{eq:Cn-h}) and $H_{h,2} = \int_{[-1/2,1/2]^d} h(\xb)^2 d\xb$.

\begin{lemma} \label{lemma:Cn}
Let $c_{h,n}(\ob)$ and 
$F_{h,n}(\ob)$ be defined as in (\ref{eq:Cn-h}) and (\ref{eq:fejer}), respectively. Then, the follow assertions hold:
\begin{itemize}
\item[(a)] $\int_{\R^d} F_{h,n}(\ob) d\ob = 1$.
\item[(b)] Suppose that Assumptions \ref{assum:A} and \ref{assum:E}(ii)(for $m=1$) hold. Then, for $\phi(\ob)$ with bounded second derivatives, we have
\begin{equation*}
\int_{\R^d} \phi(\ob) F_{h,n}(\ob) d\ob = \phi(\textbf{0}) + O(|D_n|^{-2/d}), \quad n\rightarrow \infty.
\end{equation*}
\item[(c)] Suppose that Assumptions \ref{assum:A} and \ref{assum:E}(ii) (for $m=d+1$) hold. Then, for a bounded function $\phi$,
\begin{equation*}
\left| \int_{\R^d} \phi(\ob) c_{h,n}(\ob) d\ob \right| = O(|D_n|^{-1/2}), \quad n\rightarrow \infty.
\end{equation*}
\item[(d)] For a bounded function $\phi(\ob)$, $\ob \in \R^d$, there exists $C \in (0,\infty)$ which depends only on $\phi(\cdot)$ such that 
\begin{equation*}
\int_{\R^d} |\phi(\ob) c_{h_1,n}(\ob) c_{h_2,n}(\ob+\ubb)| d\ob \leq C, \qquad n\in\N.
\end{equation*}
\item[(e)] For a bounded and compactly supported function $\phi(\ob)$, there exists $C \in (0,\infty)$ which depends only on $\phi(\cdot)$ such that 
\begin{equation*}
\int_{\R^d} |\phi(\ob) c_{h,n}(\ob)| d\ob \leq C, \quad n\in\N.
\end{equation*}
\end{itemize}

\end{lemma}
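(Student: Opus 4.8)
The plan is to reduce all five parts to a single change of variables together with the Plancherel identity. Writing $\vbb = \aB\cdot\ob$ and recalling from the proof of Lemma~\ref{lemma:Hkn-bound} that $H_{h,1}^{(n)}(\ob) = |D_n|(2\pi)^d \mathcal{F}^{-1}(h)(\aB\cdot\ob)$, I would first record the master identity
\[
c_{h,n}(\ob) = (2\pi)^{d/2} H_{h,2}^{-1/2} |D_n|^{1/2}\, \mathcal{F}^{-1}(h)(\aB\cdot\ob),
\qquad
F_{h,n}(\ob) = (2\pi)^{d} H_{h,2}^{-1} |D_n|\, \big|\mathcal{F}^{-1}(h)(\aB\cdot\ob)\big|^2 .
\]
Part~(a) is then immediate: substituting $\vbb=\aB\cdot\ob$ (so $d\ob = |D_n|^{-1} d\vbb$) gives $\int_{\R^d} F_{h,n}(\ob)\,d\ob = (2\pi)^d H_{h,2}^{-1} \int_{\R^d}|\mathcal{F}^{-1}(h)(\vbb)|^2 d\vbb$, and the Plancherel identity $\int|\mathcal{F}^{-1}(h)|^2 = (2\pi)^{-d}\int_{[-1/2,1/2]^d}h^2 = (2\pi)^{-d}H_{h,2}$ collapses this to $1$. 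This uses only $h\in L^2$, so it holds with no smoothness hypothesis.

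For part~(b), the same substitution turns $\int\phi(\ob)F_{h,n}(\ob)d\ob$ into $\int \phi(\vbb/\aB)\,G(\vbb)\,d\vbb$, where $G(\vbb) = (2\pi)^d H_{h,2}^{-1}|\mathcal{F}^{-1}(h)(\vbb)|^2$ is a non-negative \emph{even} function (even because it is a squared modulus of the Fourier transform of a real $h$) with $\int G = 1$ by part~(a). A second-order Taylor expansion of $\phi$ about $\textbf{0}$ leaves $\phi(\textbf{0})$, a linear term that integrates to zero against the even weight $G$, and a quadratic remainder controlled by $\int\|\vbb/\aB\|^2 G(\vbb)\,d\vbb$. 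Since Assumption~\ref{assum:A} gives $\min_i A_i \asymp |D_n|^{1/d}$, one has $\|\vbb/\aB\|^2 \le C|D_n|^{-2/d}\|\vbb\|^2$, so the remainder is $O(|D_n|^{-2/d})$ provided $\int\|\vbb\|^2|\mathcal{F}^{-1}(h)(\vbb)|^2 d\vbb<\infty$. This integrability is exactly where Assumption~\ref{assum:E}(ii) with $m=1$ enters: integrating by parts gives $v_j\,\mathcal{F}^{-1}(h)(\vbb) = i\,\mathcal{F}^{-1}(\partial_{y_j}h)(\vbb)$, and since $\partial_{y_j}h$ is continuous with compact support (hence in $L^2$), Plancherel bounds $\int v_j^2|\mathcal{F}^{-1}(h)|^2$.

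Part~(c) is a crude $L^1$ estimate: the master identity and substitution give $\int\phi(\ob)c_{h,n}(\ob)d\ob = (2\pi)^{-d/2}H_{h,2}^{-1/2}|D_n|^{-1/2}\int\phi(\vbb/\aB)\mathcal{F}^{-1}(h)(\vbb)d\vbb$, whence the quantity is bounded by $C|D_n|^{-1/2}\|\phi\|_\infty\int|\mathcal{F}^{-1}(h)(\vbb)|d\vbb$. Under Assumption~\ref{assum:E}(ii) with $m=d+1$, the decay estimate cited after Assumption~\ref{assum:E} (see also Theorem~\ref{thm:Rhgn1}) gives $|\mathcal{F}^{-1}(h)(\vbb)|\le C(1+\|\vbb\|)^{-d-1}$, which is integrable on $\R^d$, yielding the claimed $O(|D_n|^{-1/2})$.

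Finally, parts~(d) and~(e) follow from part~(a) by Cauchy--Schwarz with no further smoothness. For (d), Cauchy--Schwarz together with translation invariance of Lebesgue measure gives $\int|\phi(\ob)\, c_{h_1,n}(\ob)c_{h_2,n}(\ob+\ubb)|d\ob \le \|\phi\|_\infty (\int F_{h_1,n})^{1/2}(\int F_{h_2,n})^{1/2} = \|\phi\|_\infty$; for (e), restricting to the compact support of $\phi$ and applying Cauchy--Schwarz against $1_{\mathrm{supp}\,\phi}$ gives the bound $\|\phi\|_\infty|\mathrm{supp}\,\phi|^{1/2}(\int F_{h,n})^{1/2} = \|\phi\|_\infty|\mathrm{supp}\,\phi|^{1/2}$. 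I expect the only genuine obstacle to be the rate in part~(b): one must simultaneously exploit the evenness of $F_{h,n}$ to kill the first-order Taylor term and verify the second-moment integrability of $\mathcal{F}^{-1}(h)$, which is the single place where the differentiability in Assumption~\ref{assum:E}(ii) (rather than mere continuity) is actually used.
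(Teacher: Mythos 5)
Your proposal is correct, and it is worth noting that it is considerably more self-contained than what the paper actually writes: the paper disposes of parts (a)--(d) by citing \cite{p:mat-09}, Lemmas 1 and 2, and only proves (e) directly, using exactly the Cauchy--Schwarz-against-part-(a) argument you give. Your rescaling $\vbb=\aB\cdot\ob$, which turns $c_{h,n}$ and $F_{h,n}$ into $|D_n|^{1/2}\mathcal{F}^{-1}(h)(\aB\cdot\ob)$ and $|D_n|\,|\mathcal{F}^{-1}(h)(\aB\cdot\ob)|^2$ up to constants, followed by Plancherel for (a), a second-order Taylor expansion against the even weight for (b), $L^1$-integrability of $\mathcal{F}^{-1}(h)$ for (c), and Cauchy--Schwarz for (d) and (e), is precisely the standard route behind the cited lemmas, so nothing is lost by writing it out. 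Three small points to tighten. First, in (b) the vanishing of the linear Taylor term by evenness presupposes that $\int \|\vbb\|\,G(\vbb)\,d\vbb<\infty$; this follows from your second-moment bound via Cauchy--Schwarz with $\int G=1$, but you should say so before splitting the integral. Second, the integration by parts $v_j\,\mathcal{F}^{-1}(h)(\vbb)=\pm i\,\mathcal{F}^{-1}(\partial_{y_j}h)(\vbb)$ has no boundary contribution only because Assumption \ref{assum:E}(ii) demands $\partial^{\boldsymbol{\alpha}}h$ continuous on all of $\R^d$ while $h$ is supported in $[-1/2,1/2]^d$, which forces $h$ to vanish on the boundary of its support; this is worth one line. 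Third, the prefactor in your display for (c) should be $(2\pi)^{d/2}$ rather than $(2\pi)^{-d/2}$ (and the sign in the integration-by-parts identity in (b) is $-i$ under the paper's convention for $\mathcal{F}^{-1}$), but neither slip affects the bounds.
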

\textit{Proof}. (a)--(d) are due to \cite{p:mat-09}, Lemmas 1 and 2. The upper bound of (d) is $\sup_{\ob \in \R^d} |\phi(\ob)|$ which depends only on $\phi(\cdot)$.
To show (e), let $D \subset \R^d$ be the compact support of $\phi$. Then, by using Cauchy-Schwarz inequality and point (a),
\begin{equation*}
\int_{\R^d} |\phi(\ob) c_{h,n}(\ob)| d\ob \leq \left( \int_{D} \phi(\ob)^2 d\ob \right)^{1/2}, \quad n \in \N.
\end{equation*} Thus, we prove (e). All together, we get the desired results.
\hfill $\Box$

\section{Bounds for the cumulants of the DFT} \label{sec:cum-bounds}

In this section, we study the expressions and bounds of terms that are written in terms of the product of cumulants of the DFT. 
This section is essential to prove the asympotic orthogonality of the DFTs in Theorem \ref{thm:In-bias} and includes the limit of the variance of the integrated periodogram in Section \ref{sec:spec-mean}. Throughout the section, we let $C \in (0,\infty)$ be a generic constant that varies line by line. 

\subsection{Expressions of the covariance of the DFTs} \label{sec:cum2}

To begin with, we obtain the expressions for $\cov(J_{h,n}(\ob_1), J_{h,n}(\ob_2))$ in terms of the second-order measures. Note that these expressions above were also verified in \cite{p:raj-23}, Proposition IV.1 and Equation (47).

\begin{theorem} \label{thm:DFT2}
Let $X$ be a second-order stationary point process on $\R^d$ and let $h$ be the taper function such that $\sup_{\xb \in \R^d} h(\xb) <\infty$. Suppose that Assumption \ref{assum:C} holds for $\ell=2$.
Then, for $\ob_1, \ob_2 \in \R^d$, we have\begin{eqnarray}
&& \cov(J_{h,n}(\ob_1), J_{h,n}(\ob_2)) \nonumber \\ 
&&\quad= 
(2\pi)^{-d} H_{h,2}^{-1} |D_n|^{-1}  \lambda \int_{D_n} h(\xb/\aB)^2 e^{-i\xb^\top (\ob_1-\ob_2)} d\xb \nonumber \\
&&\quad~~ +
(2\pi)^{-d} H_{h,2}^{-1} |D_n|^{-1} 
\int_{D_n^2} h(\xb/\aB) h(\yb/\aB) e^{-i(\xb^\top \ob_1 - \yb^\top \ob_2)} \gamma_{2,\text{red}}(\xb-\yb) d\xb d\yb
\label{eq:DFT2-1}\\
&&\quad = (2\pi)^{-d} H_{h,2}^{-1} |D_n|^{-1} \int_{D_n^2} 
h(\xb/\aB) h(\yb/\aB)
e^{-i(\xb^\top \ob_1 - \yb^\top \ob_2)} C(\xb-\yb)d\xb d\yb.
\label{eq:DFT2-2}
\end{eqnarray}
 Suppose that Assumptions \ref{assum:B}(i) and \ref{assum:C}(for $\ell=2$) hold.
Then, for $\ob \in \R^d$, we have
\begin{equation}
 \var(J_{h,n}(\ob)) 
= \int_{\R^d} f(\xb) |c_{h,n}(\ob-\xb)|^2 d\xb.
\label{eq:DFT2-3}
\end{equation}
\end{theorem}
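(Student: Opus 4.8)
The plan is to compute the covariance directly from the definition of the DFT together with the moment identity (\ref{eq:lambda}), and then recognize the resulting spectral structure. Throughout I use the sesquilinear convention $\cov(Z_1, Z_2) = \Ex[Z_1 \overline{Z_2}] - \Ex[Z_1] \overline{\Ex[Z_2]}$, under which (\ref{eq:DFT2-1}) acquires the exponent $e^{-i(\xb^\top \ob_1 - \yb^\top \ob_2)}$. Since $J_{h,n} = \mathcal{J}_{h,n} - \Ex[\mathcal{J}_{h,n}]$, we have $\cov(J_{h,n}(\ob_1), J_{h,n}(\ob_2)) = \Ex[\mathcal{J}_{h,n}(\ob_1)\overline{\mathcal{J}_{h,n}(\ob_2)}] - \Ex[\mathcal{J}_{h,n}(\ob_1)]\overline{\Ex[\mathcal{J}_{h,n}(\ob_2)]}$. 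Writing out the two single sums defining $\mathcal{J}_{h,n}(\ob_1)$ and $\overline{\mathcal{J}_{h,n}(\ob_2)}$, their product equals the prefactor $(2\pi)^{-d}H_{h,2}^{-1}|D_n|^{-1}$ times a double sum over $(\xb,\yb) \in (X\cap D_n)^2$ of $h(\xb/\aB)h(\yb/\aB) e^{-i(\xb^\top \ob_1 - \yb^\top \ob_2)}$. First I would split this double sum into its diagonal part ($\xb = \yb$) and its off-diagonal part (distinct pairs), and apply the $n=1$ and $n=2$ cases of (\ref{eq:lambda}) (extended from positive to complex integrands by linearity). The diagonal part yields $\lambda \int_{D_n} h(\xb/\aB)^2 e^{-i\xb^\top(\ob_1 - \ob_2)} d\xb$, while the off-diagonal part yields $\int_{D_n^2} h(\xb/\aB) h(\yb/\aB) e^{-i(\xb^\top \ob_1 - \yb^\top \ob_2)} \lambda_{2,\text{red}}(\xb - \yb)\, d\xb\, d\yb$.

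Next I would invoke the cumulant-to-moment relation: taking $n=2$ in (\ref{eq:gamma}) and using stationarity gives $\lambda_{2,\text{red}}(\sbb) = \gamma_{2,\text{red}}(\sbb) + \lambda^2$. Substituting this splits the off-diagonal integral into a $\gamma_{2,\text{red}}$-piece and a $\lambda^2$-piece. The $\lambda^2$-piece factorizes as $(2\pi)^{-d}H_{h,2}^{-1}|D_n|^{-1}\lambda^2 H_{h,1}^{(n)}(\ob_1)\overline{H_{h,1}^{(n)}(\ob_2)}$, which, after recalling (\ref{eq:Cn-h}), is exactly $\Ex[\mathcal{J}_{h,n}(\ob_1)]\overline{\Ex[\mathcal{J}_{h,n}(\ob_2)]}$; hence it cancels identically against the subtracted mean-product term. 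What remains is (\ref{eq:DFT2-1}). To pass to (\ref{eq:DFT2-2}) I would rewrite the diagonal term as $\lambda \int_{D_n^2} h(\xb/\aB) h(\yb/\aB) e^{-i(\xb^\top \ob_1 - \yb^\top \ob_2)} \delta(\xb - \yb)\, d\xb\, d\yb$ and then merge it with the $\gamma_{2,\text{red}}$-piece using the definition $C = \lambda \delta + \gamma_{2,\text{red}}$ from (\ref{eq:cov}).

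For the variance identity (\ref{eq:DFT2-3}) I would set $\ob_1 = \ob_2 = \ob$ in (\ref{eq:DFT2-2}) and substitute the spectral representation $C(\sbb) = \int_{\R^d} f(\blambda) e^{i \sbb^\top \blambda} d\blambda$. Formally, interchanging the order of integration makes the $(\xb,\yb)$-integral factorize into $\bigl(\int_{D_n} h(\xb/\aB) e^{-i\xb^\top(\ob - \blambda)} d\xb\bigr)\bigl(\int_{D_n} h(\yb/\aB) e^{+i\yb^\top(\ob - \blambda)} d\yb\bigr) = |H_{h,1}^{(n)}(\ob - \blambda)|^2$, which the prefactor turns into $|c_{h,n}(\ob - \blambda)|^2$, giving $\int_{\R^d} f(\blambda) |c_{h,n}(\ob - \blambda)|^2 d\blambda$ as claimed. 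The main obstacle is that $f$ is not absolutely integrable, so this Fubini step is not literally licensed; I would therefore treat the two pieces of $C = \lambda \delta + \gamma_{2,\text{red}}$ separately. For the atomic part $\lambda \delta$, the diagonal collapse produces the constant $(2\pi)^{-d}\lambda$, which matches the contribution of the constant part $(2\pi)^{-d}\lambda$ of $f$ on the right-hand side, since $\int_{\R^d} |c_{h,n}(\ob - \blambda)|^2 d\blambda = \int_{\R^d} F_{h,n}(\ob - \blambda) d\blambda = 1$ by Lemma \ref{lemma:Cn}(a). For the remaining part I would use that, under Assumption \ref{assum:B}(i) together with Assumption \ref{assum:C} for $\ell = 2$, $\gamma_{2,\text{red}}$ admits the Fourier representation (\ref{eq:cov-spectral}) with $f - (2\pi)^{-d}\lambda \in L^1(\R^d)$; since $h$ is bounded with compact support, the integrand is absolutely integrable over $\R^d \times D_n^2$, so Fubini applies legitimately to this piece, and assembling the two contributions completes the proof.
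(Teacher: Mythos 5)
Your proposal is correct and follows essentially the same route as the paper's proof: split the second moment of the DFT product into its diagonal and off-diagonal parts via (\ref{eq:lambda}), cancel the $\lambda^2$ piece against the mean product, and for (\ref{eq:DFT2-3}) handle the atomic part of $C$ and the $\gamma_{2,\text{red}}$ part separately, justifying Fubini only for the latter via Assumption \ref{assum:B}(i) and recombining with $\int_{\R^d}|c_{h,n}|^2 = 1$. Your explicit care about the non-integrability of $f$ and the separate treatment of the two pieces matches the paper's argument exactly.
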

\textit{Proof}. We will only show (\ref{eq:DFT2-1}) and (\ref{eq:DFT2-2}) for the non-tapered case. A general case can be treated similarly. Since $J_n(\ob) = \mathcal{J}_n(\ob) - \lambda c_n(\ob)$ is centered DFT and $\lambda c_n(\ob)$ is deterministic, we have
\begin{equation*}
\cov(J_n(\ob_1), J_n(\ob_2))
 = \Ex[\mathcal{J}_n(\ob_1) \mathcal{J}_n(-\ob_2)] - \lambda^2 c_n(\ob_1) c_n(-\ob_2).
\end{equation*} By using (\ref{eq:mathcalDFT}),
\begin{equation*}
\mathcal{J}_n(\ob_1) \mathcal{J}_n(-\ob_2)
= (2\pi)^{-d} |D_n|^{-1} \bigg(  \sum_{\xb \in X \cap D_n} \exp(-i\xb^\top (\ob_1-\ob_2)) 
 + \sum_{\xb \neq \yb \in X \cap D_n}    
\exp(-i(\xb^\top \ob_1- \yb^\top \ob_2))
\bigg).
\end{equation*} Thus, by using (\ref{eq:lambda}) for both terms above, we have
\begin{eqnarray*}
\Ex[\mathcal{J}_n(\ob_1) \mathcal{J}_n(-\ob_2)]
&=& (2\pi)^{-d} |D_n|^{-1} \bigg\{  \lambda \int_{D_n} e^{-i\xb^\top (\ob_1 -\ob_2)} d\xb \\
&& + 
\int_{D_n}\int_{D_n} \exp(-i(\xb^\top \ob_1- \yb^\top \ob_2)) \lambda_{2,\text{red}}(\xb-\yb) d\xb d\yb \bigg\}.
\end{eqnarray*}
Next, by using that $\int_{D_n^2} \exp(-i(\xb^\top \ob_1- \yb^\top \ob_2)) d\xb d\yb = (2\pi)^{d} |D_n| c_n(\ob_1) c_n(-\ob_2)$ and $\gamma_{2,\text{red}}(\xb) = \lambda_{2,\text{red}}(\xb) - \lambda^2$,
 we have
\begin{eqnarray*}
&&\Ex[\mathcal{J}_n(\ob_1) \mathcal{J}_n(-\ob_2)] - \lambda^2 c_n(\ob_1) c_n(-\ob_2) = 
(2\pi)^{-d} |D_n|^{-1}  \lambda \int_{D_n} e^{-i\xb^\top (\ob_1 -\ob_2)} d\xb \\
&&~~ +
(2\pi)^{-d} |D_n|^{-1} 
\int_{D_n}\int_{D_n}
\exp(-i(\xb^\top \ob_1- \yb^\top \ob_2)) \gamma_{2,\text{red}} (\xb-\yb)d\xb d\yb. 
\end{eqnarray*}
Therefore, we show (\ref{eq:DFT2-1}) for $h(\xb) \equiv 1$ on $[-1/2,1/2]^d$.

(\ref{eq:DFT2-2}) can be easily seen by using the above identity and $C(\xb) = \lambda \delta(\xb) + \gamma_{2,\text{red}}(\xb)$.

Lastly, to show (\ref{eq:DFT2-3}), by substituting (\ref{eq:cov-spectral}) into (\ref{eq:DFT2-1}) for $\ob_1 =\ob_2=\ob$, we have
\begin{eqnarray*}
&& \var(J_{h,n}(\ob)) \\ 
&&~~= 
(2\pi)^{-d} H_{h,2}^{-1}|D_n|^{-1}  \lambda \int_{D_n}h(\xb/\aB)^2 d\xb 
+(2\pi)^{-d} H_{h,2}^{-1} |D_n|^{-1} 
\int_{D_n^2} h(\xb/\aB) h(\yb/\aB)
\\
&& ~~ \quad \times
 e^{-i(\xb^\top \ob - \yb^\top \ob)} \int_{\R^d} e^{i (\xb-\yb)^\top \tb}(f(\tb) - (2\pi)^{-d} \lambda ) d\tb d\xb d\yb.
\end{eqnarray*} Since $f(\ob) - (2\pi)^{-d} \lambda \in L^1(\R^d)$ due to Assumption \ref{assum:B}(i) and $\sup_{\xb \in \R^d} h(\xb) <\infty$, we can apply Fubini's theorem to interchange the summation above and get
\begin{eqnarray*}
&& \var(J_{h,n}(\ob)) \\
&&\quad = 
(2\pi)^{-d}H_{h,2}^{-1} |D_n|^{-1}  \lambda \int_{D_n}h(\xb/\aB)^2 d\xb  +
(2\pi)^{-d}H_{h,2}^{-1} |D_n|^{-1} \int_{\R^d} d\tb ( f(\tb) - (2\pi)^{-d} \lambda ) \\
&&\quad
~~~\times \int_{D_n} h(\xb/\aB)e^{-i\xb^\top (\ob-\tb)} d\xb  \int_{D_n} h(\yb/\aB) e^{ -i\yb^\top (\tb-\ob)}  d\yb \\
&&\quad = 
(2\pi)^{-d}  \lambda  + \int_{\R^d}  \{ f(\tb) - (2\pi)^{-d} \lambda \} |c_{h,n}(\ob-\tb)|^2 d\tb 
= \int_{\R^d}  f(\tb) |c_{h,n}(\ob-\tb)|^2 d\tb.
\end{eqnarray*} 
Here, we use (\ref{eq:Cn-h}) and $\int_{D_n} h(\xb/\aB)^2d\xb = H_{h,2} |D_n|$
on the second identity and Lemma \ref{lemma:Cn}(a) on the last identity. Thus, we get the desired results. 

All together, we prove the theorem. \hfill $\Box$

Now, we give an expression of $\cov(J_{h,n}(\ob_1), J_{h,n}(\ob_2))$ in terms of $H_{h,k}^{(n)}$ and $R_{h,g}^{(n)}$ in (\ref{eq:Hkn}) and (\ref{eq:Rhgn}), respectively.

\begin{lemma} \label{lemma:cov-exp}
Suppose that Assumption \ref{assum:C} holds for $\ell=2$. Then, for $\ob_1, \ob_2 \in \R^d$, 
\begin{eqnarray*}
&& \cov(J_{h,n}(\ob_1), J_{h,n}(\ob_2)) \\
&&~ = (2\pi)^{-d} H_{h,2}^{-1} |D_n|^{-1} \int_{\R^d} e^{-i\ubb^\top \ob_1} C(\ubb) 
\left(H_{h,2}^{(n)}(\ob_1-\ob_2) + R_{h,h}^{(n)}(\ubb,\ob_1-\ob_2) \right) d\ubb. 
\end{eqnarray*}
\end{lemma}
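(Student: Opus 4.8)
The plan is to obtain the claimed identity directly from the covariance expression (\ref{eq:DFT2-2}) of Theorem \ref{thm:DFT2} by one change of variables, followed by recognizing the resulting inner integral as $H_{h,2}^{(n)} + R_{h,h}^{(n)}$ through a trivial algebraic splitting. First I would invoke (\ref{eq:DFT2-2}), which under Assumption \ref{assum:C} for $\ell=2$ states
\[
\cov(J_{h,n}(\ob_1), J_{h,n}(\ob_2)) = (2\pi)^{-d} H_{h,2}^{-1} |D_n|^{-1} \int_{D_n^2} h(\xb/\aB) h(\yb/\aB) e^{-i(\xb^\top \ob_1 - \yb^\top \ob_2)} C(\xb-\yb)\, d\xb\, d\yb.
\]
Since $h$ is supported on $[-1/2,1/2]^d$, the factor $h(\xb/\aB)$ vanishes outside $D_n$, so $\int_{D_n^2}$ may be replaced by $\int_{\R^{2d}}$ with no change in value.

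Next I would substitute $\ubb = \xb - \yb$, treating $(\ubb,\yb)$ as the new variables (so $\xb = \yb + \ubb$, unit Jacobian). The exponent becomes $-i\big[\yb^\top(\ob_1-\ob_2) + \ubb^\top \ob_1\big]$ and $C(\xb-\yb)=C(\ubb)$, which produces exactly the factor $e^{-i\ubb^\top \ob_1}C(\ubb)$ required by the statement. Collecting the $\yb$-dependent terms and interchanging the order of integration --- justified because $\gamma_{2,\text{red}} \in L^1(\R^d)$ and $h$ is bounded with compact support, so the integrand is absolutely integrable on the region carrying the absolutely continuous part --- gives
\[
\int_{\R^d} e^{-i\ubb^\top \ob_1} C(\ubb)\left(\int_{\R^d} h(\yb/\aB)\, h((\yb+\ubb)/\aB)\, e^{-i\yb^\top(\ob_1-\ob_2)}\, d\yb\right) d\ubb.
\]

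Then I would identify the inner integral using the elementary identity $h(\yb/\aB)h((\yb+\ubb)/\aB) = h(\yb/\aB)^2 + h(\yb/\aB)\big(h((\yb+\ubb)/\aB) - h(\yb/\aB)\big)$. The first summand integrates to $H_{h,2}^{(n)}(\ob_1-\ob_2)$ as in (\ref{eq:Hkn}) (restoring $\int_{D_n}$ by the compact support), and the second is precisely $R_{h,h}^{(n)}(\ubb, \ob_1-\ob_2)$ as in (\ref{eq:Rhgn}) with $g=h$, $\tb=\ubb$, $\ob = \ob_1-\ob_2$. Relabeling $\yb$ as $\xb$ yields the stated formula, which concludes the proof.

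The only genuinely delicate point --- what I would treat carefully rather than as routine --- is the Dirac-delta component of $C(\cdot)=\lambda\delta(\cdot)+\gamma_{2,\text{red}}(\cdot)$ from (\ref{eq:cov}). The change of variables and the interchange of integration above are clean for the absolutely continuous part $\gamma_{2,\text{red}}$; for the singular part I would argue that each linear operation (translation and reordering of the integrals) simply evaluates the remaining integrand at $\ubb = \textbf{0}$, so the delta term contributes $\lambda\, H_{h,2}^{(n)}(\ob_1-\ob_2)$. This is consistent with setting $\ubb=\textbf{0}$ in the final expression, where $R_{h,h}^{(n)}(\textbf{0},\cdot)=0$, and hence the identity holds verbatim with $C$ understood as in (\ref{eq:cov}).
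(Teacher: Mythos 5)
Your proposal is correct and follows essentially the same route as the paper's proof: starting from (\ref{eq:DFT2-2}), extending the integration to $\R^{2d}$ via the compact support of $h(\cdot/\aB)$, changing variables $\ubb=\xb-\yb$, $\vbb=\yb$, and recognizing the inner integral as $H_{h,2}^{(n)}(\ob_1-\ob_2)+R_{h,h}^{(n)}(\ubb,\ob_1-\ob_2)$ by the same algebraic splitting. Your extra care with the Dirac-delta component of $C(\cdot)$ is a sensible addition that the paper leaves implicit.
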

\textit{Proof}.
By using (\ref{eq:DFT2-2}) and using that $h(\cdot/\aB)$ has a support on $D_n$,
\begin{eqnarray*}
&& \cov(J_{h,n}(\ob_1), J_{h,n}(\ob_2)) \nonumber \\ 
&& ~~= (2\pi)^{-d} H_{h,2}^{-1} |D_n|^{-1} 
\int_{\R^{2d}} h(\xb/\aB) h(\yb/\aB) e^{-i(\xb^\top \ob_1 - \yb^\top \ob_2)} C(\xb-\yb) d\xb d\yb \nonumber  \\
&&~~ = (2\pi)^{-d} H_{h,2}^{-1} |D_n|^{-1} \int_{\R^d} d\ubb e^{-i\ubb^\top \ob_1} C(\ubb) \int_{\R^d} 
h((\ubb+\vbb)/\aB) h(\vbb/\aB) e^{-i\vbb^\top(\ob_1-\ob_2)} d\vbb  \nonumber  \\
&& ~~= (2\pi)^{-d} H_{h,2}^{-1} |D_n|^{-1} \int_{\R^d} d\ubb e^{-i\ubb^\top \ob_1}  C(\ubb)
\left( H_{h,2}^{(n)} (\ob_1 - \ob_2) +  R_{h,h}^{(n)}(\ubb,\ob_1-\ob_2)  \right).
\end{eqnarray*}
Here, we use change of variables $\ubb=\xb-\yb$ and $\vbb=\yb$ in the second identity. Thus, we get the desired result.
\hfill $\Box$

\vspace{0.5em}

Using the above lemma together with bound for $R_{h,h}^{(n)}(\cdot)$ in Theorem \ref{thm:Rhgn0}, we obtain the leading term of $\cov(J_n(\ob_1), J_n(\ob_2))$.

\begin{theorem} \label{thm:cov-exp}
Suppose that Assumptions \ref{assum:A}, \ref{assum:C} (for $\ell=2$) and \ref{assum:E}(i) hold. Let $f$ be the spectral density function
and let $H_{h,k}^{(n)}$ be defined as in (\ref{eq:Hkn}). Then, for $\ob_1, \ob_2 \in \R^d$,
\begin{equation} \label{eq:cov-exp}
\cov(J_{h,n}(\ob_1), J_{h,n}(\ob_2)) = |D_n|^{-1} H_{h,2}^{-1} f(\ob_1) H_{h,2}^{(n)}(\ob_1-\ob_2)
+o(1), \quad ~~ n\rightarrow \infty.
\end{equation} Here, the $o(1)$ error above is uniform over $\ob_1, \ob_2 \in \R^d$.
\end{theorem}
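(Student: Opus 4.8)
The plan is to start from the exact expression for the covariance furnished by Lemma \ref{lemma:cov-exp}, namely
\begin{equation*}
\cov(J_{h,n}(\ob_1), J_{h,n}(\ob_2)) = (2\pi)^{-d} H_{h,2}^{-1} |D_n|^{-1} \int_{\R^d} e^{-i\ubb^\top \ob_1} C(\ubb) \left( H_{h,2}^{(n)}(\ob_1-\ob_2) + R_{h,h}^{(n)}(\ubb,\ob_1-\ob_2) \right) d\ubb,
\end{equation*}
and to split the integral into the ``main'' piece carrying the factor $H_{h,2}^{(n)}(\ob_1-\ob_2)$, which does not depend on $\ubb$, and the ``remainder'' piece carrying $R_{h,h}^{(n)}(\ubb,\ob_1-\ob_2)$. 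Because $H_{h,2}^{(n)}(\ob_1-\ob_2)$ factors out of the $\ubb$-integral, the main piece becomes $(2\pi)^{-d} H_{h,2}^{-1} |D_n|^{-1} H_{h,2}^{(n)}(\ob_1-\ob_2) \int_{\R^d} e^{-i\ubb^\top \ob_1} C(\ubb)\,d\ubb$. Recalling the definition of the spectral density in (\ref{eq:spectral1}), the complete covariance $C$ has inverse Fourier transform $f$, so $(2\pi)^{-d}\int_{\R^d} e^{-i\ubb^\top \ob_1} C(\ubb)\,d\ubb = f(\ob_1)$. This identifies the main piece precisely as $|D_n|^{-1} H_{h,2}^{-1} f(\ob_1) H_{h,2}^{(n)}(\ob_1-\ob_2)$, the claimed leading term.

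The second step is to show that the remainder contributes only $o(1)$, uniformly in $\ob_1,\ob_2$. This is the heart of the argument. The remainder is
\begin{equation*}
(2\pi)^{-d} H_{h,2}^{-1} |D_n|^{-1} \int_{\R^d} e^{-i\ubb^\top \ob_1} C(\ubb)\, R_{h,h}^{(n)}(\ubb,\ob_1-\ob_2)\,d\ubb.
\end{equation*}
I would bound its modulus by $(2\pi)^{-d} H_{h,2}^{-1} \int_{\R^d} |C(\ubb)|\, |D_n|^{-1}|R_{h,h}^{(n)}(\ubb,\ob_1-\ob_2)|\,d\ubb$ and invoke Theorem \ref{thm:Rhgn0} (with $g=h$, which satisfies Assumption \ref{assum:E}(i)). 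That theorem gives two facts: first, $\sup_{n}\sup_{\tb,\ob}|D_n|^{-1}|R_{h,h}^{(n)}(\tb,\ob)| < \infty$, a uniform bound independent of everything; and second, for each fixed $\ubb$, $\sup_{\ob}|D_n|^{-1}|R_{h,h}^{(n)}(\ubb,\ob)| \to 0$ as $n\to\infty$. The integrand is dominated by $|C(\ubb)|$ times a constant; here I must be careful because $C$ contains a Dirac mass $\lambda\delta$. I would handle the atom separately (at $\ubb=\mathbf 0$ one has $R_{h,h}^{(n)}(\mathbf 0,\cdot)=0$ by definition, so the delta contributes nothing), and treat the absolutely continuous part $\gamma_{2,\text{red}}$, which lies in $L^1(\R^d)$ by Assumption \ref{assum:C} for $\ell=2$, using dominated convergence to push the $o(1)$ through the integral.

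The main obstacle I anticipate is securing the \emph{uniformity} of the $o(1)$ over all $\ob_1,\ob_2$, since naive dominated convergence only delivers pointwise (in $\ubb$) decay for each fixed frequency pair. The resolution is that the two bounds from Theorem \ref{thm:Rhgn0} combine to give an integrand that is dominated by an $L^1$ function uniformly in $\ob_1,\ob_2$, while its pointwise-in-$\ubb$ limit is zero; because the supremum over $\ob$ has already been taken \emph{inside} before integrating in $\ubb$, the dominated convergence conclusion $\int_{\R^d} |\gamma_{2,\text{red}}(\ubb)|\sup_{\ob}|D_n|^{-1}|R_{h,h}^{(n)}(\ubb,\ob)|\,d\ubb \to 0$ is itself a single numerical sequence, independent of $\ob_1,\ob_2$, so the resulting bound on the remainder is automatically uniform. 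One should also confirm that the $\ubb$-dependent oscillatory factor $e^{-i\ubb^\top\ob_1}$ causes no trouble, which is immediate since it has modulus one. Assembling the main piece and the uniformly-$o(1)$ remainder yields (\ref{eq:cov-exp}).
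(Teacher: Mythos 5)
Your proposal is correct and follows essentially the same route as the paper's proof: both start from Lemma \ref{lemma:cov-exp}, identify the leading term via (\ref{eq:spectral1}), and dispose of the $R_{h,h}^{(n)}$ remainder by combining the uniform bound and pointwise-in-$\ubb$ decay of Theorem \ref{thm:Rhgn0} with dominated convergence applied to $\gamma_{2,\text{red}} \in L^1(\R^d)$, taking $\sup_{\ob}$ inside the $\ubb$-integral to secure uniformity. Your observation that the Dirac atom contributes exactly zero because $R_{h,h}^{(n)}(\textbf{0},\cdot)=0$ is a slight sharpening of the paper, which instead bounds that term by $o(1)$ via Theorem \ref{thm:Rhgn0}; both treatments are valid.
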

\textit{Proof}.
By using Lemma \ref{lemma:cov-exp} and $C(\cdot) = \lambda \delta (\cdot) + \gamma_{2,\text{red}}(\cdot)$, the first term in the expansion of 
$\cov(J_{h,n}(\ob_1), J_{h,n}(\ob_2))$ is
\begin{eqnarray*}
&& (2\pi)^{-d} \lambda |D_n|^{-1} H_{h,2}^{-1} H_{h,2}^{(n)}(\ob_1-\ob_2) 
+ |D_n|^{-1} H_{h,2}^{-1} H_{h,2}^{(n)}(\ob_1-\ob_2) \mathcal{F}^{-1}(\gamma_{2,\text{red}})(\ob_1) \\
&&~~= |D_n|^{-1} H_{h,2}^{-1} \left( (2\pi)^{-d} \lambda  +  \mathcal{F}^{-1}(\gamma_{2,\text{red}})(\ob_1) \right)
H_{h,2}^{(n)}(\ob_1-\ob_2) \\
&&~~= |D_n|^{-1} H_{h,2}^{-1} \cdot f(\ob_1) H_{h,2}^{(n)}(\ob_1-\ob_2).
\end{eqnarray*} Here, we use (\ref{eq:spectral1}) in the last identity. Similarly, the remainder term of the difference between $\cov(J_{h,n}(\ob_1), J_{h,n}(\ob_2))$ and $|D_n|^{-1} H_{h,2}^{-1} f(\ob_1) H_{h,2}^{(n)}(\ob_1-\ob_2)$ is bounded by 
\begin{equation}
C |D_n|^{-1} |R_{h,h}^{(n)}(\textbf{0}, \ob_1-\ob_2)|
+ C |D_n|^{-1} \int_{\R^d} |\gamma_{2,\text{red}}(\ubb)| |R_{h,h}^{(n)}(\ubb, \ob_1-\ob_2)| d\ubb.
\label{eq:cov-exp22}
\end{equation} By using Theorem \ref{thm:Rhgn0}, the first term above is $o(1)$ as $n \rightarrow \infty$ uniformly in $\ob_1, \ob_2 \in \R^d$. To bound the second term, by using  \ref{thm:Rhgn0} again, we have $|D_n|^{-1} |\gamma_{2,\text{red}}(\ubb)| |R_{h,h}^{(n)}(\ubb, \ob_1-\ob_2)| \leq C |\gamma_{2,\text{red}}(\ubb)| \in L^1(\R^d)$ and $\lim_{n \rightarrow \infty} |D_n|^{-1}|\gamma_{2,\text{red}}(\ubb)| |R_{h,h}^{(n)}(\ubb, \ob_1+\ob_2)| = 0$, $\ubb \in \R^d$. Therefore, by dominated convergence theorem, the second term above is $o(1)$ as $n \rightarrow \infty$ uniformly in $\ob_1, \ob_2 \in \R^d$. Thus, we get the desired result.
\hfill $\Box$

\subsection{Bounds on the terms involving covariances}
Let 
\begin{equation}
\begin{aligned} 
T_n(\tb_1, \tb_2) &=
|D_n|^{-1} \int_{\R^{2d}} 
h(\frac{\xb}{\aB}) h(\frac{\yb}{\aB}) h(\frac{\xb+\tb_1}{\aB}) h(\frac{\yb+\tb_2}{\aB}) \\
&~~\times
C(\xb-\yb) C(\tb_1 + \xb -\tb_2 -\yb) d\xb d\yb,~~\tb_1, \tb_2 \in \R^d.
\end{aligned}
\label{eq:Tn}
\end{equation} The term $T_n$ appears in the integrated form of $ \cov (J_{h,n}(\ob_1), J_{h,n}(\ob_2)) \cov (J_{h,n}(-\ob_1), J_{h,n}(-\ob_2))$ in the proof of Theorem \ref{thm:A1}. Below, we give an approximation of $T_n$. Let
\begin{equation} \label{eq:f-tilde}
\widetilde{f}(\ob) = f(\ob) -(2\pi)^{-d} \lambda, \qquad \ob \in \R^d,
\end{equation} and let
\begin{equation}
\begin{aligned} 
& R_n(\tb_1, \tb_2,\ob_1,\ob_2)  \\
& ~~= 
H_{h,2}^{(n)}(-\ob_1-\ob_2) R_{h,h}^{(n)}(\tb_2,\ob_1+\ob_2) +
H_{h,2}^{(n)}(\ob_1+\ob_2) R_{h,h}^{(n)}(\tb_1, -\ob_1-\ob_2) \\
&~~ ~~ +R_{h,h}^{(n)}(\tb_2,\ob_1+\ob_2)  R_{h,h}^{(n)}(\tb_1, -\ob_1-\ob_2),
\quad n \in \N,~~ \tb_1, \tb_2, \ob_1, \ob_2 \in \R^d.
\end{aligned}
\label{eq:Rn33}
\end{equation}

\begin{lemma} \label{lemma:C-expansion}
Let $\gamma_{2,\emph{red}}$ and $C(\cdot)$ be the reduced second-order cumulant intensity function and the
complete covariance function defined as in (\ref{eq:cov}). Suppose that Assumptions \ref{assum:A}, \ref{assum:C} (for $\ell=2$), and \ref{assum:B}(i) hold. Furthermore, the data taper $h$ is Lipschitz continuous on $[-1/2,1/2]^d$. Then, for $ \tb_1, \tb_2 \in D_n-D_n$,
\begin{eqnarray*}
T_n(\tb_1,\tb_2) &=& \lambda^2 \delta(\tb_1 - \tb_2) \left(
H_{h,4} + O(|D_n|^{-1/d})\|\tb_2\|\right)  \\
&&+ 2 \lambda \gamma_{2,\emph{red}}(\tb_1-\tb_2) \left( H_{h,4} + O(|D_n|^{-1/d})[\|\tb_1 \|+ \|\tb_2\|] \right)
\\
&&
+\int_{\R^{2d}} e^{i (\tb_1 - \tb_2)^\top \ob_2}
 \widetilde{f}(\ob_1) \widetilde{f}(\ob_2) \bigg( (2\pi)^{d} H_{h,4} F_{h^2,n}(\ob_1+\ob_2) \\
&&\quad
+|D_n|^{-1}  R_n(\tb_1, \tb_2,\ob_1,\ob_2)
 \bigg)d\ob_1 d\ob_2, \qquad n\rightarrow \infty,
\end{eqnarray*} where $F_{h^2,n}$ is the Fej\'{e}r kernel in (\ref{eq:fejer}) based on $h^2$.
\end{lemma}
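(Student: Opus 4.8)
The plan is to substitute the decomposition $C(\cdot)=\lambda\delta(\cdot)+\gamma_{2,\text{red}}(\cdot)$ from (\ref{eq:cov}) into both complete-covariance factors in the definition (\ref{eq:Tn}) of $T_n$, and to expand the product into four pieces $T_n=T_n^{(1)}+T_n^{(2)}+T_n^{(3)}+T_n^{(4)}$ according to whether each factor contributes its Dirac-delta part ($\lambda\delta$) or its reduced-cumulant part ($\gamma_{2,\text{red}}$). Writing $g(\xb,\yb)=h(\xb/\aB)h((\xb+\tb_1)/\aB)h(\yb/\aB)h((\yb+\tb_2)/\aB)$ for the product of tapers, the three pieces carrying at least one delta collapse one integration variable and reduce to single integrals, while the purely-cumulant piece $T_n^{(4)}$ must be treated by Fourier analysis.

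For $T_n^{(1)}$ (both factors contribute $\lambda\delta$), the factor $\delta(\xb-\yb)$ forces $\yb=\xb$ and the second delta becomes $\delta(\tb_1-\tb_2)$; after the change of variables $\ubb=\xb/\aB$ the residual integral is $\int_{[-1/2,1/2]^d}h(\ubb)^2h(\ubb+\tb_1/\aB)h(\ubb+\tb_2/\aB)\,d\ubb$, which, because $\delta(\tb_1-\tb_2)$ lets me set $\tb_1=\tb_2$, equals $\int h(\ubb)^2h(\ubb+\tb_2/\aB)^2\,d\ubb$. Lipschitz continuity of $h$ on $[-1/2,1/2]^d$ together with the scaling $\|\tb_2/\aB\|\le C|D_n|^{-1/d}\|\tb_2\|$ from Assumption \ref{assum:A} then gives $H_{h,4}+O(|D_n|^{-1/d})\|\tb_2\|$, the first line of the claim. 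Analogously, in $T_n^{(2)}$ the delta sets $\yb=\xb$ and leaves $\lambda\gamma_{2,\text{red}}(\tb_1-\tb_2)$ times the same taper integral, while in $T_n^{(3)}$ the delta sets $\yb=\xb+\tb_1-\tb_2$ and produces, using that $\gamma_{2,\text{red}}$ is even, $\lambda\gamma_{2,\text{red}}(\tb_1-\tb_2)$ times $\int h(\ubb)h(\ubb+(\tb_1-\tb_2)/\aB)h(\ubb+\tb_1/\aB)^2\,d\ubb$. Applying the same Lipschitz-and-scaling estimate to each gives $H_{h,4}+O(|D_n|^{-1/d})(\|\tb_1\|+\|\tb_2\|)$, and summing $T_n^{(2)}+T_n^{(3)}$ reproduces the factor $2\lambda\gamma_{2,\text{red}}(\tb_1-\tb_2)(\cdots)$ of the second line.

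For the remaining term $T_n^{(4)}$ I would insert the Fourier representation $\gamma_{2,\text{red}}(\xb)=\int_{\R^d}\widetilde f(\ob)e^{i\xb^\top\ob}\,d\ob$ from (\ref{eq:cov-spectral}) into each cumulant factor, so that a factor $e^{i(\tb_1-\tb_2)^\top\ob_2}$ splits off and the remaining exponentials depend on $\xb-\yb$ only through $\ob_1+\ob_2$. Interchanging the $(\xb,\yb)$- and $(\ob_1,\ob_2)$-integrations — justified by $\widetilde f\in L^1(\R^d)$ (Assumption \ref{assum:B}(i)) together with the boundedness and compact support of $h$ — the inner integral factorizes over $\xb$ and $\yb$, and by the definitions (\ref{eq:Hkn}) and (\ref{eq:Rhgn}) it equals $[H_{h,2}^{(n)}(-\ob_1-\ob_2)+R_{h,h}^{(n)}(\tb_1,-\ob_1-\ob_2)]\,[H_{h,2}^{(n)}(\ob_1+\ob_2)+R_{h,h}^{(n)}(\tb_2,\ob_1+\ob_2)]$. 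Expanding, the leading term $H_{h,2}^{(n)}(-\ob_1-\ob_2)H_{h,2}^{(n)}(\ob_1+\ob_2)=|H_{h,2}^{(n)}(\ob_1+\ob_2)|^2$ equals $(2\pi)^d H_{h,4}|D_n|F_{h^2,n}(\ob_1+\ob_2)$ by the definition (\ref{eq:fejer}) of the Fej\'er kernel applied to $h^2$ (for which $H_{h^2,1}^{(n)}=H_{h,2}^{(n)}$ and $H_{h^2,2}=H_{h,4}$), while the three remaining terms are exactly $R_n(\tb_1,\tb_2,\ob_1,\ob_2)$ of (\ref{eq:Rn33}). The outer prefactor $|D_n|^{-1}$ cancels the $|D_n|$ in the leading term and is kept against $R_n$, yielding precisely the integral on the third line.

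I expect the main obstacle to be the rigorous bookkeeping of the Dirac-delta (diagonal) contributions in $T_n^{(1)}$–$T_n^{(3)}$: one must interpret the integral of $g$ against the formal product $\delta(\xb-\yb)\delta(\tb_1+\xb-\tb_2-\yb)$ and its mixed analogues as collapsing onto the diagonal, consistent with the measure-theoretic meaning of the complete covariance in (\ref{eq:cov}), and then verify that the Lipschitz error terms are genuinely linear in $\|\tb_1\|,\|\tb_2\|$ with the stated $|D_n|^{-1/d}$ prefactor, uniformly over $\tb_1,\tb_2\in D_n-D_n$. The Fubini justification in $T_n^{(4)}$ is a secondary technical point that follows directly from $\widetilde f\in L^1$ and the compact support of $h$.
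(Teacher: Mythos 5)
Your proposal is correct and follows essentially the same route as the paper's proof: the same four-way decomposition of $T_n$ via $C(\cdot)=\lambda\delta(\cdot)+\gamma_{2,\text{red}}(\cdot)$, the same collapse of the delta terms followed by the Lipschitz-plus-scaling estimate $\rho(\|\tb/\aB\|)\leq C|D_n|^{-1/d}\|\tb\|$ (the paper invokes Lemma \ref{lemma:Hkn} for this), and the same Fourier inversion with Fubini for the pure-cumulant piece, identifying $|H_{h,2}^{(n)}(\ob_1+\ob_2)|^2$ with $(2\pi)^dH_{h,4}|D_n|F_{h^2,n}(\ob_1+\ob_2)$ and collecting the cross terms into $R_n$. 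No gaps.
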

\textit{Proof}. Recall $C(\xb) = \lambda \delta(\xb) + \gamma_{2,\emph{red}}(\xb)$. Substitute this to (\ref{eq:Tn}),
$T_n(\tb_1,\tb_2)$ can be decomposed into four terms. The first term is
\begin{eqnarray*}
&&|D_n|^{-1} \lambda^2 \int_{\R^{2d}} 
h(\frac{\xb}{\aB}) h(\frac{\yb}{\aB}) h(\frac{\xb+\tb_1}{\aB}) h(\frac{\yb+\tb_2}{\aB})
\delta(\xb-\yb) \delta(\tb_1 + \xb -\tb_2 -\yb) d\xb d\yb.
\end{eqnarray*} The above term is nonzero if and only if $\tb_1 = \tb_2$. Therefore, the first term is equivalent to
\begin{equation*}
|D_n|^{-1} \lambda^2 \delta(\tb_1 - \tb_2)\int_{\R^d} h(\frac{\yb}{\aB})^2 h(\frac{\yb+\tb_2}{\aB})^2 d\yb.
\end{equation*} By applying Lemma \ref{lemma:Hkn}, we have
\begin{equation*}
|D_n|^{-1} \lambda^2 \delta(\tb_1 - \tb_2)\int_{\R^d} h(\frac{\yb}{\aB})^2 h(\frac{\yb+\tb_2}{\aB})^2 d\yb =
 \lambda^2 \delta(\tb_1 - \tb_2)  \left( H_{h,4} + C \rho(\|\tb_2/\aB\|) \right), ~n\rightarrow \infty.
\end{equation*}  Here, we use $|D_n|^{-1}\int_{D_n} h(\xb/\aB)^4d\xb =\int_{[-1/2,1/2]^d} h(\xb)^4d\xb = H_{h,4}$ in the identity. Moreover, under Assumption \ref{assum:A}, it is easily seen that $\rho(\|\xb/\aB\|) \leq \|\xb/\aB\| = O(|D_n|^{-1/d}) \|\xb\|$ as $n\rightarrow \infty$. Therefore, the first term is $\lambda^2 \delta(\tb_1 - \tb_2)  \left( H_{h,4} + O(|D_n|^{-1/d}) \|\tb_2\| \right)$ as $n \rightarrow \infty$.

Similarly, the second and third terms are equal to
\begin{eqnarray*}
&&|D_n|^{-1} \lambda \int_{\R^{2d}} 
h(\frac{\xb}{\aB}) h(\frac{\yb}{\aB}) h(\frac{\xb+\tb_1}{\aB}) h(\frac{\yb+\tb_2}{\aB})
\delta(\xb-\yb) \gamma_{2,\emph{red}}(\tb_1 + \xb -\tb_2 -\yb) d\xb d\yb \\
&& = |D_n|^{-1} \lambda \gamma_{2,\emph{red}}(\tb_1-\tb_2) \int_{\R^d}  h(\frac{\yb}{\aB})^2 h(\frac{\yb+\tb_1}{\aB}) h(\frac{\yb+\tb_2}{\aB}) d\yb \\
&& = \lambda \gamma_{2,\emph{red}}(\tb_1-\tb_2) \left( H_{h,4} + O(|D_n|^{-1/d}) ( \|\tb_1 \|+ \|\tb_2\|) \right), \qquad n \rightarrow \infty.
\end{eqnarray*} Finally, by using (\ref{eq:cov-spectral}), the fourth term is equal to
\begin{eqnarray*}
&&|D_n|^{-1} \int_{\R^{2d}} 
h(\frac{\xb}{\aB}) h(\frac{\yb}{\aB}) h(\frac{\xb+\tb_1}{\aB}) h(\frac{\yb+\tb_2}{\aB}) 
\gamma_{2,\emph{red}}(\xb-\yb) \gamma_{2,\emph{red}}(\tb_1 + \xb -\tb_2 -\yb) d\xb d\yb \\
&& =|D_n|^{-1} \int_{\R^{2d}} h(\frac{\xb}{\aB}) h(\frac{\yb}{\aB}) h(\frac{\xb+\tb_1}{\aB}) h(\frac{\yb+\tb_2}{\aB})  \\
&&~~\times
\int_{\R^{2d}} \widetilde{f}(\ob_1) \widetilde{f}(\ob_2) e^{i(\xb-\yb)^{\top} (\ob_1 + \ob_2)} e^{i (\tb_1- \tb_2)^\top \ob_2}
d\ob_1 d\ob_2 d\xb d\yb.
\end{eqnarray*} Since $\widetilde{f} \in L^1(\R^d)$ due to Assumption \ref{assum:B}(i), we can apply Fubini's theorem and get
\begin{eqnarray*}
&& |D_n|^{-1}  \int_{\R^{2d}}  h(\frac{\xb}{\aB}) h(\frac{\yb}{\aB}) h(\frac{\xb+\tb_1}{\aB}) h(\frac{\yb+\tb_2}{\aB})  \\
&&~~\times
\int_{\R^{2d}} \widetilde{f}(\ob_1) \widetilde{f}(\ob_2) e^{i(\xb-\yb)^{\top} (\ob_1 + \ob_2)} e^{i (\tb_1- \tb_2)^\top \ob_2}
d\ob_1 d\ob_2 d\xb d\yb \\
&& = |D_n|^{-1} \int_{\R^{2d}}  e^{i (\tb_1 - \tb_2)^\top \ob_2}
 \widetilde{f}(\ob_1) \widetilde{f}(\ob_2)
\left( \int_{\R^d}  h(\frac{\yb}{\aB})  h(\frac{\yb+\tb_2}{\aB}) e^{-i\yb^{\top} (\ob_1+\ob_2)}  d\yb  \right) \\
&&~~\times \left( \int_{\R^d} h(\frac{\xb}{\aB})h(\frac{\xb+\tb_1}{\aB}) e^{i\xb^{\top} (\ob_1+\ob_2)}  d\xb  \right) d\ob_1 d\ob_2 \\
&&= \int_{\R^{2d}} e^{i (\tb_1 - \tb_2)^\top \ob_2}
 \widetilde{f}(\ob_1) \widetilde{f}(\ob_2) \left( |D_n|^{-1}  |H_{h,2}^{(n)}(\ob_1+\ob_2)|^2
+ |D_n|^{-1} R_n(\tb_1, \tb_2,\ob_1,\ob_2)
 \right)d\ob_1 d\ob_2,
\end{eqnarray*}  where $R_n(\tb_1, \tb_2, \ob_1, \ob_2)$ is defined as in (\ref{eq:Rn33}).

Next, from (\ref{eq:fejer}), the Fej\'{e}r kernel based on $h^2$ is
\begin{equation} \label{eq:Fh2n}
F_{h^2,n}(\ob) = (2\pi)^{-d} H_{h,4}^{-1} |D_n|^{-1} |H_{h,2}^{(n)}(\ob)|^2.
\end{equation}
Substitute this into the above, we have
\begin{eqnarray*}
&& |D_n|^{-1}  \int_{\R^{2d}} h(\frac{\xb}{\aB}) h(\frac{\yb}{\aB}) h(\frac{\xb+\tb_1}{\aB}) h(\frac{\yb+\tb_2}{\aB})  \\
&&~~\times
\int_{\R^{2d}} \widetilde{f}(\ob_1) \widetilde{f}(\ob_2) e^{i(\xb-\yb)^{\top} (\ob_1 + \ob_2)} e^{i (\tb_1- \tb_2)^\top \ob_2}
d\ob_1 d\ob_2 d\xb d\yb \\
&&= \int_{\R^{2d}} e^{i (\tb_1 - \tb_2)^\top \ob_2}
 \widetilde{f}(\ob_1) \widetilde{f}(\ob_2) \left( (2\pi)^{d} H_{h,4}  F_{h^2,n}(\ob_1+\ob_2)
+ |D_n|^{-1} R_n(\tb_1, \tb_2,\ob_1,\ob_2)
 \right)d\ob_1 d\ob_2.
\end{eqnarray*}
All together, we show the lemma.
\hfill $\Box$

\vspace{1em}

Next, we bound the term that are associated with $R_n(\tb_1, \tb_2,\ob_1,\ob_2)$ in (\ref{eq:Rn33}).

\begin{lemma} \label{lemma:Rn}
Suppose the same set of assumptions in Theorem \ref{thm:A1}(ii) holds. Let $\widehat{\phi}_M$ and $\widetilde{f}$ be defined as in (\ref{eq:phiM-FT}) and (\ref{eq:f-tilde}), respectively. Then,
\begin{eqnarray}
&&|D_n|^{-1} \int_{B_M^2} d\tb_1 d\tb_2 \widehat{\phi}_M (\tb_1) \widehat{\phi}_M(-\tb_2)
\int_{\R^{2d}} e^{i (\tb_1 - \tb_2)^\top \ob_2}
 \widetilde{f}(\ob_1) \widetilde{f}(\ob_2)  R_n(\tb_1, \tb_2,\ob_1,\ob_2) d\ob_1 d\ob_2 \nonumber  \\
&&~~= o(1), \quad n \rightarrow \infty.
\label{eq:A3-error-bound}
\end{eqnarray}
\end{lemma}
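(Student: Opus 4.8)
\emph{Proof proposal.} Since (\ref{eq:A3-error-bound}) is linear in $R_n$, the plan is to treat the three summands of $R_n$ in (\ref{eq:Rn33}) separately; each contains at least one factor $R_{h,h}^{(n)}(\tb_j,\cdot)$, which is the source of the decay. First I would pass to moduli and exploit the convolution structure. Because $\widetilde{f}\in L^1(\R^d)$ by Assumption \ref{assum:B}(i) and $\widetilde{f}$ is bounded (indeed $\widetilde{f}=\mathcal{F}^{-1}(\gamma_{2,\text{red}})$ with $\gamma_{2,\text{red}}\in L^1$ by Assumption \ref{assum:C}, so $\|\widetilde{f}\|_\infty\leq(2\pi)^{-d}\|\gamma_{2,\text{red}}\|_1$), Tonelli's theorem justifies the substitution $\ob=\ob_1+\ob_2$ and the $\ob_2$-integration, producing the convolution $\Phi=|\widetilde{f}|\ast|\widetilde{f}|$, which is bounded by $\|\widetilde{f}\|_\infty\|\widetilde{f}\|_1$ uniformly in $\tb_1,\tb_2$; the phase $e^{i(\tb_1-\tb_2)^\top\ob_2}$ is dropped as it has modulus one. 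Also $\int_{B_M}|\widehat{\phi}_M|<\infty$, since $\widehat{\phi}_M=\widehat{\phi}\,I_{B_M}$ is the restriction to $B_M$ of the bounded continuous function $\mathcal{F}(\phi)$. This reduces the first summand to controlling
\[
|D_n|^{-1}\int_{B_M^2}|\widehat{\phi}_M(\tb_1)|\,|\widehat{\phi}_M(-\tb_2)|\left(\int_{\R^d}\Phi(\ob)\,|H_{h,2}^{(n)}(-\ob)|\,|R_{h,h}^{(n)}(\tb_2,\ob)|\,d\ob\right)d\tb_1\,d\tb_2.
\]

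The decisive step is to bound the inner frequency integral by $O(|D_n|)$, rather than the naive $O(|D_n|^2)$ obtained by pairing the sup-norm of one factor with the $L^1$-mass of the other. I would instead use that both factors are superpositions of unit $L^2$-norm Dirichlet-type kernels $c_R$. Expanding $h^2$ in its Fourier series gives $|D_n|^{-1/2}H_{h,2}^{(n)}(-\ob)=(2\pi)^{d/2}\sum_{\mb\in\Z^d}(h^2)_{\mb}\,c_{D_n}(-\ob-2\pi\mb/\aB)$, while Theorem \ref{thm:Rhgn1}(ii) bounds $|D_n|^{-1/2}|R_{h,h}^{(n)}(\tb_2,\ob)|$ by $C\sum_{i}\sum_{\jb,\kb}|h_{\jb}||h_{\kb}|\,\rho((\|\kb\|+1)\|\tb_2/\aB\|)^{1/2}|c_{D_{n,i}(\tb_2)}(\ob-2\pi(\jb+\kb)/\aB)|$. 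The critical analytic fact is that every $c_R$ from (\ref{eq:cd}) has unit $L^2$-norm, $\|c_R\|_2=1$, by Plancherel (with $c_R\equiv0$ for degenerate $R$). Hence Cauchy--Schwarz gives $\int\Phi(\ob)\,|c_{D_n}(\cdots)|\,|c_{D_{n,i}(\tb_2)}(\cdots)|\,d\ob\leq\|\Phi\|_\infty\|c_{D_n}\|_2\|c_{D_{n,i}(\tb_2)}\|_2\leq\|\Phi\|_\infty$, uniformly in the shifts. This is exactly where the missing power of $|D_n|^{-1}$ is recovered.

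Collecting the absolutely convergent coefficient series $\sum_{\mb}|(h^2)_{\mb}|$ and $\sum_{\jb}|h_{\jb}|$ (summable because $h$, hence $h^2$, satisfies Assumption \ref{assum:E}(ii) for $m=d+1$, so the coefficients are $O((1+\|\cdot\|)^{-d-1})$), together with $\sum_{\kb}|h_{\kb}|\rho((\|\kb\|+1)\|\tb_2/\aB\|)^{1/2}\leq\|\tb_2/\aB\|^{1/2}\sum_{\kb}|h_{\kb}|(\|\kb\|+1)^{1/2}\leq C\|\tb_2/\aB\|^{1/2}$, the inner integral is $\leq C|D_n|\,\|\Phi\|_\infty\,\|\tb_2/\aB\|^{1/2}$. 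Consequently the first summand of (\ref{eq:A3-error-bound}) is bounded by $C\|\Phi\|_\infty\big(\int_{B_M}|\widehat{\phi}_M|\big)^2\sup_{\tb_2\in B_M}\|\tb_2/\aB\|^{1/2}$. By Assumption \ref{assum:A} all side lengths satisfy $A_i\asymp|D_n|^{1/d}$, so $\sup_{\tb\in B_M}\|\tb/\aB\|^{1/2}=O(|D_n|^{-1/(2d)})\to0$; this settles the first summand, and the second is identical with $\tb_1$ in place of $\tb_2$. The third summand carries two $R$-factors, and the same Cauchy--Schwarz argument (again each $\|c_R\|_2=1$) produces the even smaller factor $\|\tb_1/\aB\|^{1/2}\|\tb_2/\aB\|^{1/2}=O(|D_n|^{-1/d})$.

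The main obstacle is precisely this gain of the factor $|D_n|^{-1}$ in the inner integral: since $H_{h,2}^{(n)}$ and $R_{h,h}^{(n)}$ are each of order $|D_n|$ in sup-norm and in $L^1$-mass, any estimate that does not simultaneously capture the frequency-concentration of \emph{both} factors overcounts by a full power of $|D_n|$ and fails. The resolution is to view both objects as $L^2$-normalized concentrated kernels via Theorem \ref{thm:Rhgn1}, pair them by Cauchy--Schwarz so their product integrates to $O(1)$ against the bounded $\Phi$, and let the genuine smallness come solely from the $\rho(\cdot)^{1/2}$-factor, which vanishes uniformly on the fixed compact box $B_M$ as the window grows.
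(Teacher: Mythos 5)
Your argument is correct and follows essentially the same route as the paper's proof: split $R_n$ into its three summands, invoke Theorem \ref{thm:Rhgn1}(ii) to write each $R_{h,h}^{(n)}$ factor as an absolutely summable superposition of $L^2$-normalized kernels $c_R$, recover the needed factor of $|D_n|^{-1}$ by pairing two such kernels against the bounded, integrable $\widetilde f$ (your Plancherel/Cauchy--Schwarz step is exactly the content of Lemma \ref{lemma:Cn}(d), which is what the paper cites there), and let the $\rho(\cdot)^{1/2}$ factor supply the decay. The only genuine difference is the last step, where you replace the paper's two applications of dominated convergence with the crude bound $\rho(x)\le x$ and the summability of $\sum_{\kb}|h_{\kb}|(1+\|\kb\|)^{1/2}$, which is a harmless strengthening yielding the explicit rate $O(|D_n|^{-1/(2d)})$ in place of $o(1)$.
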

\textit{Proof}. Recall (\ref{eq:Rn33}). By using Theorem \ref{thm:Rhgn1}(ii), the first term in the expression of 
\\ $|D_n|^{-1/2}R_n(\tb_1, \tb_2,\ob_1,\ob_2)$ is bounded by
\begin{equation*}
 C|H_{h,2}^{(n)}(-\ob_1-\ob_2)| \sum_{i=0}^{m_d}
\sum_{\jb,\kb \in \Z^d} |h_{\jb}| |h_{\kb}|  \rho( \{ \| \kb\|+1\} \|\tb_2/\aB \|)^{1/2}  |c_{D_{n,i}(\tb_2)}
(\ob_1+\ob_2 - 2\pi(\jb+\kb)/\aB)|.
\end{equation*}
 Substitute the above into (\ref{eq:A3-error-bound}), the first term in the expansion of (\ref{eq:A3-error-bound}) is bounded by
\begin{eqnarray*}
&& C_{} \sum_{i=0}^{m_d} 
\sum_{\jb,\kb \in \Z^d}|h_{\jb}| |h_{\kb}|
\int_{B_M}  |\widehat{\phi}_M (\tb_1)| d\tb_1 \int_{B_M} |\widehat{\phi}_M(-\tb_2)|  \rho\left( ( \| \kb\|+1) \|\tb_2/\aB \|\right)^{1/2} d\tb_2 \\
&&~~\times
\int_{\R^{d}} 
 |\widetilde{f}(\ob_2)|
\int_{\R^{d}}  \left| \widetilde{f}(\ob_1)  c_{h^2,n}(-\ob_1-\ob_2)  c_{D_{n,i}(\tb_2)}(\ob_1+\ob_2 - 2\pi(\jb+\kb)/\aB) \right|  d\ob_1  d\ob_2.
\end{eqnarray*} Since $\sup_{\ob \in \R^d} |\widetilde{f}(\ob)| <\infty$ and $\widetilde{f} \in L^1(\R^d)$ due to Assumption \ref{assum:B}, we can apply Lemma \ref{lemma:Cn}(d) and get
\begin{equation*}
\int_{\R^{d}} 
 |\widetilde{f}(\ob_2)|
\int_{\R^{d}}  \left| \widetilde{f}(\ob_1)  c_{h^2,n}(-\ob_1-\ob_2)  c_{D_{n,i}(\tb_2)}(\ob_1+\ob_2 - 2\pi(\jb+\kb)/\aB) \right|  d\ob_1  d\ob_2 
 < \infty,
\end{equation*} uniformly over $\jb, \kb \in \Z^d$.
Thus, the above term is bounded by
\begin{eqnarray*}
C(m_{d}+1) \left( \sum_{\jb \in \Z^d} |h_{\jb}| \right) \left( \int_{B_M}  |\widehat{\phi}_M (\tb_1)| d\tb_1  \right)
\left(  \sum_{\kb \in \Z^d} |h_{\kb}| \int_{B_M} |\widehat{\phi}_M(-\tb_2)|  \rho\left( ( \| \kb\|+1) \|\tb_2/\aB \|\right)^{1/2} d\tb_2\right).
\end{eqnarray*}
We first note that $|h_{\kb}| \lim_{n\rightarrow \infty} \int_{B_M} |\widehat{\phi}_M(\tb)| 
\rho( \{ \|\kb\|+1\} \|\tb_2/\aB\|)^{1/2} d\tb_2 = 0$ due to the dominated convergence theorem. Moreover, since $\sum_{\kb \in \Z^d} |h_{\kb}| \int_{B_M} |\widehat{\phi}_M(\tb)| 
\rho( \{ \|\kb\|+1\} \|\tb_2/\aB\|)^{1/2} d\tb_2 < C \sum_{\kb \in \Z^d}|h_{\kb}|<\infty$, by applying dominated convergence theorem again, we show that
 the above term is $o(1)$ as $n\rightarrow \infty$.

Similarly, the second term in the decomposition of (\ref{eq:A3-error-bound}) is $o(1)$ as $n\rightarrow \infty$.

Lastly, we bound the third term. By using Theorem \ref{thm:Rhgn1}(ii) again, the third term in the decomposition of (\ref{eq:A3-error-bound}) is bounded by
\begin{eqnarray*}
&& C \sum_{p,q=0}^{m_d} 
\sum_{\jb_1,\kb_1, \jb_2, \kb_2 \in \Z^d} |h_{\jb_1}| |h_{\kb_1}| |h_{\jb_2}| |h_{\kb_2}|
\int_{B_M^2}  d\tb_1 d\tb_2 |\widehat{\phi}_M (\tb_1)| |\widehat{\phi}_M(-\tb_2)| 
\rho\left( ( \|\kb_1\|+1) \|\tb_1/\aB\|\right)^{1/2}
\\
&&~~\times
\rho\left( ( \| \kb_2\|+1) \|\tb_2/\aB \|\right)^{1/2} \int_{\R^{2d}} 
 |\widetilde{f}(\ob_1)| |\widetilde{f}(\ob_2)|
|c_{D_{n,p}(\tb_1)}(- 2\pi (\jb_1+\kb_1)/\aB - (\ob_1+\ob_2))| \\
&&~~ \times 
|c_{D_{n,q}(\tb_2)} (-2\pi(\jb_2+\kb_2)/\aB + (\ob_1+\ob_2) )| d\ob_1 d\ob_2.
\end{eqnarray*}
By using Lemma \ref{lemma:Cn}(d) and $\widetilde{f} \in L^1(\R^d)$, the integral $\int_{\R^{2d}} (\cdots) d\ob_1 d\ob_2$ is bounded above and the upper bound is depends only on $\widetilde{f}$. Therefore, the above is bounded by
\begin{equation*}
C (m_{d}+1)^2 
\left( \sum_{\jb \in \Z^d} |h_{\jb}| \right)^2
\left( \sum_{\kb \in \Z^d} |h_{\kb}|  \int_{B_M} |\widehat{\phi}_M(\tb)| 
\rho\left( ( \| \kb\|+1) \|\tb/\aB \|\right)^{1/2}
d\tb \right)^2.
\end{equation*}
By using a similar dominated convergence argument above, the third term is $o(1)$ as $n\rightarrow \infty$. 

All together, we show prove the lemma.
\hfill $\Box$

\vspace{0.5em}

Now, we are ready to compute the limit of $A_1$ in Section \ref{sec:A1-2}. Recall
\begin{equation*}
A_1 = 
|D_n|\int_{\R^{2d}} \phi_M(\ob_1) \phi_M(\ob_2) \cov (J_{h,n}(\ob_1), J_{h,n}(\ob_2)) \cov (J_{h,n}(-\ob_1), J_{h,n}(-\ob_2)) d\ob_1 d\ob_2.
\end{equation*}

\begin{theorem} \label{thm:limA1}
Suppose the same set of assumptions in Theorem \ref{thm:A1}(ii) holds. Then,
\begin{equation*}
\lim_{n\rightarrow \infty} A_{1} =(2\pi)^{d} (H_{h,4}/H_{h,2}^2)  \int_{\R^d} f(\ob)^2  \phi_M(\ob)^2 d\ob.
\end{equation*} 
\end{theorem}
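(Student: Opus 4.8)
The plan is to reduce $A_1$ to an integral against the quantity $T_n(\tb_1,\tb_2)$ of (\ref{eq:Tn}) and then invoke the four-term expansion of Lemma \ref{lemma:C-expansion}. First I would substitute the covariance representation (\ref{eq:DFT2-2}) for both $\cov(J_{h,n}(\ob_1),J_{h,n}(\ob_2))$ and $\cov(J_{h,n}(-\ob_1),J_{h,n}(-\ob_2))$, extending each spatial integral from $D_n^2$ to $\R^{2d}$ using the compact support of $h(\cdot/\aB)$. This expresses the product of covariances as a fourfold integral over $\xb,\yb,\xb',\yb'$. Performing the $\ob_1$- and $\ob_2$-integrals against $\phi_M$ first and using $\int \phi_M(\ob)\exp(i\blambda^\top\ob)\,d\ob = \widehat{\phi}_M(\blambda)$ produces the kernels $\widehat{\phi}_M(\xb'-\xb)$ and $\widehat{\phi}_M(\yb-\yb')$. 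After the change of variables $\tb_1=\xb'-\xb$, $\tb_2=\yb'-\yb$, the inner $\xb,\yb$ integral is exactly $|D_n|\,T_n(\tb_1,\tb_2)$, the factors of $|D_n|$ cancel, and because $\widehat{\phi}_M$ is supported on $B_M$ I obtain
\begin{equation*}
A_1 = (2\pi)^{-2d}H_{h,2}^{-2}\int_{B_M^2} \widehat{\phi}_M(\tb_1)\,\widehat{\phi}_M(-\tb_2)\, T_n(\tb_1,\tb_2)\, d\tb_1\, d\tb_2.
\end{equation*}

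Next I would substitute the expansion of $T_n$ from Lemma \ref{lemma:C-expansion} and pass to the limit term by term. For the $\lambda^2\delta(\tb_1-\tb_2)$ term the delta collapses the double integral to $\int \widehat{\phi}_M(\tb)\widehat{\phi}_M(-\tb)\,d\tb = (2\pi)^d\int \phi_M^2$ by Parseval, while the $O(|D_n|^{-1/d})\|\tb_2\|$ correction vanishes since $\widehat{\phi}_M$ is bounded with compact support; this produces the $((2\pi)^{-d}\lambda)^2$ piece of $f^2$. For the $2\lambda\gamma_{2,\text{red}}(\tb_1-\tb_2)$ term I would insert $\gamma_{2,\text{red}}(\tb_1-\tb_2)=\int \widetilde{f}(\ob)\exp(i(\tb_1-\tb_2)^\top\ob)\,d\ob$ from (\ref{eq:cov-spectral}) and carry out the $\tb_1,\tb_2$ integrals to get $(2\pi)^{2d}\int \widetilde{f}(\ob)\phi_M(-\ob)^2\,d\ob$, which by evenness of $\widetilde{f}$ equals the cross term $2(2\pi)^{-d}\lambda\int \widetilde{f}\,\phi_M^2$; the $O(|D_n|^{-1/d})$ correction again dies because $\gamma_{2,\text{red}}\in L^1(\R^d)$ and $B_M$ is bounded.

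The main work is the $\widetilde{f}$-product (fourth) term. Its $|D_n|^{-1}R_n$ piece is exactly the quantity shown to be $o(1)$ in Lemma \ref{lemma:Rn} (the left-hand side of (\ref{eq:A3-error-bound})), so after multiplying by the constant $(2\pi)^{-2d}H_{h,2}^{-2}$ it is negligible, and only the $(2\pi)^{d}H_{h,4}F_{h^2,n}(\ob_1+\ob_2)$ contribution survives. Here I would treat $F_{h^2,n}$ as an approximate identity: it integrates to $1$ (Lemma \ref{lemma:Cn}(a)) and concentrates at $\ob_1+\ob_2=\textbf{0}$, so that $\int \widetilde{f}(\ob_1)F_{h^2,n}(\ob_1+\ob_2)\,d\ob_1 \to \widetilde{f}(-\ob_2)=\widetilde{f}(\ob_2)$, the interchange of limit and integral being justified by $\widetilde{f}\in L^1(\R^d)\cap L^\infty(\R^d)$ and its continuity. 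Performing the remaining $\tb_1,\tb_2,\ob_2$ integrals yields $(2\pi)^d(H_{h,4}/H_{h,2}^2)\int \widetilde{f}^2\,\phi_M^2$.

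Summing the three surviving contributions and recombining via $f=(2\pi)^{-d}\lambda+\widetilde{f}$ from (\ref{eq:spectral1}) and (\ref{eq:f-tilde}) gives
\begin{equation*}
\lim_{n\rightarrow\infty} A_1 = (2\pi)^{d}(H_{h,4}/H_{h,2}^2)\int_{\R^d}\left((2\pi)^{-d}\lambda+\widetilde{f}(\ob)\right)^2\phi_M(\ob)^2\,d\ob = (2\pi)^{d}(H_{h,4}/H_{h,2}^2)\int_{\R^d} f(\ob)^2\phi_M(\ob)^2\,d\ob,
\end{equation*}
as claimed. I expect the hardest part to be the Fej\'er-kernel collapse in the $\widetilde{f}^2$ term: verifying the approximate-identity convergence uniformly enough over the compact $\tb$-supports to push the limit through the nested integrals, together with tracking the evenness symmetries of $f$ (hence $\widetilde{f}$) that convert the $\phi_M(-\ob)^2$ and $\widetilde{f}(-\ob)$ arising from the sign bookkeeping into the desired $\phi_M(\ob)^2$ and $\widetilde{f}(\ob)$.
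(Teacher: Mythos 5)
Your proposal is correct and follows essentially the same route as the paper's proof: both reduce $A_1$ to $(2\pi)^{-2d}H_{h,2}^{-2}\int_{B_M^2}\widehat{\phi}_M(\tb_1)\widehat{\phi}_M(-\tb_2)T_n(\tb_1,\tb_2)\,d\tb_1 d\tb_2$ via the covariance representation (\ref{eq:DFT2-2}) and a change of variables, then apply the four-term expansion of Lemma \ref{lemma:C-expansion}, handling the delta term by Plancherel, the cross term via the Fourier representation (\ref{eq:cov-spectral}), and the $\widetilde{f}\otimes\widetilde{f}$ term by the Fej\'er-kernel approximate identity (Lemma \ref{lemma:Cn}(b)) with the remainder killed by Lemma \ref{lemma:Rn}. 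The term-by-term limits and the final recombination $f=(2\pi)^{-d}\lambda+\widetilde{f}$ match the paper's decomposition $A_1=A_{11}+2A_{12}+A_{13}$ exactly.
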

\textit{Proof}.
First, by using (\ref{eq:DFT2-2}), we have
\begin{eqnarray*}
A_{1} &=& (2\pi)^{-2d} H_{h,2}^{-2} |D_n|^{-1} \int_{\R^{2d}}d\ob_1 d\ob_2 \phi_{M}(\ob_1) \phi_{M}(\ob_2) 
\int_{D_n^2}
 d\xb d\yb   h(\xb/\aB) h(\yb/\aB)  \\
&& \times  e^{-i(\xb^\top \ob_1 - \yb^\top \ob_2)} C(\xb-\yb)  \int_{D_n^2} d\ubb d\vbb h(\ubb/\aB) h(\vbb/\aB) e^{i(\ubb^\top \ob_1 - \vbb^\top \ob_2)}  C(\ubb-\vbb).
\end{eqnarray*} 
By using Cauchy-Schwarz inequality, the above is absolutely integrable, thus, we can interchange the summations and get
\begin{eqnarray*}
A_1 &=& (2\pi)^{-2d} H_{h,2}^{-2} |D_n|^{-1} \int_{D_n^2} d\xb d\yb h(\xb/\aB) h(\yb/\aB) C(\xb-\yb) \int_{D_n^2} d\ubb d\vbb h(\ubb/\aB) h(\vbb/\aB) C(\ubb-\vbb) \\
&&\times 
\int_{\R^d} \phi_{M}(\ob_1)  e^{i(\ubb-\xb)^\top \ob_1}  d\ob_1 
\int_{\R^d}  \phi_{M}(\ob_2) e^{i(\yb-\vbb)^\top \ob_2} d\ob_2  \\
 &=& (2\pi)^{-2d} H_{h,2}^{-2} |D_n|^{-1} \int_{D_n^4} 
 h(\xb/\aB) h(\yb/\aB)  h(\ubb/\aB) h(\vbb/\aB) \\
&& \times
C(\xb-\yb) C(\ubb-\vbb) \widehat{\phi}_{M}(\ubb-\xb) \widehat{\phi}_{M}(\yb-\vbb) d\xb d\yb d\ubb d\vbb \\
&=& (2\pi)^{-2d} H_{h,2}^{-2}  \int_{D_n-D_n} \int_{D_n-D_n} \widehat{\phi}_{M} (\tb_1) \widehat{\phi}_{M}(-\tb_2) 
T_n(\tb_1,\tb_2) d\tb_1 d\tb_2,
\end{eqnarray*} where $T_n(\tb_1,\tb_2)$ is from (\ref{eq:Tn}).
In the above, we use an inverse transform of (\ref{eq:phiM-FT}) in the second identity and 
the change of variables $\tb_1 = \ubb - \xb$ and $\tb_2 = \vbb-\yb$ in the last identity.

Next, recall $\widetilde{f}$ in (\ref{eq:f-tilde}).
 Then, by Assumption \ref{assum:B}(i), $\widetilde{f} \in L^1(\R^d)$ and has the Fourier representation as in (\ref{eq:cov-spectral}). We note that $B_M \subset D_n - D_n$ for large enough $n\in \N$ due to Assumption \ref{assum:A}. Thus, by using this fact together with Lemma \ref{lemma:C-expansion}, for large enough $n\in \N$, we have $A_1 = A_{11} + 2A_{12} + A_{13}$, where
\begin{eqnarray*}
A_{11} &=&
(2\pi)^{-2d} H_{h,2}^{-2} \lambda^2  \int_{B_M^2}\widehat{\phi}_{M} (\tb_1) \widehat{\phi}_{M}(-\tb_2) \delta(\tb_1-\tb_2)
\left(
H_{h,4} + O(|D_n|^{-1/d}) \|\tb_2\|
 \right)
 d\tb_1 d\tb_2, \\
 A_{12} &=&
(2\pi)^{-2d} H_{h,2}^{-2} \lambda \int_{B_M^2}\widehat{\phi}_{M} (\tb_1) \widehat{\phi}_{M}(-\tb_2)
\gamma_{2,\emph{red}}(\tb_1-\tb_2) \left[ H_{h,4} + O(|D_n|^{-1/d})( \|\tb_1\| + \|\tb_2\| ) \right]
  d\tb_1 d\tb_2,
\end{eqnarray*} as $n\rightarrow \infty$,
and
\begin{eqnarray*}
A_{13} &=&
(2\pi)^{-2d} H_{h,2}^{-2} \int_{B_M^2}\widehat{\phi}_{M} (\tb_1) \widehat{\phi}_{M}(-\tb_2)
\int_{\R^{2d}}   e^{i (\tb_1 - \tb_2)^\top \ob_2}
 \widetilde{f}(\ob_1) \widetilde{f}(\ob_2) \\
&&~~\times \left( (2\pi)^{d} H_{h,4} F_{h^2,n}(\ob_1+\ob_2)
+|D_n|^{-1}  R_n(\ob_1,\ob_2,\tb_1, \tb_2)
 \right)d\ob_1 d\ob_2 d\tb_1 d\tb_2, \quad n\rightarrow \infty,
\end{eqnarray*} where $F_{h^2,n}$ is Fej\'{e}r kernel based on $h^2$ defined as in (\ref{eq:Fh2n})
and $R_n(\ob_1,\ob_2,\tb_1, \tb_2)$ is the remainder term defined as in (\ref{eq:Rn33}).

\vspace{0.5em}

We bound each term above. The first term is 
\begin{equation}
\begin{aligned} 
A_{11} &= (H_{h,4}/H_{h,2}^2) R^2 \int_{B_M} |\widehat{\phi}_{M} (\tb_1)|^2 d\tb_1 +O(|D_n|^{-1/d})  \\
&= (2\pi)^d  (H_{h,4}/H_{h,2}^2) R^2 \int_{\R^d} |\phi_{M}(\ob)|^2 d\ob + o(1), \quad n\rightarrow \infty,
\end{aligned}
\label{eq:A11}
\end{equation} where $R = (2\pi)^{-d} \lambda$.
Here, the last identity is due to Plancherel theorem.
By using (\ref{eq:cov-spectral}) and (\ref{eq:phiM-FT}) together with Fubini's theorem, the second term is
\begin{equation}
\begin{aligned}
& A_{12} \\
&= (2\pi)^{-d} (H_{h,4}/H_{h,2}^2) R  \int_{B_M^2}  \widehat{\phi}_{M} (\tb_1) \widehat{\phi}_{M}(-\tb_2)
\int_{\R^d} \widetilde{f}(\ob) e^{i\ob^\top (\tb_1 -\tb_2)} d\ob d\tb_1 d\tb_2 + O(|D_n|^{-1/d}) \\
&= (2\pi)^{-d}  (H_{h,4}/H_{h,2}^2) R  \int_{\R^d} \widetilde{f}(\ob) \left( \int_{B_M} \widehat{\phi}_{M}(-\tb_2) e^{-i\ob^\top \tb_2} d\tb_2 \right)
\left( \int_{B_M} \widehat{\phi}_{M}(\tb_1) e^{i\ob^\top \tb_1} d\tb_1 \right)d\ob+ o(1) \\
&= (2\pi)^{d} (H_{h,4}/H_{h,2}^2) R \int_{\R^d} \widetilde{f}(\ob) \phi_{M}(\ob)^{2} d\ob + o(1), \quad n\rightarrow \infty.
\end{aligned}
\label{eq:A12}
\end{equation} Finally, by using Fubini's theorem and Lemmas \ref{lemma:Cn}(b) and \ref{lemma:Rn}, the third term is
\begin{equation}
\begin{aligned}
A_{13} &= (2\pi)^{-d} (H_{h,4}/H_{h,2}^{2})
\int_{\R^{2d}}   d\ob_1 d\ob_2 \widetilde{f}(\ob_1) \widetilde{f}(\ob_2) F_{h^2,n}(\ob_1+\ob_2) \\
&~~\times
\int_{B_M}\widehat{\phi}_{M} (\tb_1) e^{i\tb_1^\top \ob_2}    d\tb_1
\int_{B_M}\widehat{\phi}_{M} (-\tb_2) e^{-i\tb_2^\top \ob_2}   d \tb_2 
+ o(1) \\
&=
(2\pi)^{d}   (H_{h,4}/H_{h,2}^{2})\int_{\R^{d}}  \widetilde{f}(\ob_2)  \phi_M(-\ob_2)^2 \int_{\R^{d}}  \widetilde{f}(\ob_1)F_{h^2,n}(\ob_1+\ob_2)  d\ob_1 d\ob_2 + o(1)  \\
&=
(2\pi)^{d}  (H_{h,4}/H_{h,2}^{2}) \int_{\R^{d}}  \widetilde{f}(\ob_2)^2 \phi_M(\ob_2)^2 d\ob_2 + o(1), \qquad n \rightarrow \infty.
\end{aligned}
\label{eq:A13}
\end{equation} 
Combining (\ref{eq:A11})--(\ref{eq:A13}), we conclude
\begin{eqnarray*} 
\lim_{n\rightarrow \infty} A_1 &=&  (2\pi)^{d} (H_{h,4}/H_{h,2}^2) \int_{\R^d}
\left( \widetilde{f}(\ob)^2 + 2R \widetilde{f}(\ob) + R^2 \right) \phi_M(\ob)^2 d\ob \nonumber \\
&=& (2\pi)^{d} (H_{h,4}/H_{h,2}^2)  \int_{\R^d} f(\ob)^2  \phi_M(\ob)^2 d\ob.
\end{eqnarray*} Thus, we prove the theorem.
\hfill $\Box$

\subsection{Bounds on the terms involving higher order cumulants} \label{sec:cum4}

We give an expression of the complete fourth-order cumulant of the DFTs.
Through a simple combinatorial argument, the fourth-order complete reduced cumulant, denoted $\kappa_{4,\text{red}}(\xb, \yb,\zb)$, can be written as a sum of the 15 different reduced cumulant functions:
\begin{equation}
\begin{aligned}
\kappa_{4,\text{red}}(\xb, \yb,\zb) 
&= \gamma_{4,\text{red}}(\xb,\yb,\zb) + \bigg[
\gamma_{3,\text{red}} (\xb,\zb) \delta(\xb-\yb) + 
\gamma_{3,\text{red}} (\xb,\yb) \delta(\xb-\zb)  \\
& \qquad \qquad +
\gamma_{3,\text{red}} (\xb,\yb) \delta(\yb-\zb)
 +
\gamma_{3,\text{red}} (\xb-\zb,\yb-\zb) \left\{ \delta(\xb) + \delta(\yb) + \delta(\zb)\right\}
\bigg] \\
& + \bigg[ 
\gamma_{2,\text{red}} (\xb)  \left(  \delta(\yb) \delta(\zb) + \delta(\xb-\yb)\delta(\xb-\zb) + \delta(\xb-\yb)\delta(\zb) 
+ \delta(\xb-\zb)\delta(\yb) \right) \\
& \qquad +
\gamma_{2,\text{red}} (\yb) \left( \delta(\xb) \delta(\zb) + \delta(\xb)\delta(\yb-\zb) \right) + 
\gamma_{2,\text{red}} (\zb) \delta(\xb) \delta(\yb) \bigg] \\
& +\lambda \delta(\xb)\delta(\yb) \delta(\zb), \qquad \xb, \yb, \zb \in \R^d.
\end{aligned}
\label{eq:4th-cum}
\end{equation}

\begin{lemma} \label{lemma:kappa4}
Let $X$ be a fourth-order stationary point process on $\R^d$ and let
$\kappa_4$ be the fourth-order complete cumulant density function defined as in (\ref{eq:4th-cum}). Suppose that Assumption \ref{assum:C} holds for $\ell=4$. Then, for $\ob_1, \dots, \ob_4 \in \R^d$, 
\begin{eqnarray*}
&& \cum \left( J_{h,n}(\ob_1), J_{h,n}(\ob_2), J_{h,n}(\ob_3), J_{h,n}(\ob_4) \right)
=  (2\pi)^{-2d} H_{h,2}^{-2} |D_n|^{-2} 
  \\
&&~~ \times \int_{D_n^4}
\left( \prod_{j=1}^{4} h(\tb_j/\aB) \right)
 \exp(-i \sum_{j=1}^{4} \tb_j^\top \ob_j) \kappa_4(\tb_1-\tb_4, \tb_2-\tb_4, \tb_3-\tb_4) \prod_{j=1}^{4}d\tb_j.
\end{eqnarray*}
\end{lemma}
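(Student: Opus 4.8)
The plan is to compute the fourth-order cumulant of the DFTs by direct expansion, exploiting the multilinearity of cumulants and the product-density formula (\ref{eq:lambda}). First I would recall that each $J_{h,n}(\ob_j) = \mathcal{J}_{h,n}(\ob_j) - \Ex[\mathcal{J}_{h,n}(\ob_j)]$ is a centered statistic, and since cumulants of order $\geq 2$ are invariant under additive deterministic shifts, we have $\cum(J_{h,n}(\ob_1),\dots,J_{h,n}(\ob_4)) = \cum(\mathcal{J}_{h,n}(\ob_1),\dots,\mathcal{J}_{h,n}(\ob_4))$. This lets us work directly with the uncentered DFTs. Writing out the definition (\ref{eq:mathcalDFT-h}), each $\mathcal{J}_{h,n}(\ob_j)$ is the constant $(2\pi)^{-d/2}H_{h,2}^{-1/2}|D_n|^{-1/2}$ times a sum over points $\xb \in X \cap D_n$ of $h(\xb/\aB)\exp(-i\xb^\top\ob_j)$. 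Pulling the four normalizing constants out front produces the prefactor $(2\pi)^{-2d}H_{h,2}^{-2}|D_n|^{-2}$, matching the claimed expression.

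The heart of the argument is to express the joint cumulant of the four sums in terms of the complete cumulant density $\kappa_4$. By multilinearity,
\begin{equation*}
\cum\Bigl(\sum_{\xb_1}g_1(\xb_1),\dots,\sum_{\xb_4}g_4(\xb_4)\Bigr)
\end{equation*}
where $g_j(\xb)=h(\xb/\aB)\exp(-i\xb^\top\ob_j)$ and each sum runs over $X\cap D_n$. The key point is that the four summation indices need not be distinct: some of the points $\xb_1,\dots,\xb_4$ may coincide. This is precisely where the combinatorial structure of (\ref{eq:4th-cum}) enters. The ``complete'' fourth-order cumulant measure of the counting process, treated heuristically as the cumulant density of $N_X(d\tb_1),\dots,N_X(d\tb_4)$, collects all the diagonal contributions (arising from coincidences of indices) together with the genuine fourth-order term $\gamma_{4,\text{red}}$; the fifteen terms in (\ref{eq:4th-cum}) enumerate exactly the partition-induced coincidence patterns, each weighted by a product of lower-order reduced cumulant intensities and Dirac deltas. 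I would invoke the definition of $\kappa_4$ and the integral identity (\ref{eq:lambda}) generalized to cumulant measures, so that
\begin{equation*}
\cum\Bigl(\sum_{\xb_1}g_1(\xb_1),\dots,\sum_{\xb_4}g_4(\xb_4)\Bigr)
= \int_{D_n^4}\Bigl(\prod_{j=1}^4 g_j(\tb_j)\Bigr)\kappa_4(\tb_1-\tb_4,\tb_2-\tb_4,\tb_3-\tb_4)\prod_{j=1}^4 d\tb_j,
\end{equation*}
using fourth-order stationarity to write the cumulant measure in reduced form as a function of the three differences $\tb_1-\tb_4,\tb_2-\tb_4,\tb_3-\tb_4$. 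Substituting $\prod_j g_j(\tb_j)=\prod_j h(\tb_j/\aB)\exp(-i\sum_j\tb_j^\top\ob_j)$ and reinstating the prefactor yields the stated formula.

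The main obstacle I anticipate is justifying the reduction of the mixed cumulant of the four random sums to the single clean integral against $\kappa_4$, because this requires carefully accounting for all coincidence patterns among the indices. Concretely, $\cum(\sum_{\xb_1}g_1,\dots,\sum_{\xb_4}g_4)$ expands, via the cumulant-of-sums identity, into a sum over set partitions of $\{1,2,3,4\}$ of joint factorial-moment contributions, and each partition corresponds to a choice of which indices are forced equal; handling the diagonal terms rigorously is what generates the fifteen $\delta$-laden summands in (\ref{eq:4th-cum}). Assumption \ref{assum:C} for $\ell=4$ guarantees that all reduced cumulant intensities up to order four are integrable, so $\kappa_4\in L^1(\R^{3d})$ and every integral above converges absolutely, legitimizing the interchange of summation, expectation, and integration (Fubini) that the argument relies on. Since the excerpt explicitly presents (\ref{eq:4th-cum}) as following ``through a simple combinatorial argument,'' I would treat that expansion as established and focus the proof on the bookkeeping that turns the multilinear cumulant expansion into the integral representation, relegating the verification of the coincidence combinatorics to a reference to (\ref{eq:4th-cum}) and the standard product-density calculus of \cite{b:bri-81}, Chapter 2.3.
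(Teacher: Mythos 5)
Your proposal is correct and follows essentially the same route as the paper, which proves this lemma by the same expansion used for the covariance case (Theorem \ref{thm:DFT2}): center the DFTs (cumulants of order at least two are shift-invariant), expand the product of sums over coincidence patterns of the four indices via the product-density identity (\ref{eq:lambda}), and collect the resulting diagonal and off-diagonal contributions into the complete reduced cumulant $\kappa_{4,\text{red}}$ of (\ref{eq:4th-cum}), with Assumption \ref{assum:C} for $\ell=4$ justifying the Fubini interchanges. The paper omits these details entirely, so your sketch supplies exactly the bookkeeping it leaves implicit.
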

\textit{Proof}. The proof is similar to that of the proof of Theorem \ref{thm:DFT2}. We omit the details.
\hfill $\Box$

\vspace{0.5em}

Using the above expression, we calculate the limit of $A_3$ in Section \ref{sec:A1-2}.
Recall 
\begin{equation*}
A_3 = |D_n|\int_{\R^{2d}} \phi_M(\ob_1) \phi_M(\ob_2) \cum (J_{h,n}(\ob_1), J_{h,n}(-\ob_1), J_{h,n}(\ob_2), J_{h,n}(-\ob_2)) d\ob_1 d\ob_2.
\end{equation*}

\begin{theorem} \label{thm:limA3}
Suppose the same set of assumptions in Theorem \ref{thm:A1}(ii) holds. Then,
\begin{equation*}
\lim_{n\rightarrow \infty} A_{3} = (2\pi)^d (H_{h,4}/H_{h,2}^{2}) \int_{\R^{2d}} \phi_{M}(\blambda_1) \phi_M(\blambda_3) f_4(\blambda_1, -\blambda_1, \blambda_3) d\blambda_1 d\blambda_3,
\end{equation*} where $f_4$ is the fourth-order spectrum.
\end{theorem}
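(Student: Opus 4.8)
The plan is to follow the same template that produced $\lim_{n\to\infty} A_1$ in Theorem \ref{thm:limA1}, now applied to the genuine fourth-order cumulant term. Starting from the expression for the fourth-order cumulant of the DFTs in Lemma \ref{lemma:kappa4}, I would substitute $\ob_1, -\ob_1, \ob_2, -\ob_2$ for the four frequency arguments inside $A_3$ and write
\begin{equation*}
\cum(J_{h,n}(\ob_1), J_{h,n}(-\ob_1), J_{h,n}(\ob_2), J_{h,n}(-\ob_2))
\end{equation*}
as a fourfold integral over $D_n^4$ against the kernel $\kappa_4(\tb_1-\tb_4, \tb_2-\tb_4, \tb_3-\tb_4)$. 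The first step is then to interchange the $\ob_1,\ob_2$ integrals (carrying the factors $\phi_M(\ob_1)\phi_M(\ob_2)$) with the spatial integrals; since $\phi_M$ has compactly supported, bounded Fourier transform $\widehat{\phi}_M$, each $\ob$-integral produces an inverse Fourier transform, converting $\phi_M(\ob_i)$ into $\widehat{\phi}_M$ evaluated at differences of the spatial variables, exactly as in the derivation of $A_1$. After a change of variables of the form $\ubb_j = \tb_j - \tb_4$ together with $\tb_4$, and using translation-invariance of $\kappa_{4,\text{red}}$, I expect $A_3$ to collapse to an integral of $\widehat{\phi}_M$ against a smoothed version of $\kappa_{4,\text{red}}$.

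Next I would decompose $\kappa_{4,\text{red}}$ into its $15$ pieces using the explicit formula (\ref{eq:4th-cum}). The genuinely fourth-order piece $\gamma_{4,\text{red}}$ should, after Fourier-transforming via (\ref{eq:4th-spec}) and applying the taper approximations from Lemma \ref{lemma:Hkn} and the Fej\'er-kernel machinery (Lemma \ref{lemma:Cn}), yield the leading contribution
\begin{equation*}
(2\pi)^d (H_{h,4}/H_{h,2}^2) \int_{\R^{2d}} \phi_M(\blambda_1)\phi_M(\blambda_3) f_4(\blambda_1,-\blambda_1,\blambda_3)\, d\blambda_1 d\blambda_3,
\end{equation*}
in close parallel to how $A_{13}$ emerged from the fourth term in Lemma \ref{lemma:C-expansion}. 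The constant $(H_{h,4}/H_{h,2}^2)$ arises because four copies of the taper $h$ are integrated against a single ''diagonal'' delta structure, producing $\int h^4 = H_{h,4}$ against the normalization $H_{h,2}^2$. The lower-order delta pieces (those involving $\gamma_{3,\text{red}}$, $\gamma_{2,\text{red}}$, and $\lambda$) each carry at least one Dirac factor that removes a spatial integration and thereby supplies an extra factor of $|D_n|^{-1}$; combined with the overall scaling these terms are $o(1)$ and vanish in the limit.

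The main obstacle will be controlling the remainder terms produced when the finite-domain quantities $H_{h,2}^{(n)}$ are replaced by their infinite-domain limits and the associated Fej\'er kernel $F_{h^2,n}$ is treated as an approximate identity. As in Theorem \ref{thm:limA1}, this requires bounding error contributions of the type governed by $R_{h,h}^{(n)}$ (and its analogues for products of four tapers) uniformly in the frequency arguments, using the sharp bound of Theorem \ref{thm:Rhgn1}(ii) and the absolute integrability of $\widehat{\phi}_M$ and $\widetilde{f}$ (or here, of $f_4 - (2\pi)^{-3d}\lambda \in L^1$ from Assumption \ref{assum:F}(i)). Specifically, one must verify that the cross terms—products of a leading Fej\'er factor with an $R$-type remainder, and products of two $R$-type remainders—are $o(1)$, which is the content of the dominated-convergence argument in Lemma \ref{lemma:Rn}. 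Because the full bookkeeping of these $15$ cumulant pieces and their taper-remainder estimates is lengthy but entirely analogous to the $A_1$ computation, I would state the leading term explicitly and relegate the delta-piece and remainder estimates to the same style of dominated-convergence arguments already established, citing Lemmas \ref{lemma:Hkn}, \ref{lemma:Cn}, and \ref{lemma:Rn} and Theorems \ref{thm:Rhgn0}, \ref{thm:Rhgn1} as needed.
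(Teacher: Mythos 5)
Your overall template (Fubini to convert $\phi_M$ into $\widehat{\phi}_M$ of spatial differences, change of variables, Fej\'er-kernel approximate identity, remainder control via Theorem \ref{thm:Rhgn1}(ii) and a Lemma \ref{lemma:Rn}-style dominated-convergence argument) is the same as the paper's. But there is a genuine error in how you treat the delta pieces of $\kappa_{4,\text{red}}$: you claim that every term involving $\gamma_{3,\text{red}}$, $\gamma_{2,\text{red}}$, or $\lambda$ "carries at least one Dirac factor that removes a spatial integration and thereby supplies an extra factor of $|D_n|^{-1}$," hence is $o(1)$. This is false. A delta factor collapses one integration variable, but the remaining kernel is still an $L^1$ function of the surviving difference variables, so the integral over $D_n^4$ is still $O(|D_n|)$ (one free translation variable), exactly as for the $\gamma_{4,\text{red}}$ piece. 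All fifteen pieces of $\kappa_{4,\text{red}}$ therefore contribute to $A_3$ at order $O(1)$, and this is precisely why the limit is expressed through $f_4$, the inverse Fourier transform of the \emph{complete} cumulant (\ref{eq:4th-spec}), rather than through $\mathcal{F}^{-1}(\gamma_{4,\text{red}})$. Your proposal is also internally inconsistent on this point: you say the $\gamma_{4,\text{red}}$ piece alone yields the integral against $f_4$, but $f_4$ is by definition the transform of $\kappa_{4,\text{red}}$, not of $\gamma_{4,\text{red}}$; if the other fourteen pieces really vanished you could not recover $f_4$ in the answer.

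The correct bookkeeping, as in the paper, is to split off only the pure-delta piece $\lambda\,\delta(\xb)\delta(\yb)\delta(\zb)$, writing $A_3=A_{31}+A_{32}$. The term $A_{31}$ is computed exactly and equals the nonvanishing constant $(2\pi)^{-2d}(H_{h,4}/H_{h,2}^2)\,\lambda\,\widehat{\phi}_M(\textbf{0})^2$. The remainder $\widetilde{\kappa}_{4,\text{red}}=\kappa_{4,\text{red}}-\lambda\delta\delta\delta$ has inverse Fourier transform $\widetilde{f}_4=f_4-(2\pi)^{-3d}\lambda$, which is in $L^1(\R^{3d})$ by Assumption \ref{assum:F}(i); this integrability is what licenses the Fubini/Fej\'er-kernel argument (mirroring the use of $\widetilde{f}=f-(2\pi)^{-d}\lambda$ in the $A_1$ computation), and it yields $\lim A_{32}=(2\pi)^d(H_{h,4}/H_{h,2}^2)\int\phi_M(\blambda_1)\phi_M(\blambda_3)\widetilde{f}_4(\blambda_1,-\blambda_1,\blambda_3)\,d\blambda_1 d\blambda_3$. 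Adding $A_{31}$ back converts $\widetilde{f}_4$ into $f_4$. Without this recombination step your argument cannot produce the stated limit.
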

\textit{Proof}.
By using Lemma \ref{lemma:kappa4} and Fubini's theorem, $A_3$ can be written as
\begin{eqnarray*}
A_3 &=& (2\pi)^{-2d}H_{h,2}^{-2} |D_n|^{-1}
 \int_{D_n^4} 
\left( \prod_{j=1}^{4} h(\tb_j/\aB) \right)
\kappa_{4,\text{red}}(\tb_1-\tb_4, \tb_2-\tb_4, \tb_3-\tb_4) \\
&& \times
\left( \int_{\R^d} \phi_M(\ob_1) e^{i(\tb_2-\tb_1)^\top \ob_1} d\ob_1\right)
\left( \int_{\R^d} \phi_M(\ob_2) e^{i(\tb_4-\tb_3)^\top \ob_2} d\ob_2 \right)
 d\tb_1 d\tb_2 d\tb_3 d\tb_4 \\
&=& (2\pi)^{-2d} H_{h,2}^{-2}|D_n|^{-1} \int_{D_n^4} 
\left( \prod_{j=1}^{4} h(\tb_j/\aB) \right)
\kappa_{4,\text{red}}(\tb_1-\tb_4, \tb_2-\tb_4, \tb_3-\tb_4) \\
&&~~\times \widehat{\phi}_M(\tb_2-\tb_1)
\widehat{\phi}_M(\tb_4-\tb_3) d\tb_1 d\tb_2 d\tb_3 d\tb_4.
\end{eqnarray*} Let $\widetilde{\kappa}_{4,\text{red}}(\xb,\yb,\zb)= \kappa_{4,\text{red}}(\xb,\yb,\zb) - \lambda \delta(\xb) \delta(\yb) \delta(\zb)$. Then, from (\ref{eq:4th-cum}) and (\ref{eq:4th-spec}), the inverse Fourier transform of $\widetilde{\kappa}_{4,\text{red}}$ is
$\widetilde{f}_4(\xb,\yb,\zb) = f_4(\xb,\yb,\zb) - (2\pi)^{-3d} \lambda$. By using a similar decomposition as in Lemma \ref{lemma:C-expansion}, we have $A_{3} = A_{31} + A_{32}$, where
\begin{eqnarray*}
A_{31} &=&  (2\pi)^{-2d} H_{h,2}^{-2} |D_n|^{-1} \lambda \int_{D_n^4} 
\left( \prod_{j=1}^{4} h(\tb_j/\aB) \right) \delta(\tb_1-\tb_4) \delta(\tb_2-\tb_4) \delta(\tb_3-\tb_4)\\
&&~~\times \widehat{\phi}_M(\tb_2-\tb_1)
\widehat{\phi}_M(\tb_4-\tb_3) d\tb_1  d\tb_2 d\tb_3 d\tb_4  \\
&=& (2\pi)^{-2d}  H_{h,2}^{-2}  |D_n|^{-1} \lambda  \widehat{\phi}_M(\textbf{0})^2 \int_{D_n} h(\tb/\aB)^4 d\tb
=(2\pi)^{-2d} (H_{h,4}/H_{h,2}^2) \lambda\widehat{\phi}_M(\textbf{0})^2
\end{eqnarray*} and
\begin{eqnarray*}
A_{32} &=& (2\pi)^{-2d} H_{h,2}^{-2} |D_n|^{-1} \int_{D_n^4} 
\left( \prod_{j=1}^{4} h(\tb_j/\aB) \right)
\widetilde{\kappa}_{4,\text{red}}(\tb_1-\tb_4, \tb_2-\tb_4, \tb_3-\tb_4) \\
&&~~\times \widehat{\phi}_M(\tb_2-\tb_1)
\widehat{\phi}_M(\tb_4-\tb_3) d\tb_1 d\tb_2 d\tb_3 d\tb_4 \\
&=& (2\pi)^{-2d} H_{h,2}^{-2} |D_n|^{-1} \int_{D_n-D_n} \int_{D_n-D_n}  d\ubb d\vbb \widehat{\phi}_M(\ubb) \widehat{\phi}_M(\vbb)  \int_{\R^{2d}} 
h(\tb_1/\aB) h( (\tb_1+\ubb)/\aB)
\\
&&~~\times
 h(\tb_3/\aB) h( (\tb_3+\vbb)/\aB)
\widetilde{\kappa}_{4,\text{red}}(\tb_1-\tb_3-\vbb, \tb_1-\tb_3 + \ubb - \vbb, -\vbb)
 d\tb_1 d\tb_3.
\end{eqnarray*} Here, we use change of variables $\ubb = \tb_2-\tb_1$ and $\vbb = \tb_4-\tb_3$ in the second identity above. To obtain an expression of $A_{31}$, by using (\ref{eq:phi-FT}),
$\widehat{\phi}_M(\textbf{0})^2 = \widehat{\phi}(\textbf{0})^2 = \int_{D^2} \phi(\ob_1)\phi(\ob_2) d\ob_1 d\ob_2$. Therefore,
\begin{equation}
A_{31} = (2\pi)^{-2d} (H_{h,4}/H_{h,2}^2) \lambda \int_{D^2} \phi(\ob_1)\phi(\ob_2) d\ob_1 d\ob_2.
\label{eq:A31} 
\end{equation}
Obtaining an expression for $\lim_{n\rightarrow \infty} A_{32}$ is similar to that in
deriving an expression for $\lim_{n\rightarrow \infty} A_{13}$ above.
By using similar techniques as in Lemmas \ref{lemma:C-expansion} and \ref{lemma:Rn}, it can be shown that for large $n\in \N$,
\begin{eqnarray*}
A_{32} &=& (2\pi)^{-d} (H_{h,4}/H_{h,2}^{2}) \int_{B_M^2} d\ubb d\vbb \widehat{\phi}_M(\ubb) \widehat{\phi}_M(\vbb) \int_{\R^{3d}} \widetilde{f}_4(\blambda_1, \blambda_2, \blambda_3) F_{h^2,n}(\blambda_1 + \blambda_2) \\
&&~~\times \exp\left( i (-\blambda_1^\top \vbb + \blambda_2^\top (\ubb-\vbb) - \blambda_3^\top \vbb)\right) d\blambda_1 d\blambda_2 d\blambda_3  + o(1) \\
&=& (2\pi)^{-d} (H_{h,4}/H_{h,2}^{2}) 
\int_{B_M^2} d\ubb d\vbb \widehat{\phi}_M(\ubb) \widehat{\phi}_M(\vbb)
\int_{\R^{2d}} \widetilde{f}_4(\blambda_1, -\blambda_1, \blambda_3) 
e^{-i(\blambda_1^\top \ubb + \blambda_3^\top \vbb)} d\blambda_1 d\blambda_3 + o(1)
 \\
&=& (2\pi)^{d} (H_{h,4}/H_{h,2}^{2}) \int_{\R^{2d}} \widetilde{f}_4(\blambda_1, -\blambda_1, \blambda_3)  
\phi_M(\blambda_1) \phi_M(\blambda_3) d\blambda_1 d\blambda_3+ o(1),\quad n\rightarrow \infty.
\end{eqnarray*} 
Here, we use Lemma \ref{lemma:Cn}(b) in the second identity and
Fubini's theorem and (\ref{eq:phiM-FT}) in the last identity.
Therefore,
\begin{equation}
\lim_{n\rightarrow \infty} A_{32} = (2\pi)^d (H_{h,4}/H_{h,2}^{2}) \int_{\R^{2d}} \phi_{M}(\blambda_1) \phi_M(\blambda_3) \widetilde{f}_4(\blambda_1, -\blambda_1, \blambda_3) d\blambda_1 d\blambda_3.
\label{eq:A32} 
\end{equation} 
Combining (\ref{eq:A31}) and (\ref{eq:A32}), we have
\begin{equation*}
\lim_{n\rightarrow \infty} A_{3} = (2\pi)^d (H_{h,4}/H_{h,2}^{2}) \int_{\R^{2d}} \phi_{M}(\blambda_1) \phi_M(\blambda_3) f_4(\blambda_1, -\blambda_1, \blambda_3) d\blambda_1 d\blambda_3.
\end{equation*} Thus, we obtain the limit of $A_3$.
\hfill $\Box$

\vspace{0.5em}

To end this section, we obtain the bounds for the general higher order cumulant term.
\begin{lemma} \label{lemma:k4-bound}
Suppose that Assumptions \ref{assum:A} and \ref{assum:C} (for some $\ell \in \{2, 3, \dots \}$) hold. Let
\begin{equation*}
\mathcal{Z}_{j,n} = |D_n|^{-\alpha_j} \sum_{\xb \in X \cap D_n} g_{j,n}(\xb), \quad j \in \{1, \dots, \ell\},
\end{equation*} where $\alpha_j \in [0,\infty)$ and $g_{j,n}(\xb)$ is a bounded function on $\R^d$ uniformly in $n \in \N$ and $\xb \in \R^d$. 
Then, we have
\begin{equation*}
\big| \cum(\mathcal{Z}_{1,n} , \dots, \mathcal{Z}_{j,n})
\big| = O(|D_n|^{-\sum_{k=1}^{j} \alpha_k + 1}), \qquad j \in \{2, \dots, \ell\}.
\end{equation*}
\end{lemma}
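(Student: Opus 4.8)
The plan is to peel off the normalizing powers of $|D_n|$ by multilinearity of cumulants, then to expand the remaining joint cumulant of the unnormalized linear statistics as a finite sum of integrals of cumulant intensity functions, and finally to bound each integral by $C|D_n|$ using the uniform integrability of Assumption \ref{assum:C}. First I would use that cumulants are multilinear in their arguments to write
\[
\cum(\mathcal{Z}_{1,n}, \dots, \mathcal{Z}_{j,n}) = |D_n|^{-\sum_{k=1}^{j}\alpha_k}\,
\cum\Big(\textstyle\sum_{\xb \in X \cap D_n} g_{1,n}(\xb), \dots, \sum_{\xb \in X \cap D_n} g_{j,n}(\xb)\Big),
\]
so that it suffices to prove the inner joint cumulant is $O(|D_n|)$.

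Writing $N(g) = \sum_{\xb \in X \cap D_n} g(\xb)$, the next step is to invoke the standard expansion of the joint cumulant of linear statistics of a point process in terms of its cumulant intensity functions---the same combinatorial identity already used in the proofs of Theorem \ref{thm:DFT2} and Lemma \ref{lemma:kappa4} (cf.\ \cite{b:bri-81}, Chapter 2.3). Assigning to each block $B$ of a partition $\pi$ a single location variable $\yb_B \in D_n$ (which records the coincidences of points of $X$), this identity reads
\[
\cum(N(g_{1,n}), \dots, N(g_{j,n})) = \sum_{\pi \in S_j} \int_{(D_n)^{n(\pi)}} \prod_{B \in \pi}\Big(\prod_{i \in B} g_{i,n}(\yb_B)\Big)\, \gamma_{n(\pi)}\big((\yb_B)_{B \in \pi}\big) \prod_{B \in \pi} d\yb_B,
\]
where $S_j$ is the set of partitions of $\{1,\dots,j\}$, $n(\pi)$ is the number of blocks, and $\gamma_{n(\pi)}$ is the cumulant intensity in (\ref{eq:gamma}). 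As a consistency check I would verify that this reduces to the complete covariance (\ref{eq:cov}) for $j=2$ and to the complete fourth-order cumulant (\ref{eq:4th-cum}) for $j=4$, the diagonal $\delta$-terms there corresponding exactly to the blocks of size larger than one.

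To bound a single partition term, set $M := \max_{k}\sup_{n,\xb}|g_{k,n}(\xb)|$, which is finite by hypothesis; then the product of the $g$'s contributes at most $M^{j}$, a constant independent of $n$. For the one-block partition ($n(\pi)=1$) we have $\gamma_1 = \lambda$ constant and the integral equals $\lambda|D_n|$. For $n(\pi)=m \geq 2$ (admissible since $m \leq j \leq \ell$), I would fix the last coordinate and apply Assumption \ref{assum:C}:
\[
\int_{(D_n)^{m}} |\gamma_{m}(\yb_1, \dots, \yb_m)| \prod_{l=1}^{m} d\yb_l
\leq |D_n|\, \sup_{\yb_m \in \R^d} \int_{\R^{d(m-1)}} |\gamma_{m}(\yb_1, \dots, \yb_m)|\, d\yb_1 \cdots d\yb_{m-1} = O(|D_n|).
\]
Summing over the finitely many (Bell-number-many) partitions in $S_j$ yields $\cum(N(g_{1,n}), \dots, N(g_{j,n})) = O(|D_n|)$, and multiplying by the prefactor $|D_n|^{-\sum_k \alpha_k}$ gives the claimed rate.

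The main obstacle is establishing the partition formula with the correct bookkeeping of the diagonal coincidence terms, since both the coincidence structure of the points and the cumulant structure among the distinct points interact. Once this identity is in place, however, it is precisely the machinery already deployed for the covariance and fourth-order expansions, and the remaining estimates are routine applications of the uniform boundedness of the $g_{k,n}$ together with the $L^1$-integrability of $\gamma_m$ ($2 \leq m \leq \ell$) guaranteed by Assumption \ref{assum:C}.
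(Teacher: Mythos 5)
Your proof is correct and is essentially the paper's argument: the paper writes the same joint cumulant expansion in "complete cumulant" form, i.e.\ $\cum(\mathcal{Z}_{1,n},\dots,\mathcal{Z}_{j,n}) = |D_n|^{-\sum_k\alpha_k}\int_{D_n^j}\prod_k g_{k,n}(\tb_k)\,\kappa_{j,\mathrm{red}}(\tb_1-\tb_j,\dots,\tb_{j-1}-\tb_j)\,d\tb_1\cdots d\tb_j$ with the partition sum and coincidence $\delta$-terms absorbed into $\kappa_{j,\mathrm{red}}$ as in (\ref{eq:4th-cum}), and then bounds exactly as you do via $\sup\|g\|_\infty$ and the $L^1$ bound on the reduced cumulants from Assumption \ref{assum:C}. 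Your explicit partition-by-partition bookkeeping is just the unpacked version of the same identity, and you handle general $j$ where the paper only details $(j,\ell)=(4,4)$ and declares the rest analogous.
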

\textit{Proof}. We will only show the above for $(j,\ell) = (4,4)$ under fourth-order stationarity. 
The general cases are treated similarly (see the statement after Assumption \ref{assum:C}). Let $\alpha = \sum_{k=1}^{4} \alpha_k$.
By generalizing Lemma \ref{lemma:kappa4}, we have
\begin{equation*}
\cum(\mathcal{Z}_{1,n} , \dots, \mathcal{Z}_{4,n}) 
= |D_n|^{-\alpha} \int_{D_n^4}\prod_{j=1}^{4} g_{j,n}(\tb_j) \times \kappa_{4,\text{red}}(\tb_1-\tb_4, \tb_2-\tb_4, \tb_3-\tb_4) d\tb_1 d\tb_2 d\tb_3 d\tb_4,
\end{equation*} where $\kappa_{4,\text{red}}(\cdot, \cdot, \cdot)$ is defined as in (\ref{eq:4th-cum}). 
Let $\sup_{n \in \N}\sup_{\xb \in \R^d} |g_{j,n}(\xb)| < C_j <\infty$, $j \in \{1,2,3,4\}$, and let $C = \max\{C_1, \dots, C_4\}$. Then, we have
\begin{eqnarray*}
|\cum(\mathcal{Z}_{1,n} , \dots, \mathcal{Z}_{4,n})  | &\leq& C^4 |D_n|^{-\alpha} \int_{D_n^4} |\kappa_{4,\text{red}}(\tb_1-\tb_4, \tb_2-\tb_4, \tb_3-\tb_4)| d\tb_1 d\tb_2 d\tb_3 d\tb_4 \\
&\leq& C^4 |D_n|^{-\alpha} \int_{D_n} \int_{D_n-\tb_4} \int_{D_n-\tb_4} \int_{D_n-\tb_4}
|\kappa_{4,\text{red}}(\xb, \yb,\zb)| d\xb d\yb d\zb d\tb_4 \\
&\leq& C^4 |D_n|^{-\alpha}  \left( \int_{D_n} d\tb_4 \right) \left( \int_{\R^{3d}} |\kappa_{4,\text{red}}(\xb, \yb,\zb)| d\xb d\yb d\zb\right)
= O(|D_n|^{-\alpha+1}).
\end{eqnarray*} Here, we use change of variables $\xb = \tb_1-\tb_4$, $\yb= \tb_2-\tb_4$, and $\zb = \tb_3 - \tb_4$ in the second inequality, and the last identity is due to absolute integrability of $\kappa_{4,\text{red}} (\cdot, \cdot, \cdot)$ under Assumption \ref{assum:C} for $\ell=4$. Thus, we prove the lemma for $(j,\ell) =(4,4)$ under fourth-order stationarity.
\hfill $\Box$


\section{Asymptotic equivalence of the periodograms}  \label{sec:some-cumulants}

In this section, we obtain some cumulant bounds that appear in the first and second moments of the different $\widehat{I}_{h,n}(\ob) - I_{h,n}(\ob)$. These bounds are mainly used to prove an asymptotic equivalence between the feasible and infeasible integrated periodograms (see Theorem \ref{thm:A2}). However, the results derived in this section may also be of independent interest in periodogram-based methods for spatial point process. Throughout the section, we let $C \in (0,\infty)$ be a generic constant that varies line by line. 
Recall (\ref{eq:In-feas-h}):
\begin{equation} 
\begin{aligned}
& \widehat{I}_{h,n}(\ob) - I_{h,n}(\ob) \\
&~~= - (\widehat{\lambda}_{h,n} - \lambda) \left( c_{h,n}(\ob)  J_{h,n}(-\ob) + c_{h,n}(-\ob)  J_{h,n}(\ob) \right) + (\widehat{\lambda}_{h,n} - \lambda)^2 |c_{h,n}(\ob)|^2 \\
&~~= R_1(\ob) + R_2(\ob).
\end{aligned}
\label{eq:In-diff}
\end{equation} 
In the following lemma, we bound the cumulants that are made of $\widehat{\lambda}_{h,n}$ and 
\begin{equation} \label{eq:mathcalK}
K_{h,n}(\ob) = c_{h,n}(\ob) J_{h,n}(-\ob) + c_{h,n}(-\ob)  J_{h,n}(\ob), \quad \ob\in \R^d.
\end{equation}

\begin{lemma} \label{lemma:cumulants}
Suppose that Assumptions \ref{assum:C}(for $\ell=2$) and \ref{assum:E}(i) hold.
Then, we obtain the following two bounds:
\begin{itemize}
\item[(a)] For $\ob \in \R^d$,
$|\cum( \widehat{\lambda}_{h,n}, K_{h,n}(\ob))| \leq C|D_n|^{-1} |c_{h,n}(\ob)||c_{h^2,n}(\ob)| + |D_n|^{-1/2} |c_{h,n}(\ob)|o(1)$ as $n\rightarrow \infty$, where $o(1)$ error is uniform over $\ob \in \R^d$.


\item[(b)] $\cum(\widehat{\lambda}_{h,n}, \widehat{\lambda}_{h,n}) = \var(\widehat{\lambda}_{h,n}) \leq C |D_n|^{-1}$.
\end{itemize}
If we further assume Assumption \ref{assum:C} holds for $\ell=4$. Then, we have
\begin{itemize}
\item[(c)] For $\ob_1 ,\ob_2 \in \R^d$, 
$|\cum(\widehat{\lambda}_{h,n}, K_{h,n}(\ob_1), 
\widehat{\lambda}_{h,n},  K_{h,n}(\ob_2) )| \leq C |c_{h,n}(\ob_1)| |c_{h,n}(\ob_2)| |D_n|^{-2}
$.
\item[(d)] $|\cum(\widehat{\lambda}_{h,n}, \widehat{\lambda}_{h,n}, \widehat{\lambda}_{h,n}, \widehat{\lambda}_{h,n})| \leq C |D_n|^{-3}$.

\item[(e)] $\cum((\widehat{\lambda}_{h,n}-\lambda)^2,(\widehat{\lambda}_{h,n}-\lambda)^2)
= \var((\widehat{\lambda}_{h,n}-\lambda)^2) \leq C|D_n|^{-2}$.
\end{itemize}

\end{lemma}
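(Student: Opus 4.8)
The plan is to reduce every cumulant in the statement to an application of Lemma \ref{lemma:k4-bound} after rewriting $\widehat{\lambda}_{h,n}$ and $J_{h,n}(\ob)$ as normalized sums over the point pattern. Write $\widehat{\lambda}_{h,n} = |D_n|^{-1}\sum_{\xb \in X \cap D_n} g_n(\xb)$ with $g_n(\xb) = H_{h,1}^{-1} h(\xb/\aB)$, which is uniformly bounded since $h$ is bounded on its compact support; similarly $\mathcal{J}_{h,n}(\ob) = |D_n|^{-1/2}\sum_{\xb}\widetilde{g}_n(\xb)$ with $\widetilde{g}_n$ uniformly bounded. Thus $\widehat{\lambda}_{h,n}$ carries exponent $\alpha = 1$ and each $J_{h,n}(\ob)$ carries exponent $\alpha = 1/2$ in the notation of Lemma \ref{lemma:k4-bound}; recall also that cumulants of order $\geq 2$ are unchanged by adding deterministic constants, so centering is harmless. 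With this dictionary, parts (b) and (d) are immediate: (b) is Lemma \ref{lemma:k4-bound} with $j=2$ and $\alpha_1 = \alpha_2 = 1$, giving $O(|D_n|^{-2+1})$, and (d) is the same lemma with $j=4$ and all $\alpha_k = 1$, giving $O(|D_n|^{-4+1})$.

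For part (c) I would expand each $K_{h,n}(\ob_i)$ via its definition (\ref{eq:mathcalK}) and use multilinearity of cumulants to write $\cum(\widehat{\lambda}_{h,n}, K_{h,n}(\ob_1), \widehat{\lambda}_{h,n}, K_{h,n}(\ob_2))$ as a finite sum of terms of the form $c_{h,n}(\pm\ob_1)c_{h,n}(\pm\ob_2)\,\cum(\widehat{\lambda}_{h,n}, J_{h,n}(\mp\ob_1), \widehat{\lambda}_{h,n}, J_{h,n}(\mp\ob_2))$. Each such cumulant has exponents $1,\tfrac12,1,\tfrac12$, so Lemma \ref{lemma:k4-bound} bounds it by $O(|D_n|^{-3+1}) = O(|D_n|^{-2})$; factoring out $|c_{h,n}(\ob_1)||c_{h,n}(\ob_2)|$ and using $|c_{h,n}(-\ob)| = |c_{h,n}(\ob)|$ (as $h$ is real) yields (c). Part (e) then follows from (b) and (d) by the cumulant–moment identity: setting $Y = \widehat{\lambda}_{h,n} - \lambda$, which is real and centered, one has $\var(Y^2) = \kappa_4(Y) + 2\kappa_2(Y)^2$, where $\kappa_4(Y) = \cum(\widehat{\lambda}_{h,n}, \widehat{\lambda}_{h,n}, \widehat{\lambda}_{h,n}, \widehat{\lambda}_{h,n}) = O(|D_n|^{-3})$ by (d) and $\kappa_2(Y)^2 = \var(\widehat{\lambda}_{h,n})^2 = O(|D_n|^{-2})$ by (b), so the dominant term is $O(|D_n|^{-2})$.

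The main obstacle is part (a), which demands the precise leading term rather than a bare order bound. The key observation is that $\widehat{\lambda}_{h,n}$ is, up to a real deterministic constant, the frequency-zero DFT: from the definition of $\widehat{\lambda}_{h,n}$ and (\ref{eq:mathcalDFT-h}) one checks $\widehat{\lambda}_{h,n} = c_0\, |D_n|^{-1/2}\mathcal{J}_{h,n}(\textbf{0})$ with $c_0 = (2\pi)^{d/2} H_{h,1}^{-1} H_{h,2}^{1/2}$. Since $\mathcal{J}_{h,n}(\textbf{0})$ is real and $\overline{J_{h,n}(\ob)} = J_{h,n}(-\ob)$, the covariance $\cov(\widehat{\lambda}_{h,n}, J_{h,n}(\ob))$ equals $c_0|D_n|^{-1/2}\,\cov(J_{h,n}(\textbf{0}), J_{h,n}(\ob))$ in the convention of Theorem \ref{thm:DFT2}. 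Applying Theorem \ref{thm:cov-exp} with $\ob_1 = \textbf{0}$, $\ob_2 = \ob$ gives a leading term $|D_n|^{-1}H_{h,2}^{-1}f(\textbf{0})H_{h,2}^{(n)}(-\ob)$ plus a uniform $o(1)$. Since $H_{h,2}^{(n)}(\ob) = H_{h^2,1}^{(n)}(\ob)$ is proportional to $|D_n|^{1/2}c_{h^2,n}(\ob)$ by (\ref{eq:Cn-h}), the leading contribution to $\cov(\widehat{\lambda}_{h,n}, J_{h,n}(\ob))$ is $O(|D_n|^{-1}|c_{h^2,n}(\ob)|)$ and the remainder is $|D_n|^{-1/2}o(1)$ uniformly in $\ob$.

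Finally, expanding $K_{h,n}(\ob)$ via (\ref{eq:mathcalK}) and distributing the two factors $c_{h,n}(\pm\ob)$ over this covariance bound produces exactly (a), again using $|c_{h,n}(-\ob)| = |c_{h,n}(\ob)|$ and $|c_{h^2,n}(-\ob)| = |c_{h^2,n}(\ob)|$. The delicate points of part (a) are to keep the $o(1)$ uniform in $\ob$ — which Theorem \ref{thm:cov-exp} supplies — and to correctly identify the diagonal $\lambda\delta(\cdot)$-contribution of the complete covariance $C(\cdot)$ with the $c_{h^2,n}$ factor, so that the $c_{h^2,n}$ term and the genuinely negligible $o(1)$ term are separated rather than lumped into a single $O(|D_n|^{-1/2})$ estimate that would be too weak downstream.
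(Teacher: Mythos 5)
Your proposal is correct and follows essentially the same route as the paper: (b), (c), (d) via Lemma \ref{lemma:k4-bound} with the exponent bookkeeping $\alpha=1$ for $\widehat{\lambda}_{h,n}$ and $\alpha=1/2$ for each DFT factor, (e) via the identity $\var(Y^2)=\kappa_4(Y)+2\kappa_2(Y)^2$ for centered real $Y$, and (a) by recognizing $\widehat{\lambda}_{h,n}$ as $(2\pi)^{d/2}H_{h,2}^{1/2}H_{h,1}^{-1}|D_n|^{-1/2}\mathcal{J}_{h,n}(\textbf{0})$, invoking Theorem \ref{thm:cov-exp} at $\ob_1=\textbf{0}$, and converting $H_{h,2}^{(n)}(-\ob)$ into $|D_n|^{1/2}c_{h^2,n}(\ob)$ via (\ref{eq:Cn-h}). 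The only cosmetic quibble is your closing remark attributing the $c_{h^2,n}$ factor to the $\lambda\delta(\cdot)$ part of $C(\cdot)$ alone, whereas the leading term $f(\textbf{0})H_{h,2}^{(n)}(-\ob)$ already aggregates both the atom and the $\gamma_{2,\text{red}}$ contribution; this does not affect the argument.
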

\textit{Proof}. 
Recall $\widehat{\lambda}_{h,n} = H_{h,1}^{-1}|D_n|^{-1} \sum_{\xb \in X \cap D_n} h(\xb/\aB)$. (b) and (d) are straightforward due to Lemma \ref{lemma:k4-bound}.
To show (e), we note that
\begin{equation*}
\cum((\widehat{\lambda}_{h,n}-\lambda)^2,(\widehat{\lambda}_{h,n}-\lambda)^2)
= 2 \var(\widehat{\lambda}_{h,n})^2 + 
\cum(\widehat{\lambda}_{h,n}, \widehat{\lambda}_{h,n}, \widehat{\lambda}_{h,n}, \widehat{\lambda}_{h,n}).
\end{equation*} Thus, (e) follows from (b) and (d).

Next, we will show (a). We first note that $\widehat{\lambda}_{h,n} = (2\pi)^{d/2} H_{h,2}^{1/2}H_{h,1}^{-1} |D_n|^{-1/2} \mathcal{J}_{h,n}(\textbf{0})$. Therefore, by using Theorem \ref{thm:cov-exp},
\begin{eqnarray*}
\cum(\widehat{\lambda}_{h,n}, c_{h,n}(\ob) J_{h,n}(-\ob)) &=& 
(2\pi)^{d/2} H_{h,2}^{1/2}H_{h,1}^{-1} |D_n|^{-1/2}  c_{h,n}(\ob) \cum(J_{h,n}(\textbf{0}),  J_{h,n}(-\ob))
 \\
&=& C |D_n|^{-3/2}  c_{h,n}(\ob) f(\textbf{0}) H_{h,2}^{(n)}(-\ob) + |D_n|^{-1/2}  c_{h,n}(\ob) o(1),~~n \rightarrow \infty,
\end{eqnarray*}  where $f$ is the spectral density
and $o(1)$ error above is uniform over $\ob \in \R^d$. Since $f(\textbf{0}) <\infty$ under Assumption \ref{assum:C}(for $\ell=2$) and by using (\ref{eq:Cn-h}), we have
\begin{equation*}
| \cum(\widehat{\lambda}_{h,n}, c_{h,n}(\ob) J_{h,n}(-\ob)) | 
\leq C |D_n|^{-1} | c_{h,n}(\ob)| |c_{h^2,n}(\ob)|  + |D_n|^{-1/2}| c_{h,n}(\ob)| o(1), \quad n \rightarrow \infty.
\end{equation*}
Similarly, we have $| \cum(\widehat{\lambda}_{h,n}, c_{h,n}(-\ob) J_{h,n}(\ob))| \leq C |D_n|^{-1} | c_{h,n}(\ob)| |c_{h^2,n}(\ob)|  + |D_n|^{-1/2}| c_{h,n}(\ob)| o(1)$ as $n\rightarrow \infty$. Thus, we show (a).

Lastly, (c) is straightforward due to Lemma \ref{lemma:k4-bound}.
All together, we get the desired results.
\hfill $\Box$

\vspace{0.5em} 

Using the bounds derived above, we obtain the bounds for the first-order moments for $R_1(\ob)$ and $R_2(\ob)$ and the second-order moment for $R_2(\ob)$.

\begin{theorem} \label{thm:In-moments}
Suppose that Assumptions \ref{assum:C}(for $\ell=2$) and \ref{assum:E}(i) hold.
Let $R_1(\cdot)$ and $R_2(\cdot)$ be defined as in (\ref{eq:In-diff}). Then, for $\ob \in \R^d$, we have
\begin{eqnarray} \label{eq:ExR1}
\left| \Ex[R_1(\ob)] \right| &\leq& C |D_n|^{-1} |c_{h,n}(\ob)| |c_{h^2,n}(\ob)| + |D_n|^{-1/2} |c_{h,n}(\ob)| o(1) \\
\text{and} \quad \left| \Ex[R_2(\ob)] \right| &\leq& C |D_n|^{-1} |c_{h,n}(\ob)|^2, \quad n \rightarrow \infty,
\label{eq:ExR2}
\end{eqnarray} where $o(1)$ error above is uniform over $\ob \in \R^d$.

If we further assume Assumption \ref{assum:C} for $\ell=4$. Then, 
for $\ob_1, \ob_2 \in \R^d$,
\begin{equation} \label{eq:varR2}
|\cov(R_2(\ob_1), R_2(\ob_2))| \leq C|D_n|^{-2} |c_{h,n}(\ob_1)|^2 |c_{h,n}(\ob_2)|^2. 
\end{equation}
\end{theorem}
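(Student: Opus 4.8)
The plan is to exploit the fact that each $R_i$ is built from the centred quantities $\widehat{\lambda}_{h,n}-\lambda$ and $K_{h,n}(\ob)$, so that every moment collapses to a cumulant already controlled by Lemma \ref{lemma:cumulants}. The only structural facts I would invoke are that $\widehat{\lambda}_{h,n}$ is unbiased for $\lambda$, hence $\Ex[\widehat{\lambda}_{h,n}-\lambda]=0$, and that $J_{h,n}(\cdot)$ is centred by construction, whence $\Ex[K_{h,n}(\ob)]=0$ from the definition (\ref{eq:mathcalK}). These two observations are what turn the various expectations into covariances and cumulants.

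First I would treat $\Ex[R_1(\ob)]$. Since $R_1(\ob)=-(\widehat{\lambda}_{h,n}-\lambda)K_{h,n}(\ob)$ is a product of two mean-zero random variables, $\Ex[R_1(\ob)]=-\cov(\widehat{\lambda}_{h,n},K_{h,n}(\ob))=-\cum(\widehat{\lambda}_{h,n},K_{h,n}(\ob))$. Applying Lemma \ref{lemma:cumulants}(a) verbatim then yields (\ref{eq:ExR1}). Next, for $\Ex[R_2(\ob)]$ I would use that $|c_{h,n}(\ob)|^2$ is deterministic, so that $\Ex[R_2(\ob)]=|c_{h,n}(\ob)|^2\,\Ex[(\widehat{\lambda}_{h,n}-\lambda)^2]=|c_{h,n}(\ob)|^2\var(\widehat{\lambda}_{h,n})$; the bound (\ref{eq:ExR2}) is then immediate from Lemma \ref{lemma:cumulants}(b).

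Finally, for the second-order moment I would again factor out the deterministic prefactors, writing $\cov(R_2(\ob_1),R_2(\ob_2))=|c_{h,n}(\ob_1)|^2|c_{h,n}(\ob_2)|^2\,\var((\widehat{\lambda}_{h,n}-\lambda)^2)$, and invoke Lemma \ref{lemma:cumulants}(e) to obtain (\ref{eq:varR2}). There is essentially no genuine difficulty beyond this algebraic reduction: all the analytic content—the second- and fourth-order cumulant estimates for $\widehat{\lambda}_{h,n}$ and $K_{h,n}(\ob)$—is already packaged into Lemma \ref{lemma:cumulants}, which in turn rests on the covariance expansion of Theorem \ref{thm:cov-exp} and the taper bounds of Section \ref{sec:taper-representation}. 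If one point deserves care, it is that the $o(1)$ term appearing in (\ref{eq:ExR1}) must be uniform in $\ob$; this uniformity is inherited directly from the corresponding statement in Lemma \ref{lemma:cumulants}(a), and I would flag it explicitly rather than let it be absorbed silently into the constant.
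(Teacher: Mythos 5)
Your proposal is correct and follows essentially the same route as the paper: each moment is reduced, via the mean-zero structure of $\widehat{\lambda}_{h,n}-\lambda$ and $K_{h,n}(\ob)$ and the deterministic factor $|c_{h,n}(\ob)|^2$, to the cumulant bounds of Lemma \ref{lemma:cumulants}(a), (b), and (e). Your explicit remark on the uniformity of the $o(1)$ term in (\ref{eq:ExR1}) is a welcome clarification of a point the paper leaves implicit.
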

\textit{Proof}. To show (\ref{eq:ExR1}), we use Lemma \ref{lemma:cumulants}(a) and get
\begin{equation*}
\left| \Ex[R_1(\ob)] \right| = |\cum(\widehat{\lambda}_n, K_{h,n}(\ob))|
\leq C |D_n|^{-1} |c_{h,n}(\ob)| |c_{h^2,n}(\ob)| + |D_n|^{-1/2} |c_{h,n}(\ob)| o(1)
\end{equation*} as $n \rightarrow \infty$. Thus, we show (\ref{eq:ExR1}).

To show (\ref{eq:ExR2}), by Lemma \ref{lemma:cumulants}(b),
\begin{equation*}
|\Ex[R_2(\ob)]| = |c_{h,n}(\ob)|^2 \cum(\widehat{\lambda}_n, \widehat{\lambda}_n) \leq
C |D_n|^{-1}|c_{h,n}(\ob)|^{2},~~\ob\in\R^d.
\end{equation*} Thus, we show (\ref{eq:ExR2}).

To show (\ref{eq:varR2}), by Lemma \ref{lemma:cumulants}(e), we have
\begin{eqnarray*}
&& \cov(R_2(\ob_1), R_2(\ob_2)) \\
&&~~= |c_{h,n}(\ob_1)|^2 |c_{h,n}(\ob_2)|^2 \var((\widehat{\lambda}_n-\lambda)^2)
\leq C|D_n|^{-2} |c_{h,n}(\ob_1)|^2 |c_{h,n}(\ob_2)|^2,~~\ob_1, \ob_2 \in \R^d.
\end{eqnarray*} Thus, we show (\ref{eq:varR2}). All together, we prove the theorem.
\hfill $\Box$

\vspace{0.5em}

Now, let 
\begin{equation*}
S_i = |D_n|^{1/2}\int_{D}\phi(\ob) R_{i}(\ob) d\ob, \quad i \in \{1,2\},
\end{equation*} where $D$ is a compact region on $\R^d$ and $\phi$ is a symmetric continuous function on $D$.
In the following theorem, we show that $S_1$ and $S_2$ are asympototically negligible.

\begin{theorem} \label{thm:S12}
Suppose that Assumptions \ref{assum:A}, \ref{assum:C} (for $\ell=4$), and \ref{assum:E}(i) hold. Then,
\begin{equation*}
S_1, S_2  \Lcon 0, \quad n \rightarrow \infty,
\end{equation*} where $\Lcon$ denotes convergence in $L_2$.
\end{theorem}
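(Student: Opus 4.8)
The plan is to bound $\Ex[S_i^2]$ and show it vanishes for each $i$. Recall from (\ref{eq:In-diff}) and (\ref{eq:mathcalK}) that, writing $L=\widehat{\lambda}_{h,n}-\lambda$, we have $R_1(\ob)=-L\,K_{h,n}(\ob)$ and $R_2(\ob)=|c_{h,n}(\ob)|^2L^2$; both $L$ and $K_{h,n}(\ob)$ are real, so $S_1=-|D_n|^{1/2}L\,M_n$ with $M_n=\int_D\phi(\ob)K_{h,n}(\ob)\,d\ob$, and $S_2=|D_n|^{1/2}L^2\int_D\phi(\ob)|c_{h,n}(\ob)|^2\,d\ob$ are real. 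I would dispose of $S_2$ first, since Theorem \ref{thm:In-moments} gives exactly the two-point bounds needed. Expanding $\Ex[S_2^2]=|D_n|\int_{D^2}\phi(\ob_1)\phi(\ob_2)\,\Ex[R_2(\ob_1)R_2(\ob_2)]\,d\ob_1\,d\ob_2$ and writing $\Ex[R_2R_2']=\cov(R_2,R_2')+\Ex[R_2]\Ex[R_2']$, the bounds (\ref{eq:ExR2}) and (\ref{eq:varR2}) show each summand is $O(|D_n|^{-2})|c_{h,n}(\ob_1)|^2|c_{h,n}(\ob_2)|^2$. Hence $\Ex[S_2^2]\le C|D_n|^{-1}\big(\int_D|\phi(\ob)|\,|c_{h,n}(\ob)|^2\,d\ob\big)^2$, and since $|c_{h,n}|^2=F_{h,n}$ integrates to $1$ (Lemma \ref{lemma:Cn}(a)) and $\phi$ is bounded, the bracket is $O(1)$; thus $\Ex[S_2^2]=O(|D_n|^{-1})\to0$.

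For $S_1$ the key is to decouple the scalar factor $L$ from the integral $M_n$ by Cauchy--Schwarz:
\[
\Ex[S_1^2]=\Ex\big[(|D_n|^{1/2}L)^2M_n^2\big]\le\big(|D_n|^2\,\Ex[L^4]\big)^{1/2}\big(\Ex[M_n^4]\big)^{1/2}.
\]
The first factor is $O(1)$: by Lemma \ref{lemma:cumulants}(b),(d) one has $\Ex[L^4]=3\var(L)^2+\cum_4(L)=O(|D_n|^{-2})$. For the second factor I would use $\Ex[M_n^4]=3\var(M_n)^2+\cum_4(M_n)$. The fourth cumulant is harmless: after subtracting its (deterministic) mean, $M_n=|D_n|^{-1/2}\sum_{\xb\in X\cap D_n}g_n(\xb)$, where $g_n(\xb)=(2\pi)^{-d/2}H_{h,2}^{-1/2}h(\xb/\aB)\int_D\phi(\ob)\big(c_{h,n}(\ob)e^{i\xb^\top\ob}+c_{h,n}(-\ob)e^{-i\xb^\top\ob}\big)\,d\ob$ is bounded uniformly in $n$ and $\xb$ by Lemma \ref{lemma:Cn}(e); Lemma \ref{lemma:k4-bound} then yields $\cum_4(M_n)=O(|D_n|^{-1})$.

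The crux is to prove $\var(M_n)\to0$. Starting from $\var(M_n)=\int_{D^2}\phi(\ob_1)\phi(\ob_2)\cov(K_{h,n}(\ob_1),K_{h,n}(\ob_2))\,d\ob_1\,d\ob_2$, I would expand each $\cov(K,K)$ into the four DFT covariances weighted by the factors $c_{h,n}(\pm\ob_i)$ and insert the leading expansion of Theorem \ref{thm:cov-exp}. Its uniform $o(1)$ remainder integrates against $|c_{h,n}(\ob_1)||c_{h,n}(\ob_2)|$ to give $o(1)$, while in the leading term the substitution $H_{h,2}^{(n)}(\ob_1-\ob_2)=(2\pi)^{d/2}H_{h,4}^{1/2}|D_n|^{1/2}c_{h^2,n}(\ob_1-\ob_2)$ turns it into $C|D_n|^{-1/2}$ times an integral of $|\phi(\ob_1)\phi(\ob_2)|\,|c_{h,n}(\ob_1)|\,|c_{h,n}(\ob_2)|\,|c_{h^2,n}(\ob_1-\ob_2)|$; integrating in $\ob_2$ and then in $\ob_1$ using the bounds of Lemma \ref{lemma:Cn}(d) and (e) bounds this integral by $O(1)$, so the leading term is $O(|D_n|^{-1/2})$. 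Therefore $\var(M_n)\to0$, whence $\Ex[M_n^4]=3\,o(1)+O(|D_n|^{-1})=o(1)$ and $\Ex[S_1^2]\le O(1)\cdot o(1)^{1/2}=o(1)$, giving $S_1\Lcon0$.

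I expect the step $\var(M_n)\to0$ to be the main obstacle. The naive uniform bound $|\cov(J_{h,n}(\ob_1),J_{h,n}(\ob_2))|\le C$ only gives $\var(M_n)=O(1)$, which would leave $\Ex[M_n^4]=O(1)$ and the Cauchy--Schwarz estimate stuck at $O(1)$. The gain of the extra factor $|D_n|^{-1/2}$ comes entirely from the concentration of the Fej\'er factors $c_{h,n}$ near the origin, made quantitative through the sharp covariance expansion of Theorem \ref{thm:cov-exp} together with the kernel bound of Lemma \ref{lemma:Cn}(d); once this is in hand, every remaining estimate is a routine application of the pointwise moment and cumulant bounds already proved.
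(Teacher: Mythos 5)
Your argument is correct, and for $S_2$ it is essentially the paper's (the paper splits $\Ex[S_2^2]$ into $|\Ex[S_2]|^2$ plus $\var(S_2)$ and bounds each via (\ref{eq:ExR2}), (\ref{eq:varR2}) and Lemma \ref{lemma:Cn}; your direct expansion of $\Ex[R_2(\ob_1)R_2(\ob_2)]$ is the same computation). For $S_1$ you take a genuinely different route. The paper bounds $\Ex[S_1]$ and $\var(S_1)$ separately, expanding $\cov(R_1(\ob_1),R_1(\ob_2))$ by indecomposable partitions into three terms $L_1+L_2+L_3$, which requires the \emph{mixed} cumulant bounds $\cum(\widehat{\lambda}_{h,n},K_{h,n}(\ob))$ and $\cum(\widehat{\lambda}_{h,n},K_{h,n}(\ob_1),\widehat{\lambda}_{h,n},K_{h,n}(\ob_2))$ of Lemma \ref{lemma:cumulants}(a),(c). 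You instead decouple the scalar $L=\widehat{\lambda}_{h,n}-\lambda$ from $M_n=\int_D\phi\,K_{h,n}$ by Cauchy--Schwarz, which replaces those mixed bounds by the marginal fourth moments: $|D_n|^2\Ex[L^4]=O(1)$ from Lemma \ref{lemma:cumulants}(b),(d), and $\cum_4(M_n)=O(|D_n|^{-1})$ from Lemma \ref{lemma:k4-bound} (your representation of $M_n$ as a centered linear statistic with $g_n$ uniformly bounded via Lemma \ref{lemma:Cn}(e) is valid, and $K_{h,n}$, hence $M_n$, is indeed real and centered, so the moment--cumulant identity applies). Both proofs then stand or fall on the same crux, namely $\int_{D^2}\phi(\ob_1)\phi(\ob_2)\cov(K_{h,n}(\ob_1),K_{h,n}(\ob_2))\,d\ob_1 d\ob_2=o(1)$ --- this is exactly the paper's bound on $L_1$ (up to the factor $|D_n|\var(\widehat{\lambda}_{h,n})=O(1)$) --- and your derivation of it via Theorem \ref{thm:cov-exp}, the identity $H_{h,2}^{(n)}(\cdot)=(2\pi)^{d/2}H_{h,4}^{1/2}|D_n|^{1/2}c_{h^2,n}(\cdot)$, and Lemma \ref{lemma:Cn}(d),(e) matches the paper's line by line. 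What your route buys is a cleaner bookkeeping (no indecomposable-partition expansion, no need for Lemma \ref{lemma:cumulants}(a),(c)); what it costs is the slightly stronger input $\Ex[M_n^4]$, though that comes for free from Lemma \ref{lemma:k4-bound} under the same Assumption \ref{assum:C} for $\ell=4$. Your closing diagnosis of where the $|D_n|^{-1/2}$ gain comes from is also the correct one.
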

\textit{Proof}.
We first calculate the expectations. By using (\ref{eq:ExR2}) and Lemma \ref{lemma:Cn}(d), we have
\begin{equation}
\begin{aligned}
|\Ex [S_2] | &\leq
|D_n|^{1/2} \int_{D} |\phi(\ob)| |\Ex[R_2(\ob)]| d\ob \\
&\leq C|D_n|^{-1/2} \int_{D} |\phi(\ob)| |c_{h,n}(\ob)|^2 d\ob = O(|D_n|^{-1/2}), ~~n\rightarrow \infty.
\end{aligned}
\label{eq:ES2}
\end{equation}  Similarly, by using (\ref{eq:ExR1}) and Lemma \ref{lemma:Cn}(d),(e), we have
\begin{equation}
\begin{aligned}
|\Ex [S_1] | &\leq
|D_n|^{1/2} \int_{D} |\phi(\ob)| |\Ex[R_1(\ob)]| d\ob \\
&\leq C|D_n|^{-1/2} \int_{D} |\phi(\ob)| |c_{h,n}(\ob)| |c_{h^2,n}(\ob)| d\ob  
+o(1) \int_{D} |\phi(\ob)| |c_{h,n}(\ob)|d\ob  \\
&= o(1), ~~n\rightarrow \infty.
\end{aligned}
\label{eq:ES1}
\end{equation}

Next, we calculate the variances.  By using (\ref{eq:varR2}) and Lemma \ref{lemma:Cn}(d), $\var(S_2)$ is bounded by
\begin{equation}
\begin{aligned} 
\var(S_2) &\leq |D_n| \int_{D^2} |\phi(\ob_1)| |\phi(\ob_2)| |\cov(R_2(\ob_1), R_2(\ob_2))| d\ob_1 d\ob_2 \\
 &\leq C|D_n|^{-1} \left( \int_{D} |\phi(\ob_1)| |c_{h,n}(\ob)|^2 d\ob_1\right)^2 = O(|D_n|^{-1})
, \quad n\rightarrow \infty.
\end{aligned}
\label{eq:varS2}
\end{equation}
To bound $\var(S_1)$, we need more sophisticated calculations.
By using indecomposable partitions, for $\ob_1, \ob_2 \in \R^d$, we have
\begin{eqnarray*}
&&\cov(R_1(\ob_1), R_1(\ob_2)) \\
&&=  \cum( (\widehat{\lambda}_{h,n}-\lambda) K_{h,n}(\ob_1),
(\widehat{\lambda}_{h,n}-\lambda) K_{h,n}(\ob_2)) \\
&&=\var(\widehat{\lambda}_{h,n}) \cum( K_{h,n}(\ob_1), K_{h,n}(\ob_2))
+\cum(\widehat{\lambda}_{h,n}, K_{h,n}(\ob_1)) \cum(\widehat{\lambda}_{h,n}, K_{h,n}(\ob_2))
 \\
&& ~~
+ \cum(\widehat{\lambda}_{h,n}, K_{h,n}(\ob_1), \widehat{\lambda}_{h,n}, K_{h,n}(\ob_2)).
\end{eqnarray*} Thus, we have $\var(S_1) = L_1 + L_2 + L_3$, where
\begin{eqnarray*}
L_1 &=& |D_n| \var(\widehat{\lambda}_{h,n}) \int_{D^2} \phi(\ob_1)\phi(\ob_2) 
\cum(K_{h,n}(\ob_1), K_{h,n}(\ob_2)) d\ob_1d\ob_2,\\
L_2 &=& |D_n| \left( \int_{D} \phi(\ob)  \cum(\widehat{\lambda}_{h,n}, K_{h,n}(\ob)) d\ob \right)^2,~~ \text{and} \\
L_3 &=& |D_n| \int_{D^2} \phi(\ob_1)\phi(\ob_2) 
\cum(\widehat{\lambda}_{h,n}, K_{h,n}(\ob_1), \widehat{\lambda}_{h,n}, K_{h,n}(\ob_2)) d\ob_1d\ob_2.
\end{eqnarray*}

We will bound each term above. First, since $L_2 = \Ex[S_1]^2$, we have
\begin{equation} \label{eq:L2bound}
L_2 = |E[S_1]|^2 = o(1),~~n\rightarrow \infty.
\end{equation}
By using Lemmas \ref{lemma:Cn}(e) and \ref{lemma:cumulants}(c), $L_3$ is bounded by
\begin{equation}
L_3 \leq C |D_n|^{-1} \left( \int_{D} |\phi(\ob)| |c_{h,n}(\ob)| d\ob \right)^2 
= O(|D_n|^{-1}), \quad n\rightarrow \infty.
\label{eq:L3bound}
\end{equation}
To bound $L_1$, we only focus on the $c_{h,n}(\ob_1) c_{h,n}(\ob_2)\cum(J_{h,n}(-\ob_1), J_{h,n}(-\ob_2))$ term in the expansion of 
$\cum(K_{h,n}(\ob_1), K_{h,n}(\ob_2))$ and other three terms are treated similarly. By using Lemma \ref{lemma:cumulants}(b)
and Theorem \ref{thm:cov-exp}, (a part of) $L_1$ is bounded by
\begin{equation*}
C |D_n|^{-1} \int_{D^2} \phi(\ob_1)\phi(\ob_2) c_{h,n}(\ob_1) c_{h,n}(\ob_2) 
\left( H_{h,2}^{(n)}(-\ob_1+\ob_2) + o(1) \right)
d\ob_1 d\ob_2, \quad n \rightarrow \infty,
\end{equation*} where $o(1)$ error is uniform over $\ob_1, \ob_2 \in \R^d$.
By using Lemma \ref{lemma:Cn}(e), the second term above is $o(|D_n|^{-1})$ as $n\rightarrow \infty$. Moreover, the first term is bounded by
\begin{eqnarray*}
&& C |D_n|^{-1/2} \int_{D^2} |\phi(\ob_1)\phi(\ob_2) c_{h,n}(\ob_1) c_{h,n}(\ob_2) | |c_{h^2,n}(-\ob_1 + \ob_2)| d\ob_1 d\ob_2 \\
&& \quad  = C |D_n|^{-1/2} \int_{D} d \ob_1 |\phi(\ob_1) c_{h,n}(\ob_1)| \int_{D} |\phi(\ob_2)c_{h,n}(\ob_2) c_{h^2,n}(-\ob_1 + \ob_2)| d\ob_2 \\
&& \quad  \leq C |D_n|^{-1/2} \int_{D} d \ob_1 |\phi(\ob_1) c_{h,n}(\ob_1)| = O(|D_n|^{-1/2}), \quad n\rightarrow \infty.
\end{eqnarray*}
Here, the inequality is due to Lemma \ref{lemma:Cn}(d) and the second identity is due to Lemma \ref{lemma:Cn}(e).
All together, we conclude that
\begin{equation} \label{eq:L1bound}
L_1 = o(1), \qquad n\rightarrow \infty.
\end{equation} Combining (\ref{eq:L2bound}), (\ref{eq:L3bound}), and (\ref{eq:L1bound}), we conclude
\begin{equation}\label{eq:varS1}
\var(S_1) \leq L_1 +L_2+L_3 = o(1), \qquad n\rightarrow \infty.
\end{equation}

Combining (\ref{eq:ES1}) and (\ref{eq:varS1}), we have $S_1 \Lcon 0$ as $n \rightarrow \infty$ and 
 (\ref{eq:ES2}) and (\ref{eq:varS2}) yield $S_2 \Lcon 0$ as $n \rightarrow \infty$. Thus, we get the desired results.
\hfill $\Box$

\vspace{0.5em}

Lastly, recall the theoretical counterpart of the kernel spectral density estimator $\widetilde{f}_{n,b}(\ob)$ in (\ref{eq:KSD-t}).
As a consequence of the above theorem, we obtain the probabilistic bound for the difference $\widetilde{f}_{n,b}(\ob)-\widehat{f}_{n,b}(\ob)$.

\begin{corollary} \label{coro:KDE-asym}
Suppose that Assumptions \ref{assum:A}, \ref{assum:C} (for $\ell=4$), and \ref{assum:E}(i) hold. 
Moreover, the bandwidth $b = b(n)$ is such that $\lim_{n\rightarrow \infty} b(n) = 0$, then
\begin{equation} \label{eq:KSD-equiv2}
\sqrt{|D_n| b^d}(\widetilde{f}_{n,b}(\ob)  - \widehat{f}_{n,b}(\ob)) \Lcon 0, \quad n \rightarrow \infty.
\end{equation}
\end{corollary}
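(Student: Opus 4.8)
The plan is to reduce the claim to the $L_2$-negligibility estimates for the integrated periodogram already carried out in the proof of Theorem~\ref{thm:S12}. Recalling the definitions (\ref{eq:KSD}) and (\ref{eq:KSD-t}) and the expansion $\widehat I_{h,n}-I_{h,n}=R_1+R_2$ from (\ref{eq:In-diff}), I would introduce the localized, bandwidth-dependent test function $\phi_b(\cdot)=W_b(\ob-\cdot)$ and write
\[
\sqrt{|D_n|b^d}\,\big(\widehat f_{n,b}(\ob)-\widetilde f_{n,b}(\ob)\big)=\widetilde S_1+\widetilde S_2,\qquad \widetilde S_i=\sqrt{|D_n|b^d}\int_{\R^d}\phi_b(\xb)\,R_i(\xb)\,d\xb,\ i\in\{1,2\}.
\]
Since $L_2$-convergence is unaffected by the sign, it suffices to show $\widetilde S_1,\widetilde S_2\Lcon 0$. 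These are exactly the quantities $S_1,S_2$ of Theorem~\ref{thm:S12}, with two changes: the fixed bounded test function $\phi$ is replaced by $\phi_b$, and the normalization $|D_n|^{1/2}$ is replaced by $\sqrt{|D_n|b^d}$. My strategy is to re-run the mean-and-variance bounds of that proof with the substitution $\phi\mapsto\phi_b$, $|D_n|^{1/2}\mapsto\sqrt{|D_n|b^d}$, carefully tracking the extra powers of $b$ produced by $\phi_b$.

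The term $\widetilde S_2$ is the simpler one: since $R_2(\xb)=(\widehat\lambda_{h,n}-\lambda)^2|c_{h,n}(\xb)|^2$ splits into a scalar random variable times a deterministic function, $\widetilde S_2=\sqrt{|D_n|b^d}\,(\widehat\lambda_{h,n}-\lambda)^2\,(W_b*F_{h,n})(\ob)$ with $F_{h,n}=|c_{h,n}|^2$ the Fej\'er kernel of (\ref{eq:fejer}). Hence $\Ex[\widetilde S_2^2]=|D_n|b^d\,[(W_b*F_{h,n})(\ob)]^2\,\Ex[(\widehat\lambda_{h,n}-\lambda)^4]$; Lemma~\ref{lemma:cumulants}(b),(e) bound the fourth moment by $C|D_n|^{-2}$, and Lemma~\ref{lemma:Cn}(a) together with $\|W_b\|_\infty=Cb^{-d}$ bounds the convolution by $(W_b*F_{h,n})(\ob)\le Cb^{-d}$. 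This yields $\Ex[\widetilde S_2^2]\le C\,|D_n|^{-1}b^{-d}$, exhibiting the governing ratio $|D_n|^{-1}b^{-d}$ that reappears throughout.

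For $\widetilde S_1=-\sqrt{|D_n|b^d}\,(\widehat\lambda_{h,n}-\lambda)\int\phi_b K_{h,n}$, with $K_{h,n}$ as in (\ref{eq:mathcalK}), I would follow the $S_1$ argument of Theorem~\ref{thm:S12}. For the mean, $\Ex[\widetilde S_1]=-\sqrt{|D_n|b^d}\int\phi_b(\xb)\cum(\widehat\lambda_{h,n},K_{h,n}(\xb))\,d\xb$, and the bound (\ref{eq:ExR1}) of Theorem~\ref{thm:In-moments} (equivalently Lemma~\ref{lemma:cumulants}(a)) reduces this to integrals of $\phi_b|c_{h,n}||c_{h^2,n}|$ and of $\phi_b|c_{h,n}|$, controlled via Lemma~\ref{lemma:Cn}(d),(e). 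For the variance I would use the indecomposable-partition expansion of $\cov(R_1(\ob_1),R_1(\ob_2))$ into the three pieces $L_1,L_2,L_3$ exactly as in the derivation of (\ref{eq:varS1}): $L_2=\Ex[\widetilde S_1]^2$, $L_3$ is handled by the fourth-order bound Lemma~\ref{lemma:cumulants}(c), and the dominant piece $L_1$ requires the covariance expansion of Theorem~\ref{thm:cov-exp} and Lemma~\ref{lemma:cumulants}(b), producing a double integral $\iint\phi_b(\ob_1)\phi_b(\ob_2)c_{h,n}(\ob_1)c_{h,n}(\ob_2)H_{h,2}^{(n)}(\ob_2-\ob_1)\,d\ob_1 d\ob_2$ to be estimated through Lemma~\ref{lemma:Cn}(d),(e).

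The main obstacle is the bookkeeping of the $b$-powers, which is the one genuinely new feature relative to Theorem~\ref{thm:S12}. Unlike the fixed bounded $\phi$ there, $\phi_b=W_b(\ob-\cdot)$ satisfies $\sup\phi_b=Cb^{-d}$, $\int\phi_b=1$, and $(\int\phi_b^2)^{1/2}=Cb^{-d/2}$, so the clean $O(1)$ estimates of Lemma~\ref{lemma:Cn}(d),(e) acquire factors $b^{-d}$ and $b^{-d/2}$. Tracking these through the two displays above, the delta/Poisson-type contributions (those carrying a factor $|c_{h^2,n}|$, coming from the $\lambda\delta$ part of $C$ in (\ref{eq:cov})) combine with the prefactor $\sqrt{|D_n|b^d}$ and the $|D_n|^{-1}$ from the moment bounds into terms of order $(|D_n|^{-1}b^{-d})^{1/2}$, while the remaining $\gamma_{2,\text{red}}$-type contributions carry an extra $o(1)$ factor. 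The crux is therefore to verify that the $\sqrt{|D_n|b^d}$ scaling exactly absorbs the concentration of $\phi_b$ so that every surviving term is $o(1)$; this is where the growth of the bandwidth enters, and in the regime $|D_n|^{-1}b^{-d}\to0$ under which the corollary is invoked in Theorems~\ref{thm:KSD} and~\ref{thm:KSD-2} the $(|D_n|^{-1}b^{-d})^{1/2}$ terms also vanish. Assembling the mean and variance bounds then gives $\widetilde S_1\Lcon 0$, which together with $\widetilde S_2\Lcon 0$ completes the proof.
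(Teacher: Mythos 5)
Your proposal is correct and follows essentially the same route as the paper's proof: the paper likewise sets $\phi_b(\cdot)=W_b(\ob-\cdot)$, writes the difference as $Q_1+Q_2$ with $Q_i=\sqrt{|D_n|b^d}\int\phi_b R_i$, and re-runs the $S_1$/$S_2$ moment bounds of Theorem \ref{thm:S12} (via Theorem \ref{thm:In-moments} and Lemma \ref{lemma:cumulants}) using $\int\phi_b=1$, $\int\phi_b^2=Cb^{-d}$ and the modified Lemma \ref{lemma:Cn}(d),(e) estimates, exactly as you describe. The one point worth flagging is that your careful bookkeeping leaves residual terms of order $(|D_n|b^d)^{-1/2}$, so your argument needs $|D_n|b^d\to\infty$ on top of the corollary's stated hypothesis $b\to 0$; that extra condition holds wherever the corollary is actually invoked (Theorems \ref{thm:KSD} and \ref{thm:KSD-2}), and the paper's own displayed rates (e.g.\ $\Ex[Q_2]=O(|D_n|^{-1/2}b^{d/2})$) appear to rest on treating quantities like $\int\phi_b|c_{h,n}|^2$ as $O(1)$ rather than $O(b^{-d})$, so your more conservative accounting is, if anything, the safer reading.
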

\textit{Proof}. Let $\phi_b(\xb) = W_b(\ob-\xb) = b^{-d} W(b^{-1}(\ob-\xb))$, $\xb \in \R^d$. Then,
$\sqrt{|D_n|b^d} (\widehat{f}_{n,b}(\ob) - \widetilde{f}_{n,b}(\ob))= Q_1 + Q_2$, where 
\begin{equation*}
Q_i = \sqrt{|D_n| b^d} \int_{\R^d} \phi_b(\xb) R_i(\xb) d\xb, \quad i \in \{1,2\}.
\end{equation*} By simple calculation, we have
\begin{equation*}
\int_{\R^d} \phi_b(\xb) d\xb = \int_{\R^d} W(\xb) d\xb = 1 \quad \text{and} \quad
\int_{\R^d} \phi_b(\xb)^2 d\xb = b^{-d} \int_{\R^d} W(\xb)^2 d\xb = C b^{-d}. 
\end{equation*} Therefore, by using similar techniques to bound the first- and second-order moments of $S_2$ in Theorem \ref{thm:S12}, we have
\begin{eqnarray*}
\Ex[Q_2] =O(|D_n|^{-1/2}b^{d/2})=o(1) \quad \text{and} \quad 
\var(Q_2) = O(|D_n|^{-1}b^{d}) = o(1),~~n\rightarrow \infty.
\end{eqnarray*} To bound the expectation of $Q_1$, by using a similar argument as in (\ref{eq:ES1}), we have
\begin{eqnarray*}
\Ex[Q_1] &\leq& C |D_n|^{1/2} b^{d/2} \left( |D_n|^{-1}O(1)+ |D_n|^{-1/2}o(1) \int_{\R^d} |\phi_b(\xb)| |c_{h,n}(\ob)|d\xb \right) \\
&\leq& C |D_n|^{-1/2}b^{d/2}+ o(1) = o(1), \quad n \rightarrow \infty.
\end{eqnarray*} Here, we use a modification of Lemma \ref{lemma:Cn}(e) 
\begin{equation*}
b^{d/2}\int_{\R^d} |\phi_b(\xb)| |c_{h,n}(\ob)|d\xb \leq C \left( b^{d} \int \phi_b(\xb)^{2} d\xb \right)^{1/2} = O(1), \quad n \rightarrow \infty
\end{equation*}
in the second inequality above. Similarly, by using an expansion of $\var(S_1)$ in the proof of Theorem \ref{thm:S12}, one can easily seen that $\var(Q_1) = o(1)$ as $n\rightarrow \infty$. Therefore, both $Q_1$ and $Q_2$ converges to zero in $L_2$ as $n \rightarrow \infty$. Thus, we prove the theorem.
\hfill $\Box$

\section{Verification of the $\alpha$-mixing CLT for the integrated periodogram} \label{sec:mixing}

In this section, we will provide greater details the CLT for $\widetilde{G}_{h,n}(\phi)$ defined as in (\ref{eq:Gtilde}).
With loss of generality, we assume that $|D_n|$ grows proportional to $n^{d}$ as $n \rightarrow \infty$. Thus, $A_1, \cdots, A_d$ increases with order $n$. Next, let $\beta, \gamma \in (0,1)$ be chosen such that $2d /\varepsilon<\beta <\gamma < 1$, where $\varepsilon>2d$ is from Assumption \ref{assum:D}(ii).. For $n \in \N$, let
\begin{equation*}
A_n = \{\kb: \kb \in n^{\gamma}\Z^d~~\text{and}~~ D_n^{(\kb)} = \kb + [-(n^{\gamma}-n^{\beta})/2, (n^{\gamma}-n^{\beta})/2]^d \subset D_n\}.
\end{equation*}  Thus, $\widetilde{D}_n = \bigcup_{\kb \in A_n}D_n^{(\kb)}$ is a disjoint union that is included in $D_n$. Let $k_n = |A_n|$ and let $E_n = D_n \backslash \widetilde{D}_n$. Then, it can be easily seen that
\begin{equation} \label{eq:sub-block-vol}
\lim_{n\rightarrow \infty} \frac{|\widetilde{D}_n|}{|D_n|}
=\lim_{n\rightarrow \infty} \left( 1- \frac{|E_n|}{|D_n|}\right) = 1.
\end{equation}

\subsection{Linearization of the integrated periodogram}  \label{sec:mixing11}

Now, decompose the DFT using sub-blocks. Let
\begin{eqnarray*}
\mathcal{J}_{h,n}^{1}(\ob) &=& 
(2\pi)^{-d/2} H_{h,2}^{-1/2}  |\widetilde{D}_n|^{-1/2} \sum_{\xb \in X \cap \widetilde{D}_n}
h(\xb / \aB) \exp(-i \xb^\top \ob) \\
\text{and} \quad 
\mathcal{J}_{h,n}^{2}(\ob) &=& 
(2\pi)^{-d/2} H_{h,2}^{-1/2}  |E_n|^{-1/2} \sum_{\xb \in X \cap E_n}
h(\xb / \aB) \exp(-i \xb^\top \ob).
\end{eqnarray*} 
Let $J_{h,n}^{i}(\ob) = \mathcal{J}_{h,n}^{i}(\ob) - \Ex[\mathcal{J}_{h,n}^{i}(\ob)]$, $i \in \{1,2\}$, be the centered DFTs.
Then, since $D_n$ is a disjoint union of $\widetilde{D}_n$ and $E_n$, we have
\begin{equation} \label{eq:Job-12}
J_{h,n}(\ob) = \frac{|\widetilde{D}_n|^{1/2}}{|D_n|^{1/2}} J_{h,n}^{1}(\ob) + 
 \frac{|E_n|^{1/2}}{|D_n|^{1/2}} J_{h,n}^{2}(\ob), \quad n\in \N.
\end{equation} Furthermore, since $\widetilde{D}_n = \bigcup_{\kb \in A_n}D_n^{(\kb)}$ is a disjoint union and $|\widetilde{D}_n| = k_n |D_n^{(\kb)}|$.
 $J_{h,n}^{1}(\ob)$ can be written as
\begin{equation} \label{eq:J1-decomp} 
J_{h,n}^{1}(\ob) = k_n^{-1/2} \sum_{\kb \in A_n} J_{h,n}^{(\kb)}(\ob)
= k_n^{-1/2} \sum_{\kb \in A_n} \left( \mathcal{J}_{h,n}^{(\kb)}(\ob) - \Ex[\mathcal{J}_{h,n}^{(\kb)}(\ob)] \right),
\end{equation}
where
\begin{equation} \label{eq:J-kb}
\mathcal{J}_{h,n}^{(\kb)}(\ob) = 
(2\pi)^{-d/2} H_{h,2}^{-1/2} | D_n^{(\kb)}|^{-1/2} \sum_{\xb \in X \cap D_n^{(\kb)}}
h(\xb / \aB) \exp(-i \xb^\top \ob), \quad \kb \in A_n.
\end{equation} 

Recall $\widetilde{G}_{h,n}(\phi)$ and $V_{h,n}(\phi)$ from (\ref{eq:Gtilde}) and (\ref{eq:Vhn}), respectively. Our aim in this section is to show that $\widetilde{G}_{h,n}(\phi)-V_{h,n}(\phi)$ is asymptotically negligible. To do so, we introduce an intermediate statistic. For $n\in \N$, let
\begin{equation} \label{eq:Shn}
S_{h,n}(\phi) = |\widetilde{D}_n|^{1/2} \int_{D} \phi(\ob) \left( 
|J^{1}_{h,n}(\ob)|^2 - \Ex[|J^{1}_{h,n}(\ob)|^2]
\right) d\ob.
\end{equation} In the next two theorems, we will show that $\widetilde{G}_{h,n}(\phi) - S_{h,n}(\phi)$ and $S_{h,n}(\phi) - V_{h,n}(\phi)$ are asymptotically negligible.

\begin{theorem} \label{thm:Shn}
Suppose that Assumptions \ref{assum:A}, \ref{assum:C} (for $\ell=4$), and \ref{assum:B} hold. Suppose further the data taper $h$ is either constant on $[-1/2,1/2]^d$ or satisfies Assumption \ref{assum:E}(for $m=d+1$). Then,
\begin{equation*}
\widetilde{G}_{h,n}(\phi) - S_{h,n}(\phi) = o_p(1), \quad n \rightarrow \infty.
\end{equation*}
\end{theorem}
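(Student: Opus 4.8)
The plan is to use the orthogonal-domain decomposition (\ref{eq:Job-12}) to peel off the contribution of the edge region $E_n = D_n\setminus\widetilde{D}_n$, whose relative volume is negligible by (\ref{eq:sub-block-vol}). Set $a_n = |\widetilde{D}_n|^{1/2}|D_n|^{-1/2}$ and $b_n = |E_n|^{1/2}|D_n|^{-1/2}$, so that $a_n^2 + b_n^2 = 1$. Then (\ref{eq:Job-12}) reads $J_{h,n}(\ob) = a_n J_{h,n}^{1}(\ob) + b_n J_{h,n}^{2}(\ob)$, and squaring gives
\[
I_{h,n}(\ob) = a_n^2 |J_{h,n}^{1}(\ob)|^2 + b_n^2 |J_{h,n}^{2}(\ob)|^2 + 2 a_n b_n \Re\big( J_{h,n}^{1}(\ob) \overline{J_{h,n}^{2}(\ob)} \big).
\]
Inserting this into the definition (\ref{eq:Gtilde}) of $\widetilde{G}_{h,n}(\phi)$ and using $|\widetilde{D}_n|^{1/2} = a_n |D_n|^{1/2}$ to recognise the first summand as $a_n S_{h,n}(\phi)$ (recall (\ref{eq:Shn})), I would obtain the exact identity
\[
\widetilde{G}_{h,n}(\phi) - S_{h,n}(\phi) = (a_n - 1) S_{h,n}(\phi) + T_n^{\mathrm{cr}} + T_n^{E},
\]
where $T_n^{\mathrm{cr}} = 2 a_n b_n |D_n|^{1/2} \int_D \phi(\ob)\{\Re(J_{h,n}^{1}\overline{J_{h,n}^{2}}) - \Ex[\,\cdot\,]\} d\ob$ and $T_n^{E} = b_n^2 |D_n|^{1/2} \int_D \phi(\ob)\{|J_{h,n}^{2}(\ob)|^2 - \Ex[\,\cdot\,]\} d\ob$. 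It then remains to show each of the three terms is $o_p(1)$.

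For the first term, (\ref{eq:sub-block-vol}) gives $|E_n|/|D_n| = O(n^{\beta-\gamma}) \to 0$, hence $a_n - 1 = \sqrt{1 - b_n^2} - 1 = O(|E_n|/|D_n|) = o(1)$ deterministically. Thus it suffices to check that $S_{h,n}(\phi) = O_p(1)$, which---since $S_{h,n}(\phi)$ is centred---follows from a bound $\var(S_{h,n}(\phi)) = O(1)$. This variance is an integrated-periodogram variance over $\widetilde{D}_n$, and I would estimate it exactly as in Appendix \ref{sec:A1-2}: expand $\cov(|J_{h,n}^{1}(\ob_1)|^2, |J_{h,n}^{1}(\ob_2)|^2)$ through (\ref{eq:I-expansion}) into a sum of products of covariances plus a fourth-order cumulant, control the cumulant term by the domain-agnostic Lemma \ref{lemma:k4-bound}, and control the covariance factors by the Fej\'er-type leading-order estimate of Theorem \ref{thm:cov-exp}. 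Consequently $(a_n-1) S_{h,n}(\phi) = o_p(1)$.

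The edge and cross terms are both centred, so I would again bound their second moments, now exploiting the vanishing prefactors $b_n^2$ and $b_n$. For $T_n^{E}$, writing the inner integrated-periodogram variance over $E_n$ as $O(|E_n|^{-1})$ (same expansion as above, carried out on $E_n$) yields $\var(T_n^{E}) = b_n^4 |D_n|\, O(|E_n|^{-1}) = O(|E_n|/|D_n|) = o(1)$. For $T_n^{\mathrm{cr}}$, the corresponding computation gives $\var(T_n^{\mathrm{cr}}) = 4 a_n^2 b_n^2 |D_n|\, \var\big(\int_D \phi\, \Re(J_{h,n}^{1}\overline{J_{h,n}^{2}})\big)$; expanding this variance into second- and fourth-order cumulants of the pair $(J_{h,n}^{1}, J_{h,n}^{2})$ and using that $\widetilde{D}_n$ and $E_n$ are disjoint together with the $L^1$-summability of the reduced cumulants (Assumption \ref{assum:C}) and Lemma \ref{lemma:k4-bound}, the relevant cross-covariances are concentrated near the common boundaries of the blocks, whose total volume is $o(|D_n|)$; this gives $\var(\int_D \phi\, \Re(J_{h,n}^{1}\overline{J_{h,n}^{2}})) = O(|D_n|^{-1})$ and hence $\var(T_n^{\mathrm{cr}}) = O(b_n^2) = o(1)$.

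The main obstacle is that neither $\widetilde{D}_n$ nor $E_n$ is a rectangle, so the sharp Fej\'er-kernel identities and the bounds for $H_{h,k}^{(n)}$ and $R_{h,g}^{(n)}$ from Appendix \ref{sec:taper-representation}---all established for the rectangular window $D_n$---do not apply verbatim, and the second-moment estimates above must be re-derived over these irregular domains. The genuinely delicate point is the cross term $T_n^{\mathrm{cr}}$: because $\widetilde{D}_n$ and $E_n$ are adjacent rather than asymptotically separated, $J_{h,n}^{1}\overline{J_{h,n}^{2}}$ is not negligible frequency-by-frequency, and it is only the combination of the integrated $O(|D_n|^{-1})$ decay (driven by boundary localisation and cumulant summability) with the vanishing factor $b_n \to 0$ that forces $T_n^{\mathrm{cr}} = o_p(1)$. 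Throughout, the favourable inputs are Lemma \ref{lemma:k4-bound}, which only requires absolute integrability of the reduced cumulants and so is insensitive to the shape of the domain, and the volume ratios (\ref{eq:sub-block-vol}).
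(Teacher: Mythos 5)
Your decomposition is, after regrouping, exactly the one the paper uses: since $|J_{h,n}(\ob)|^2-|\widetilde{J}_{h,n}^{1}(\ob)|^2 = 2a_nb_n\Re\bigl(J_{h,n}^{1}(\ob)\overline{J_{h,n}^{2}(\ob)}\bigr)+b_n^2|J_{h,n}^{2}(\ob)|^2$, your terms $T_n^{\mathrm{cr}}+T_n^{E}$ coincide with the paper's $|D_n|^{1/2}\int_D\phi\,\bigl\{(|J_{h,n}|^2-|\widetilde{J}_{h,n}^{1}|^2)-\Ex[\cdot]\bigr\}d\ob$, and the overall strategy (both statistics are centred, so bound second moments and exploit $|E_n|/|D_n|\to 0$) is the same. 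The issue is that the step you yourself flag as the obstacle --- that $\widetilde{D}_n$ and $E_n$ are not rectangles, so the Fej\'er/$H_{h,k}^{(n)}$/$R_{h,g}^{(n)}$ machinery does not apply and ``the second-moment estimates must be re-derived'' --- is precisely the crux, and your proposal resolves it only by a heuristic (cross-covariances ``concentrated near the common boundaries''). That localisation claim is what needs proving, and as stated it is not a proof.

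The missing idea is the paper's truncated-taper device: write
$\frac{|E_n|^{1/2}}{|D_n|^{1/2}}\mathcal{J}_{h,n}^{2}(\ob)=(2\pi)^{-d/2}H_{h,2}^{-1/2}|D_n|^{-1/2}\sum_{\xb\in X\cap D_n}h_{E_n}(\xb/\aB)e^{-i\xb^\top\ob}$ with $h_{E_n}(\xb/\aB)=h(\xb/\aB)I(\xb\in E_n)$, so that $\widetilde{J}_{h,n}^{2}=b_nJ_{h,n}^{2}$ is a centred DFT over the \emph{original rectangle} $D_n$ with a bounded (discontinuous) taper. With this reinterpretation one never integrates over an irregular domain: the indecomposable-partition expansion of the variance of the cross and edge terms produces second-order factors whose leading term is controlled, via the modification of Theorem~\ref{thm:A1}(ii), by $|D_n|^{-1}\int_{D_n}h(\xb/\aB)^2h_{E_n}(\xb/\aB)^2d\xb\le C|E_n|/|D_n|$, while the fourth-order cumulant contribution is bounded by the domain-insensitive Lemma~\ref{lemma:k4-bound} argument, giving $C|E_n|/|D_n|^2$ pointwise and hence $O(|E_n|/|D_n|)$ after multiplying by $|D_n|$. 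This is where the $o(1)$ actually comes from; no boundary-localisation argument is needed. A second, smaller point: your treatment of $(a_n-1)S_{h,n}(\phi)$ proposes to prove $\var(S_{h,n}(\phi))=O(1)$ by redoing Appendix~\ref{sec:A1-2} over $\widetilde{D}_n$, which runs into the same non-rectangularity problem; the paper avoids this by first showing $\var\bigl(\widetilde{G}_{h,n}(\phi)-a_nS_{h,n}(\phi)\bigr)\to 0$ and then deducing $\var(S_{h,n}(\phi))=O(1)$ from the already-established $\var(\widetilde{G}_{h,n}(\phi))=O(1)$ on the rectangle $D_n$. Incorporating the truncated-taper reinterpretation would close both gaps and make your argument essentially the paper's proof.
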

\textit{Proof}. Since both $\widetilde{G}_{h,n}(\phi)$ and $S_{h,n}(\phi)$ are centered, we will only require to show that \\ $\lim_{n \rightarrow \infty} \var( \widetilde{G}_{h,n}(\phi) - S_{h,n}(\phi)) = 0$. Let $
\widetilde{J}_{h,n}^{1}(\ob) = 
\frac{|\widetilde{D}_n|^{1/2}}{|D_n|^{1/2}} J_{h,n}^{1}(\ob)$ and $
\widetilde{J}_{h,n}^{2}(\ob) = 
\frac{|E_n|^{1/2}}{|D_n|^{1/2}} J_{h,n}^{2}(\ob)$. Then, we have
\begin{eqnarray}
&&\var \left(\widetilde{G}_{h,n}(\phi) - \frac{|\widetilde{D}_n|^{1/2}}{|D_n|^{1/2}} S_{h,n}(\phi)\right)=
|D_n| \var \left( \int_{D} \phi(\ob) \left( |J_{h,n}(\ob)|^2 - |\widetilde{J}_{h,n}^{1}(\ob)|^2\right) d\ob
\right)  \nonumber \\
&&~=
|D_n| \int_{D^2} \phi(\ob_1)\phi(\ob_2) \cov \left( |J_{h,n}(\ob_1)|^2 -|\widetilde{J}_{h,n}^{1}(\ob_1)|^2, 
|J_{h,n}(\ob_2)|^2 -|\widetilde{J}_{h,n}^{1}(\ob_2)|^2 \right) d\ob_1 d\ob_2. \nonumber \\
\label{eq:var-term00}
\end{eqnarray}
From (\ref{eq:Job-12}), we have
\begin{eqnarray*}
|J_{h,n}(\ob)|^2 -|\widetilde{J}_{h,n}^{1}(\ob)|^2 
&=& J_{h,n}(\ob) \left( J_{h,n}(-\ob) - \widetilde{J}_{h,n}^{1}(-\ob) \right)+
\left(J_{h,n}(\ob) - \widetilde{J}_{h,n}^{1}(\ob)\right) \widetilde{J}_{h,n}^{1}(-\ob) \\
&=&J_{h,n}(\ob)\widetilde{J}_{h,n}^{2}(-\ob) + \widetilde{J}_{h,n}^{2}(\ob) \widetilde{J}_{h,n}^{1}(-\ob), \quad \ob \in \R^d.
\end{eqnarray*}
Therefore, the variance term (\ref{eq:var-term00}) can be decomposed into four terms. We will only focus on the term
\begin{equation} \label{eq:var-term11}
|D_n| \int_{D^2} \phi(\ob_1)\phi(\ob_2) \cov \left( J_{h,n}(\ob_1)\widetilde{J}_{h,n}^{2}(-\ob_1), 
J_{h,n}(\ob_2)\widetilde{J}_{h,n}^{2}(-\ob_2) \right) d\ob_1 d\ob_2
\end{equation} and other three terms can be treated similarly. Using indecomposable partition, the above term is $B_1 + B_2 +B_3$, where
\begin{eqnarray*}
B_1 &=& |D_n| \int_{D^2} \phi(\ob_1)\phi(\ob_2) \cum( J_{h,n}(\ob_1), J_{h,n}(-\ob_2)) \cum( \widetilde{J}^2_{h,n}(-\ob_1), \widetilde{J}^2_{h,n}(\ob_2))
d\ob_1 d\ob_2, \\
B_2 &=& |D_n| \int_{D^2} \phi(\ob_1)\phi(\ob_2) \cum( J_{h,n}(\ob_1), \widetilde{J}^2_{h,n}(\ob_2))
\cum( \widetilde{J}^2_{h,n}(-\ob_1), J_{h,n}(-\ob_2))
d\ob_1 d\ob_2, \\
B_3 &=& |D_n| \int_{D^2} \phi(\ob_1)\phi(\ob_2) \cum( J_{h,n}(\ob_1), J_{h,n}(-\ob_2),
\widetilde{J}^2_{h,n}(-\ob_1), \widetilde{J}^2_{h,n}(\ob_2))
d\ob_1 d\ob_2.
\end{eqnarray*}
By using a similar techniques to show Lemma \ref{lemma:k4-bound}, we have
\begin{equation*}
\left| \cum( J_{h,n}(\ob_1), J_{h,n}(-\ob_2),
\widetilde{J}^2_{h,n}(-\ob_1), \widetilde{J}^2_{h,n}(\ob_2)) \right| \leq C\frac{|E_n|}{|D_n|^2}, \quad
\ob_1, \ob_2 \in \R^d.
\end{equation*} Therefore,
\begin{equation*}
|B_3| \leq C \frac{|E_n|}{|D_n|} \left( \int_{D} |\phi(\ob)| \right)^2 = o(1), \quad n \rightarrow \infty.
\end{equation*} Here, the second identity is due to (\ref{eq:sub-block-vol}). Now, let 
\begin{equation*}
h_{E_n}(\xb/\aB) = h(\xb/\aB) I(\xb \in E_n), \quad n \in \N, \quad \xb \in D_n
\end{equation*} be a truncated taper function. Then, we have
\begin{equation*}
\frac{|E_n|^{1/2}}{|D_n|^{1/2}}\mathcal{J}^2_{h,n}(\ob) = (2\pi)^{-d/2} H_{h,2}^{-1/2}  |D_n|^{-1/2} \sum_{\xb \in X \cap D_n}
h_{E_n}(\xb / \aB) \exp(-i \xb^\top \ob), \quad \ob \in \R^d.
\end{equation*} Therefore, $\widetilde{J}^2_{h,n}(\ob)$ can be viewed as a (centered) DFT on $D_n$ with taper function $h_{E_n}$. Thus, from the modification of the results of Theorem \ref{thm:A1}(ii), we have
\begin{equation*}
|B_1| \leq C |D_n|^{-1} \int_{D_n} h(\xb/\aB)^2  h_{E_n}(\xb/\aB)^2 d\xb 
= C  |D_n|^{-1} \int_{D_n} h(\xb/\aB)^4  I(\xb \in E_n) d\xb \leq C^\prime \frac{|E_n|}{|D_n|}.
\end{equation*} Therefore, by using (\ref{eq:sub-block-vol}), we conclude $B_1 = o(1)$ as $n\rightarrow \infty$. Similarly, we can show $B_2 = o(1)$ as $n\rightarrow \infty$. All together, the term in (\ref{eq:var-term11}) limits to zero as $n\rightarrow \infty$. By using similar technique, we can bound other three terms in the decomposition in (\ref{eq:var-term00}). Therefore, we have
\begin{equation*}
\var \left(\widetilde{G}_{h,n}(\phi) - \frac{|\widetilde{D}_n|^{1/2}}{|D_n|^{1/2}} S_{h,n}(\phi)\right)
= o(1), \quad n \rightarrow \infty.
\end{equation*}
Lastly, since $\var(\widetilde{G}_{h,n}(\phi))$ is finite, so does $\var(S_{h,n}(\phi))$. Therefore, by using (\ref{eq:sub-block-vol}), we show $\lim_{n \rightarrow \infty} \var( \widetilde{G}_{h,n}(\phi) - S_{h,n}(\phi)) = 0$. Thus, we get the desired results.
\hfill $\Box$

\vspace{0.5em}

Now, we study the asymptotic equivalence of $S_{h,n}(\phi) - V_{h,n}(\phi)$. Recall (\ref{eq:J1-decomp}) and (\ref{eq:Shn}). We have
\begin{eqnarray*}
S_{h,n}(\phi) &=& |\widetilde{D}_n|^{1/2} \int_{D} \phi(\ob) \left(
|J^{1}_{h,n}(\ob)|^2 - \Ex[|J^{1}_{h,n}(\ob)|^2]
\right) d\ob \\
&=& k_n^{-1/2} \sum_{\jb, \kb \in A_n} |D_n^{(\kb)}|^{1/2} \int_{D} \phi(\ob) \left( 
J^{(\jb)}_{h,n}(\ob) J^{(\kb)}_{h,n}(-\ob) - \Ex[J^{(\jb)}_{h,n}(\ob) J^{(\kb)}_{h,n}(-\ob)]
\right) d\ob \\
&=& V_{h,n}(\phi) + T_{h,n}(\phi),
\end{eqnarray*} where
\begin{equation}\label{eq:Thn}
T_{h,n}(\phi)
= k_n^{-1/2} \sum_{\jb \neq \kb \in A_n} |D_n^{(\kb)}|^{1/2} \int_{D} \phi(\ob) \left(
J^{(\jb)}_{h,n}(\ob) J^{(\kb)}_{h,n}(-\ob) - \Ex[J^{(\jb)}_{h,n}(\ob) J^{(\kb)}_{h,n}(-\ob)]
\right) d\ob.
\end{equation}
To show that $T_{h,n}(\phi)$ is asymptotically negligible, we require the covariance inequality below. For $\jb, \kb \in n^{\gamma}\Z$, let
\begin{equation*}
k(\jb, \kb) = \|\jb - \kb\|_{\infty} - n^{\gamma} + n^{\beta} \in \{n^{\beta}, n^{\beta} + n^{\gamma}, n^{\beta} + 2 n^{\gamma}, \dots \}.
\end{equation*}

\begin{lemma} \label{eq:mixing-cov-ineq}
Suppose that Assumption \ref{assum:C} for $\ell=8$ holds. Let $\jb, \kb, \ellb \in A_n$ be the pairwise disjoint points and let $w_n = |D_n^{(\kb)}|$ be the common volume of the sub-blocks. Then, for $\ob_1, \ob_2 \in \R^d$,
\begin{eqnarray}
&& \left| \cov (J^{(\jb)}_{h,n}(\ob_1) J^{(\jb)}_{h,n}(-\ob_1), J^{(\kb)}_{h,n}(\ob_2) J^{(\ellb)}_{h,n}(-\ob_2)) \right| 
\leq C  \alpha_{w_n, 2w_n} \left(k(\jb, \kb) \wedge k(\jb, \ellb)\right)^{1/2}, \label{eq:Jmixing1} \\
&& \left| \cov (J^{(\jb)}_{h,n}(\ob_1) J^{(\jb)}_{h,n}(-\ob_1), J^{(\jb)}_{h,n}(\ob_2) J^{(\kb)}_{h,n}(-\ob_2)) \right|  
 \leq C \alpha_{w_n, w_n} \left( k(\jb, \kb)\right)^{1/2},
\label{eq:Jmixing2}
\end{eqnarray}
where $\alpha_{p,q} (\cdot)$ is the $\alpha$-mixing coefficient defined as in (\ref{eq:alpha}).
\end{lemma}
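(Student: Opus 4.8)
The goal is to bound covariances of products of DFTs evaluated on \emph{non-overlapping} sub-blocks by the $\alpha$-mixing coefficient, with the key feature being the square-root on the right-hand side. The plan is to exploit a standard covariance inequality for $\alpha$-mixing random variables, but with a twist: the quantities $J^{(\jb)}_{h,n}(\ob_1) J^{(\jb)}_{h,n}(-\ob_1)$ and their counterparts are functions of the point process restricted to disjoint sub-blocks $D_n^{(\jb)}, D_n^{(\kb)}, D_n^{(\ellb)}$, whose pairwise $\ell_\infty$-distance is controlled by $k(\jb,\kb)$, etc. The first step is therefore to identify the minimal geometric separation between the two ''groups'' of sub-blocks that feed into each factor of the covariance, so that the appropriate $\alpha$-mixing coefficient $\alpha_{w_n, 2w_n}(\cdot)$ or $\alpha_{w_n, w_n}(\cdot)$ with the correct volumes and gap can be invoked.

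The main technical step is the covariance inequality itself. For bounded random variables the classical Davydov/Rio-type bound gives $|\cov(U,V)| \leq C \alpha^{1 - 1/r - 1/s} \|U\|_r \|V\|_s$; the square-root exponent $1/2$ in \eqref{eq:Jmixing1}--\eqref{eq:Jmixing2} suggests taking $r=s=4$, which yields exponent $1 - 1/4 - 1/4 = 1/2$. Consequently I would first establish that the fourth moments (or rather the relevant $L_4$ norms) of the factors $J^{(\jb)}_{h,n}(\ob_1) J^{(\jb)}_{h,n}(-\ob_1) = |J^{(\jb)}_{h,n}(\ob_1)|^2$ and $J^{(\kb)}_{h,n}(\ob_2) J^{(\ellb)}_{h,n}(-\ob_2)$ are bounded uniformly in $n$, $\ob_1$, $\ob_2$, and the block indices. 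This is where Lemma \ref{lemma:k4-bound} enters: since each $J^{(\kb)}_{h,n}(\ob)$ is a normalized sum of bounded tapered exponentials over a sub-block of volume $w_n$, its cumulants up to order eight are controlled under Assumption \ref{assum:C} for $\ell=8$ (hence the hypothesis), giving $\sup_n \Ex|J^{(\kb)}_{h,n}(\ob)|^{8} < \infty$ and, via Cauchy--Schwarz, the requisite $L_4$ bounds on the products. The role of the $\ell=8$ assumption is precisely to guarantee finiteness of the eighth-order cumulant so that all four factors appearing inside the two products have finite fourth moments.

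The remaining step is bookkeeping: apply the $\alpha$-mixing covariance inequality with $U$ the first product (measurable with respect to $\mathcal{F}(D_n^{(\jb)})$, volume $w_n$) and $V$ the second product (measurable with respect to $\mathcal{F}(D_n^{(\kb)} \cup D_n^{(\ellb)})$, volume $2w_n$ in \eqref{eq:Jmixing1}, volume $w_n$ after grouping in \eqref{eq:Jmixing2}). The $\ell_\infty$-separation between these two $\sigma$-fields is at least $k(\jb,\kb) \wedge k(\jb,\ellb)$ by the definition of $k(\cdot,\cdot)$ and the construction of the sub-blocks $D_n^{(\kb)} = \kb + [-(n^\gamma - n^\beta)/2, (n^\gamma - n^\beta)/2]^d$. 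Plugging the uniform $L_4$ bound and the distance into the inequality yields the stated bound. The main obstacle I anticipate is not the moment estimate but verifying that the covariance inequality applies cleanly to \emph{complex-valued} and \emph{centered but not block-local} quantities: $J^{(\jb)}_{h,n}$ is already centered by its own block expectation, so one must confirm that $|J^{(\jb)}_{h,n}(\ob)|^2 - \Ex[\cdots]$ (or the raw product) is genuinely $\mathcal{F}(D_n^{(\jb)})$-measurable and that splitting real and imaginary parts does not spoil the constant. A careful statement of the bounded-moment covariance inequality for complex random variables, together with the observation that the fixed taper $h$ and the normalization $w_n^{-1/2}$ make the $L_4$ norms scale-free in $n$, resolves this and completes the proof.
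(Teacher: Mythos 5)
Your treatment of \eqref{eq:Jmixing1} is essentially the paper's proof: the two factors are measurable with respect to $\mathcal{F}(D_n^{(\jb)})$ and $\mathcal{F}(D_n^{(\kb)}\cup D_n^{(\ellb)})$, which are separated by $k(\jb,\kb)\wedge k(\jb,\ellb)$, and the Davydov/Doukhan covariance inequality with exponents $r=s=4$ gives the factor $\alpha_{w_n,2w_n}(\cdot)^{1/2}$; the $L_4$ norms of the products of two DFTs reduce to eighth moments of single-block DFTs, which Lemma \ref{lemma:k4-bound} controls under Assumption \ref{assum:C} for $\ell=8$. That part is fine.

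The gap is in \eqref{eq:Jmixing2}. There the two factors $J^{(\jb)}_{h,n}(\ob_1)J^{(\jb)}_{h,n}(-\ob_1)$ and $J^{(\jb)}_{h,n}(\ob_2)J^{(\kb)}_{h,n}(-\ob_2)$ \emph{share} the sub-block $D_n^{(\jb)}$, so they are not functions of the process on two separated regions and the mixing covariance inequality cannot be applied to them as they stand; your phrase ``volume $w_n$ after grouping'' gestures at regrouping the three $\jb$-factors on one side, but $\cov(U,V)$ is not invariant under such regrouping. The paper uses the identity, for centered $X,Y,Z,W$, $\cov(XY,ZW)=\cov(XY\overline{Z},W)-\cov(X,\overline Y)\,\overline{\cov(Z,\overline W)}$: the first term pairs the $\mathcal{F}(D_n^{(\jb)})$-measurable triple product $J^{(\jb)}(\ob_1)J^{(\jb)}(-\ob_1)J^{(\jb)}(-\ob_2)$ against the single factor $J^{(\kb)}(-\ob_2)$, and the second (correction) term is a product of two second-order covariances, one of which is itself bounded by $\alpha_{w_n,w_n}(k(\jb,\kb))^{1/2}$. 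Moreover, once one factor is a product of \emph{three} DFTs, your choice $r=s=4$ no longer works: $\|J^{(\jb)}(\ob_1)J^{(\jb)}(-\ob_1)J^{(\jb)}(-\ob_2)\|_{\Ex,4}$ involves twelfth-order moments, i.e.\ Assumption \ref{assum:C} for $\ell=12$, exceeding the hypothesis. The paper instead takes the asymmetric H\"older pair $(r,s)=(8/3,8)$, which still yields the exponent $1-3/8-1/8=1/2$ while keeping every moment at order eight (the $L_{8/3}$ norm of the triple product is controlled by eighth moments via a further H\"older step). Without the decoupling identity and the rebalanced exponents, your argument for \eqref{eq:Jmixing2} does not close under the stated assumptions.
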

\textit{Proof}.
We first show (\ref{eq:Jmixing1}). 
For the brevity, we write $J^{(\jb)}_{h,n}(\ob_1) = J^{(\jb)}(\ob_1), \cdots$.
Note that $D_n^{(\jb)}$ and $D_n^{(\kb)} \cup D_n^{(\ellb)}$ are disjoint and $d(D_n^{(\jb)}, D_n^{(\kb)} \cup D_n^{(\ellb)}) = k(\jb, \kb) \wedge k(\jb, \ellb)$. Therefore, by using well-known covariance inequality (cf. \cite{b:dou-94}, Theorem 3(1)), we have
\begin{eqnarray*}
&& \left| \cov (J^{(\jb)}(\ob_1) J^{(\jb)}(-\ob_1) , J^{(\kb)}(\ob_2) J^{(\ellb)}(-\ob_2)) \right|  \\
&& ~~\leq 8  \alpha_{w_n, 2w_n} \left(k(\jb, \kb) \wedge k(\jb, \ellb)\right)^{1/2} \|J^{(\jb)}(\ob_1) J^{(\jb)}(-\ob_1)\|_{\Ex,4}
\|J^{(\kb)}(\ob_2) J^{(\ellb)}(-\ob_2))\|_{\Ex,4},
\end{eqnarray*} where $\|X\|_{\Ex,4} = \{\Ex|X|^4\}^{1/4}$. Then, by using $\|X\|_{\Ex,4}^{4} = \kappa_4(X) + 3 \var(X)^2$ and Lemma \ref{lemma:k4-bound}, we show $\|J^{(\jb)}(\ob) J^{(\jb)}(-\ob)\|_{\Ex,4} = O(1)$ as $n \rightarrow \infty$, provided Assumption \ref{assum:C} for $\ell=8$ holds. Substitute this to above, we show (\ref{eq:Jmixing1}).

To show (\ref{eq:Jmixing2}), we first note that for a centered random variable $X,Y,Z,W$, $\cov(XY,ZW) = \cov(XY\overline{Z},W) - \cov(XY) \overline{\cov(ZW)}$. Apply this identity to the left hand side of (\ref{eq:Jmixing2}), we get
\begin{eqnarray*}
&& \left| \cov (J^{(\jb)}(\ob_1) J^{(\jb)}(-\ob_1), J^{(\jb)}(\ob_2) J^{(\kb)}(-\ob_2)) \right| 
\leq \left| \cov (J^{(\jb)}(\ob_1) J^{(\jb)}(-\ob_1) J^{(\jb)}_{}(-\ob_2), J^{(\kb)}_{}(-\ob_2))\right|
 \nonumber \\
&& +
\left| \cov (J^{(\jb)}(\ob_1) J^{(\jb)}(-\ob_1))\right| \cdot \left|\cov( J^{(\jb)}_{}(-\ob_2), J^{(\kb)}_{}(\ob_2))\right|.
\end{eqnarray*}
By using similar arguments above, the second term above is bounded by $C \alpha_{w_n, w_n} \left(k(\jb, \kb)\right)^{1/2}$. To bound for the first term, we first note that $J^{(\jb)}(\ob_1) J^{(\jb)}(-\ob_1) = I^{(\jb)}(\ob) \in \R$ and by using H\"{o}lder's inequality, we have
\begin{eqnarray*}
\Ex | I^{(\jb)}(\ob_1) J^{(\jb)}(-\ob_2)|^{8/3} &=&
\Ex [ I^{(\jb)}(\ob_1)^{8/3} I^{(\jb)}(\ob_2) |J^{(\jb)}(-\ob_2)|^{2/3}] \\
&\leq& \|  I^{(\jb)}(\ob_2) |J^{(\jb)}(-\ob_2)|^{2/3}\|_{\Ex, 3} \|  I^{(\jb)}(\ob_1)^{8/3}\|_{\Ex, 3/2} \\
&\leq& \| I^{(\jb)}(\ob_2)^4\|_{\Ex,1}^{1/3} \| I^{(\jb)}(\ob_1)^4\|_{\Ex,1}^{2/3} = O(1), \quad n\rightarrow \infty.
\end{eqnarray*}
Therefore, by using covariance inequality and the H\"{o}lder's inequality, we have
\begin{eqnarray*}
&& \left| \cov (J^{(\jb)}(\ob_1) J^{(\jb)}(-\ob_1) J^{(\jb)}_{}(-\ob_2), J^{(\kb)}_{}(-\ob_2))\right| \\
&&~~ \leq 8  \alpha_{w_n, w_n} \left( k(\jb, \kb)\right)^{1/2} \left\| 
I^{(\jb)}(\ob_1) J^{(\jb)}_{}(-\ob_2)
\right\|_{\Ex,8/3} \| J^{(\kb)}_{}(-\ob_2)\|_{\Ex,8} \leq
C\alpha_{w_n, w_n} \left( k(\jb, \kb)\right)^{1/2}.
\end{eqnarray*}
Thus, we show (\ref{eq:Jmixing2}). 
All together, we get the desired results.
\hfill $\Box$

\vspace{0.5em}

Now, we are ready to prove the theorem below.

\begin{theorem} \label{thm:Thn}
Suppose that Assumptions \ref{assum:A}, \ref{assum:C} (for $\ell=8$), \ref{assum:D}(ii), \ref{assum:B}, and \ref{assum:F} hold. Suppose further the data taper $h$ is either constant on $[-1/2,1/2]^d$ or satisfies Assumption \ref{assum:E}(for $m=d+1$). Then,
\begin{equation*}
S_{h,n}(\phi) - V_{h,n}(\phi) = T_{h,n}(\phi) = o_p(1), \quad n \rightarrow \infty.
\end{equation*}
\end{theorem}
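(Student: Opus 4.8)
The statistic $T_{h,n}(\phi)$ is centered by construction, so it suffices to prove $\var(T_{h,n}(\phi)) \to 0$; this gives $T_{h,n}(\phi) \Lcon 0$ and hence $o_p(1)$. Writing $Z^{(\jb,\kb)} = \int_D \phi(\ob)\big(J^{(\jb)}_{h,n}(\ob) J^{(\kb)}_{h,n}(-\ob) - \Ex[J^{(\jb)}_{h,n}(\ob) J^{(\kb)}_{h,n}(-\ob)]\big)d\ob$ and $w_n = |D_n^{(\kb)}|$, the variance expands as
\begin{equation*}
\var(T_{h,n}(\phi)) = \frac{w_n}{k_n}\sum_{\jb \neq \kb}\ \sum_{\jb' \neq \kb'} \cov\big(Z^{(\jb,\kb)}, Z^{(\jb',\kb')}\big),
\end{equation*}
all indices ranging over $A_n$. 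Mirroring the proof of Theorem \ref{thm:A1}(ii), I would first replace $\phi$ by the band-limited $\phi_M$ from (\ref{eq:phiM-FT}), whose Fourier transform $\widehat{\phi}_M$ is supported on the fixed compact set $B_M$; as in \cite{p:mat-09} the replacement error is negligible uniformly in $n$ once $M$ is large, so it is enough to show $\var(T_{h,n}(\phi_M)) \to 0$ for each fixed $M$ and then send $M \to \infty$.

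For each pair of block-pairs I would expand $\cov(Z^{(\jb,\kb)}, Z^{(\jb',\kb')})$ over indecomposable partitions. Since every block DFT is centered, only two ''cross-paired'' products of second-order covariances, namely $\cov(J^{(\jb)}(\ob_1), J^{(\jb')}(\ob_2))\cov(J^{(\kb)}(-\ob_1), J^{(\kb')}(-\ob_2))$ and its transpose, together with the joint fourth-order cumulant $\cum(J^{(\jb)}(\ob_1), J^{(\kb)}(-\ob_1), J^{(\jb')}(\ob_2), J^{(\kb')}(-\ob_2))$, survive. The only contributions that are not a priori small are those in which both covariance factors live within a single block, which forces $\{\jb',\kb'\} = \{\jb,\kb\}$. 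The decisive observation is that for these terms the frequency integration yields a factor $\widehat{\phi}_M(\ubb - \xb)$ with $\xb$ ranging over $D_n^{(\jb)}$ and $\ubb$ over $D_n^{(\kb)}$; because distinct sub-blocks are separated by the gap $n^{\beta} \to \infty$ while $\widehat{\phi}_M$ is supported on the fixed set $B_M$, this factor vanishes identically once $n^{\beta}$ exceeds the diameter of $B_M$. Hence the potentially $O(1)$ diagonal contributions are exactly zero for large $n$, which is precisely what prevents the $\asymp n^{d}$ divergence one would otherwise obtain from the $k_n^2$ terms with $\{\jb',\kb'\}=\{\jb,\kb\}$.

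Every remaining term then contains at least one cross-block covariance $\cov(J^{(\pb)}(\cdot), J^{(\qb)}(\cdot))$ with $\pb \neq \qb$, or a genuine four-block cumulant. Using the covariance inequality of Lemma \ref{eq:mixing-cov-ineq} (and its elementary single-DFT version) together with the bounded-fourth-moment estimate of Lemma \ref{lemma:k4-bound}, each such factor is bounded, uniformly in the frequencies, by $C\,\alpha_{w_n,w_n}(k)^{1/2}$ with $k = k(\pb,\qb)$ the inter-block lattice distance. I would group the quadruple sum by the coincidence pattern of $\{\jb,\kb,\jb',\kb'\}$ (all distinct, one shared block, two shared blocks) and, for each pattern, sum the mixing bounds over the free indices on the $n^{\gamma}$-lattice. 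Invoking Assumption \ref{assum:D}(ii) in the form $\alpha_{w_n,w_n}(k) \le C w_n k^{-d-\varepsilon}$ and combining with the prefactor $w_n/k_n$, the number of configurations, and $w_n \asymp n^{\gamma d}$, $k_n \asymp n^{d(1-\gamma)}$, reduces each pattern's contribution to a power of $n$ whose exponent is governed by the choice $2d/\varepsilon < \beta < \gamma < 1$.

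The main obstacle is the bookkeeping of this last step: I must check that, after the block volumes $w_n$ carried by the mixing bounds are balanced against the normalization $w_n/k_n$ and the configuration count, the resulting exponent is strictly negative for every coincidence pattern. The binding case is a single shared block (three distinct blocks), where the requirement reduces to $d(1 - \gamma/2) < \beta(\varepsilon - d)/2$; taking $\beta \downarrow 2d/\varepsilon$ and $\gamma \uparrow 1$ this is attainable exactly when $\varepsilon > 2d$, which is why that range of $\beta,\gamma$ was fixed, and the all-distinct and doubly-shared patterns are then strictly better. A secondary technical point, which I would settle exactly as in Theorem \ref{thm:A1}(ii), is justifying the uniform-in-$n$ validity of the $\phi_M$ approximation for the bilinear functional $T_{h,n}$, so that the $M\to\infty$ limit may be taken after the $n\to\infty$ limit.
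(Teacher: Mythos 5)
Your route is genuinely different from the paper's. The paper never expands $\var(T_{h,n}(\phi))$ as a quadruple sum over block-pairs; instead it writes $\var(T_{h,n}(\phi)) = \bigl(\var(S_{h,n}(\phi)) - \var(V_{h,n}(\phi))\bigr) - 2\cov(T_{h,n}(\phi), V_{h,n}(\phi))$, kills the first bracket by showing (via Theorems \ref{thm:A1}(ii), \ref{thm:Shn} and \ref{thm:var-Vhn}) that $\var(S_{h,n})$ and $\var(V_{h,n})$ converge to the same limit $(2\pi)^d(H_{h,4}/H_{h,2}^2)(\Omega_1+\Omega_2)$, and then only has to mixing-bound $\cov(T_{h,n},V_{h,n})$ --- a sum in which every term pairs a diagonal block $\widetilde{G}^{(\jb,\jb)}$ with an off-diagonal pair $(\kb,\ellb)$, $\kb\neq\ellb$, so Lemma \ref{eq:mixing-cov-ineq} applies directly to every term. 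This sidesteps entirely the dangerous diagonal configurations $\{\jb',\kb'\}=\{\jb,\kb\}$ that dominate your expansion. Your observation that these diagonal product-of-covariances terms vanish \emph{identically} for large $n$ --- because the frequency integration produces $\widehat{\phi}_M(\ubb-\xb)$ with $\xb\in D_n^{(\jb)}$, $\ubb\in D_n^{(\kb)}$, $\|\ubb-\xb\|_\infty\geq n^{\beta}$, while $\widehat{\phi}_M$ is supported on the fixed set $B_M$ --- is correct and is a genuinely nice alternative to the paper's cancellation-by-common-limit; it is what makes a direct expansion feasible at all.

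Two soft spots would need repair before your argument closes. First, the fourth-order cumulant terms are not controlled by Lemma \ref{lemma:k4-bound} alone: that lemma gives $O(w_n^{-1})$ per quadruple, and summing $k_n^2\cdot k_n^2$ quadruples against the prefactor $w_n/k_n$ yields only $O(1)$ in aggregate. You must exploit the fact that $\jb\neq\kb$ within each pair forces the argument of $\kappa_{4,\text{red}}$ to link two blocks at distance $\geq n^{\beta}$, and then use $\kappa_{4,\text{red}}\in L^1(\R^{3d})$ (Assumption \ref{assum:C} for $\ell=4$) and dominated convergence to upgrade the aggregate to $o(1)$; alternatively, split the four blocks into two well-separated groups and bound the cumulant by mixing coefficients via $\cum(A,B,C,D)=\cov(AB,CD)-\Ex[AC]\Ex[BD]-\Ex[AD]\Ex[BC]$. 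Second, your stated binding condition $d(1-\gamma/2)<\beta(\varepsilon-d)/2$ for the one-shared-block pattern does not match a careful count. With the crude uniform bound $|\cov(J^{(\jb)}(\ob_1),J^{(\jb)}(\ob_2))|\leq C$ on the within-block factor, the one-shared-block aggregate is of order $w_n^{3/2}k_n\,n^{-\beta(d+\varepsilon)/2}\asymp n^{d+\gamma d/2-\beta(d+\varepsilon)/2}$, which requires $\beta>d(2+\gamma)/(d+\varepsilon)$; this is attainable (it is strictly below $\gamma$ precisely when $\gamma>2d/\varepsilon$) but is \emph{not} implied by $\beta>2d/\varepsilon$ alone, so $\beta$ must be taken close to $\gamma$, or the within-block factor must be sharpened to $o(1)$ after integration against $\phi_M\otimes\phi_M$ using Theorem \ref{thm:cov-exp} and Lemma \ref{lemma:Cn}. (The paper's own bound for its triple sum involves the identical exponent $n^{d+\gamma d/2}$ and carries the same implicit requirement on $\beta$, so this is a shared imprecision rather than a defect unique to your route.)
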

\textit{Proof}. By using a similar argument as in the proof of Theorem \ref{thm:Shn}, it is enough to show $\lim_{n\rightarrow \infty} \var(T_{h,n}(\phi)) =0$. Note
\begin{equation}  \label{eq:var-bound00}
 \var(T_{h,n}(\phi)) = \left(  \var(S_{h,n}(\phi)) -  \var(V_{h,n}(\phi)) \right) - 2 \cov (T_{h,n}(\phi), V_{h,n}(\phi)).
\end{equation} We will bound each term above. To bound the first term, by using Theorems \ref{thm:A1}(ii), \ref{thm:Shn}, and \ref{thm:var-Vhn} (below), we have
\begin{equation} \label{eq:var-bound11}
\lim_{n\rightarrow \infty} \var(S_{h,n}(\phi))  = \lim_{n\rightarrow \infty} \var(V_{h,n}(\phi))
= (2\pi)^{d} (H_{h,4} / H_{h,2}^2) ( \Omega_1 + \Omega_2),
\end{equation} where $\Omega_1$ and $\Omega_2$ are defined as in (\ref{eq:Omegas}). Therefore, the first term in (\ref{eq:var-bound00}) is $o(1)$ as $n\rightarrow \infty$.
To bound the second term, we will use a $\alpha$-mixing condition. For $q \in \N$, let $(A_n)^{q,\neq}$ be a set of $q$ disjoints points in $A_n$. Then, 
\begin{eqnarray*}
&& \cov (T_{h,n}(\phi), V_{h,n}(\phi)) = k_n^{-1} \sum_{\jb \in A_n} \sum_{(\kb,\ellb) \in (A_n)^{2,\neq}} \cov (\widetilde{G}_{h,n}^{(\jb,\jb)}(\phi), \widetilde{G}_{h,n}^{(\kb,\ellb)}(\phi)) \\
&& =
k_n^{-1}\sum_{(\jb, \kb,\ellb) \in (A_n)^{3,\neq}} \cov (\widetilde{G}_{h,n}^{(\jb,\jb)}(\phi), \widetilde{G}_{h,n}^{(\kb,\ellb)}(\phi)) 
+ k_n^{-1}\sum_{(\jb, \kb) \in (A_n)^{2,\neq}} \cov (\widetilde{G}_{h,n}^{(\jb,\jb)}(\phi), \widetilde{G}_{h,n}^{(\kb,\jb)}(\phi)) \\
&&~~ +k_n^{-1}\sum_{(\jb, \kb) \in (A_n)^{2,\neq}} \cov (\widetilde{G}_{h,n}^{(\jb,\jb)}(\phi), \widetilde{G}_{h,n}^{(\jb,\kb)}(\phi)),
\end{eqnarray*} where
\begin{equation*} 
\widetilde{G}_{h,n}^{(\kb,\ellb)}(\phi) =
|D_n^{(\kb)}|^{1/2} \int_{D} \phi(\ob) \left( 
J^{(\kb)}_{h,n}(\ob) J^{(\ellb)}_{h,n}(-\ob) - \Ex[J^{(\kb)}_{h,n}(\ob) J^{(\ellb)}_{h,n}(-\ob)]
\right) d\ob, \quad  \kb, \ellb \in A_n.
\end{equation*} We will bound each term in the expansion of $\cov (T_{h,n}(\phi), V_{h,n}(\phi))$. 
By using (\ref{eq:Jmixing1}), the first term is bounded by
\begin{eqnarray*}
&& k_n^{-1} \left| \sum_{(\jb, \kb,\ellb) \in (A_n)^{3,\neq}} \cov (\widetilde{G}_{h,n}^{(\jb,\jb)}(\phi), \widetilde{G}_{h,n}^{(\kb,\ellb)}(\phi)) \right| \\
&&~~\leq  k_n^{-1} w_n \sum_{(\jb, \kb,\ellb) \in (A_n)^{3,\neq}} \int_{D^2} |\phi(\ob_1) \phi(\ob_2)|
\left| \cov (J^{(\jb)}_{h,n}(\ob_1) J^{(\jb)}_{h,n}(-\ob_1), J^{(\kb)}_{h,n}(\ob_2) J^{(\ellb)}_{h,n}(-\ob_2)) \right| d\ob_1 d\ob_2 \\
&&~~\leq C k_n^{-1} w_n \sum_{(\jb, \kb,\ellb) \in (A_n)^{3,\neq}}\alpha_{w_n, 2w_n} \left(k(\jb, \kb) \wedge k(\jb, \ellb)\right)^{1/2}.
\end{eqnarray*} Let $n^{-\gamma} \min (\|\jb-\kb\|_\infty, \|\jb-\ellb\|_\infty) = m \in \{1,2, \dots\}$. Then, for fixed $\jb \in A_n$ and $m \in \N$, the number of disjoint pairs $(\kb, \ellb) \in (A_n)^{2,\neq}$ that satisfies $n^{-\gamma} \min (\|\jb-\kb\|_\infty, \|\jb-\ellb\|_\infty) = m$ is upper bounded by $Ck_n m^{d-1}$ for some constant $C > 0$. Therefore, by using Assumption \ref{assum:D}(ii) the right hand side above is bounded by
\begin{eqnarray*}
&& C_1 k_n^{-1} w_n \sum_{(\jb, \kb,\ellb) \in (A_n)^{3,\neq}}\alpha_{w_n, 2w_n} \left(k(\jb, \kb) \wedge k(\jb, \ellb)\right)^{1/2} \\
&&~~\leq C_2  w_n^{} \sum_{\jb \in A_n} \sum_{m=1}^{\infty} m^{d-1} \cdot  w_n^{1/2} (n^{\gamma}(m-1) + n^{\beta})^{-(d+\varepsilon)/2} \\
&&~~\leq C_3 k_n w_n^{3/2} \left( n^{-\beta(d+\varepsilon)/2} + \sum_{m=1}^{\infty} m^{d-1} (n^{\gamma} m)^{-(d+\varepsilon)/2}
\right) \\
&&~~ \leq C_4 n^{d}n^{\gamma d/2} \left( n^{-\beta (d+\varepsilon)/2} + n^{-\gamma (d+\varepsilon)/2} \right)= o(1), \quad n\rightarrow \infty.
\end{eqnarray*} 
Here, we use Assumption \ref{assum:D}(ii) on the first inequality,
$(m+1)^{d-1} \leq 2 m^{d-1}$ and $n^{\gamma}(m-1)+n^{\beta} > n^{\gamma}(m-1)$ on the second inequality, 
$k_n w_n \leq |D_n| \leq C n^{d}$, $w_n^{1/2} \leq n^{\gamma d/2}$, and $\sum_{m=1}^{\infty} m^{d-1-(d+\varepsilon)/2} < \infty$ on the third inequality,
and $\beta, \gamma > 2d/\varepsilon$ on the identity. Thus, we have
\begin{equation} \label{eq:G-mixing-bound1}
k_n^{-1}\sum_{(\jb, \kb,\ellb) \in (A_n)^{3,\neq}} \cov (\widetilde{G}_{h,n}^{(\jb,\jb)}(\phi), \widetilde{G}_{h,n}^{(\kb,\ellb)}(\phi))
= o(1), \quad n\rightarrow \infty.
\end{equation} 
To bound the second term, we use (\ref{eq:Jmixing2}) and Assumption \ref{assum:D}(ii) and get
\begin{eqnarray*}
&& k_n^{-1} \left| \sum_{(\jb, \kb) \in (A_n)^{2,\neq}} \cov (\widetilde{G}_{h,n}^{(\jb,\jb)}(\phi), \widetilde{G}_{h,n}^{(\kb,\jb)}(\phi))\right|
\leq C_1 k_n^{-1} w_n \sum_{(\jb, \kb) \in (A_n)^{2,\neq}} 
\alpha_{w_n, w_n} \left(k(\jb, \kb)\right)^{1/2} \\
&& \quad \leq C_2 w_n^{3/2} \sum_{m=1}^{\infty} m^{d-1} (n^{\gamma}(m-1) + n^{\beta})^{-(d+\varepsilon)/2}
\leq C_3 n^{3\gamma d / 2} n^{-\gamma(d+\varepsilon)/2} = 
= o(1), \quad n\rightarrow \infty.
\end{eqnarray*} Similarly, the third term is bounded by
\begin{equation*}
k_n^{-1}\sum_{(\jb, \kb) \in (A_n)^{2,\neq}} \cov (\widetilde{G}_{h,n}^{(\jb,\jb)}(\phi), \widetilde{G}_{h,n}^{(\jb,\kb)}(\phi))
= o(1), \quad n\rightarrow \infty.
\end{equation*} Substitute these results into the expansion of $\cov (T_{h,n}(\phi), V_{h,n}(\phi))$, we have
\begin{equation} \label{eq:var-bound22}
\lim_{n\rightarrow \infty} \cov (T_{h,n}(\phi), V_{h,n}(\phi)) = 0.
\end{equation} Substituting (\ref{eq:var-bound11}) and (\ref{eq:var-bound22}) into (\ref{eq:var-bound00}), we show $\lim_{n\rightarrow \infty} \var(T_{h,n}(\phi)) = 0$ as $n \rightarrow \infty$. Thus, we prove the theorem.
\hfill $\Box$

Below theorem is immediately follows from Theorems \ref{thm:Shn} and \ref{thm:Thn}.

\begin{theorem} \label{thm:Vhn}
Suppose that Assumptions \ref{assum:A}, \ref{assum:C} (for $\ell=8$), \ref{assum:D}(ii), \ref{assum:B}, and \ref{assum:F} hold. Suppose further the data taper $h$ is either constant on $[-1/2,1/2]^d$ or satisfies Assumption \ref{assum:E}(for $m=d+1$). Then,
\begin{equation*}
\widetilde{G}_{h,n}(\phi) - V_{h,n}(\phi)  = o_p(1), \quad n \rightarrow \infty.
\end{equation*}
\end{theorem}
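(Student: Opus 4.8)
The plan is to obtain this statement as an immediate corollary of the two preceding theorems, bridged by the intermediate statistic $S_{h,n}(\phi)$ introduced in (\ref{eq:Shn}). First I would record the exact (non-stochastic in its validity) decomposition
\begin{equation*}
\widetilde{G}_{h,n}(\phi) - V_{h,n}(\phi) = \big(\widetilde{G}_{h,n}(\phi) - S_{h,n}(\phi)\big) + \big(S_{h,n}(\phi) - V_{h,n}(\phi)\big).
\end{equation*}
The second bracket equals $T_{h,n}(\phi)$ from (\ref{eq:Thn}) by the very way $S_{h,n}(\phi)$ was split as $S_{h,n}(\phi) = V_{h,n}(\phi) + T_{h,n}(\phi)$ in the display preceding (\ref{eq:Thn}), so no computation is needed to identify the two pieces.

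Next I would verify that the hypotheses of Theorem \ref{thm:Vhn} supply exactly what each of the two pieces requires. Theorem \ref{thm:Shn} asks for Assumptions \ref{assum:A}, \ref{assum:C} (for $\ell=4$), and \ref{assum:B}, together with the taper condition; since the integrability bound (\ref{eq:brillinger-mixing}) in Assumption \ref{assum:C} is imposed for \emph{all} orders up to its index, the hypothesis \ref{assum:C} for $\ell=8$ subsumes the $\ell=4$ version, and the remaining conditions are shared. Hence Theorem \ref{thm:Shn} applies and yields $\widetilde{G}_{h,n}(\phi) - S_{h,n}(\phi) = o_p(1)$. Likewise, Assumptions \ref{assum:A}, \ref{assum:C} (for $\ell=8$), \ref{assum:D}(ii), \ref{assum:B}, \ref{assum:F}, and the taper hypothesis are precisely the standing assumptions of Theorem \ref{thm:Thn}, which gives $S_{h,n}(\phi) - V_{h,n}(\phi) = T_{h,n}(\phi) = o_p(1)$.

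Finally I would combine the two estimates: a sum of two terms each converging to zero in probability converges to zero in probability, so $\widetilde{G}_{h,n}(\phi) - V_{h,n}(\phi) = o_p(1)$ as $n \rightarrow \infty$, completing the proof. There is no genuine obstacle at this stage — all of the analytic effort has already been expended upstream, namely the variance bound in Theorem \ref{thm:Shn} controlling the boundary block $E_n$ via (\ref{eq:sub-block-vol}), and the $\alpha$-mixing covariance inequalities of Lemma \ref{eq:mixing-cov-ineq} that drive Theorem \ref{thm:Thn}. The only point that merits explicit attention is the bookkeeping of assumption compatibility across the three statements, which is why I would spell out the $\ell=8 \Rightarrow \ell=4$ implication for Assumption \ref{assum:C} rather than leave it implicit.
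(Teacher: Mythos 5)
Your proposal is correct and matches the paper exactly: the paper states that Theorem \ref{thm:Vhn} follows immediately from Theorems \ref{thm:Shn} and \ref{thm:Thn}, which is precisely the two-step decomposition through $S_{h,n}(\phi)$ that you carry out. Your additional bookkeeping of the assumption compatibility (in particular that Assumption \ref{assum:C} for $\ell=8$ subsumes the $\ell=4$ version) is a harmless and welcome elaboration of what the paper leaves implicit.
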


\subsection{CLT for $V_{h,n}(\phi)$}  \label{sec:mixing22}

Recall $V_{h,n}(\phi) = k_n^{-1/2} \sum_{\kb \in A_n} \widetilde{G}_{h,n}^{(\kb)}(\phi)$, where
\begin{equation*} 
\widetilde{G}_{h,n}^{(\kb)}(\phi) 
=| D_n^{(\kb)}|^{1/2} \int_{D} \phi(\ob) \left( 
|J_{h,n}^{(\kb)}(\ob)|^2 - \Ex[ |J_{h,n}^{(\kb)}(\ob)|^2]
\right) d\ob, \quad n \in \N, \quad \kb \in A_n.
\end{equation*}

In this section, we prove the CLT for $V_{h,n}(\phi)$. First, we calculate the asymptotic variance of $V_{h,n}(\phi)$.

\begin{theorem} \label{thm:var-Vhn}
Suppose that Assumptions \ref{assum:A}, \ref{assum:C} (for $\ell=4$), \ref{assum:D}(ii), \ref{assum:B}, and \ref{assum:F} hold. 
Suppose further the data taper $h$ is either constant on $[-1/2,1/2]^d$ or satisfies Assumption \ref{assum:E}(for $m=d+1$).
Then,
\begin{equation*}
\lim_{n\rightarrow \infty} \var(V_{h,n}(\phi)) = (2\pi)^{d} (H_{h,4} / H_{h,2}^2) (\Omega_1 + \Omega_2),
\end{equation*} where $\Omega_1$ and $\Omega_2$ are defined as in (\ref{eq:Omegas}).
\end{theorem}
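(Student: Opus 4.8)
The plan is to exploit that $V_{h,n}(\phi)$ is a normalized sum of the block statistics $\{\widetilde{G}_{h,n}^{(\kb)}(\phi)\}_{\kb\in A_n}$ attached to the non-overlapping sub-blocks $D_n^{(\kb)}$, and to split its variance into a diagonal and an off-diagonal contribution,
\begin{equation*}
\var(V_{h,n}(\phi)) = k_n^{-1}\sum_{\kb\in A_n}\var(\widetilde{G}_{h,n}^{(\kb)}(\phi)) + k_n^{-1}\sum_{(\jb,\kb)\in(A_n)^{2,\neq}}\cov(\widetilde{G}_{h,n}^{(\jb)}(\phi),\widetilde{G}_{h,n}^{(\kb)}(\phi)).
\end{equation*}
I expect the diagonal part to carry the entire limit while the off-diagonal part is asymptotically negligible.

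For the diagonal part, each $\widetilde{G}_{h,n}^{(\kb)}(\phi)$ is exactly the centered integrated periodogram built from $J_{h,n}^{(\kb)}$ on the cube $D_n^{(\kb)}$, whose volume $w_n$ tends to infinity. First I would note that, since $h$ is uniformly continuous and the block has side $\sim n^{\gamma}$ with $\gamma<1$ while $\aB\sim n$, the weight $h(\xb/\aB)$ is constant equal to $h(\kb/\aB)$ on $D_n^{(\kb)}$ up to a vanishing error; replacing it by this constant alters $\widetilde{G}_{h,n}^{(\kb)}(\phi)$ only by an $L^2$-negligible amount, controlled exactly as in the asymptotic-equivalence and $R_{h,g}^{(n)}$ estimates. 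With the taper frozen to $h(\kb/\aB)$ the block DFT is the scalar multiple $h(\kb/\aB)H_{h,2}^{-1/2}$ of the non-tapered DFT on $D_n^{(\kb)}$, so applying Theorem \ref{thm:A1}(ii) to the block (constant taper, hence normalizing factor $1$) gives $\var(\widetilde{G}_{h,n}^{(\kb)}(\phi)) = h(\kb/\aB)^4 H_{h,2}^{-2}(2\pi)^d(\Omega_1+\Omega_2)+o(1)$, uniformly in $\kb$. Averaging, $k_n^{-1}\sum_{\kb\in A_n}h(\kb/\aB)^4$ is a Riemann sum for $|D_n|^{-1}\int_{D_n}h(\xb/\aB)^4\,d\xb = H_{h,4}$ over the $n^{\gamma}$-grid tiling $D_n$, so the diagonal contribution converges to $(2\pi)^d(H_{h,4}/H_{h,2}^2)(\Omega_1+\Omega_2)$, precisely the claimed limit.

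For the off-diagonal part I would show $k_n^{-1}\sum_{\jb\neq\kb}|\cov(\widetilde{G}_{h,n}^{(\jb)}(\phi),\widetilde{G}_{h,n}^{(\kb)}(\phi))|=o(1)$. Since $\widetilde{G}_{h,n}^{(\jb)}(\phi)$ and $\widetilde{G}_{h,n}^{(\kb)}(\phi)$ are measurable with respect to $X$ on the disjoint regions $D_n^{(\jb)}$ and $D_n^{(\kb)}$, separated in $\|\cdot\|_{\infty}$ by $k(\jb,\kb)\geq n^{\beta}$, the covariance-of-products inequalities of Lemma \ref{eq:mixing-cov-ineq}, together with the $O(1)$ moment bounds from Lemma \ref{lemma:k4-bound}, bound each term by $C\,w_n\,\alpha_{w_n,w_n}(k(\jb,\kb))^{1/2}$. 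Inserting Assumption \ref{assum:D}(ii), namely $\alpha_{w_n,w_n}(k)\leq C\,w_n\,k^{-d-\varepsilon}$ with $\varepsilon>2d$, and the combinatorial count that the number of $\kb$ with $\|\jb-\kb\|_{\infty}\approx n^{\gamma}m$ is $O(m^{d-1})$, the off-diagonal sum is dominated by $C\,w_n^{3/2}\sum_{m\geq1}m^{d-1}(n^{\gamma}(m-1)+n^{\beta})^{-(d+\varepsilon)/2}$, the very series already estimated in the proof of Theorem \ref{thm:Thn}; the choice $2d/\varepsilon<\beta<\gamma<1$ makes the exponent $\gamma(2d-\varepsilon)/2$ negative, so this is $o(1)$.

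The main obstacle will be the off-diagonal estimate. The difficulty is that each block carries a growing volume $w_n\sim n^{\gamma d}$, which inflates the covariance by a factor $w_n$, so a crude appeal to $L^1$-integrability of the reduced cumulants (which yields decay without a rate) does not suffice; one genuinely needs the polynomial mixing rate of Assumption \ref{assum:D}(ii) to beat the $w_n^{3/2}$ growth, which is exactly why the gap exponent $\beta$ and side exponent $\gamma$ must be tuned against $\varepsilon$. Within this, the cross-block fourth-order cumulant term — which cannot be reduced to products of pair covariances — is the most delicate piece, and is precisely what the bound (\ref{eq:Jmixing1}) in Lemma \ref{eq:mixing-cov-ineq} is designed to control. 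A secondary, bookkeeping-level obstacle is making the taper-freezing step in the diagonal rigorous uniformly over the $k_n$ blocks so that the Riemann-sum limit producing $H_{h,4}$ goes through.
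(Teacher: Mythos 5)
Your proposal is correct and its substance coincides with the paper's for the part that actually produces the limit: the diagonal sum $k_n^{-1}\sum_{\kb}\var(\widetilde{G}_{h,n}^{(\kb)}(\phi))$ is evaluated by applying the block-level version of Theorem \ref{thm:A1}(ii) on each $D_n^{(\kb)}$ and then recognizing the average of the local taper mass as a Riemann sum for $H_{h,4}$ — the paper keeps the block average $|D_n^{(\kb)}|^{-1}\int_{D_n^{(\kb)}}h(\xb/\aB)^4d\xb$ where you freeze the taper at $h(\kb/\aB)$, which is the same computation. Where you genuinely diverge is the off-diagonal part. The paper does not bound the cross-block covariances at all in this proof: it introduces independent copies $\widehat{G}_{h,n}^{(\kb)}(\phi)$, invokes a telescoping-sum argument to get $V_{h,n}(\phi)-\widehat{V}_{h,n}(\phi)=o_p(1)$, and reads off the variance from the independent surrogate. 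You instead bound $k_n^{-1}\sum_{\jb\neq\kb}|\cov(\widetilde{G}_{h,n}^{(\jb)}(\phi),\widetilde{G}_{h,n}^{(\kb)}(\phi))|$ directly via the covariance inequality of Lemma \ref{eq:mixing-cov-ineq} and the $O(m^{d-1})$ shell count — which is exactly the machinery the paper deploys one theorem over, in the proof of Theorem \ref{thm:Thn}, to kill the cross terms $T_{h,n}(\phi)$. Your route is arguably cleaner for a pure variance statement, since convergence in probability of $V_{h,n}-\widehat{V}_{h,n}$ to zero does not by itself give convergence of variances without an additional uniform-integrability step that the paper leaves implicit; the direct covariance bound sidesteps that. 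The price is that your exponent bookkeeping, $w_n^{3/2}\sum_m m^{d-1}(n^{\gamma}(m-1)+n^{\beta})^{-(d+\varepsilon)/2}=o(1)$, is only guaranteed when $\beta$ is taken large enough that the $m=1$ term $n^{3\gamma d/2-\beta(d+\varepsilon)/2}$ also decays; the bare condition $\beta>2d/\varepsilon$ does not force this for $\beta$ near that threshold. This is an imprecision you inherit verbatim from the paper's own estimate in Theorem \ref{thm:Thn}, and it is repaired by choosing $\beta,\gamma$ sufficiently close to $1$, so it does not affect the validity of the argument.
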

\textit{Proof}. For $\kb \in A_n$, let $\widehat{G}_{h,n}^{(\kb)}(\phi)$ be an independent copy of $\widetilde{G}_{h,n}^{(\kb)}(\phi)$ and let 
\begin{equation} \label{eq:Vhat}
\widehat{V}_{h,n}(\phi) = k_n^{-1/2} \sum_{\kb \in A_n} \widehat{G}_{h,n}^{(\kb)}(\phi).
\end{equation}
 Then, by using a standard telescoping sum argument (cf. \cite{p:paw-09}), one can easily shown that $V_{h,n}(\phi) - \widehat{V}_{h,n}(\phi) = o_p(1)$ as $n\rightarrow \infty$. Therefore, 
\begin{equation*}
\lim_{n\rightarrow \infty} \var(V_{h,n}(\phi))
= \lim_{n\rightarrow \infty} \var(\widehat{V}_{h,n}(\phi))
= \lim_{n \rightarrow \infty} k_n^{-1} \sum_{\kb \in A_n} \var(\widetilde{G}_{h,n}^{(\kb)}(\phi)).
\end{equation*} Now, we focus on the variance of $\widehat{G}_{h,n}^{(\kb)}(\phi)$. Since $D_n^{(\kb)}$ also has a rectangle form that satisfies Assumption \ref{assum:A}, from the modification of the proof of the asymptotic variance of $|D_n|^{1/2}A_{h,n}(\phi)$ in Theorem \ref{thm:A1}(ii), one can show that
\begin{eqnarray*}
\var(\widetilde{G}_{h,n}^{(\kb)}(\phi)) \approx (2\pi)^{d} |D_n^{(\kb)}|^{-1} 
\left( 
\frac{\int_{D_n^{(\kb)}} \{h(\xb/\aB)\}^4 d\xb}{H_{h,2}^2}\right)
(\Omega_1 + \Omega_2), \quad \kb \in A_n.
\end{eqnarray*} Therefore, we have
\begin{eqnarray*}
\lim_{n\rightarrow \infty} k_n^{-1}  \sum_{\kb \in A_n} \var(\widetilde{G}_{h,n}^{(\kb)}(\phi)) 
&=& (2\pi)^{d} \lim_{n\rightarrow \infty} k_n^{-1} |D_n^{(\kb)}|^{-1}  H_{h,2}^{-2} \left( \int_{\widetilde{D}_n}\{h(\xb/\aB)\}^4 d\xb \right)(\Omega_1 + \Omega_2) \\
&=& (2\pi)^{d} \frac{H_{h,4}}{H_{h,2}^{2}}(\Omega_1 + \Omega_2).
\end{eqnarray*} Here, the last identity is due to (\ref{eq:sub-block-vol}), $k_n |D_n^{(\kb)}| = |\widetilde{D}_n|$, and $\int_{D_n} h(\xb/\aB)^4d\xb = |D_n| H_{h,4}$.
Thus, we get the desired result.
\hfill $\Box$

\vspace{0.5em}

Now, we are ready to prove the CLT for $V_{h,n}(\phi)$.

\begin{theorem} \label{thm:CLT-Vhn}
Suppose that Assumptions \ref{assum:A}, \ref{assum:C} (for $\ell=8$), \ref{assum:D}(ii), \ref{assum:B}, and \ref{assum:F} hold. 
Suppose further the data taper $h$ is either constant on $[-1/2,1/2]^d$ or satisfies Assumption \ref{assum:E}(for $m=d+1$).
\begin{equation*}
V_{h,n}(\phi) \Dcon \mathcal{N}\left(0,(2\pi)^{d} (H_{h,4} / H_{h,2}^2) (\Omega_1 + \Omega_2) \right), \quad n \rightarrow \infty.
\end{equation*}
\end{theorem}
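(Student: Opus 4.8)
The plan is to prove the central limit theorem for $V_{h,n}(\phi)$ in two stages. Since $V_{h,n}(\phi) = k_n^{-1/2}\sum_{\kb \in A_n}\widetilde{G}_{h,n}^{(\kb)}(\phi)$ is a normalized sum of statistics, each supported on a sub-block $D_n^{(\kb)}$ that is separated from every other sub-block by a buffer of width at least $n^{\beta}$, the first stage decouples the dependent summands into independent ones, and the second stage applies a Lyapunov central limit theorem for a triangular array of independent (row-wise) random variables. Throughout I write $\sigma^2 = (2\pi)^{d}(H_{h,4}/H_{h,2}^2)(\Omega_1 + \Omega_2)$ with $\Omega_1,\Omega_2$ as in (\ref{eq:Omegas}).

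First I would recall from Theorem \ref{thm:var-Vhn} that $\lim_{n\rightarrow \infty}\var(V_{h,n}(\phi)) = \sigma^2$, which pins down the target variance, so that it remains only to identify the limit law as Gaussian. Following the blocking strategy of \cite{p:gua-07} and \cite{p:paw-09}, I would introduce the surrogate $\widehat{V}_{h,n}(\phi)$ in (\ref{eq:Vhat}), built from independent copies $\{\widehat{G}_{h,n}^{(\kb)}(\phi)\}_{\kb \in A_n}$ of the block statistics. The decoupling claim is that $V_{h,n}(\phi)$ and $\widehat{V}_{h,n}(\phi)$ share the same weak limit. I would prove this by comparing their characteristic functions through a telescoping expansion over the $k_n$ blocks: at each step one peels off a single block and replaces its contribution by that of an independent copy, so that the resulting increment is the covariance of two modulus-one functions measurable with respect to the block and to the union of the remaining blocks, respectively. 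The covariance inequality for $\alpha$-mixing fields (cf.\ \cite{b:dou-94}, Theorem 3) bounds each increment by a constant multiple of $\alpha_{w_n,k_n w_n}(n^{\beta}) \leq \alpha_{|D_n|,|D_n|}(n^{\beta})$, where I used monotonicity of the mixing coefficient in its volume arguments so that Assumption \ref{assum:D}(ii) applies and gives $O(|D_n|(n^{\beta})^{-d-\varepsilon})$. Summing the $k_n$ increments and invoking the balance $2d/\varepsilon < \beta < \gamma < 1$ together with $\varepsilon > 2d$ renders the accumulated error $o(1)$, exactly in the spirit of the mixing estimates already carried out in the proof of Theorem \ref{thm:Thn}.

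For the second stage, I would verify the Lyapunov condition for $\widehat{V}_{h,n}(\phi)$ with exponent $\delta = 2$. Because the normalized variance converges to $\sigma^2 > 0$, it suffices to show $k_n^{-2}\sum_{\kb \in A_n}\Ex|\widetilde{G}_{h,n}^{(\kb)}(\phi)|^{4} \rightarrow 0$. Each block statistic is a quadratic form in the DFT $J_{h,n}^{(\kb)}$ integrated against $\phi$, so its fourth moment expands into products of cumulants of the DFT of orders up to eight; by Lemma \ref{lemma:k4-bound} together with Assumption \ref{assum:C} for $\ell = 8$, every such cumulant term is controlled, yielding $\sup_{n}\sup_{\kb \in A_n}\Ex|\widetilde{G}_{h,n}^{(\kb)}(\phi)|^4 < \infty$. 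Hence the Lyapunov ratio is $O(k_n^{-2}\cdot k_n) = O(k_n^{-1}) = o(1)$, since $k_n \rightarrow \infty$. The classical Lyapunov CLT for independent arrays then gives $\widehat{V}_{h,n}(\phi) \Dcon \mathcal{N}(0,\sigma^2)$, and combining this with the decoupling from the first stage yields $V_{h,n}(\phi) \Dcon \mathcal{N}(0,\sigma^2)$, which is the assertion.

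The main obstacle I anticipate is the characteristic-function decoupling: one must carefully track how the $\alpha$-mixing coefficient between a single block and the large union of remaining blocks enters each telescoped increment, and confirm that the normalization $\max(p,1)$ appearing in Assumption \ref{assum:D}(ii) exactly compensates for the growing block volume $w_n$ so that the $k_n$-fold sum of errors still vanishes. The interplay among the buffer exponent $\beta$, the block exponent $\gamma$, and the mixing rate $\varepsilon$ is the crux of the argument, and is precisely where the quantitative condition $2d/\varepsilon < \beta < \gamma < 1$ must be exploited; establishing the uniform fourth-moment bound on $\widetilde{G}_{h,n}^{(\kb)}(\phi)$ is comparatively routine once the eighth-order cumulant expansion is set up via Lemma \ref{lemma:k4-bound}.
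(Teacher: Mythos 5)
Your overall architecture coincides with the paper's: replace $V_{h,n}(\phi)$ by the independent-copy surrogate $\widehat{V}_{h,n}(\phi)$ via a telescoping/characteristic-function argument under Assumption \ref{assum:D}(ii), then verify the Lyapunov condition with $\delta=2$, which reduces to a uniform fourth-moment bound on the block statistics. The decoupling stage as you sketch it is fine and is exactly what the paper delegates to the standard argument of \cite{p:paw-09}, with the exponent bookkeeping $2d/\varepsilon<\beta<\gamma<1$ working out as you say.

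The genuine gap is in the step you dismiss as ``comparatively routine.'' Lemma \ref{lemma:k4-bound} alone does \emph{not} control the fourth moment of $\widetilde{G}_{h,n}^{(\kb)}(\phi)$. Writing $w_n=|D_n^{(\kb)}|$, one has
\begin{equation*}
\kappa_4\bigl(\widetilde{G}_{h,n}^{(\kb)}(\phi)\bigr)=w_n^{2}\int_{D^4}\Bigl(\prod_{i=1}^{4}\phi(\ob_i)\Bigr)\cum\bigl(I^{(\kb)}_{h,n}(\ob_1),\dots,I^{(\kb)}_{h,n}(\ob_4)\bigr)\,d\ob_1\cdots d\ob_4,
\end{equation*}
and after the indecomposable-partition expansion the dominant contribution is a product of four second-order cumulants $\cum(J^{(\kb)}_{h,n}(\pm\ob_i),J^{(\kb)}_{h,n}(\pm\ob_j))$. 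Lemma \ref{lemma:k4-bound} with $\alpha_j=1/2$ only gives that each such pair cumulant is $O(w_n^{0})=O(1)$ pointwise, so the product is $O(1)$, the integral over the compact set $D^4$ is $O(1)$, and the resulting bound on $\kappa_4$ is $O(w_n^{2})$, which diverges. What actually saves the argument is that each second-order cumulant concentrates like a Fej\'{e}r-type kernel around $\ob_i+\ob_j=\textbf{0}$ (Lemma \ref{lemma:cov-exp}, Theorem \ref{thm:Rhgn1}(ii)), so that after the change of variables $\tb_1=\ob_1+\ob_2$, etc., repeated application of Lemma \ref{lemma:Cn}(c)--(e) shows the fourfold integral is $O(w_n^{-2})$, exactly cancelling the prefactor. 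This is where the hypothesis that $h$ is constant or satisfies Assumption \ref{assum:E}(ii) for $m=d+1$ (absolute summability of the Fourier coefficients of $h$) is genuinely used, and it is the bulk of the paper's proof of this theorem; Assumption \ref{assum:C} for $\ell=8$ enters only to control the non-leading partition terms. Your proposal as written would leave the Lyapunov numerator unbounded, so you need to import the kernel-concentration machinery rather than Lemma \ref{lemma:k4-bound} at this point.
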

\textit{Proof}. Recall $\widehat{V}_{h,n}(\phi)$ from (\ref{eq:Vhat}) has the same asymptotic distribution with $V_{h,n}(\phi)$. To show the CLT for $\widehat{V}_{h,n}(\phi)$, we only need to check the Lyapunov condition: for some $\delta > 0$,
\begin{equation} \label{eq:V-lyapunov}
\lim_{n\rightarrow \infty} k_n^{-(2+\delta)/2} \sum_{\kb \in A_n} \Ex [|\widetilde{G}_{h,n}^{(\kb)}(\phi)|^{2+\delta}] =0.
\end{equation} We will show the above for $\delta=2$, provided Assumption \ref{assum:C} for $\ell=8$. To show (\ref{eq:V-lyapunov}), it is enough to show
\begin{equation} \label{eq:G4-bound}
\sup_{n \in \N}\Ex[ |\widetilde{G}_{h,n}(\phi)|^{4}] <\infty.
\end{equation} This is because, once we show (\ref{eq:G4-bound}), one can show
\begin{equation*}
k_n^{-2} \sum_{\kb \in A_n} \Ex [|\widetilde{G}_{h,n}^{(\kb)}(\phi)|^{4}] \leq C k_n^{-1} \rightarrow 0, \quad n\rightarrow \infty.
\end{equation*} Thus, (\ref{eq:V-lyapunov}) holds for $\delta=2$.
Using (\ref{eq:Ex4}), we have
\begin{equation*}
\sup_{n \in \N}\Ex[ |\widetilde{G}_{h,n}(\phi)|^{4}] \leq \sup_{n \in \N} \kappa_4(\widetilde{G}_{h,n}(\phi))
+ 3  \left( \sup_{n \in \N} \var(\widetilde{G}_{h,n}(\phi)) \right)^2.
\end{equation*} From Theorem \ref{thm:A1}(ii), we have
\begin{equation} \label{eq:varG}
\lim_{n\rightarrow \infty}
\var(\widetilde{G}_{h,n}(\phi)) = (2\pi)^d (H_{h,4}/H_{h,2}^2) (\Omega_1 + \Omega_2).
\end{equation} Therefore, $\sup_{n \in \N} \var(\widetilde{G}_{h,n}(\phi)) < \infty$. To bound the fourth-order cumulant term, we note that
\begin{eqnarray}
\kappa_4(\widetilde{G}_{h,n}(\phi)) &=&
|D_n|^2 \kappa_4 \left( \int_{D} \phi(\ob) I_{h,n}(\ob)\right) 
= |D_n|^2 \int_{D^4} \left( \prod_{i=1}^{4} \phi(\ob_i) \right)
\nonumber \\
&&~~\times 
\cum(I_{h,n}(\ob_1),I_{h,n}(\ob_2),I_{h,n}(\ob_3),I_{h,n}(\ob_4)) d\ob_1 d\ob_2 d\ob_3 d\ob_4.
 \label{eq:kappa4-G}
\end{eqnarray}  Now, we will evaluate the cumulant term above. Note that
\begin{eqnarray*}
&& \cum(I_{h,n}(\ob_1),I_{h,n}(\ob_2),I_{h,n}(\ob_3),I_{h,n}(\ob_4)) \\ 
&&~= \cum \bigg(J_{h,n}(\ob_1) J_{h,n}(-\ob_1),
J_{h,n}(\ob_2) J_{h,n}(-\ob_2), 
J_{h,n}(\ob_3) J_{h,n}(-\ob_3),
J_{h,n}(\ob_4) J_{h,n}(-\ob_4)
\bigg).
\end{eqnarray*}
Using indecomposable partitions, the above joint cumulant can be written as sum of product of cumulants of form $\cum(J_{h,n}(\pm \ob_{i_1}), \dots, J_{h,n}(\pm \ob_{i_k}))$, where $k \in \{2, \dots, 8\}$ and $i_j \in \{1,2,3,4\}$ for $j \in \{1, \dots, k\}$. By using an argument in Lemma \ref{lemma:k4-bound}, the leading term (which has the largest order) is a product of four joint cumulants of order two. An example of such term is
\begin{equation}
\begin{aligned}
& \cum(J_{h,n}(\ob_1), J_{h,n}(\ob_2)) \cum(J_{h,n}(-\ob_1), J_{h,n}(\ob_3)) \\
&~~\times \cum(J_{h,n}(-\ob_2), J_{h,n}(-\ob_4)) \cum(J_{h,n}(-\ob_3), J_{h,n}(\ob_4)).
\end{aligned}
\label{eq:cum-ex}
\end{equation} 

Now, we will bound one of the terms in (\ref{eq:kappa4-G}) that is associated with the above cumulant products.
Let $\phi(\ob)=0$ outside the domain $D$. By using Lemma \ref{lemma:cov-exp}, we have
\begin{eqnarray}
&& \cum(J_{h,n}(\ob_1), J_{h,n}(\ob_2))  \nonumber  \\
&& = |D_n|^{-1} H_{h,2}^{-1} f(\ob_1) H_{h,2}^{(n)}(\ob_1+\ob_2)
+ C |D_n|^{-1} \int_{\R^d} e^{-i\ubb^\top \ob_1} C(\ubb) R_{h,h}^{(n)}(\ubb,\ob_1+\ob_2),
\label{eq:cum2-exp11}
\end{eqnarray} where $H_{h,2}^{(n)}(\cdot)$ and $R_{h,h}^{(n)}(\cdot,\cdot)$ are defined as in (\ref{eq:Hkn}) and (\ref{eq:Rhgn}). Therefore, the integral term in the decomposition of (\ref{eq:kappa4-G}) that is associated with (\ref{eq:cum-ex}) has 16 terms. We will only bound the two representative terms. Other 14 terms will be bounded in the similar way. The first representative term is 
\begin{eqnarray*}
&& H_{h,2}^{-4} |D_n|^{-2} \int_{\R^{4d}}
\prod_{i=1}^{4} \phi(\ob_i) \times f(\ob_1) f(-\ob_1) f(-\ob_2) f(\ob_3) \\
&&~~ \times H_{h,2}^{(n)}(\ob_1+\ob_2) H_{h,2}^{(n)}(-\ob_1+\ob_3) H_{h,2}^{(n)}(-\ob_2-\ob_4)H_{h,2}^{(n)}(-\ob_3+\ob_4)
d\ob_1 d\ob_2 d\ob_3 d\ob_4 \\
&& = C \int_{\R^{4d}}
\psi_{}(\tb_1,\tb_2,\tb_3, \tb_4)  \times c_{h^2,n}(\tb_1) c_{h^2,n}(\tb_2) c_{h^2,n}(\tb_3)c_{h^2,n}(-(\tb_1+\tb_2+\tb_3))
d\tb_1 d\tb_2 d\tb_3 d\tb_4,
\end{eqnarray*} where for $\tb_1, \dots, \tb_4 \in \R^d$.
\begin{eqnarray*}
\psi_{}(\tb_1,\tb_2,\tb_3, \tb_4)
&=& \alpha \times \phi(\tb_1+\tb_3+\tb_4) \phi(-\tb_3-\tb_4)\phi(\tb_1+\tb_2+\tb_3+\tb_4) \phi(\tb_4) \\
&&\quad \times
f(\tb_1+\tb_3+\tb_4) f(-\tb_1-\tb_3-\tb_4) f(-\tb_3-\tb_4) f(\tb_1+\tb_2+\tb_3+\tb_4).
\end{eqnarray*}
 Here,
we use change of variables $\tb_1 = \ob_1+\ob_2$, $\tb_2=-\ob_1+\ob_3$, $\tb_3 = -\ob_2-\ob_4$, and $\tb_4 = \ob_4$ in the identity above and $\alpha$ in $\psi_{}(\cdot, \cdot, \cdot, \cdot)$ is the Jacobian determinant. Since $\phi(\cdot)$ is bounded and has a compact support
and $\sup_{\ob \in \R^d} f(\ob)<\infty$, it is easily seen that $|\psi_{}(\cdot, \cdot, \cdot, \cdot)|$ is also bounded and has a compact support. Then, by iteratively applying Lemma \ref{lemma:Cn}(c),(d), and (e), we have
\begin{equation*}
\left| \int_{\R^{4d}}
\psi_{}(\tb_1,\tb_2,\tb_3, \tb_4)  c_{h^2,n}(\tb_1) c_{h^2,n}(\tb_2) c_{h^2,n}(\tb_3)c_{h^2,n}(-(\tb_1+\tb_2+\tb_3))
d\tb_1 d\tb_2 d\tb_3 d\tb_4\right| = O(|D_n|^{-1/2})
\end{equation*} as $n\rightarrow \infty$. 

The second representative term is 
\begin{eqnarray*}
&& C^4 |D_n|^{-2} \int_{\R^{4d}}
\prod_{i=1}^{4} \phi(\ob_i) \int_{\R^{4d}} \prod_{i=1}^{4} C(\ubb_i) \times
e^{-i(\ubb_1^\top \ob_1 - \ubb_2^\top \ob_1 - \ubb_3^\top \ob_2 + \ubb_4^\top \ob_3) } \\
&&~~\times R_{h,h}^{(n)}(\ubb_1, \ob_1+\ob_2)R_{h,h}^{(n)}(\ubb_2, -\ob_1+\ob_3)
R_{h,h}^{(n)}(\ubb_3, -\ob_2-\ob_4)R_{h,h}^{(n)}(\ubb_4, -\ob_3+\ob_4).
\end{eqnarray*} To bound the above term, we require a sharp bound for $R_{h,h}^{(n)}$.
Let $\{h_{\jb}\}_{\jb \in \Z^d}$ be the Fourier coefficients of $h$ that satisfies (\ref{eq:H-fourier}). Then, by using Theorem \ref{thm:Rhgn1}(ii) together with an inequality $\rho(|x|) \leq 1$
and change of variables $\tb_1 = \ob_1+\ob_2$, $\tb_2=-\ob_1+\ob_3$, $\tb_3 = -\ob_2-\ob_4$, and $\tb_4 = \ob_4$,
 the above is bounded by
\begin{eqnarray*}
&& C \sum_{p_1, \cdots, p_4 = 0}^{m_d} 
\sum_{\jb_1, \dots, \jb_4, \kb_1, \dots, \kb_4 \in \Z^d} \left( \prod_{i=1}^{4} |h_{\jb_i} h_{\kb_i}| \right)
 \int_{\R^{4d}} 
 d\ubb_1 d\ubb_2 d\ubb_3 d\ubb_4
\prod_{i=1}^{4} |C(\ubb_i)|  
\\
&& ~~ \times \int_{\R^{4d}} 
|\widetilde{\psi}_{}(\tb_1,\tb_2, \tb_3, \tb_4)| \times 
\bigg| c_{D_{n,p_1}(\ubb_1)}(\tb_1 -2\pi(\jb_1+\kb_1)/\aB) 
\\
&& ~~\times
c_{D_{n,p_2}(\ubb_2)}(\tb_2 -2\pi(\jb_2+\kb_2)/\aB) 
c_{D_{n,p_3}(\ubb_3)}(\tb_3 -2\pi(\jb_3+\kb_3)/\aB)  \\
&&~~\times
c_{D_{n,p_4}(\ubb_4)}(-(\tb_1+\tb_2+\ob_3)+\ob_4-2\pi(\jb_4+\kb_4)/\aB) \bigg| d\tb_1 d\tb_2 d\tb_3 d\tb_4,
\end{eqnarray*} where $\widetilde{\psi}_{}(\tb_1,\tb_2, \tb_3, \tb_4)$ is a bounded function with compact support.

By using Lemma \ref{lemma:Cn}(d),(e)  the term in integral with respect to $d\tb_1 d\tb_2 d\tb_3 d\tb_4$ is uniformly bounded above. With this observation together with $\sum_{j \in \Z^d} |h_{\jb}|$ and $C(\ubb) \in L^{1}(\R^d)$, the above term is $O(1)$ as $n\rightarrow \infty$.
 Therefore, we conclude that the decomposition of (\ref{eq:kappa4-G})  associated with (\ref{eq:cum-ex}) is $O(1)$ as $n \rightarrow \infty$. 

Similarly, all other terms in the indecomposable partition are $O(1)$ as $n\rightarrow \infty$. Thus,  $\sup_{n \in \N}|\kappa_4(\widetilde{G}_{h,n}(\phi))| <\infty$ and this shows (\ref{eq:G4-bound}). 

Lastly, once we verify (\ref{eq:G4-bound}), the Lyapunov condition in (\ref{eq:V-lyapunov}) is also true, thus, combining this with Theorem \ref{thm:var-Vhn}, we obtain the desired results. 
\hfill $\Box$

\section{Estimation of the asymptotic variance} \label{sec:subsampling}
Recall the integrated periodgram $\widehat{A}_{h,n}(\phi)$ in (\ref{eq:Aphi}). By Theorem \ref{thm:A1}, the $(1-\alpha)$ ($\alpha \in (0,1)$) confidence interval of the spectral mean $A(\phi)$ of form (\ref{eq:Aphi}) is
\begin{equation*}
\widehat{A}_{h,n}(\phi) \pm \frac{z_{1-\alpha/2}}{|D_n|^{1/2}} (2\pi)^{d/2} (H_{h,4}^{1/2}/H_{h,2}) \sqrt{\Omega_1 + \Omega_2},
\end{equation*} where $z_{1-\alpha/2}$ is the $(1-\alpha/2)$-th quantile of the standard normal random variable and
$\Omega_1$ and $\Omega_2$ are defined as in (\ref{eq:Omegas}). The quantity $\Omega_1 + \Omega_2$ is in terms of the unknown spectral density function and complete fourth-order spectral density function. In this section, we sketch the procedure to estimate the asymptotic variance  $\lim_{n \rightarrow \infty} |D_n| \var( \widehat{A}_{h,n}(\phi)) = (2\pi)^{d} (H_{h,4}/H_{h,2}^2) (\Omega_1 + \Omega_2)$ by the mean of subsampling.

To ease the presentation, we assume that $|D_n|$ grows proportional to $n^{d}$ as $n \rightarrow \infty$. Thus the side lengthes $A_1, \cdots, A_d$ increases proportional to order of $n$. For $n \in \N$, let $0<a_1 < a_2 <\dots$ be a sequence of increasing numbers such that $a_n = o(n)$ as $n \rightarrow \infty$. Therefore, we have $\lim_{n\rightarrow \infty} a_n / A_{i}(n) = 0$ for any $i \in \{1, \dots, d\}$. Now, let
\begin{equation*}
T_{n} = \{\kb: \kb \in \Z^d~~\text{and}~~ B_{n}^{(\kb)} = \kb + [-a_n/2, a_n/2]^d \subset D_n\}, \quad n \in \N.
\end{equation*} Unlike the subrectangle $D_n^{(\kb)}$ in Appendix \ref{sec:mixing}, $\{B_{n}^{(\kb)}\}_{\kb \in T_n}$ are the subrectangles of $D_n$ that can be overlapped. For $n \in \N$ and $\kb \in T_n$, we define subsampling analogous of the DFT by
\begin{equation*}
\mathbb{J}_{h,n}^{(\kb)}(\ob) = 
(2\pi)^{-d/2} H_{h,2}^{-1/2} | B_{n}^{(\kb)}|^{-1/2} \sum_{\xb \in X \cap B_{n}^{(\kb)} }
h(a_n^{-1} (\xb-\kb)) \exp(-i \xb^\top \ob), \quad \ob \in \R^d.
\end{equation*} Note that the above definition is slightly different from $\mathcal{J}_{h,n}^{(\kb)}(\ob)$ of (\ref{eq:J-kb}) since $\mathbb{J}_{h,n}^{(\kb)}(\cdot)$ alters the data taper function for each subretangle. By simple calculation, we have $\Ex[\mathbb{J}_{h,n}^{(\kb)}(\ob)] = \lambda c_{h,n}^{(\kb)}(\ob)$, where for $n \in \N$, $\kb \in T_n$, and $\ob \in \R^d$,
\begin{equation*}
c_{h,n}^{(\kb)}(\ob) = (2\pi)^{-d/2} H_{h,2}^{-1/2} | B_{n}^{(\kb)}|^{-1/2} \exp(i\kb^\top \ob) \int_{[-a_n/2,a_n/2]^d} h(a_n^{-1}\xb)\exp(-i\xb^\top \ob) d\xb.
\end{equation*} Therefore, the subsample version of the integrated periodogram is given by
\begin{equation*}
\widehat{A}_{h,n}^{(\kb)}(\phi) = \int_{D} \phi(\ob) \left|
\mathbb{J}_{h,n}^{(\kb)}(\ob) - \widehat{\lambda}_{h,n} c_{h,n}^{(\kb)}(\ob)
\right|^2 d\ob, \quad n \in \N, \quad \kb \in T_n,
\end{equation*} where $\widehat{\lambda}_{h,n}$ is an unbiased tapered estimator of the first-order intensity function.
In practice, one can approximate $\widehat{A}_{h,n}^{(\kb)}(\phi)$ using Riemann sum as outlined in Section \ref{sec:practice}. Now, our subsampling estimator of the asymptotic variance of $|D_n| \var( \widehat{A}_{h,n}(\phi))$ is
\begin{equation*}
\zeta_{n} = \frac{\alpha_n^d}{|T_n|} \sum_{\kb \in T_n} \left\{ \widehat{A}_{h,n}^{(\kb)}(\phi) - |T_n|^{-1} \sum_{\jb \in T_n} \widehat{A}_{h,n}^{(\kb)}(\phi) 
\right\}^2, \quad n \in \N,
\end{equation*} where $\alpha_n^d$ is the common volume of $B_n^{(\kb)}$. Under appropriate moment and mixing conditions such as conditions $(\mathcal{S}1)$--$(\mathcal{S}6$) in \cite{p:bis-19} (page 1174), one may expect that $\zeta_{n}$ is a consistent estimator of the asymptotic variance $(2\pi)^{d} (H_{h,4}/H_{h,2}^2) (\Omega_1 + \Omega_2)$. A rigourous proof of the sampling properties of $\zeta_{n}$ will be reported in future research.

\section{Additional simulation results} \label{sec:add-sims}

\subsection{Computation and illustration of the kernel spectral density estimator} \label{sec:add-sim}

In this section, we describe the computation of the kernel spectral density estimator and provide illustrations. Recall $\widehat{f}_{n,b}(\ob)$ in (\ref{eq:KSD}). Since $\widehat{f}_{n,b}(\ob)$ has an integral form, Riemann sum approximation of $\widehat{f}_{n,b}(\ob)$ is
\begin{eqnarray} 
\widehat{f}_{n,b}^{(R)}(\ob_{})
&=& \frac{\sum_{\kb \in \Z^d} W_{b}(\ob_{} - \ob_{\kb,A}) \widehat{I}_{h,n}(\ob_{\kb,A})}{\sum_{\kb \in \Z^d} W_{b}(\ob_{} - \ob_{\kb,A})}
, \quad \ob \in \R^d,
\label{eq:KSD-sum}
\end{eqnarray} where $A>0$ is the side length of the observation domain $D_n = [-A/2,A/2]^d$ and $\ob_{\kb,A} = 2\pi \kb / A$ for $\kb \in \Z^d$. The summation above is a finite sum due to the fact that $W_b(\cdot)$ has a support $[-b/2,b/2]^d$. For a selection of the kernel function, we choose a triangular kernel $W(\xb) = W(x_1)W(x_2)$ for $\xb = (x_1, x_2)^\top \in \R^2$ where $W(x) = 2\max\{ 1-2|x|, 0\}$. The bandwidth $b \in (0,\infty)$ is set at $b = |D_n|^{-1/6}$ which is an optimal rate in the sense of mean-squared error criterion (see \cite{p:dsy-24}, Section 6.1 for details).

In top panels of Figure \ref{fig:ksde} below, we calculate $\widehat{f}_{n,b}^{(R)}(\ob_{})$ of each periodogram that are computed in the bottom panels of Figure \ref{fig:motiv}. In the middle panels, we evaluate the absolute biases of $\widehat{I}_{h,n}$ and in the bottom panels, we evaluate the absolute biases of $\widehat{f}_{n,b}^{(R)}$. 

\begin{figure}[ht!]
\centering

\textbf{Kernel spectral density estimator (KSDE)}

\includegraphics[width=0.9\textwidth]{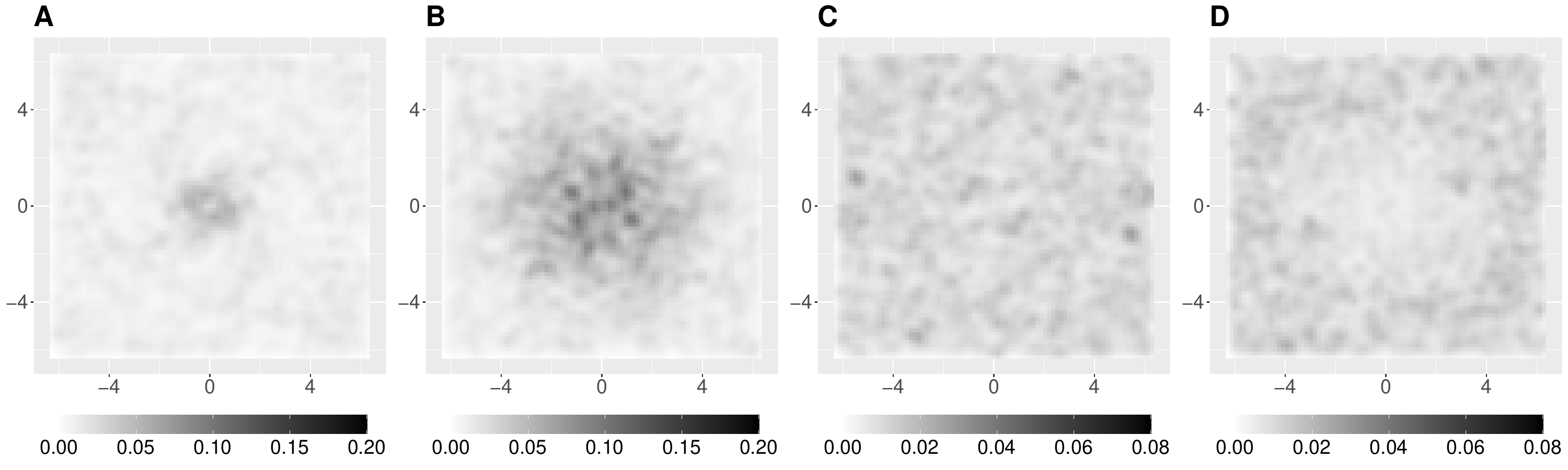}

\textbf{Absolute bias of the periodogram}

\includegraphics[width=0.9\textwidth]{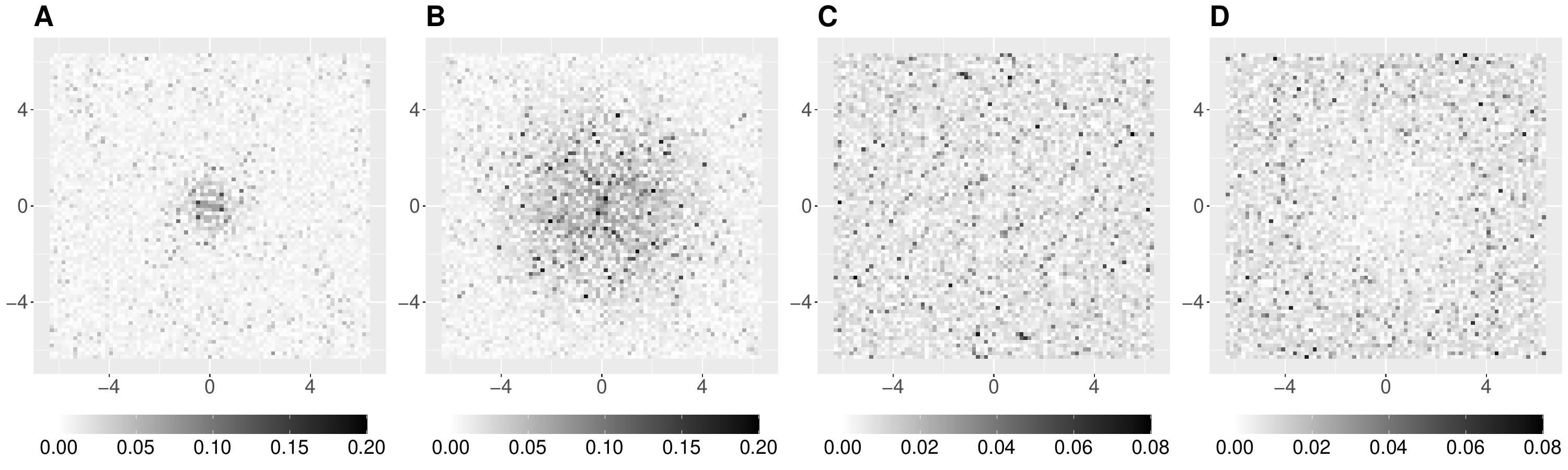}

\textbf{Absolute bias of the KSDE}

\includegraphics[width=0.9\textwidth]{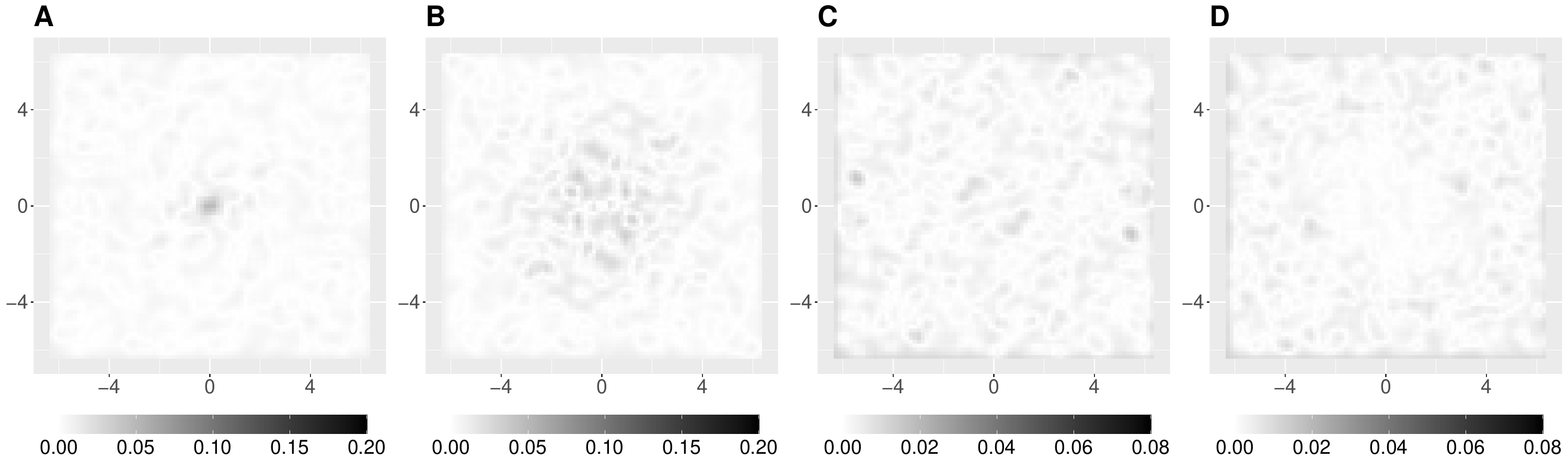}
\caption{\textit{ Top: Kernel spectral density estimators as in (\ref{eq:KSD-sum}) of the periodograms that are computed in the bottom panels of Figure \ref{fig:motiv}. Middle: $|\widehat{I}_{h,n}(\ob) - f(\ob)|$ for each model. Bottom: $\widehat{f}_{n,b}^{(R)}(\ob) - f(\ob)|$ for each model.
}}
\label{fig:ksde}
\end{figure}

For all models, the absolute bias of the periodograms (middle panels) have similar patterns to those of the corresponding spectral density functions. This can be explained by using Theorem \ref{thm:In-bias} that $|\widehat{I}_{h,n}(\ob) - f(\ob)| \approx \var(\widehat{I}_{h,n}(\ob))^{1/2} = f(\ob)$ for $\ob \in \R^d \backslash \{\textbf{0}\}$. Therefore, the middle panels indicate that the periodogram is inconsistent. However, the absolute bias of the smoothed periodograms (bottom panels) are nearly zero across all frequencies and all models. This solidifies the thoeretical results in Thoerem \ref{thm:KSD} which states that $\widehat{f}_{n,b}(\ob)$ is a consistent estimator of $f(\ob)$ for all $\ob \in \R^d$.

\subsection{Computation of the periodograms} \label{eq:per-compute}
In this section, we discuss an implementation of computing periodograms to evaluate the discretized Whittle likelihood in (\ref{eq:discrete-Whittle}). For the simplicity, we assume $d=2$. The cases when $d=1$ or $d \in \{3,4, \dots\}$ can be treated similarly.

Let the observation window be $D_n = [-A_1/2,A_1/2]\times [-A_2/2, A_2/2]$ for $A_1, A_2 \in (0,\infty)$. Suppose that the prespecified domain $D \subset \R^2$ has a rectangle form centered at the origin, thus the gridded version of $D$, denotes $D_{\text{grid}} = \{2\pi \kb / \Omega: \kb \in \Z^2,2\pi \kb / \Omega \in D\}$, also forms a rectangular grid. Let this retangular grid can be written as $D_{\text{grid}} = \{(2\pi k_1 / \Omega ,2\pi k_2 / \Omega): |k_i| \leq a_i,~~ k_i \in \Z\}$ for some $a_1, a_2 \in \N$. Suppose further that data taper function is separable, i.e., $h(\xb) = h_{1}(x_1) h_{2}(x_2)$ for some $h_1, h_2$ and let 
\begin{equation} \label{eq:ui}
u_i(\omega, A) = H_{i}^{(n)}(\omega) = \int_{-A/2}^{A/2} h_{i}(x/A) \exp(-ix\omega) dx, \quad i \in \{1,2\}.
\end{equation}
We will assume that $u_i(\omega, A)$ has a closed form expression, thus, there is no additional computational burden to approximate the integral in $u_i$.

Now, we will discuss an efficient way to compute $\{\widehat{I}_{h,n}(\ob): \ob \in D_{\text{grid}} \}$ based on the observed point pattern $\{\xb_j= (x_{j,1}, x_{j,2})^\top: 1\leq j \leq m\}$ in $D_n$. From its definition, $\widehat{I}_{h,n}(\ob) = |\mathcal{J}_{h,n}(\ob) - \widehat{\lambda}_{h,n} c_{h,n}(\ob)|^2$ where $\mathcal{J}_{h,n}(\ob)$ and $c_{h,n}(\ob)$ are as in (\ref{eq:mathcalDFT-h}) and (\ref{eq:Cn-h}), respectively. Therefore, we will compute $\mathcal{J}_{h,n}(\ob)$ and $c_{h,n}(\ob)$ separately. Since $h$ is separable, the tapered DFT can be written as $\mathcal{J}_{h,n}(2\pi \kb / \Omega) = C \sum_{j=1}^{m} v_1(x_{j,1}, k_1) v_2(x_{j,2}, k_2)$, where
$C = (2\pi)^{-d/2} H_{h,2}^{-1/2} |D_n|^{-1/2}$ and 
$v_i(x,k) = h_j(x/A_i) \exp(-ix_{}(2\pi k/\Omega))$, $i \in \{1,2\}$. Then, the matrix form of $\{\mathcal{J}_{h,n}(\ob): \ob \in D_{\text{grid}} \}$ is equal to $C V_1^\top V_2$, where for $i \in \{1,2\}$,
\begin{equation*}
V_{i} = [\vbb_{i}(-a_i) | \vbb_{i}(-a_i+1) | \cdots | \vbb_{i}(a_i)], \quad \text{where} \quad
\vbb_{i}(k) = ( v_i(x_{i,1},k), \dots, v_i(x_{i,m},k))^\top.
\end{equation*}

Next, we calculate $c_{h,n}(2\pi \kb / \Omega)$. Again using separability of $h$, we have
\begin{equation*}
c_{h,n}(2\pi \kb / \Omega) = C H_{h,1}^{(n)}(2\pi \kb / \Omega) = C u_{1}(2\pi k_1/\Omega, A_1) u_{2}(2\pi k_2/\Omega, A_2).
\end{equation*} Here, $C$ is the same constant as above and $u_1, u_2$ are as in (\ref{eq:ui}). Therefore, a matrix form of $\{c_{h,n}(\ob): \ob \in D_{\text{grid}}\}$ is $C U_1 U_2^\top$, where
\begin{equation*}
U_i = (u_{i}(2\pi (-a_i)/\Omega, A_1), \dots, u_{i}(2\pi (a_i)/\Omega, A_1)^\top \in \C^{2a_i+1}.
\end{equation*} These give algorithms for the fast computation of the periodogram on grid.

\subsection{Additional Figures} \label{sec:add-fig}

In this section, we provide supplementary figures for the simulation results in Section \ref{sec:sim}.


\begin{figure}[ht!]
\centering
\includegraphics[width=0.9\textwidth]{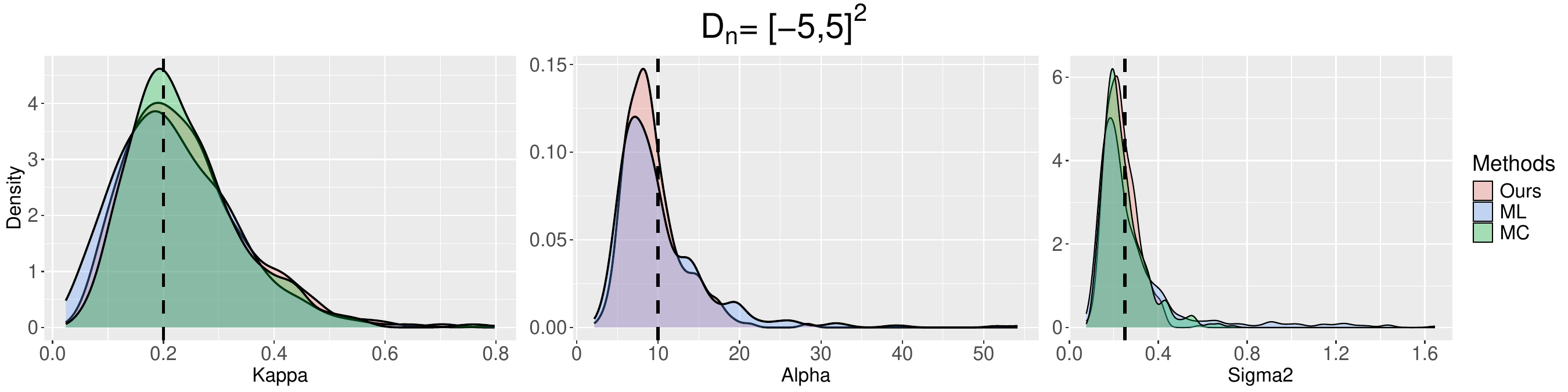}
\includegraphics[width=0.9\textwidth]{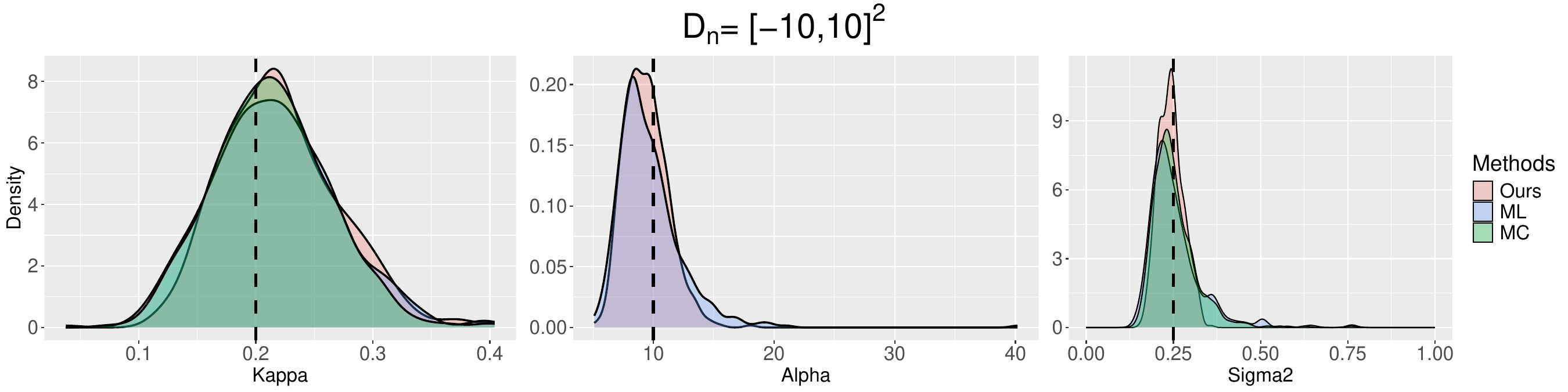}
\includegraphics[width=0.9\textwidth]{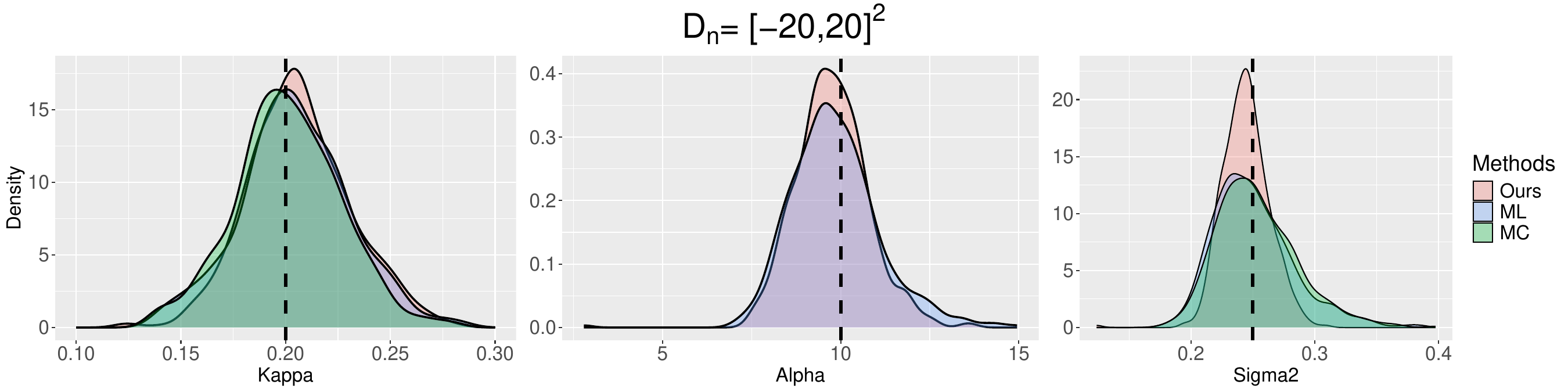}
\caption{\textit{
Densities of the estimated parameters for the correctly specified Thomas clustering process model as in Section \ref{sec:CS}. Each row refers to different observation domains. Vertical dashed lines refer to the true parameter values.
}}
\label{fig:TCP-density}
\end{figure}

\begin{figure}[ht!]
\centering
\includegraphics[width=0.9\textwidth]{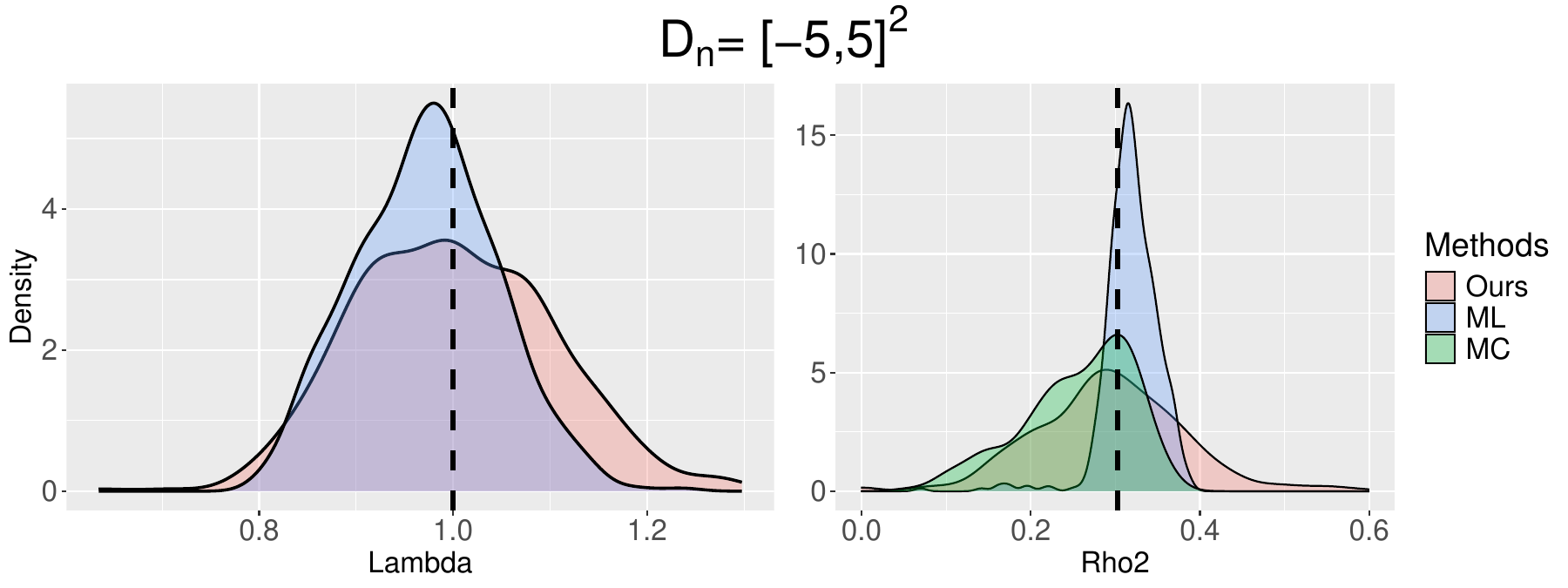}
\includegraphics[width=0.9\textwidth]{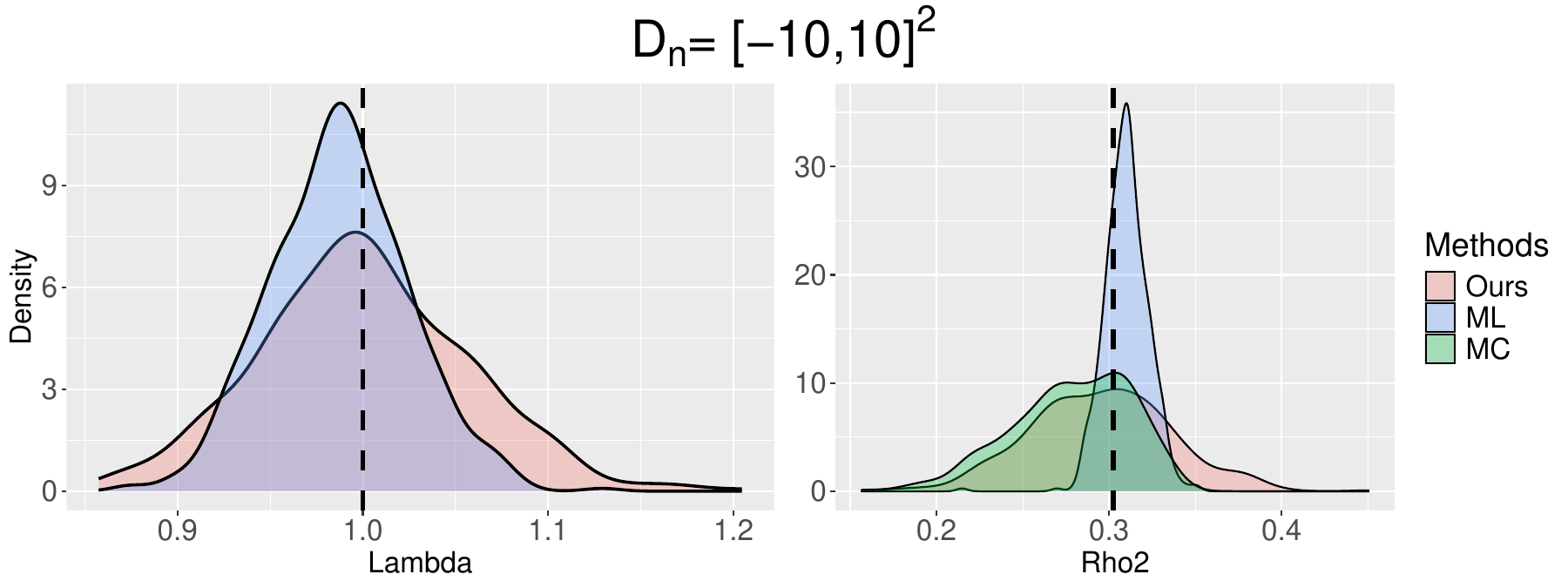}
\includegraphics[width=0.9\textwidth]{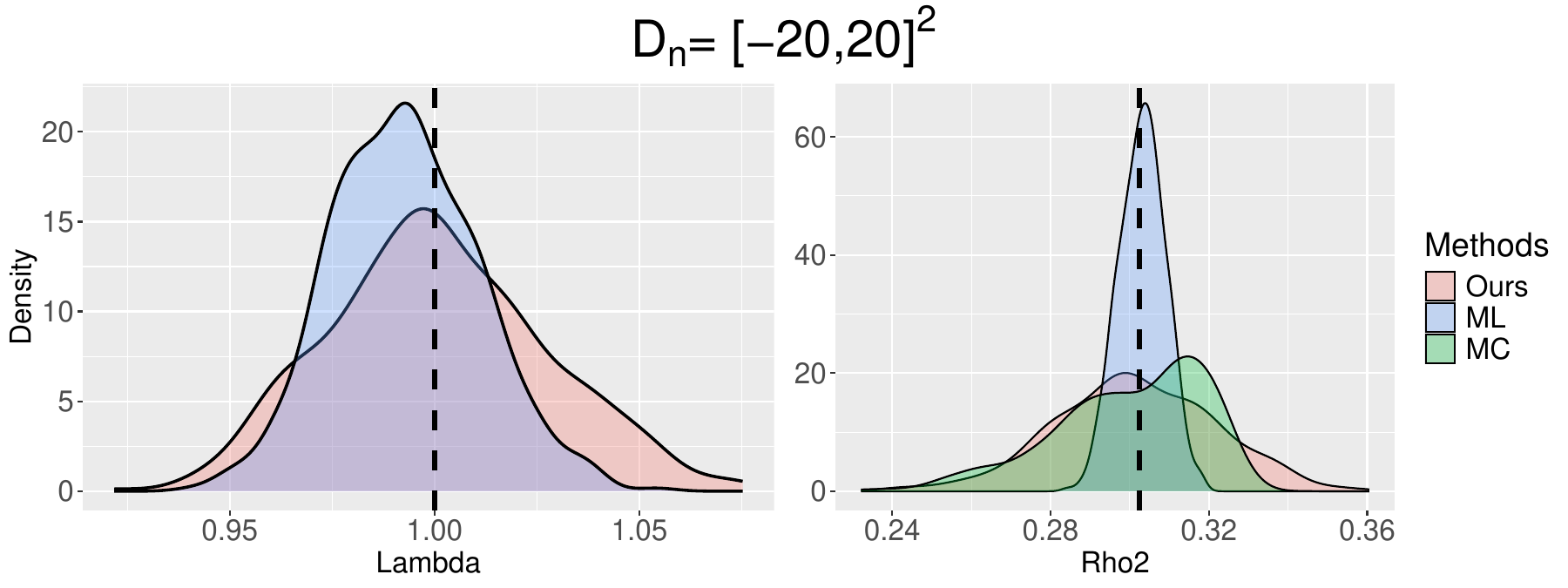}
\caption{\textit{
Densities of the estimated parameters for the correctly specified determinantal point process with Gaussian kernel as in Section \ref{sec:CS}. Each row refers to different observation domains. Vertical dashed lines refer to the true parameter values. 
}}
\label{fig:GDPP-density}
\end{figure}


\begin{figure}[ht!]
\centering
\includegraphics[width=\textwidth]{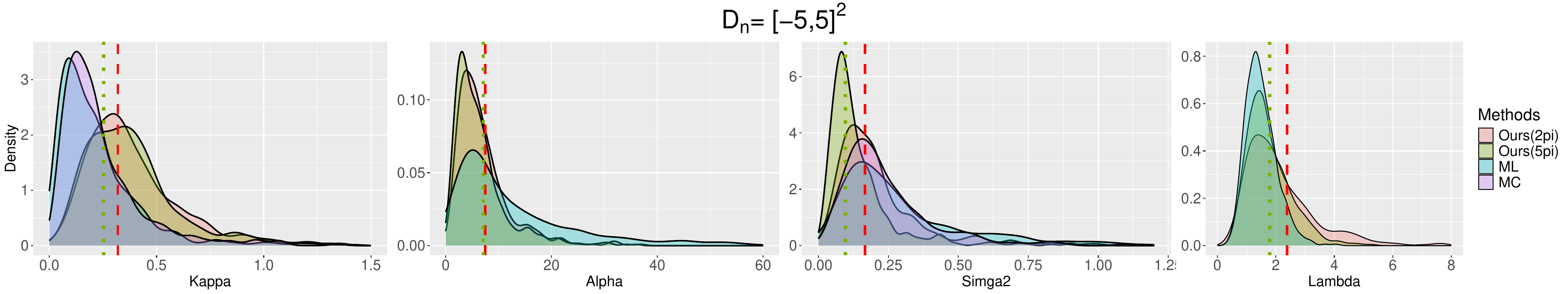}
\includegraphics[width=\textwidth]{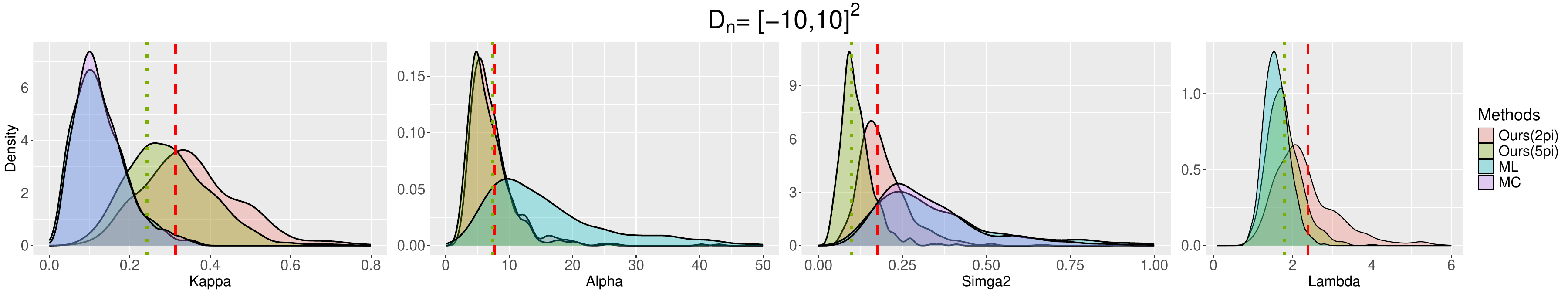}
\includegraphics[width=\textwidth]{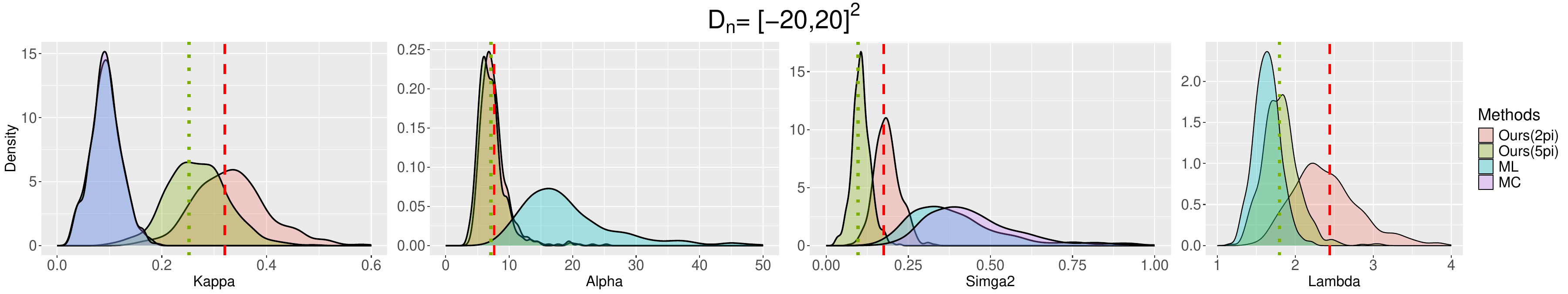}
\caption{\textit{
Densities of the estimated parameters for the misspecified LGCP fitting with the full TCP model described as in Section \ref{sec:MS}. Each row refers to different observation domains. Dashed (green) and dotted (red) vertial lines refer to the best fitting TCP parameters as in (\ref{eq:mathcalL-R}) evaluated on $D_{2\pi}$ and $D_{5\pi}$, respectively.
}}
\label{fig:LGCP-density}
\end{figure}

\pagebreak

\subsection{Additional simulations} \label{sec:TCP-mis2}

As discussed in Section \ref{sec:MS}, in case the model misspecifies the true spatial point pattern, the best fitting model may not always accurately estimates the true first-intensity. In this section, we provide a potential remedy to overcome this issue by fitting the ''reduced'' model.

For our simulation, we generate the same LGCP model on $\R^2$ as in Section \ref{sec:MS} and fit the Thomas clustering process (TCP) models with parameters $(\kappa, \alpha, \sigma^2)^\top$ as in Section \ref{sec:ex2-NS}. However, for each simulated point pattern, we constraint the parameter $\alpha = \widehat{\lambda} / \kappa$, where $\widehat{\lambda}$ is a nonparameteric unbiased estimator of the ''ture'' first-order intensity. We denote this model with constraint as the ''reduced'' TCP model. The reduced TCP model has two free parameters $\betaa = (\kappa, \sigma^2)^\top$ and the estimated first-order intensity for the fitted reduced TCP model is $\widehat{\lambda}$. Therefore, the reduced TCP model corrected estimates the true first-order intensity. 

For each simulation, we fit the reduced TCP model using three estimation methods: discrete version of our estimator as described in Section \ref{sec:practice}, maximum likelihood-based method using the log-Palm likelihood(ML; \cite{p:tan-08}), and the minimum contrast method (MC). When evaluating our estimator, we follow the guidelines as in Section \ref{sec:practice} and consider the two prespecified domains $D_{2\pi} = \{ \ob \in \R^2: 0.1\pi \leq \|\ob\|_{\infty} \leq 2\pi \}$ and $D_{5\pi} = \{ \ob \in \R^2: 0.1\pi \leq \|\ob\|_{\infty} \leq 5\pi \}$. 

Now, we consider the best fitting reduced TCP model. The (discretized) Whittle likelihood of the reduced model is
\begin{equation} \label{eq:L-R2}
L^{(R)}(\betaa) = \sum_{\ob_{\kb,A} \in D} \left( \frac{\widehat{I}_{h,n}(\ob_{\kb,A})}{f_{\btheta(\betaa)}^{(TCP)}(\ob_{\kb,A})} + \log f_{\btheta(\betaa)}^{(TCP)}(\ob_{\kb,A}) \right).
\end{equation} Here, $\btheta(\betaa) = (\kappa,  \widehat{\lambda}/ \kappa, \sigma^2)^\top$, $A \in \{10,20,40\}$ is the side length of the observation window, and $D \in \{D_{2\pi}, D_{5\pi}\}$. Then, we report $\widehat{\betaa}^{(R)} = \arg \min_{} L^{(R)}(\betaa)$. Since $\widehat{\lambda}$ varies by simulations, the best fitting reduced TCP parameters also varies by simulations. However, we note that under mild conditions, $\widehat{\lambda}$ consistently estimates the true first-order intensity $\lambda^{(true)}$, the ''ideal'' best reduced TCP parameters are $\betaa_0(D,A) = \arg \min \mathcal{L}^{(R)}(\betaa)$, where
\begin{equation} \label{eq:mathcalL-R2}
\mathcal{L}^{(R)}(\betaa) = \sum_{\ob_{\kb,A} \in D}\left( \frac{f(\ob_{\kb,A})}{f_{\widetilde{\btheta}(\betaa)}^{(TCP)}(\ob_{\kb,A})} + \log f_{\widetilde{\btheta}(\betaa)}^{(TCP)}(\ob_{\kb,A}) \right),
\end{equation} where $f$ is the true spectral density function as in (\ref{eq:fLGCP}) and $\widetilde{\btheta}(\betaa) = (\kappa,  \lambda^{(true)}/ \kappa, \sigma^2)^\top$.

Table \ref{tab:fTCP-mis2} summarizes parameter estimation results. The results are also illustrated in Figure \ref{fig:LGCP-density-2}. We note that as the observation domain increases, our estimators (for $D_{2\pi}$ and $D_{5\pi}$) tend to converge to the corresponding (ideal) best fitting reduced TCP parameters. Whereas, the standard errors of the ML and MC estimator for $\sigma^2$ does not seem to significantly decrease to zero even for the sample points for few thousands (corresponds to the observation domain $D_n = [-20,20]^2$). Moreover, there is no clear evidence in Table \ref{tab:fTCP-mis2} and Figure \ref{fig:LGCP-density-2} that the parameter estimates for the ML and MC converge to some fixed non- diverging or non-shrinking parameters.

\begin{table}[h]
    \centering
  \begin{tabular}{cc|cc|cccccc}
\multirow{2}{*}{Window} & \multirow{2}{*}{Par.} & \multicolumn{2}{c|}{Best Par.} &
\multicolumn{4}{c}{Method} \\
\cline{3-8}
& & $D_{2\pi}$ & $D_{5\pi}$ & Ours($D_{2\pi}$) & Ours($D_{5\pi}$)  & ML & MC \\ \hline \hline
 \multirow{3}{*}{$[-5,5]^2$}& $\kappa$ &  0.22    & 0.25  & 0.38(0.25)  & 0.42(0.35)   & 0.28(0.34) & 0.25(0.28) \\  
				     & $\sigma^2$& 0.09   &  0.08 & 0.13(0.12) & 0.12(0.13)  & 0.19(0.15) & 0.24(0.17) \\
				     & Time(sec) &  ---   & ---         & 0.14 & 0.71  & 0.30 & 0.06 \\ 
\cline{1-8}
 \multirow{3}{*}{$[-10,10]^2$}& $\kappa$ &  0.21 &  0.24 & 0.27(0.09)    & 0.30(0.11)   & 0.13(0.06) & 0.13(0.06) \\  
				     & $\sigma^2$    & 0.09  &  0.08 & 0.10(0.04) &  0.09(0.04)  & 0.34(0.17) & 0.36(0.25) \\ 
				     & Time(sec) &   ---  & ---            & 0.47         &  2.68           &  0.91 & 0.16 \\ 
\cline{1-8}
 \multirow{3}{*}{$[-20,20]^2$}& $\kappa$ &   0.21 & 0.24   & 0.23(0.05)  & 0.26(0.06) & 0.09(0.03) & 0.09(0.03) \\  
				     & $\sigma^2$ & 0.09 & 0.08      &0.10(0.02)   & 0.08(0.03) & 0.41(0.19) & 0.46(0.19) \\ 
				     & Time(sec) &  ---   & ---            & 1.76          & 11.66        & 13.21        & 1.38  \\ \hline
\cline{1-8}
\end{tabular} 
\caption{\textit{The mean and the standard errors (in the parentheses) of the estimated parameters for the misspecified LGCP fitting with the reduced TCP model. The best fitting parameters are calculated by minimizing $\mathcal{L}^{(R)}(\btheta)$ in (\ref{eq:mathcalL-R2}).
When evaluating our estimator, we use two different prespecified domains: $D_{2\pi}$ and $D_{5\pi}$. The time is calculated as an averaged computational time (using a parallel computing in R on a desktop computer with an i7-10700 Intel CPU) of each method per one simulation from 500 independent replications.
}}
\label{tab:fTCP-mis2}
\end{table}

\begin{figure}[ht!]
\centering
\includegraphics[width=0.9\textwidth]{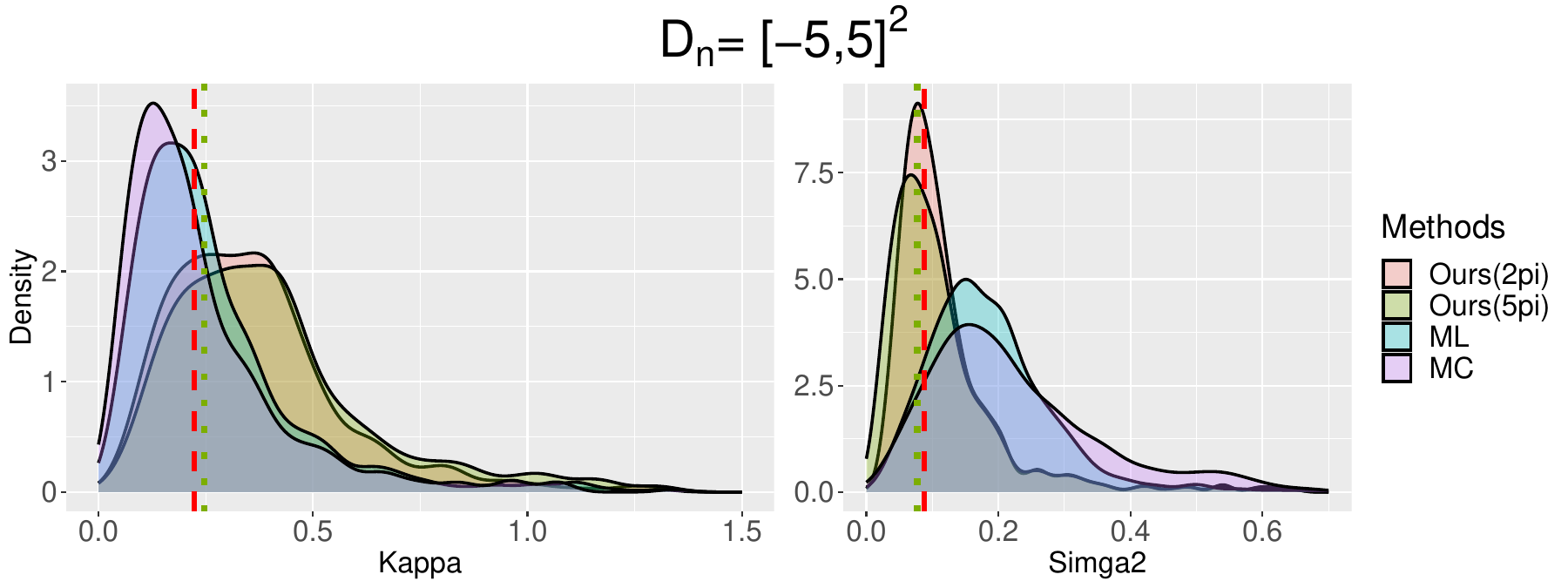}
\includegraphics[width=0.9\textwidth]{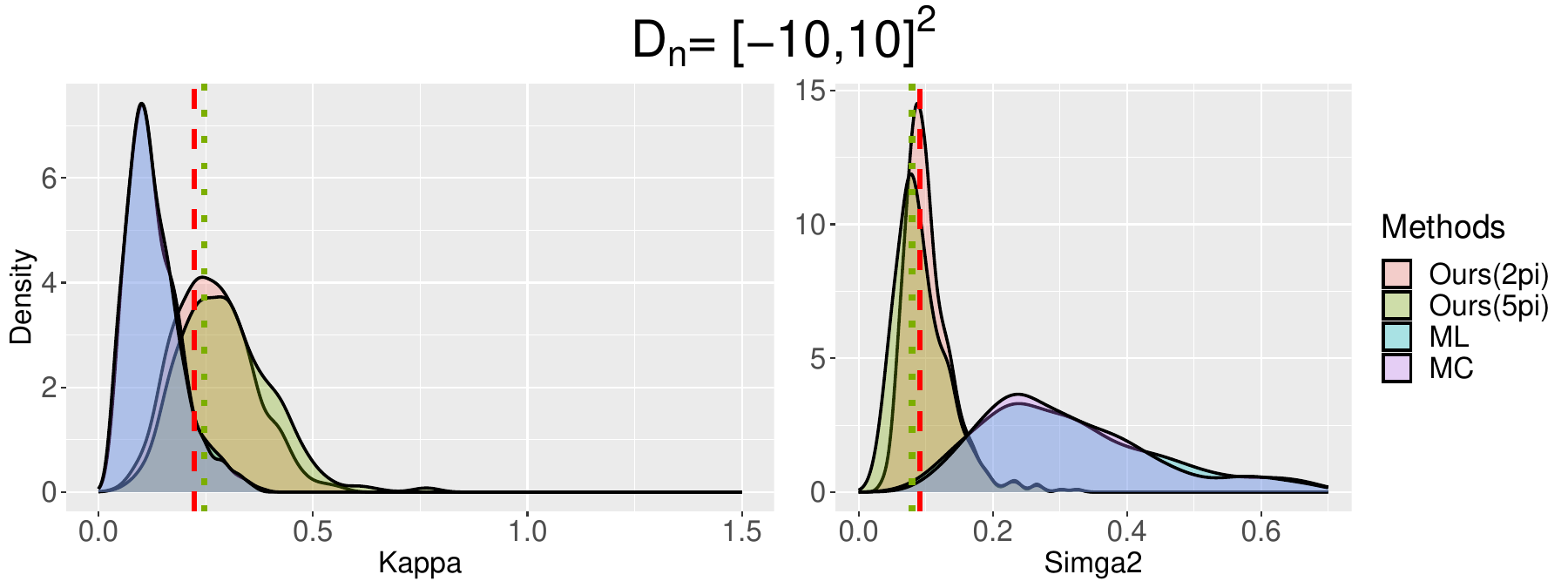}
\includegraphics[width=0.9\textwidth]{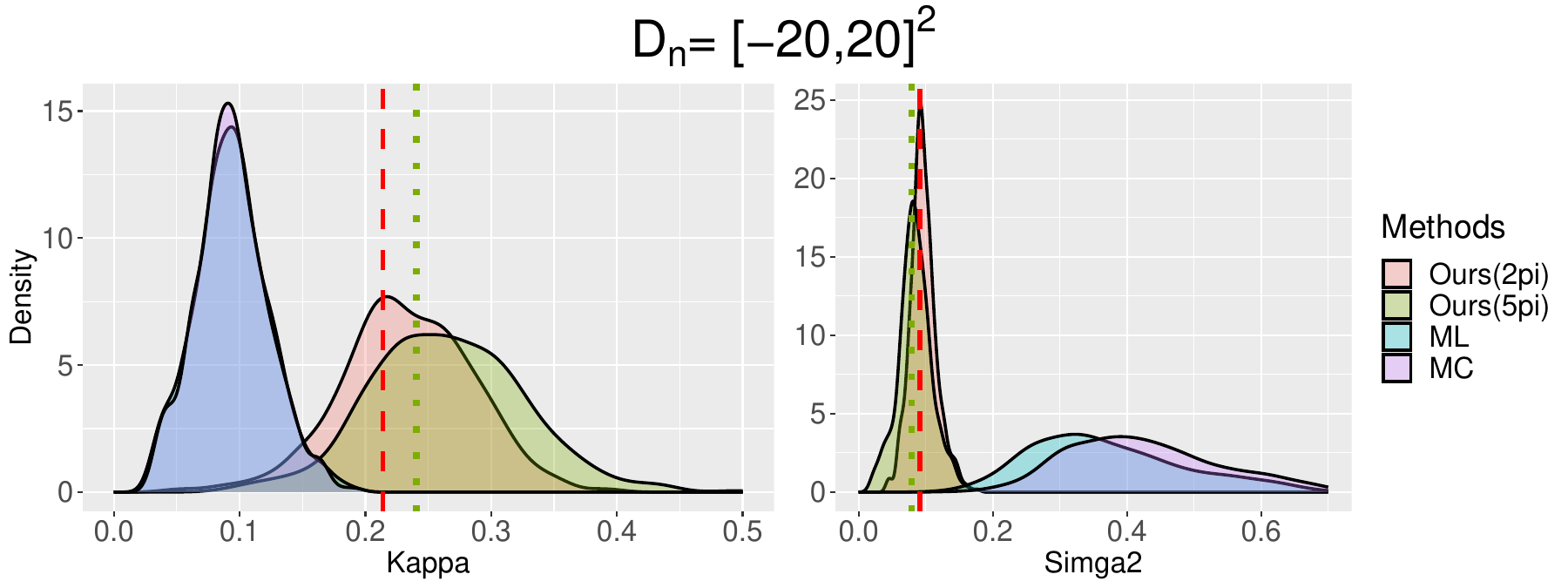}
\caption{\textit{
Densities of the estimated parameters for the misspecified LGCP fitting with the reduced TCP model described as in Section \ref{sec:TCP-mis2}. Each row refers to different observation domains. Dashed (green) and dotted (red) vertial lines refer to the best fitting TCP parameters as in (\ref{eq:mathcalL-R2}) evaluated on $D_{2\pi}$ and $D_{5\pi}$, respectively.
}}
\label{fig:LGCP-density-2}
\end{figure}



\section{Spectral methods for nonstationary point processes} \label{sec:IRS}

\subsection{A new DFT for the intensity reweighted process}
Recall the $n$th-order intensity function $\lambda_n$ as in (\ref{eq:lambda}). In this section, we do not presuppose the (second-order) stationarity of the point process $X$. Instead, we let $X$ be a simple second-order intensity reweighted stationary (SOIRS) point process on $\R^d$ (\cite{p:bad-00}). That is, there exists $\ell_2: \R^{d} \rightarrow \R$ such that
\begin{equation} \label{eq:gn}
\frac{\gamma_2(\xb_1,\xb_2)}{\lambda_1(\xb_1) \lambda_1 (\xb_2)} =  \ell_{2} (\xb_1-\xb_2), \quad \xb_1, \xb_2 \in \R^d,
\end{equation} where $\lambda_1(\cdot)$ is the first-order intensity that does not need to be a constant.
The second-order stationary point processes fit within this framework by setting $\ell_2 =  \lambda^{-2} \gamma_{2,\text{red}}$, where $\lambda$ is the constant first-order intensity and $\gamma_{2,\text{red}}$ is the reduced second-order cumulant intensity function.

To leverage the Fourier methods developed for the stationary case, we consider a slight variant of the ordinary DFT defined in (\ref{eq:mathcalDFT-h}). The analogous large sample results for the ordinary DFT under the SOIRS framework are similar, with greater details provided in \cite{p:dsy-24}.

\begin{definition}[Intensity reweighted DFT] \label{def:IRDFT}
Let $X$ be an SOIRS spatial point process on $D_n$ ($n\in \N$) of form (\ref{eq:Dn}). Then, the intensity reweighted DFT (IR-DFT) with the data taper $h$ is defined as 
\begin{equation} \label{eq:mathbbJn-lambda}
\mathcal{J}_{h,n}^{(IR)}(\ob; \lambda_1) = (2\pi)^{-d/2} H_{h, 2}^{-1/2} |D_n|^{-1/2} \sum_{\xb \in X_{} \cap D_n}
\frac{h(\xb/\aB)}{\lambda_1(\xb)}\exp(-i\xb^\top \ob),  \quad \ob \in \R^d.
\end{equation} 
\end{definition}


Before investigating the theoretical properties of the IR-DFT, we draw a comparison between the IR-DFT and the ordinary DFT.
Firstly, unlike the ordinary DFT, $\mathcal{J}_{h,n}^{(IR)}(\ob; \lambda_1)$ is contingent on the underlying unknown first-order intensity function. Secondly, under stationarity, $\mathcal{J}_{h,n}^{(IR)}(\ob;\lambda_1)$ and $\mathcal{J}_{h,n}(\ob)$ in (\ref{eq:mathcalDFT-h}) are related through $\mathcal{J}_{h,n}^{(IR)}(\ob; \lambda) = \lambda^{-1} \mathcal{J}_{h,n}(\ob)$, where $\lambda$ is the constant first-order intensity.
Lastly, by using (\ref{eq:lambda}), we have
$\Ex[\mathcal{J}_{h,n}^{(IR)}(\ob; \lambda_1)] = c_{h,n}(\ob)$,  where $c_{h,n}(\cdot)$ is the bias factor as defined in (\ref{eq:Cn-h}). Therefore, the expectation of the IR-DFT is a deterministic function depends solely on the data taper $h$ and the domain $D_n$.
 
By using the above bias expression, we now can define the theoretical centered IR-DFT and IR-periodogram respectively as
\begin{equation}
J_{h,n}^{(IR)}(\ob; \lambda_1) =
\mathcal{J}_{h,n}^{(IR)}(\ob; \lambda_1) - 
c_{h,n}(\ob)  \text{~~and~~}
I_{h,n}^{(IR)}(\ob; \lambda_1) = |J_{h,n}^{(IR)}(\ob; \lambda_1)|^2,  \quad \ob \in \R^d.
 \label{In-lambda}
\end{equation} 

\subsection{Asymptotic properties of the IR-DFT and IR-periodogram} \label{appen:IRDFT}

In this section, we study asymptotic properties for the IR-DFT and IR-periodogram. To do so, we adopt a different asymptotic framework compared to the stationary case. This is because if we only rely on Assumption \ref{assum:A} as our asymptotic setup for general SOIRS processes, there is no gain in information of $\lambda_1(\xb)$ at the fixed location $\xb \in \R^d$ as the domain $D_n$ increases. Therefore, in a similar spirit to \cite{p:dah-97}, we consider an infill-type asymptotic framework for the first-order intensity function below. For a domain $W \in \R^d$, we use the notation $X_{W}$ to indicate the observations of $X$ are confined within $W$. 

\begin{assumption} \label{assum:infill}
Let $X_{D_n}$ ($n \in \N$) be a sequence of SOIRS processes defined on the increasing domain $\{D_n\}$ of form (\ref{eq:Dn}). Let $\lambda_{1,n}(\cdot)$ and $\gamma_{2,n}(\cdot,\cdot)$ be the first- and second-order cumulant intensity functions of $X_{D_n}$, respectively. Then, the following structural assumptions on  $\lambda_{1,n}(\cdot)$ and $\gamma_{2,n}(\cdot,\cdot)$ hold:
\begin{itemize}
\item[(i)] For $n \in \N$, $\lambda_{1,n}(\cdot)$ is a strictly positive function on $D_n$ and there exists non-negative function $\lambda(\xb)$, $\xb \in \R^d$, with a compact support on $[-1/2,1/2]^d$, such that
\begin{equation}  \label{eq:lambda-infill}
\lambda_{1,n}(\xb) = \lambda(\xb/\aB), \quad n\in \N, \quad \xb \in D_n.
\end{equation} 

\item[(ii)] For $n \in \N$ and $\xb, \yb \in D_n$, $\gamma_{2,n}(\xb,\yb)/( \lambda_{1,n}(\xb) \lambda_{1,n}(\yb)) = \ell_2(\xb-\yb)$ where $\ell_2:\R^d \rightarrow \R$ does not depend on $n$.
\end{itemize}
\end{assumption}
Under Assumption \ref{assum:infill}(i), the IR-DFT can be written as
\begin{equation} \label{eq:JIR-22}
\mathcal{J}_{h,n}^{(IR)}(\ob; \lambda)  = (2\pi)^{-d/2} H_{h, 2}^{-1/2} |D_n|^{-1/2} \sum_{\xb \in X_{D_n}}
\frac{h(\xb/\aB)}{\lambda(\xb/\aB)}\exp(-i\xb^\top \ob), \quad \ob \in \R^d.
\end{equation} 
Here, we use the notation $\lambda$ instead of $\lambda_1$ to emphasize the asymptotic framework as in Assumption \ref{assum:infill}(i). The centered IR-DFT and IR-periodogram, denoted by $J_{h,n}^{(IR)}(\ob; \lambda)$ and $I_{h,n}^{(IR)}(\ob; \lambda)$, respectively, can be defined similarly.

Theorem \ref{thm:asympDFT-IRS} below addresses the asymptotic uncorrelatedness of the IR-DFTs.

\begin{theorem}[Asymptotic uncorrelatedness of the IR-DFT] \label{thm:asympDFT-IRS}
Let $X_{D_n}$ ($n \in \N$) be a sequence SOIRS point processes that satisfy Assumption \ref{assum:infill}. Suppose that Assumptions \ref{assum:A}, \ref{assum:C} (for $\ell=2$), and Assumption \ref{assum:E}(i) hold. Furthermore, $\lambda(\cdot)$ from (\ref{eq:lambda-infill}) is strictly positive and continuous on $[-1/2,1/2]^d$.
 Let $\{\ob_{1,n}\}$ and $\{\ob_{2,n}\}$ be two asymptotic distant sequencies on $\R^d$. Then,
\begin{equation} \label{eq:lim-DFT-IRS}
\lim_{n\rightarrow \infty} \cov(J_{h,n}^{(IR)}(\ob_{1,n};\lambda), J_{h,n}^{(IR)}(\ob_{2,n};\lambda)) = 0.
\end{equation}
If we further assume $\lim_{n \rightarrow \infty} \ob_{1,n} = \ob \in \R^d$, then
\begin{equation} \label{eq:lim-DFT2-IRS}
 \lim_{n\rightarrow \infty} \var(J_{h,n}^{(IR)}(\ob_{1,n};\lambda)) 
= \lim_{n \rightarrow \infty} \Ex[I_{h,n}^{(IR)}(\ob; \lambda)] =
 (2\pi)^{-d} \frac{H_{h^2/\lambda,1}}{ H_{h,2}} + \mathcal{F}^{-1}(\ell_2)(\ob). 
\end{equation}
\end{theorem}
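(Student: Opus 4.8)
The plan is to reduce both displays to a single covariance expansion, isolate two leftover integrals after the centering cancels a factorized term, and then analyze those integrals by oscillation (for the distant case) and by a change of variables with dominated convergence (for the variance). Since $J_{h,n}^{(IR)}(\cdot;\lambda)$ has mean zero, write
\[
\cov\big(J_{h,n}^{(IR)}(\ob_1;\lambda), J_{h,n}^{(IR)}(\ob_2;\lambda)\big) = \Ex\big[\mathcal{J}_{h,n}^{(IR)}(\ob_1;\lambda)\,\overline{\mathcal{J}_{h,n}^{(IR)}(\ob_2;\lambda)}\big] - c_{h,n}(\ob_1)\,\overline{c_{h,n}(\ob_2)}.
\]
Expanding the product of the two defining sums into its diagonal ($\xb=\yb$) and off-diagonal ($\xb\neq\yb$) parts and applying the intensity identity \eqref{eq:lambda} for $n=1$ and $n=2$, the off-diagonal part carries the factor $\lambda_1(\xb)^{-1}\lambda_1(\yb)^{-1}\lambda_2(\xb,\yb)$. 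Now invoke the moment--cumulant relation $\lambda_2=\gamma_2+\lambda_1\otimes\lambda_1$ together with the SOIRS structure of Assumption \ref{assum:infill}(ii), namely $\gamma_2(\xb,\yb)=\lambda_1(\xb)\lambda_1(\yb)\ell_2(\xb-\yb)$. The reweighting factors $\lambda_1^{-1}$ cancel the intensities exactly; the $\lambda_1\otimes\lambda_1$ contribution factorizes into $H_{h,1}^{(n)}(\ob_1)\overline{H_{h,1}^{(n)}(\ob_2)}$, which after normalization is precisely $c_{h,n}(\ob_1)\overline{c_{h,n}(\ob_2)}$ and is annihilated by the centering. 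One is left with the diagonal term $V_n^{(1)} = (2\pi)^{-d}H_{h,2}^{-1}|D_n|^{-1}\int_{D_n} \lambda(\xb/\aB)^{-1}h(\xb/\aB)^2 e^{-i\xb^\top(\ob_1-\ob_2)}\,d\xb$ and the cumulant term $V_n^{(2)} = (2\pi)^{-d}H_{h,2}^{-1}|D_n|^{-1}\int_{D_n^2} h(\xb/\aB)h(\yb/\aB)\,\ell_2(\xb-\yb)\,e^{-i\xb^\top\ob_1}e^{i\yb^\top\ob_2}\,d\xb\,d\yb$.

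For the distant case \eqref{eq:lim-DFT-IRS}, I would show both terms vanish through oscillation, mirroring the stationary argument of Theorem \ref{thm:In-bias}. Rescaling $\vbb=\xb/\aB$ turns $V_n^{(1)}$ into $(2\pi)^{-d}H_{h,2}^{-1}\int_{[-1/2,1/2]^d}(h^2/\lambda)(\vbb)\,e^{-i(\aB\cdot\vbb)^\top(\ob_1-\ob_2)}\,d\vbb$; the asymptotic distance condition $|D_n|^{1/d}\|\ob_{1,n}-\ob_{2,n}\|\to\infty$ (with the balanced growth of Assumption \ref{assum:A}) forces rapid oscillation, so this integral tends to $0$ by a Riemann--Lebesgue estimate, using that $h^2/\lambda$ is bounded with compact support. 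For $V_n^{(2)}$, substitute $\ubb=\xb-\yb$ to factor out $\ell_2(\ubb)e^{-i\ubb^\top\ob_1}$, leaving an inner $\yb$-integral which, after the $|D_n|^{-1}$ normalization, is a tapered transform at the frequency $\ob_1-\ob_2$ that vanishes pointwise in $\ubb$; integrability $\ell_2\in L^1(\R^d)$ (Assumption \ref{assum:C} for $\ell=2$) and a uniform-in-$\ubb$ bound on the inner transform then give $V_n^{(2)}\to 0$ by dominated convergence.

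For the variance \eqref{eq:lim-DFT2-IRS}, set $\ob_1=\ob_2=\ob_{1,n}\to\ob$, so the oscillatory factor in $V_n^{(1)}$ disappears and the rescaling yields exactly $(2\pi)^{-d}H_{h,2}^{-1}\int_{[-1/2,1/2]^d}(h^2/\lambda)(\vbb)\,d\vbb = (2\pi)^{-d}H_{h^2/\lambda,1}/H_{h,2}$, the first summand. In $V_n^{(2)}$, the substitution $\ubb=\xb-\yb$ leaves the inner integral $|D_n|^{-1}\int_{D_n} h((\yb+\ubb)/\aB)h(\yb/\aB)\,d\yb = \int_{[-1/2,1/2]^d} h(\vbb+\ubb/\aB)h(\vbb)\,d\vbb$, which converges to $H_{h,2}$ for each fixed $\ubb$ by continuity of $h$ (Assumption \ref{assum:E}(i)) since $\ubb/\aB\to\textbf{0}$, while $e^{-i\ubb^\top\ob_{1,n}}\to e^{-i\ubb^\top\ob}$; dominated convergence against the $L^1$ envelope $|\ell_2(\ubb)|\,\|h\|_\infty^2$ then gives $V_n^{(2)}\to (2\pi)^{-d}\int_{\R^d}\ell_2(\ubb)e^{-i\ubb^\top\ob}\,d\ubb = \mathcal{F}^{-1}(\ell_2)(\ob)$, the second summand. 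Summing reproduces \eqref{eq:lim-DFT2-IRS}. Finally, because $J_{h,n}^{(IR)}(\ob;\lambda)$ is centered, $\Ex[I_{h,n}^{(IR)}(\ob;\lambda)]=\Ex[|J_{h,n}^{(IR)}(\ob;\lambda)|^2]=\var(J_{h,n}^{(IR)}(\ob;\lambda))$, so the constant sequence $\ob_{1,n}=\ob$ identifies the middle limit with the first.

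I expect the main obstacle to be the off-diagonal estimate for $V_n^{(2)}$ in the distant case: justifying the interchange of the $n\to\infty$ limit with integration over all of $\R^d$ in the $\ubb$ variable. This requires both the $L^1$-integrability of $\ell_2$ and a bound on the normalized inner tapered transform that is uniform in $\ubb$ and that decays in $n$, so that dominated convergence applies; the remaining steps are routine changes of variables and continuity arguments paralleling Theorem \ref{thm:In-bias}.
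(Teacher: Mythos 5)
Your proposal is correct and follows essentially the same route as the paper: you derive exactly the covariance identity of Lemma \ref{coro:DFT2-IRS} (the reweighting cancels the intensities, the $\lambda_1\otimes\lambda_1$ part is annihilated by the centering, leaving the $H_{h^2/\lambda,1}^{(n)}$ term and the $\ell_2$ integral), and then treat the two terms by a Riemann--Lebesgue rescaling and by dominated convergence, exactly as in the stationary case (Theorems \ref{thm:In-bias} and \ref{thm:cov-exp}), which is what the paper does by deferring to \cite{p:dsy-24}. The only small clarification concerns your final paragraph: for the dominated-convergence step in the distant case you do not need (and cannot get) decay of the normalized inner transform that is uniform in $\ubb$ --- what is needed, and what Theorem \ref{thm:Rhgn0} supplies, is a bound uniform in $(n,\ubb)$ to form the $L^1$ envelope $C\,|\ell_2(\ubb)|$ together with pointwise-in-$\ubb$ convergence to zero.
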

\begin{proof} 
To prove the theorem, we first start with the expression of the covariance of the IR-DFT. The proof of lemma below is almost identical to that of the proof of Lemma \ref{thm:DFT2} so we omit the details.
\begin{lemma} \label{coro:DFT2-IRS}
Let $X_{D_n}$ ($n \in \N$) be a sequence of SOIRS spatial point processes that satisfy Assumption \ref{assum:infill}
and let $h$ be data taper such that $\sup_{\ob \in \R^d} h(\xb) <\infty$. 
Suppose that Assumption \ref{assum:C} for $\ell=2$ holds. Then, 
\begin{equation}
\begin{aligned} 
&\cov(J_{h,n}^{(IR)}(\ob_1; \lambda), J_{h,n}^{(IR)}(\ob_2; \lambda)) = (2\pi)^{-d} H_{h,2}^{-1} |D_n|^{-1}
\bigg(
 H_{h^2/\lambda,1}^{(n)}(\ob_1-\ob_2) \\
&\quad + 
\int_{D_n^2} h(\xb/\aB) h(\yb/\aB) e^{-i(\xb^\top \ob_1 - \yb^\top \ob_2)} \ell_{2}(\xb-\yb) d\xb d\yb\bigg),
\quad \ob_1, \ob_2 \in \R^d.
\end{aligned}
\label{eq:DFT2-lambda1}
\end{equation}
\end{lemma}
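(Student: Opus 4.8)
The plan is to follow the template of the proof of Lemma~\ref{thm:DFT2} essentially verbatim, the only changes being the reweighting factor $1/\lambda(\cdot/\aB)$ inside the sum and the use of the nonconstant first-order intensity $\lambda_{1,n}(\xb) = \lambda(\xb/\aB)$ supplied by Assumption~\ref{assum:infill}(i). First I would note that centering does not affect covariance, so it suffices to compute $\cov(\mathcal{J}_{h,n}^{(IR)}(\ob_1;\lambda), \mathcal{J}_{h,n}^{(IR)}(\ob_2;\lambda))$ directly from (\ref{eq:JIR-22}). Pulling out the deterministic prefactor $(2\pi)^{-d} H_{h,2}^{-1}|D_n|^{-1}$ (the square of the normalization in (\ref{eq:JIR-22})) and adopting the Hermitian convention $\cov(Z_1,Z_2) = \Ex[Z_1\overline{Z_2}] - \Ex[Z_1]\overline{\Ex[Z_2]}$ — which accounts for the $+\yb^\top\ob_2$ sign appearing in the exponent — the task reduces to evaluating the second moment of the double sum $\sum_{\xb,\yb \in X_{D_n}} \tfrac{h(\xb/\aB)}{\lambda(\xb/\aB)} \tfrac{h(\yb/\aB)}{\lambda(\yb/\aB)} e^{-i(\xb^\top\ob_1 - \yb^\top\ob_2)}$ and subtracting the product of the two first moments.

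The core step is to split the double sum over the point pattern into its diagonal part ($\xb = \yb$) and its off-diagonal part over distinct pairs $(\xb,\yb)\in X_{D_n}^{2,\neq}$, and to apply the Campbell-type intensity identity (\ref{eq:lambda}) to each. For the diagonal, the order-one identity with intensity $\lambda_{1,n}(\xb)=\lambda(\xb/\aB)$ collapses one power of $\lambda$ and yields $\int_{D_n} \tfrac{h(\xb/\aB)^2}{\lambda(\xb/\aB)} e^{-i\xb^\top(\ob_1-\ob_2)}\,d\xb$, which is exactly $H_{h^2/\lambda,1}^{(n)}(\ob_1-\ob_2)$ by the definition (\ref{eq:Hkn}). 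For the off-diagonal part, the order-two identity replaces the pair sum by an integral against $\lambda_2(\xb,\yb)$ over $D_n^2$.

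Next I would combine the off-diagonal integral with the subtracted product of means. Writing the product of first moments (again via (\ref{eq:lambda}) for $n=1$) as an integral against $\lambda_{1,n}(\xb)\lambda_{1,n}(\yb) = \lambda(\xb/\aB)\lambda(\yb/\aB)$, the two $1/\lambda$ weights in the off-diagonal integrand cancel against $\lambda_1$ in the mean term, so the difference of integrands is governed by $\lambda_2(\xb,\yb) - \lambda_{1,n}(\xb)\lambda_{1,n}(\yb) = \gamma_2(\xb,\yb)$, the second-order cumulant intensity from (\ref{eq:gamma}). Finally, substituting the SOIRS structure of Assumption~\ref{assum:infill}(ii), namely $\gamma_2(\xb,\yb) = \lambda(\xb/\aB)\lambda(\yb/\aB)\,\ell_2(\xb-\yb)$, cancels the remaining $1/\lambda$ factors and produces precisely $\int_{D_n^2} h(\xb/\aB)h(\yb/\aB) e^{-i(\xb^\top\ob_1 - \yb^\top\ob_2)}\ell_2(\xb-\yb)\,d\xb\,d\yb$, which is the second summand in the statement.

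The step requiring the most care is the bookkeeping of the diagonal contribution: it is exactly what distinguishes a point-process DFT from an ordinary random-field DFT, and it is what supplies the $H_{h^2/\lambda,1}^{(n)}$ term rather than a pure covariance integral. I would also verify that the interchanges of summation, expectation, and integration are legitimate; here finiteness is not an obstacle, since $D_n$ is bounded and $h/\lambda$ is bounded on $[-1/2,1/2]^d$ by the assumed strict positivity and continuity of $\lambda$, while Assumption~\ref{assum:C} for $\ell=2$ ensures the second-order cumulant intensity is well defined and integrable so the off-diagonal integral converges absolutely. Beyond these checks the computation is identical in structure to Lemma~\ref{thm:DFT2}, which justifies omitting the remaining details.
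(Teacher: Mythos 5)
Your proposal is correct and follows exactly the route the paper intends: the paper omits this proof, stating only that it is almost identical to that of Theorem \ref{thm:DFT2}, and your diagonal/off-diagonal split via (\ref{eq:lambda}), with the diagonal producing $H_{h^2/\lambda,1}^{(n)}(\ob_1-\ob_2)$ and the off-diagonal term minus the product of means reducing to $\gamma_2 = \lambda_{1,n}(\xb)\lambda_{1,n}(\yb)\ell_2(\xb-\yb)$ so that the $1/\lambda$ weights cancel, is precisely that computation.
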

Now, by utilizing expression (\ref{eq:DFT2-lambda1}) above, proofs of (\ref{eq:lim-DFT-IRS}) and (\ref{eq:lim-DFT2-IRS}) are almost identical to those in the proof of \cite{p:dsy-24}, Theorem 4.1 (we omit the details). 
\end{proof}

Under second-order stationarity, an expectation of the periodogram converges to the spectral density function. Bearing this in mind, along with the limiting behavior in (\ref{eq:lim-DFT2-IRS}), we define the intensity reweighted pseudo-spectral density functions SOIRS processes.

\begin{definition}[Intensity reweighted pseudo-spectral density function]
Let $X_{D_n}$ ($n \in \N$) be a sequence of SOIRS spatial point processes that satisfy Assumption \ref{assum:infill}. Suppose $\ell_2$ in (\ref{eq:gn}) belongs to $L^1(\R^d)$. Then, the intensity reweighted pseudo-spectral density function (IR-PSD) of $X_{D_n}$ corresponding to the data taper $h$ is defined as
\begin{equation} \label{eq:f-IRS}
f_{h}^{(IR)}(\ob) =  (2\pi)^{-d} \frac{H_{h^2/\lambda,1}}{ H_{h,2}} + 
\mathcal{F}^{-1}(\ell_2)(\ob)
, \quad \ob \in \R^d.
\end{equation}
\end{definition}
It follows from (\ref{eq:lim-DFT2-IRS}) that $f_{h}^{(IR)}$ is an even and non-negative function on $\R^d$. However, unlike the classical spectral density function, the IR-PSD $f_h^{(IR)}$ depends on the specify data taper function $h$. Under stationarity, $f_h^{(IR)}$ equals $\lambda^{-2} f$, where $f$ is the spectral density.

In the theorem below, we derive the asymptotic joint distribution of the theoretical IR-DFTs and IR-periodograms. 
The proof is almost identical to that of Theorem \ref{thm:asymp-DFT}, so we omit the details.

\begin{theorem}[Asymptotic joint distribution of the IR-DFTs and IR-periodograms] \label{thm:asymp-DFT-IRS}
Let $X_{D_n}$ ($n \in \N$) be a sequence of SOIRS spatial point processes that satisfy Assumption \ref{assum:infill}. Suppose that Assumptions \ref{assum:A}, \ref{assum:C} (for $\ell=4$), \ref{assum:D}(i), and \ref{assum:E}(i) hold.
Furthermore, $\lambda(\cdot)$ from (\ref{eq:lambda-infill}) is strictly positive and continuous on $[-1/2,1/2]^d$.
For a fixed $r\in \N$, $\{\ob_{1,n}\}$, ..., $\{\ob_{r,n}\}$ denote $r$ sequences on $\R^d$ that satisfy conditions (1) and (3) in the statement of Theorem \ref{thm:asymp-DFT}. Then,
\begin{equation*}
\left( \frac{J_{h,n}^{(IR)}(\ob_{1,n};\lambda)}{ (\frac{1}{2}f_h^{(IR)}(\ob_1))^{1/2}}, \dots, \frac{J_{h,n}^{(IR)}(\ob_{r,n};\lambda)}{( \frac{1}{2}f_h^{(IR)}(\ob_r))^{1/2}} \right)
 \Dcon (Z_1, \dots, Z_r), \quad n \rightarrow \infty,
\end{equation*} where $\{Z_k\}_{k=1}^{r}$ are independent standard normal random variables on $\C$. By using continuous mapping theorem, we conclude 
\begin{equation*}
\left( \frac{I_{h,n}^{(IR)}(\ob_{1,n};\lambda)}{\frac{1}{2}f_h^{(IR)}(\ob_1)}, \dots,
\frac{I_{h,n}^{(IR)}(\ob_{r,n};\lambda)}{\frac{1}{2}f_h^{(IR)}(\ob_r)}
 \right)
 \Dcon (\chi^2_1, \dots, \chi^2_r), \quad n \rightarrow \infty,
\end{equation*} where $\{\chi^2_k\}_{k=1}^{r}$ are independent chi-squared random variables with degrees of freedom two.
\end{theorem}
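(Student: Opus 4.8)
The plan is to mirror the proof of Theorem \ref{thm:asymp-DFT}, exploiting the structural similarity between the IR-DFT and the ordinary tapered DFT. Under Assumption \ref{assum:infill}(i), the IR-DFT in (\ref{eq:JIR-22}) is algebraically an ordinary tapered DFT of $X_{D_n}$ in which the summation weight $h$ is replaced by $h/\lambda$ (the normalization $H_{h,2}^{-1/2}$ being retained, which is exactly why the limiting variance (\ref{eq:lim-DFT2-IRS}) carries the constant $H_{h^2/\lambda,1}/H_{h,2}$). Since $\lambda$ is strictly positive and continuous on the compact set $[-1/2,1/2]^d$, the weight $h/\lambda$ is bounded, non-negative, compactly supported and continuous, so it inherits the regularity that Assumption \ref{assum:E}(i) places on $h$. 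Consequently every moment and covariance estimate used in the stationary proof transfers after replacing $h$ by $h/\lambda$ in the summation. I would then run the Cramér–Wold device: writing $J_{h,n}^{(IR)}(\ob_{k,n};\lambda) = U_{k,n} + i V_{k,n}$, it suffices to show that for every real array $(a_k,b_k)_{k=1}^r$ the linear combination $T_n = \sum_{k=1}^r (a_k U_{k,n} + b_k V_{k,n})$ is asymptotically normal with the predicted variance. Each $U_{k,n}$ and $V_{k,n}$ is a centered linear statistic $\sum_{\xb \in X_{D_n}} w_{k,n}(\xb)$ with a bounded, compactly supported weight (a cosine or sine modulation of $(h/\lambda)(\xb/\aB)$), so $T_n$ is again a centered linear statistic of the same type; unlike the quadratic integrated periodogram treated later, this admits a direct application of the $\alpha$-mixing central limit theorem for point processes reviewed in \cite{p:bis-19}.

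The next step is to identify $\lim_{n\to\infty}\var(T_n)$ from the covariance computations already in place. Equations (\ref{eq:lim-DFT-IRS}) and (\ref{eq:lim-DFT2-IRS}) supply the Hermitian covariances $\cov(J_{h,n}^{(IR)}(\ob_{i,n};\lambda), \overline{J_{h,n}^{(IR)}(\ob_{j,n};\lambda)})$, which vanish for $i \neq j$ because $\{\ob_{i,n}-\ob_{j,n}\}$ is asymptotically distant from $\{\textbf{0}\}$ (condition (3)), and converge to $f_h^{(IR)}(\ob_i)$ for $i=j$. What remains is to control the complementary covariances $\cov(J_{h,n}^{(IR)}(\ob_{i,n};\lambda), J_{h,n}^{(IR)}(\ob_{j,n};\lambda))$; repeating the covariance expansion of Theorem \ref{thm:asympDFT-IRS} with the kernel now evaluated at $\ob_{i,n}+\ob_{j,n}$ shows these vanish whenever $\{\ob_{i,n}+\ob_{j,n}\}$ is asymptotically distant from $\{\textbf{0}\}$, which is precisely condition (3) for $i \neq j$. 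Combining the Hermitian and complementary limits yields $\cov(U_{i,n},U_{j,n}) \to \tfrac12 f_h^{(IR)}(\ob_i)\,\delta_{ij}$, $\cov(V_{i,n},V_{j,n}) \to \tfrac12 f_h^{(IR)}(\ob_i)\,\delta_{ij}$ and $\cov(U_{i,n},V_{j,n}) \to 0$, so that $\var(T_n) \to \tfrac12 \sum_{k=1}^r (a_k^2+b_k^2)\, f_h^{(IR)}(\ob_k)$. This matches the variance of $\sum_k (a_k \mathrm{Re}\,Z_k + b_k \mathrm{Im}\,Z_k)$ for independent standard complex normals $\{Z_k\}$, which gives the stated limit once normality of $T_n$ is established; the IR-periodogram statement then follows by the continuous mapping theorem exactly as in Theorem \ref{thm:asymp-DFT}.

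I expect the main obstacle to be the univariate $\alpha$-mixing CLT in the present triangular-array setting, not the covariance bookkeeping. Because $X_{D_n}$ and its intensity $\lambda_{1,n}(\xb)=\lambda(\xb/\aB)$ genuinely change with $n$, the CLT must be applied to a triangular array of linear statistics: one carries out the usual big-block/small-block (Bernstein) partition of $D_n$, invokes Assumption \ref{assum:D}(i) to render the big blocks asymptotically independent and the small blocks negligible, and verifies a Lyapunov/Lindeberg condition through the fourth-order cumulant bound of Assumption \ref{assum:C} (for $\ell=4$) together with the uniform boundedness of the weights secured by the strict positivity of $\lambda$. The one point requiring extra care, and the only substantive departure from Theorem \ref{thm:asymp-DFT}, is that condition (2) of that theorem (asymptotic distance of each $\{\ob_{k,n}\}$ from $\{\textbf{0}\}$) has been dropped. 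This is admissible because the theoretical centered IR-DFT $J_{h,n}^{(IR)}(\cdot;\lambda)$ is centered by the \emph{known} bias factor $c_{h,n}$ rather than an estimated one, so it does not vanish identically at the origin and $\textbf{0}$ need not be excluded. In the covariance analysis I would, however, flag that the complementary covariance at a single frequency requires $\{2\ob_{k,n}\}$ to be asymptotically distant from $\{\textbf{0}\}$ in order for the limit to be a proper (circular) complex Gaussian; this is automatic whenever $\ob_k \neq \textbf{0}$, and the degenerate case $\ob_k = \textbf{0}$ is the one place where the real and imaginary parts would no longer decouple and so should be treated separately.
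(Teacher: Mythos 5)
Your proposal follows essentially the same route as the paper, which simply declares the proof ``almost identical'' to that of Theorem~\ref{thm:asymp-DFT}: reduce to the real and imaginary parts, obtain the limiting covariance structure from the IR analogue of the uncorrelatedness results (Theorem~\ref{thm:asympDFT-IRS} together with condition (3) for the complementary covariances), then apply the Cram\'{e}r--Wold device to a centered linear statistic $|D_n|^{-1/2}\sum_{\xb \in X_{D_n}} g_n(\xb)$ and invoke the $\alpha$-mixing CLT with the Lyapunov condition checked via the fourth-order cumulant bound --- and your observation that the weight $h/\lambda$ inherits the boundedness and continuity of $h$ because $\lambda$ is strictly positive and continuous on $[-1/2,1/2]^d$ is precisely what makes the transfer work. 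Your caveat about the degenerate case $\ob_k=\mathbf{0}$ (where the complementary covariance need not vanish, so the limit cannot be a circular complex Gaussian) is a legitimate point that the paper's omitted proof does not address, but it does not affect the correctness of your argument for the cases the theorem is meant to cover.
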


\end{document}